\documentclass[11pt]{article} 
\usepackage{fullpage}
\usepackage{amssymb,latexsym,mathptmx,amsmath}
\usepackage{epsfig,graphicx}
\usepackage{float}
\usepackage{leqno}
\usepackage[scaled]{helvet}
\usepackage{mathrsfs}
\usepackage{url}
\usepackage{hyperref}
\usepackage{textcomp} 
\usepackage{parskip}
\usepackage{alltt}
\usepackage{verbdef}
\usepackage{xspace}
\usepackage{verbatim}
\usepackage{enumitem}
\usepackage{lipsum}
\usepackage{wrapfig}
\usepackage{pgf,pgfrcs}
\usepackage{alg}

%\hypersetup{linkcolor=black,citecolor=black,urlcolor=RubineRed}

% For turned column headers 
\usepackage{adjustbox} 
\usepackage{array}
\usepackage{booktabs}
\usepackage{multirow}
\usepackage{pifont}

% Keep footnotes on one page
\interfootnotelinepenalty=10000 

\setlist[itemize]{leftmargin=*}
\setlength{\parindent}{0.15in}
\setlength{\topsep}{0cm}
\setlength{\parskip}{0pt}

% \titlespacing*{\section}{0pt}{*1.5}{*1.5}  
% \titlespacing*{\subsection}{0pt}{*0.8}{*0.5}
% \titlespacing*{\subsubsection}{0pt}{*0.8}{*0.5}
% \titlespacing*{\paragraph}{0pt}{*0.3}{*0.3}

% Hyphenation

%\special{papersize=8.5in,11in}

\newcounter{nr}

{\begin{list}{}{\setlength{\rightmargin}{0em}}\item[]}{\end{list}}
\newcommand{\qed}{\Box}
\newenvironment{proof}{{\it Proof:}}{\hfill $\qed$ \\[2mm]}
\newenvironment{prooff}{{\it Proof:}}{}

% %-------------important formatting-----------------------------

% %% change \newtheorem so it uses \rm instead of \it; from latex.tex
% \makeatletter
% \def\@begintheorem#1#2{\rm \trivlist \item[\hskip \labelsep{\bf #1\ #2}]}
% \def\@opargbegintheorem#1#2#3{\rm \trivlist
%       \item[\hskip \labelsep{\bf #1\ #2\ (#3)}]}
% \makeatother
% % end of change \newtheorem
% %--------------------------------------------------------------
\newtheorem{thm}{Theorem}
\newtheorem{example}{Example}[section]
\newtheorem{definition}[example]{Definition}
\newtheorem{lemma}[example]{Lemma}
\newtheorem{proposition}[example]{Proposition}

%%%%%%%%%%%%%%%%%%%%%%%%%%%%%%%%%%%%%%%%%%%%%%%%%%%%%%
% Change these to generate ext abstract or tech report
%%%%%%%%%%%%%%%%%%%%%%%%%%%%%%%%%%%%%%%%%%%%%%%%%%%%%%
%\newif\ifext
%\exttrue
%%%%%%%%%%%%%%%%%%%%%%%%%%%%%%%%%%%%%%%%%%%%%%%%%%%%%%
%\newif\iftr
%\trfalse
%%%%%%%%%%%%%%%%%%%%%%%%%%%%%%%%%%%%%%%%%%%%%%%%%%%%%%

% General

\newcommand{\true}{\textit{true}}
\newcommand{\false}{\textit{false}}

\newcommand{\mkset}[1]{\{{#1}\}}

\newcommand{\upd}[3]{{#1}[{#2} \mapsto {#3}]}
\newcommand{\stoone}[2]{\{{#1} \mapsto {#2}\}}
\newcommand{\stotwo}[4]{\{{#1} \mapsto {#2},\ {#3} \mapsto {#4}\}}
\newcommand{\fv}[1]{\mathit{fv}(#1)}
\newcommand{\dom}[1]{\mathit{dom}(#1)}
\newcommand{\Zz}{\mathbb{Z}}
\newcommand{\dlt}{\Delta}

% Latin abbreviations

\newcommand{\ie}{{\it i.e.}}
\newcommand{\eg}{{\it e.g.}}
\newcommand{\etc}{{\it etc.}}

\newcommand{\cf}{{\it cf.}}
\newcommand{\etal}{{\it et al.}}

% nodes

\newcommand{\labv}[1]{\mathit{Lab}(#1)}
\newcommand{\usev}[1]{\mathit{Use}(#1)}
\newcommand{\defv}[1]{\mathit{Def}(#1)}
\newcommand{\skipv}{\mathbf{Skip}}
\newcommand{\assignv}[2]{{#1} := {#2}}
\newcommand{\rassignv}[2]{{#1} := \mathbf{Random}(#2)}
\newcommand{\rassignvs}[2]{{#1} := \mathtt{R}(#2)}
\newcommand{\observev}[1]{\mathbf{Observe}(#1)}
\newcommand{\observevs}[1]{\mathtt{Obs}(#1)}

\newcommand{\retv}[1]{\mathbf{Return}(#1)}
\newcommand{\retvs}[1]{\mathtt{Ret}(#1)}

\newcommand{\startv}{\mathtt{Start}}
\newcommand{\observevv}{\mathbf{Observe}}
\newcommand{\finalv}{\mathtt{End}}

% structured syntax

% CFG

\newcommand{\unodes}{\mathcal{V}}
%\succ{v}{v'} v' successor of v

%\newcommand{\tsuccv}[2]{{#1}\stackrel{T}{\rightarrow} {#2}}
%\newcommand{\fsuccv}[2]{{#1}\stackrel{F}{\rightarrow} {#2}}
\newcommand{\PD}{\ensuremath{\mathsf{PD}}}
\newcommand{\LAP}[2]{\ensuremath{\mathsf{LAP}({#1},{#2})}}
\newcommand{\LAPf}{\ensuremath{\mathsf{LAP}}}

% paths

% domains

% \chain{D_k}{k}
\newcommand{\chain}[2]{\{{#1} \mid {#2}\}}
% \limit{k}{D_k}
\newcommand{\limit}[2]{\mathit{lim}_{#1 \rightarrow \infty}\,{#2}}
\newcommand{\contarrow}{\rightarrow_{c}}
\newcommand{\fixed}[1]{\mathit{fix}({#1})}

% stores

\newcommand{\uvar}{{\cal U}}
\newcommand{\adds}[2]{{#1} \oplus {#2}}
\newcommand{\restrs}[2]{{#1}\!\mid_{{#2}}}
\newcommand{\sagree}[3]{{#1} \stackrel{{#3}}{=} {#2}}
\newcommand{\sto}[1]{\mathcal{S}(#1)}
\newcommand{\fulls}{\mathcal{F}}

% distributions

\newcommand{\Dist}{\ensuremath{{\cal D}}}
\newcommand{\sumd}[1]{\sum {#1}}
\newcommand{\dagree}[3]{{#1} \stackrel{{#3}}{=} {#2}}
\newcommand{\Dijr}[3]{\ensuremath{D_{{#1},{#2}}^{{#3}}}}
%% \newcommand{\initd}{D_{{\cal I}}}
% \cnst{k}{v}{v'}{D}
\newcommand{\cnst}[4]{\ensuremath{c_{{#1},{#4}}^{{#2,#3}}}}

% semantics

\newcommand{\selectf}[1]{\ensuremath{\mathsf{select}_{{#1}}}}
\newcommand{\assignf}[2]{\ensuremath{\mathsf{assign}_{{#1}:={#2}}}}
\newcommand{\rassignf}[2]{\ensuremath{\mathsf{rassign}_{{#1}:={#2}}}}
\newcommand{\semee}{[\![\;]\!]}
\newcommand{\seme}[1]{[\![{#1}]\!]}

\newcommand{\HH}[1]{{\cal H}_{{#1}}}
\newcommand{\hpd}[3]{{#1}^{({#2},{#3})}}
\newcommand{\hh}[2]{\hpd{h}{{#1}}{{#2}}}

% dependencies

\newcommand{\ddsym}{\overset{dd}{\rightarrow}}
\newcommand{\ddssym}{\overset{dd}{\rightarrow}^{*}}
\newcommand{\dd}[2]{#1 \ddsym #2}
\newcommand{\dds}[2]{#1 \ddssym #2}
\newcommand{\DD}{\mathtt{DD}}
\newcommand{\DDS}{\mathtt{DD}^{*}}
\newcommand{\DDclose}{\mathtt{DD}^{\mathrm{close}}}

% slice sets
\newcommand{\fppd}[1]{1PPD(#1)}
\newcommand{\nextq}[2]{\mathit{next}_{{#1}}(#2)}
\newcommand{\rv}[2]{\mathit{rv}_{{#1}}(#2)}
\newcommand{\BSP}{\mathtt{BSP}}
\newcommand{\LWS}{\mathtt{LWS}}
\newcommand{\PN}{\mathtt{PNV?}}
\newcommand{\ESS}{\mathtt{ESS}}

% program code

\newcommand{\dt}[1]{\index{#1}\textbf{\emph{#1}}} % defined term

%%%%%%%%%%%%%%%%%%%%%%%%%%%%%%%%%%%%%%%%%%%%%%%%%%%%%%%%%

%%%%%%%%%%%%%%DATE%%%%%%%%%%%%%%%%%%%%%%%%%%%%
% Thanks to Ms. Sehar Tahir, ACM TeX support %
%%%%%%%%%%%%%%%%%%%%%%%%%%%%%%%%%%%%%%%%%%%%%%
% For acmsmall, not needed for acmtrans
\def\today{\ifcase\month\or
  January\or February\or March\or April\or May\or June\or
  July\or August\or September\or October\or November\or December\fi
  \space\number\day, \number\year}

\begin{document}

\title{A Theory of Slicing for Probabilistic Control-Flow Graphs\thanks{Expanded and revised version of a paper originally appearing in \emph{Foundations of Software Science and Computation Structures - 19th International Conference, FOSSACS 2016}.}
}
% Control Flow Graphs

 \author{
\begin{tabular}{cc}
Torben Amtoft & Anindya Banerjee \\
Kansas State University & IMDEA Software Institute \\
Manhattan, KS, USA & Madrid, Spain \\
\texttt{tamtoft@ksu.edu} & \texttt{anindya.banerjee@imdea.org}
\end{tabular}
}

%% \thanks{Research supported by the US National Science Foundation (NSF). Any opinion, findings, and conclusions or recommendations expressed in this material are those of the authors and do not necessarily reflect the views of NSF.}

\maketitle

\begin{abstract}
We present a theory for slicing probabilistic imperative programs ---containing random assignments, and ``observe'' statements (for conditioning) --- represented as probabilistic control-flow graphs (pCFGs) whose nodes modify probability distributions.
%% ; this is consistent (as we show in a companion article) with the standard probabilistic semantics for structured languages.
We show that such a representation allows direct adaptation of standard machinery such as data and control dependence, postdominators, relevant variables, \etc~to the probabilistic setting. We separate the specification of slicing from its implementation: first we develop syntactic conditions that a slice must satisfy; next we prove that any such slice is semantically correct; finally we give an algorithm to compute the least slice. To generate smaller slices, we may in addition take advantage of knowledge that certain loops will terminate (almost) always.
A key feature of our syntactic conditions is that they involve two disjoint slices such that the variables of one slice are \emph{probabilistically independent} of the variables of the other. This leads directly to a proof of correctness of probabilistic slicing. In a companion article we show adequacy of the semantics of pCFGs with respect to the standard semantics of structured probabilistic programs.

\end{abstract}

\section{Introduction}
The task of program slicing \cite{Weiser:TSE-1984,Tip:SlicingSurvey-1995}
is to remove the parts of a program that
are irrelevant in a given context. This paper addresses slicing of
probabilistic imperative programs which, in addition to the usual control
structures, contain ``random assignment'' and ``observe'' (or conditioning)
statements. The former assign random values from a given
distribution to variables. 
The latter remove undesirable combinations of values,
a feature which can be used to bias (or condition) the variables
according to real world observations. 
The excellent survey by Gordon~\etal~\cite{Gor+etal:ICSE-2014} 
depicts how probabilistic programs can be used in a variety of contexts, such as: encoding applications
from machine learning, biology, security; representing probabilistic models (Bayesian Networks, Markov Chains); estimating probability distributions through probabilistic inference algorithms (like the Metropolis-Hastings algorithm for Markov Chain Monte Carlo sampling); \etc

Program slicing of deterministic imperative programs is increasingly
well understood \cite{Pod+Cla:TSE-1990,Bal+Hor:AAD-1993,Ran+Amt+Ban+Dwy+Hat:TOPLAS-2007,Amtoft:IPL-2007,Danicic+etal:TCS-2011}.
A basic notion is that if the slice contains a
program point which depends on some other program points then 
these also should be included in the slice; here ``depends'' typically
encompasses data dependence and control dependence.  
However, Hur~\etal~\cite{Hur+etal:PLDI-2014}
recently demonstrated that in the presence of random assignments and
observations, standard notions of data and control dependence no
longer suffice for semantically correct (backward) slicing.  They 
develop a denotational framework in which they prove correct an
algorithm for program slicing. In contrast, this paper shows how \emph{classical
notions of dependence can be extended to give a semantic
foundation for the (backward) slicing of probabilistic programs}.  
The paper's key contributions are:
\begin{itemize}
\item
A formulation of probabilistic slicing in terms of probabilistic control-flow graphs (pCFGs) (Section~\ref{sec:CFG}) that allows direct adaptation of standard machinery such as data and control dependence,  postdominators, relevant variables, \etc~to the probabilistic setting.
We also provide a novel operational semantics of pCFGs 
%% probabilistic control flow graphs
(Section~\ref{sec:sem}): the semantic function $\hpd{\omega}{v}{v'}$ transforms
a probability distribution at node $v$ into a probability distribution at node $v'$,
so as to model what happens when
``control'' moves from $v$ to $v'$ in the control-flow graph.
\item
Syntactic conditions for correctness (Section~\ref{sec:conditions}) that in a non-trivial way extend classical work on program slicing~\cite{Danicic+etal:TCS-2011} and whose key feature is that they involve \emph{two} disjoint slices; in order for the first to be a correct final result of slicing, the other must contain any ``observe'' nodes sliced away and all nodes on which they depend. We show that the variables of one slice are \emph{probabilistically independent} of the variables of the other,
and this leads directly to the correctness of probabilistic slicing 
(Theorem~\ref{thm:slicing-correct} in Section~\ref{sec:correct}).
\item
An algorithm (Section~\ref{sec:alg-least}), 
with running time at most cubic in the size
of the program, that (given an approximation of which loops terminate with probability 1) computes the best possible slice 
in that it is contained in any other (syntactic) slice of the program. 
\end{itemize}
Our approach separates the specification of
slicing from algorithms to compute the best possible slice. The former is concerned with defining what is a correct syntactic slice, such that the behavior of the sliced program is equivalent to that of the original.
The latter is concerned with how to compute the best possible syntactic slice; this slice is automatically a semantically correct slice ---no separate proof is necessary.

A program's
behavior is its final probability distribution; we demand equality
modulo a constant factor so as to allow the removal of
``observe'' statements that do not introduce any bias in the final
distribution. This will be the case if the variables tested
by ``observe'' statements are independent, in the sense of
probability theory, of the variables relevant for the final value.

Compared to the conference version of this paper~\cite{Amt+Ban:FoSSaCS-2016},
the additional contributions are:
\begin{itemize}
\item
We allow to slice away certain loops
if they are known (through some analysis, or an oracle)
to terminate with probability 1.
\item
In a companion article ~\cite{Amt+Ban:ProbSemantics-2017} we establish  adequacy of the operational semantics of pCFGs 
%% probabilistic control flow graphs
%% is adequately related
with respect to the ``classical'' semantics~\cite{Gor+etal:ICSE-2014,Hur+etal:PLDI-2014}
for structured probabilistic programs. 

\end{itemize}

%% \paragraph{Outline.}
%% In Section~\ref{sec:examples}, we shall motivate our approach by means of
%% several examples. 
%% In Section~\ref{sec:CFG} we shall
%% introduce control flow graphs (CFGs for short)
%% that are probabilistic 
%% and in Section~\ref{sec:sem} we shall define their semantics; 
%% in Section~\ref{sec:consistent}
%% we shall show that for  structured programs this semantics
%% (applied after converting into CFGs) 
%% is consistent with
%% their usual semantics

%% In Section~\ref{sec:conditions} we shall develop (mostly) syntactic
%% conditions for what
%% constitutes a correct slice, and in 
%% Section~\ref{sec:correct} we shall show (Theorem~\ref{thm:slicing-correct})
%% that these conditions are sufficient to ensure a suitable version
%% of semantic correctness. 
%% Section~\ref{sec:alg-least} presents an algorithm
%% for computing the least (syntactic) slice.
%% Section~\ref{sec:future} discusses extensions and future work,
%% while Section~\ref{sec:conclude} concludes.

We prove all non-trivial results;
some proofs are in the main text
but most are relegated to Appendix~\ref{app:proofs}.
Our development is based on domain theory whose basic concepts 
we recall in Appendix~\ref{app:domain}.

\section{Motivating Examples}
\label{sec:examples}
% The goal of this section is to illustrate when it is correct to slice a probabilistic program into another probabilistic program.

\subsection{Probabilistic programs}
Whereas in deterministic languages, a variable has only one value at a given
time,
we consider a language where a variable may 
have many different values at a given time, each with a certain probability.
(Determinism is a special case where
one value has probability one, and all others have probability zero.)
We assume, to keep our development simple,
that each possible value is an integer.
A more general development, somewhat orthogonal to the aims of this article,
would allow real numbers and would 
employ measure theory (as explained in~\cite{Panangaden-2009});
we conjecture that much will extend naturally 
(with summations becoming integrals).
% On p.13 in Prakash's book it says: ``In situations with a countable set of possible states we can indeed take all sets to be measurable and much of the subtleties of measure theory can be dispensed with''. 

Similarly to \cite{Gor+etal:ICSE-2014},
probabilities are introduced by the construct
$\rassignv{x}{\psi}$ which assigns to variable $x$ a 
value with probability given
by the random distribution $\psi$ which in our setting is
a mapping from $\Zz$ (the set of integers) to $[0,1]$ such that
$\displaystyle \sum_{z \in \Zz} \psi(z) = 1$.
A program phrase modifies
a distribution into another distribution, where a distribution
assigns a probability to each possible store.
This was first formalized by Kozen~\cite{Kozen:JCSS-81} 
in a denotational setting.
As also in \cite{Gor+etal:ICSE-2014},
we shall use the construct $\observev{B}$ to ``filter out''
values which do not satisfy the boolean expression $B$. 
That is, the resulting distribution
assigns zero probability to all stores not satisfying $B$,
while stores satisfying $B$ keep their probability.

\subsection{The examples} 
Slicing amounts to picking a set $Q$ of ``program points''
(satisfying certain conditions as we shall soon discuss)
and then removing the program points not in $Q$
(as we shall formalize in Section~\ref{subsec:sem-slice}).
The examples all use a random distribution $\psi_4$ 
over $\mkset{0,1,2,3}$ where
$\psi_4(0) = \psi_4(1) = \psi_4(2) = \psi_4(3) = 
\frac{1}{4}$
whereas $\psi_4(i) = 0$ for $i \notin \{0,1,2,3\}$. 
The examples all consider whether it is correct to
let $Q$ contain exactly $\rassignv{x}{\psi_4}$ and $\retv{x}$,
and thus slice into a program $P_x$ with
straightforward semantics:  after execution, the probability
of each possible store is given by the distribution $\dlt'$ defined as
$\dlt'(\stoone{x}{i}) = \mathbf{\frac{1}{4}}$ if $i \in \{0,1,2,3\}$;
otherwise $\dlt'(\stoone{x}{i}) = 0$.

%\begin{exmp}
\begin{example}
\label{ex1}
Consider the program $P_1$ given by
%% \begin{tabbing}
%% kkk\=kkk\kill
%% \>1: \=$\rassignv{x}{\psi_4}$ \\
%% \>2: \>$\rassignv{y}{\psi_4}$ \\
%% \>3: \>$\observev{y \geq 2}$ \\
%% \>4: \>$\retv{x}$ 
%% \end{tabbing}
\[
\begin{array}{ll}
1: & \rassignv{x}{\psi_4} \\
2: & \rassignv{y}{\psi_4} \\
3: & \observev{y \geq 2} \\
4: & \retv{x} 
\end{array}
\]
%\noindent
The distribution produced by the first two assignments
will assign probability 
$\displaystyle \frac{1}{4} \cdot \frac{1}{4} = \frac{1}{16}$
to each possible store $\stotwo{x}{i}{y}{j}$ with $i,j \in \{0,1,2,3\}$.
In the final distribution $D_1$, a store
$\stotwo{x}{i}{y}{j}$ with $j < 2$ is impossible,
and for each $i \in \{0,1,2,3\}$ there are thus only two possible stores 
that associate $x$ with $i$: the store $\stotwo{x}{i}{y}{2}$, and
the store $\stotwo{x}{i}{y}{3}$.
Restricting to the variable $x$ that is ultimately returned,
\begin{displaymath}
D_1(\stoone{x}{i}) = 
\sum_{j=2}^3 D_1(\stotwo{x}{i}{y}{j}) =  
\frac{1}{16} + \frac{1}{16} = \mathbf{\frac{1}{8}} 
\end{displaymath}
if $i \in \{0,1,2,3\}$ (otherwise, $D_1(\stoone{x}{i}) = 0$).
We see that the probabilities in $D_1$ do \emph{not} add up to 1
which reflects that the purpose of an $\observevv$ statement is to
cause undesired parts of the current distribution at that node to ``disappear'' 
(which may give certain branches more relative weight than other branches). 
We also see that $D_1$ equals $\dlt'$ except for a constant factor:
$D_1 = 0.5 \cdot \dlt'$. 
That is, $\dlt'$ gives the same \emph{relative} distribution
over the values of $x$ as $D_1$ does. 
%% (An alternative way of phrasing this
%% is that the ``normalized'' semantics, 
%% cf.~what is done in \cite{Gor+etal:ICSE-2014},
%% gives the same result for $P_x$ as for $P_1$.)
We shall therefore say that $P_x$ is a \emph{correct} slice of $P_1$.
%\end{exmp}
\end{example}

Thus the $\observevv$ statement is irrelevant to the final relative
distribution of $x$.
This is because $y$ and $x$ are \emph{independent} in $D_1$,
as formalized in Definition~\ref{def:indpd}.

%\begin{exmp}
\begin{example}
\label{ex2}
Consider the program $P_2$ given by

%% \begin{tabbing}
%% kkk\=kkk\kill
%% \>1: \=$\rassignv{x}{\psi_4}$ \\
%% \>2: \>$\rassignv{y}{\psi_4}$ \\ 
%% \>3: \>$\observev{x+y \geq 5}$ \\ 
%% \>4: \>$\retv{x}$
%% \end{tabbing}
\[
\begin{array}{ll}
1: & \rassignv{x}{\psi_4} \\
2: & \rassignv{y}{\psi_4} \\ 
3: & \observev{x+y \geq 5} \\ 
4: & \retv{x}
\end{array}
\]
Here the final distribution $D_2$ allows only 3 stores:
$\stotwo{x}{2}{y}{3})$,
$\stotwo{x}{3}{y}{2})$ and
$\stotwo{x}{3}{y}{3})$,
all with probability $\displaystyle \frac{1}{16}$,
and hence
$\displaystyle D_2(\stoone{x}{2}) = \frac{1}{16}$ and 
$\displaystyle D_2(\stoone{x}{3}) = \frac{1}{16} + \frac{1}{16} = \frac{1}{8}$.
Thus the program is biased towards $x$ having value $2$ or $3$;
in particular we cannot write $D_2$ in the form $c \dlt'$.
Hence it is \emph{incorrect} to slice $P_2$ into $P_x$.
%\end{exmp}
\end{example}
%\noindent

In this example, $x$ and $y$ are \emph{not} independent in $D_2$;
this is as expected since the $\observevv$ statement in $P_2$ depends on
something (the assignment to $x$) on which the returned variable $x$
also depends.
%\begin{exmp}
\begin{example}
\label{ex3}
Consider the program $P_3$ given by 

%% \begin{tabbing}
%% kkk\=\kill
%% \>1: \=$\rassignv{x}{\psi_4}$ \\
%% \>2: \>{\bf if}$\; x$\=$ \geq 2$ \\
%% \>3: \>        \>$\rassignv{z}{\psi_4}$ \\
%% \>4: \>        \>$\observev{z \geq 3}$ \\
%% \>5: \>$\retv{x}$
%% \end{tabbing}
\[
\begin{array}{ll}
1: & \rassignv{x}{\psi_4} \\
2: & \mathbf{if}\; x \geq 2 \\
3: & \quad         \rassignv{z}{\psi_4} \\
4: & \quad         \observev{z \geq 3} \\
5: & \retv{x}
\end{array}
\]
Since three quarters of the distribution disappears when $x \geq 2$,
$P_3$ is biased in that it is more likely to return $0$ or $1$ than
$2$ or $3$; in fact, the
final distribution $D_3$ is given by
$D_3(\stoone{x}{i}) 
= \mathbf{\frac{1}{4}}$ when $i \in \{0,1\}$ and
$D_3(\stoone{x}{i}) = D_3(\stotwo{x}{i}{z}{3}) = \mathbf{\frac{1}{16}}$ when $i \in \{2,3\}$. 
(And, when $i \notin \{0,1,2,3\}$, $D_3(\stoone{x}{i}) = 0$.)
Hence it is \emph{incorrect} to slice $P_3$ into $P_x$.
\end{example}

We conclude that the $\observevv$ statement cannot be removed;
this is because it is \emph{control dependent}
on the assignment to $x$, on which the returned $x$ also depends.

The discussion so far suggests the following 
tentative correctness condition
for the set $Q$ picked by slicing:
\begin{itemize}
\item
$Q$ is ``closed under dependence'', \ie, if a program point in $Q$
depends on another program point then that program point also belongs to $Q$;
\item
$Q$ is part of a ``slicing pair'':
any $\observevv$ statement that is sliced away belongs to a set $Q_0$ that
is also closed under dependence and is disjoint from $Q$.
\end{itemize}
The above condition will be made precise in
Definition~\ref{defn:slicing-pair} (Section~\ref{sec:conditions})
which contains a further requirement,
necessary since an
$\observevv$ statement may be encoded as a potentially non-terminating 
loop, as the next example illustrates.
%\begin{exmp}
\begin{example}
\label{ex4}
Consider the program $P_4$ given by 

%% \begin{tabbing}
%% kkk\=kkk\kill
%% \>1: \=$\rassignv{x}{\psi_4}$ \\
%% \>2: \>$\assignv{y}{0}$ \\
%% \>3: \>{\bf if}$\; x$\=$ \geq 2$ \\
%% \>4: \>        \>{\bf wh}\={\bf{ile}}$\; y$\= $< 3$ \\
%% \>5: \>        \>       \>$C$ \\
%% \>6: \>$\retv{x}$
%% \end{tabbing}
\[
\begin{array}{ll}
1: & \rassignv{x}{\psi_4} \\
2: & \assignv{y}{0} \\
3: & \mathbf{if}\; x \geq 2 \\
4: & \quad \mathbf{while}\; y < 3 \\
5: & \quad \qquad      C \\
6: & \retv{x}
\end{array}
\]
where $C$ is a (random) assignment. If $C$ is ``$\assignv{y}{y + 1}$''
then the loop terminates after 3 iterations,
and $y$'s final value is 3. In the resulting
distribution $D'$, for $i \in \{0,1,2,3\}$ we have
$D'(\stoone{x}{i}) = 
D'(\stotwo{x}{i}{y}{3}) = \frac{1}{4} = \dlt'(\stoone{x}{i})$. Thus
it is correct to slice $P_4$ into $P_x$.

But if $C$ is ``$\assignv{y}{1}$'' then the program will
not terminate when $x \geq 2$ (and hence the conditional encodes
$\observev{x < 2}$). Thus the resulting distribution 
$D_4$ is given by
$D_4(\stoone{x}{i}) = \frac{1}{4}$ when $i \in \{0,1\}$ and 
$D_4(\stoone{x}{i}) = 0$ when $i \notin \{0,1\}$.
Thus it is incorrect to slice $P_4$ into $P_x$.
Indeed, Definition~\ref{defn:slicing-pair} rules out such a slicing.

Now assume that $C$ is ``$\rassignv{y}{\psi_4}$''. Then the loop
may iterate arbitrarily many times, but will yet
terminate with probability 1, and $y$'s final value is 3.
Again, it is correct to slice $P_4$ into $P_x$.
%\end{exmp}
\end{example}

\section{Probabilistic Control-Flow Graphs}
\label{sec:CFG}
This section precisely defines the kind of probabilistic control-flow graphs (pCFGs) we consider,
as well as some key concepts that are mostly standard
(see, \eg, \cite{Pod+Cla:TSE-1990,Bal+Hor:AAD-1993}).
However, we also introduce a notion (Definition~\ref{def:outside})
specific to our approach. 

Figure~\ref{fig:ex34} depicts,
with the nodes numbered,
the pCFGs corresponding to the programs
$P_3$ and $P_4$ from Examples~\ref{ex3} and \ref{ex4}.
We see that a node $v \in \unodes$
can be labeled (the label is called $\labv{v}$) with an assignment
$\assignv{x}{E}$ ($x$ a program variable
and $E$ an arithmetic expression),
with a random assignment $\rassignv{x}{\psi}$
(we shall assume that the probability distribution $\psi$
contains no program variables though it would be straightforward
to allow it as in~\cite{Hur+etal:PLDI-2014}),
with $\observev{B}$ ($B$ is a boolean expression),
or (though not part of these examples) with $\skipv$;
a node of the abovementioned kinds has exactly one outgoing edge.
Also, there are branching nodes with \emph{two}
outgoing edges. (If $v$ has an outgoing edge to $v'$ we say that
$v'$ is a successor of $v$; a branching node has a $\true$-successor
and a $\false$-successor.) Finally, there is
a unique $\finalv$ node 
to which there must be a path from all other nodes
but which has no outgoing edges,
and a special node $\startv$ (which 
is numbered 1 in the examples)
from which there is a path to all other nodes.
It will often be the case that the $\finalv$ node has a label
of the form $\retv{x}$.
%% We shall say that a CFG is deterministic
%% if it has no $\observevv$ nodes or random assignments.

We let $\defv{v}$ be
the variable occurring on the left hand side if $v$ is a (random)
assignment, and let $\usev{v}$ be the variables used in $v$, that is:
if $v$ is an assignment then those occurring on the right hand side;
if $v$ is an $\observevv$ node or a branching node
then those occurring in the boolean expression; and
if $v$ is $\retv{x}$ then $\{x\}$.
We demand that all variables be defined before they are used:
for all nodes $v \in \unodes$, for all $x \in \usev{v}$,
and for all paths $\pi$ from $\startv$ to $v$,
there must exist $v_0 \in \pi$ with $v_0 \neq v$
such that $x \in \defv{v_0}$.

\begin{figure}
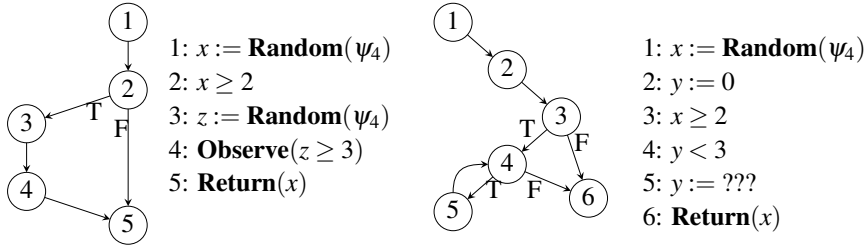

\begin{pgfpicture}{0mm}{5mm}{130mm}{35mm}
\begin{pgfmagnify}{0.9}{0.9}

\pgfnodecircle{v1}[stroke]{\pgfxy(3,3.5)}{8pt}

\pgfputat{\pgfxy(3,3.4)}{\pgfbox[center,base]{1}}

\pgfnodecircle{v2}[stroke]{\pgfxy(3,2.5)}{8pt}

\pgfputat{\pgfxy(3,2.4)}{\pgfbox[center,base]{2}}

\pgfnodecircle{v3}[stroke]{\pgfxy(1.5,2)}{8pt}

\pgfputat{\pgfxy(1.5,1.9)}{\pgfbox[center,base]{3}}

\pgfnodecircle{v4}[stroke]{\pgfxy(1.5,1)}{8pt}

\pgfputat{\pgfxy(1.5,0.9)}{\pgfbox[center,base]{4}}

\pgfnodecircle{v5}[stroke]{\pgfxy(3,0.5)}{8pt}

\pgfputat{\pgfxy(3,0.4)}{\pgfbox[center,base]{5}}

\pgfsetarrowsend{stealth}
\pgfsetendarrow{\pgfarrowtriangle{4pt}}

\pgfnodeconnline{v1}{v2}

\pgfnodeconnline{v2}{v3}

\pgfputat{\pgfxy(2.5,2.05)}{\pgfbox[center,base]{T}}

\pgfputat{\pgfxy(2.9,1.8)}{\pgfbox[center,base]{F}}

\pgfnodeconnline{v2}{v5}

\pgfnodeconnline{v3}{v4}

\pgfnodeconnline{v4}{v5}

\pgfnodecircle{w1}[stroke]{\pgfxy(7.8,3.5)}{8pt}

\pgfputat{\pgfxy(7.8,3.4)}{\pgfbox[center,base]{1}}

\pgfnodecircle{w2}[stroke]{\pgfxy(8.6,2.8)}{8pt}

\pgfputat{\pgfxy(8.6,2.7)}{\pgfbox[center,base]{2}}

\pgfnodecircle{w3}[stroke]{\pgfxy(9.4,2.1)}{8pt}

\pgfputat{\pgfxy(9.4,2.0)}{\pgfbox[center,base]{3}}

\pgfnodecircle{w4}[stroke]{\pgfxy(8.6,1.4)}{8pt}

\pgfputat{\pgfxy(8.6,1.3)}{\pgfbox[center,base]{4}}

\pgfnodecircle{w5}[stroke]{\pgfxy(7.8,0.7)}{8pt}

\pgfputat{\pgfxy(7.8,0.6)}{\pgfbox[center,base]{5}}

\pgfnodecircle{w6}[stroke]{\pgfxy(9.8,0.9)}{8pt}

\pgfputat{\pgfxy(9.8,0.8)}{\pgfbox[center,base]{6}}

\pgfsetarrowsend{stealth}
\pgfsetendarrow{\pgfarrowtriangle{4pt}}

\pgfnodeconnline{w1}{w2}

\pgfnodeconnline{w2}{w3}

\pgfnodeconnline{w3}{w4}

\pgfputat{\pgfxy(8.9,1.8)}{\pgfbox[center,base]{T}}

\pgfputat{\pgfxy(9.7,1.6)}{\pgfbox[center,base]{F}}

\pgfnodeconnline{w3}{w6}

\pgfnodeconnline{w4}{w5}

\pgfputat{\pgfxy(8.4,0.9)}{\pgfbox[center,base]{T}}

\pgfputat{\pgfxy(9.0,0.9)}{\pgfbox[center,base]{F}}

\pgfnodeconnline{w4}{w6}

\pgfnodeconncurve{w5}{w4}{90}{180}{10pt}{10pt}

\pgfputat{\pgfxy(3.6,3.0)}{\pgfbox[left,base]{1: $\rassignv{x}{\psi_4}$}}
\pgfputat{\pgfxy(3.6,2.5)}{\pgfbox[left,base]{2: $x \geq 2$}}
\pgfputat{\pgfxy(3.6,2.0)}{\pgfbox[left,base]{3: $\rassignv{z}{\psi_4}$}}
\pgfputat{\pgfxy(3.6,1.5)}{\pgfbox[left,base]{4: $\observev{z \geq 3}$}}
\pgfputat{\pgfxy(3.6,1.0)}{\pgfbox[left,base]{5: $\retv{x}$}}

\pgfputat{\pgfxy(10.6,3.0)}{\pgfbox[left,base]{1: $\rassignv{x}{\psi_4}$}}
\pgfputat{\pgfxy(10.6,2.5)}{\pgfbox[left,base]{2: $\assignv{y}{0}$}}
\pgfputat{\pgfxy(10.6,2.0)}{\pgfbox[left,base]{3: $x \geq 2$}}
\pgfputat{\pgfxy(10.6,1.5)}{\pgfbox[left,base]{4: $y < 3$}}
\pgfputat{\pgfxy(10.6,1.0)}{\pgfbox[left,base]{5: $\assignv{y}{???}$}}
\pgfputat{\pgfxy(10.6,0.5)}{\pgfbox[left,base]{6: $\retv{x}$}}

\end{pgfmagnify}
\end{pgfpicture}

\caption{\label{fig:ex34} The pCFGs for $P_3$ (left) and $P_4$ (right)
from Examples~\ref{ex3} and \ref{ex4}.}
\end{figure}

\begin{definition}[Postdomination]
We say that $v_1$ \dt{postdominates} $v$, also written
$(v,v_1) \in \PD$, if $v_1$ occurs
on all paths from $v$ to $\finalv$;
if also $v_1 \neq v$,
$v_1$ is a \dt{proper postdominator} of $v$.
\end{definition}
It is easy to see that
the relation ``postdominates'' is reflexive, transitive, and antisymmetric.
We say that $v_1$ is the \dt{first proper postdominator} of $v$
if whenever $v_2$ is another proper postdominator of $v$
then all paths from $v$ to $v_2$ contain $v_1$.
\begin{lemma}
For any $v$ with $v \neq \finalv$,
there is a unique first proper postdominator of $v$.
\end{lemma}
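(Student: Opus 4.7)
The plan is to first establish that the proper postdominators of $v$ form a totally ordered set under the postdominance relation, then take its minimum as the candidate first proper postdominator, and verify the defining condition; uniqueness will follow directly from that condition.

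For existence, I would first show that any two proper postdominators $v_1, v_2$ of $v$ are comparable under postdominance. Pick any path $\pi$ from $v$ to $\finalv$, which exists since $\finalv$ is reachable from every node. Both $v_1$ and $v_2$ occur on $\pi$, and being distinct, one of them (say $v_1$) has its first occurrence strictly earlier on $\pi$. Given any path $\sigma$ from $v_1$ to $\finalv$, the concatenation of the prefix of $\pi$ up to that first occurrence of $v_1$ with $\sigma$ is again a path from $v$ to $\finalv$; it must contain $v_2$, and since the prefix does not, $v_2$ must lie on $\sigma$. Hence $v_2$ postdominates $v_1$. The set of proper postdominators of $v$ is finite, nonempty (since $v \neq \finalv$, it contains $\finalv$), and antisymmetric (as already noted), so this total order has a unique minimum $v_1$, meaning every other proper postdominator of $v$ postdominates $v_1$.

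Next I would verify that this $v_1$ satisfies the defining condition. Let $v_2$ be a proper postdominator of $v$ with $v_2 \neq v_1$, and suppose for contradiction that there is a path $\pi_0$ from $v$ to $v_2$ avoiding $v_1$. Either some path from $v_2$ to $\finalv$ also avoids $v_1$, in which case concatenating with $\pi_0$ yields a path from $v$ to $\finalv$ avoiding $v_1$, contradicting that $v_1$ postdominates $v$; or every path from $v_2$ to $\finalv$ contains $v_1$, in which case $v_1$ postdominates $v_2$, and combined with $v_2$ postdominating $v_1$ (by minimality) and antisymmetry, we obtain $v_1 = v_2$, again a contradiction.

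For uniqueness, suppose $v_1$ and $v_1'$ both satisfy the definition. On any path $\pi$ from $v$ to $\finalv$ both occur, so without loss of generality $v_1$'s first occurrence on $\pi$ is at a position no later than $v_1'$'s. The prefix of $\pi$ up to the first $v_1$ is a path from $v$ to $v_1$ and so, by the defining condition applied to $v_1'$, must contain $v_1'$; this forces the two first-occurrence positions to coincide, hence $v_1 = v_1'$. The main delicate step is establishing totality of the postdominance order on proper postdominators; once that is in hand, everything else follows by carefully combining reachability with the postdominance property.
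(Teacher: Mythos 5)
Your proof is correct, but it takes a different route from the paper's. The paper obtains this lemma as a corollary of Lemma~\ref{lem:prec-ordering}: it introduces the auxiliary relation $\prec$ that orders the proper postdominators of $v$ by where they first occur on acyclic paths from $v$ to $\finalv$, proves that $\prec$ is total (together with the fact that $v_1 \prec v_2$ forces the first occurrence of $v_1$ to precede that of $v_2$ on \emph{every} path from $v$ to $\finalv$), and then takes the $\prec$-least element of the finite non-empty set of proper postdominators. You instead show that the postdominance relation itself is total on the proper postdominators of $v$ --- the standard ``chain'' property of postdominators --- take the least element of that chain, verify the defining condition of a first proper postdominator directly by a case split, and add an explicit uniqueness argument; your comparability step, the paper's totality proof, and your uniqueness argument all rest on the same splicing trick (prefix of one path up to a first occurrence, continued by an arbitrary path from that node to $\finalv$ or to the other postdominator). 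What each approach buys: the paper's detour through $\prec$ pays off later, since Lemma~\ref{lem:prec-ordering} is reused elsewhere (e.g.\ in the proof of Lemma~\ref{lem:LAP-add} and of Lemma~\ref{lem:weak-union}), whereas your argument is self-contained, never needs acyclic paths, and makes the uniqueness claim explicit rather than leaving it folded into the properties of $\prec$. The only points you rely on implicitly --- that the node set is finite, that $\finalv$ is reachable from every node, and that postdominance is antisymmetric --- are all stated or assumed in the paper, so there is no gap.
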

\begin{proof}
This follows from Lemma~\ref{lem:prec-ordering}
%% (proved in Appendix~\ref{app:proofs}) 
since the set of proper postdominators is
non-empty (will at least contain $\finalv$).
\end{proof}
We shall use the term $\fppd{v}$ for the unique first proper
postdominator of $v$.
In Figure~\ref{fig:ex34}(right), $\fppd{1} = 2$
(while also nodes 3 and 6 are proper postdominators of 1)
and $\fppd{3} = 6$.
\begin{lemma}
\label{lem:prec-ordering}
For given $v$, let $\prec$ be an ordering among
proper postdominators of $v$,
by stipulating that $v_1 \prec v_2$ iff in all acyclic paths from $v$
to $\finalv$, $v_1$ occurs strictly before $v_2$.
Then $\prec$ is transitive, antisymmetric, and total.
Also, if $v_1 \prec v_2$ then 
for \emph{all} paths from $v$ to $\finalv$ it is the case that
the first occurrence of $v_1$ is before the first occurrence of $v_2$.
\end{lemma}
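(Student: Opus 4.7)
The plan is to handle the three order-theoretic properties and the final ``all paths'' clause in sequence. Transitivity is immediate: acyclic paths from $v$ to $\finalv$ exist (any path can be shortcut to one), and on any such acyclic path $v_1 \prec v_2$ and $v_2 \prec v_3$ force $v_1$ strictly before $v_3$, giving $v_1 \prec v_3$. Antisymmetry is equally direct: $v_1 \prec v_2$ together with $v_2 \prec v_1$ would demand that a single acyclic path list $v_1$ both strictly before and strictly after $v_2$, which is impossible.

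For totality, I first observe that if $v_1,v_2$ are distinct proper postdominators of $v$, each appears exactly once on any acyclic path from $v$ to $\finalv$ (at most once by acyclicity, at least once by postdomination), so the two are linearly ordered on that path. The real work is to show this order is the same across every acyclic path. I argue by contradiction: if acyclic paths $\pi$ and $\pi'$ list them in opposite orders, take the prefix of $\pi$ from $v$ to $v_1$ (which avoids $v_2$) and concatenate it with the suffix of $\pi'$ from $v_1$ to $\finalv$ (which also avoids $v_2$, since $\pi'$ is acyclic and $v_2$ has already occurred in its earlier part). The concatenation is a path from $v$ to $\finalv$ that never visits $v_2$, contradicting $v_2 \in \PD$.

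For the final clause, assume $v_1 \prec v_2$ and suppose some path $\pi$ from $v$ to $\finalv$ has its first occurrence of $v_2$ strictly before its first occurrence of $v_1$. Then the prefix of $\pi$ up to this first $v_2$ contains no $v_1$, and shortcutting cycles yields an acyclic path $\alpha$ from $v$ to $v_2$ avoiding $v_1$. I would then pick any acyclic path $\beta$ from $v_2$ to $\finalv$ and show that the concatenation $\alpha \cdot \beta$ is itself acyclic. This is the main obstacle: if $\alpha$ and $\beta$ shared a node other than the joining $v_2$, splicing the prefix of $\alpha$ up to such a shared node $w$ with the suffix of $\beta$ from $w$ would produce a path from $v$ to $\finalv$; acyclicity of $\alpha$ places $v_2$ only at $\alpha$'s endpoint, so $v_2$ is absent from the $\alpha$-prefix used in the splice, and acyclicity of $\beta$ places $v_2$ only at $\beta$'s start, so $v_2$ is absent from the $\beta$-suffix used, making the spliced path avoid $v_2$ and contradicting postdomination. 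Hence $\alpha \cdot \beta$ is acyclic; $v_1$ postdominates $v$ and so must appear on it, and being absent from $\alpha$ it must occur in $\beta$ strictly after $v_2$, contradicting $v_1 \prec v_2$.
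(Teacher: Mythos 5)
Your proof is correct, and for everything except the last clause it follows the paper's own route: transitivity and antisymmetry are dispatched as immediate, and your totality argument (splice the prefix of one acyclic path up to $v_1$ with the suffix of the other from $v_1$, obtaining a path from $v$ to $\finalv$ that avoids $v_2$, contradicting $(v,v_2) \in \PD$) is exactly the paper's. For the final clause you genuinely diverge. The paper aims its contradiction at the postdomination of $v_1$: it concatenates the prefix of the offending path up to the first occurrence of $v_2$ (which avoids $v_1$) with the suffix from $v_2$ of some acyclic path from $v$ to $\finalv$ (which avoids $v_1$ precisely because $v_1 \prec v_2$), yielding a path from $v$ to $\finalv$ missing $v_1$ altogether; since postdomination quantifies over \emph{all} paths, cyclic or not, no acyclicity of the concatenation is needed and the argument ends there. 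You instead aim the contradiction at the definition of $\prec$, which obliges you to exhibit an \emph{acyclic} path from $v$ to $\finalv$ on which $v_2$ precedes $v_1$, and hence to prove that $\alpha \cdot \beta$ is acyclic. Your splice argument for that step is sound (any shared node $w \neq v_2$ would give a $v$-to-$\finalv$ path avoiding $v_2$), and it is essentially the same device the paper uses later when proving Lemma~\ref{lem:LAP-add}; it even yields the reusable fact that an acyclic path from $v$ to a postdominator $v_2$ composes with any acyclic path from $v_2$ to $\finalv$ into an acyclic path. So both routes work; the paper's choice of contradiction target simply makes the last clause shorter, while yours pays for the acyclicity sub-argument but needs only the suffix structure of acyclic paths.
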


\begin{definition}[\LAPf]
For $(v,v') \in \PD$, we define $\LAP{v}{v'}$ as the maximum length
of an acyclic path from $v$ to $v'$. 
(The length of a path is the number of edges.)
\end{definition}
Thus $\LAP{v}{v} = 0$ for all nodes $v$. 
As expected, we have:
%% (as proved in Appendix~\ref{app:proofs}):
\begin{lemma}
\label{lem:LAP-add}
If $(v,v_1) \in \PD$ and $(v_1,v_2) \in \PD$ (and thus
$(v,v_2) \in \PD$) then $\LAP{v}{v_2} = \LAP{v}{v_1} + \LAP{v_1}{v_2}$.
\end{lemma}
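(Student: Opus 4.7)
The plan is to prove equality by establishing the two inequalities separately, preceded by a key preliminary observation: every acyclic path $\pi$ from $v$ to $v_2$ must pass through $v_1$. Both $v_1$ and $v_2$ are proper postdominators of $v$ (the edge cases $v_1 = v$ or $v_1 = v_2$ make the lemma trivial), and since $v_2$ postdominates $v_1$, every acyclic path from $v$ to $\finalv$ traverses $v_1$ and then $v_2$, so $v_1 \prec v_2$ in the ordering from Lemma~\ref{lem:prec-ordering}. Extending $\pi$ by an arbitrary path from $v_2$ to $\finalv$ yields a path on which the first occurrence of $v_1$ precedes the first occurrence of $v_2$; the latter is no later than the terminal node of $\pi$, forcing $v_1 \in \pi$.

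For the $(\le)$ direction, take any acyclic path $\pi$ from $v$ to $v_2$, split it at its (unique) occurrence of $v_1$ as $\pi = \pi_1\pi_2$, observe that each piece is acyclic, and bound $|\pi_1| \le \LAP{v}{v_1}$ and $|\pi_2| \le \LAP{v_1}{v_2}$. Taking the maximum over $\pi$ gives $\LAP{v}{v_2} \le \LAP{v}{v_1} + \LAP{v_1}{v_2}$.

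For the $(\ge)$ direction, let $\pi_1$ realize $\LAP{v}{v_1}$ and $\pi_2$ realize $\LAP{v_1}{v_2}$. I claim $\pi_1$ and $\pi_2$ share only the node $v_1$, so the concatenation $\pi_1\pi_2$ is acyclic of length $\LAP{v}{v_1}+\LAP{v_1}{v_2}$. Suppose instead that some node $u \neq v_1$ appears in both. Since each of $\pi_1,\pi_2$ is acyclic and $v_1$ sits at the terminus of $\pi_1$ and at the start of $\pi_2$, neither the prefix of $\pi_1$ ending at $u$ nor the suffix of $\pi_2$ starting at $u$ contains $v_1$. Splicing these at $u$ produces a walk from $v$ to $v_2$ avoiding $v_1$; extracting a simple path from this walk (standard shortcutting of repeated vertices) yields an acyclic path from $v$ to $v_2$ that still avoids $v_1$, contradicting the preliminary observation.

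The main obstacle is the disjointness argument in the second direction: one has to be careful that the shortcut-and-simplify process does not reintroduce $v_1$. The argument above works precisely because $v_1$ occurs only as the shared endpoint of $\pi_1$ and $\pi_2$, never in their interiors, so any vertex-splice strictly inside those paths stays clear of $v_1$.
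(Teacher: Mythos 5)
Your proof is correct and follows essentially the same route as the paper's: dispose of the trivial cases, use Lemma~\ref{lem:prec-ordering} to show every acyclic path from $v$ to $v_2$ passes through $v_1$, then prove $\leq$ by splitting such a path at $v_1$ and $\geq$ by concatenating extremal paths and arguing the pieces share only $v_1$. The only difference is that you spell out the shortcutting step (extracting an acyclic path avoiding $v_1$ from the spliced walk) that the paper leaves implicit, which is a harmless elaboration.
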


To reason about cycles, it is useful to pinpoint the kind of nodes that cause cycles:
\begin{definition}[Cycle-inducing]
\label{def:cycle-induce}
A node $v$ is \emph{cycle-inducing} if with $v' = \fppd{v}$ there exists a successor $v_i$
of $v$ such that $\LAP{v_i}{v'} \geq \LAP{v}{v'}$.
\end{definition}
Note that if $v$ is cycle-inducing then $v$ must be a branching node
(since if $v$ has only one successor then that successor is $v'$).
%\begin{exmp}
\begin{example}
\label{ex:cycle-induce}
In Figure~\ref{fig:ex34}(right), there are two branching nodes,
3 and 4, both having node 6 as their first proper postdominator.
Node 4 is cycle-inducing, since
5 is a successor of 4 with
$\LAP{5}{6} = 2 > 1 = \LAP{4}{6}$. 
On the other hand, node 3 is not cycle-inducing, since
$\LAP{3}{6} = 2$ which is strictly greater
than $\LAP{4}{6}$ ($=1$) and $\LAP{6}{6}$ ($=0$).
\end{example}
%\end{exmp}
\begin{lemma}
\label{lem:LAPcycle}
If $v$ is cycle-inducing then there exists a cycle that contains $v$
but not $\fppd{v}$.
\end{lemma}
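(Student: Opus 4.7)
The plan is to construct the cycle explicitly from a maximum-length acyclic path guaranteed by the cycle-inducing property. Let $v' = \fppd{v}$, and let $v_i$ be the successor of $v$ witnessing the inequality $\LAP{v_i}{v'} \geq \LAP{v}{v'}$. Let $\pi$ be an acyclic path from $v_i$ to $v'$ of length $\LAP{v_i}{v'}$ (such a path exists since $v'$ postdominates $v$ and hence is reachable from $v_i$).

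The first key step is to show that $\pi$ must pass through $v$. Suppose for contradiction that $v \notin \pi$. Then prepending the edge $v \to v_i$ to $\pi$ yields an acyclic path from $v$ to $v'$ of length $1 + \LAP{v_i}{v'}$. By definition of $\LAPf$ this gives $\LAP{v}{v'} \geq 1 + \LAP{v_i}{v'} > \LAP{v_i}{v'}$, contradicting the cycle-inducing hypothesis. Hence $v$ appears in $\pi$.

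The second step is to extract the cycle. Let $\pi'$ be the prefix of $\pi$ running from $v_i$ to the first occurrence of $v$ in $\pi$. Concatenating the edge $v \to v_i$ with $\pi'$ produces a closed walk through $v$, i.e., a cycle containing $v$. It remains to verify that $v'$ is not on this cycle. Because $\pi$ is acyclic and ends at $v'$, the node $v'$ occurs in $\pi$ exactly once, at its final position. Since $v' \neq v$ (as $v'$ is a \emph{proper} postdominator), the prefix $\pi'$ terminates strictly before the endpoint of $\pi$ and therefore avoids $v'$. Thus the constructed cycle contains $v$ but not $\fppd{v}$, as required.

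I do not anticipate a major obstacle: the main subtlety is merely ensuring that $v'$ does not sneak onto the prefix $\pi'$, which is handled cleanly by the acyclicity of $\pi$ together with $v \neq v'$. One minor degenerate case worth noting is $v = v_i$ (a self-loop), which is itself a cycle containing $v$ and trivially not $v'$; the argument above handles it uniformly since $\pi'$ would then be empty and the edge $v \to v$ alone is the cycle.
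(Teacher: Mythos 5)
Your proposal is correct and follows essentially the same route as the paper's proof: take a maximum-length acyclic path $\pi$ from $v_i$ to $v' = \fppd{v}$, use the maximality of $\LAP{v}{v'}$ to force $v \in \pi$, and extract from $v\pi$ a cycle through $v$ avoiding $v'$. Your version merely spells out more explicitly why $v'$ cannot lie on the extracted cycle and notes the self-loop degenerate case, which the paper leaves implicit.
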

\begin{proof}
With $v' = \fppd{v}$, by assumption there exists 
a successor $v_i$ of $v$ such that $\LAP{v_i}{v'} \geq \LAP{v}{v'}$;
observe that $v'$ is a postdominator of $v_i$.
Let $\pi$ be an acyclic path from $v_i$ to $v'$ with length
$\LAP{v_i}{v'}$;
then the path $v \pi$ is a path from $v$ to $v'$ that is longer
than $\LAP{v_i}{v'}$, and thus also longer
than $\LAP{v}{v'}$. This shows that $v \pi$ cannot be acyclic;
hence $v \in \pi$ and thus $v \pi$ contains a cycle involving $v$ but not $v'$.
\end{proof}
\begin{lemma}
\label{lem:cycle1inducing}
All cycles will contain at least one node which is cycle-inducing.
\end{lemma}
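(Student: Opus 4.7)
The approach is to pick $v^* \in C$ minimizing $\LAP{v^*}{\finalv}$ over $v \in C$, and show that the cycle-successor $v^+$ of $v^*$ witnesses $v^*$ being cycle-inducing via a short application of Lemma~\ref{lem:LAP-add} and a subtraction.

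Without loss of generality $C$ is a simple cycle $v_0 \to v_1 \to \cdots \to v_{k-1} \to v_0$ (any cycle contains a simple sub-cycle, and a cycle-inducing node on the latter lies on the former). Since $\finalv$ has no outgoing edges it can lie in no cycle, so $\finalv \notin C$ and $\LAP{v_i}{\finalv} \geq 1$ for every $v_i \in C$; I take $v^* \in C$ to minimize this quantity and let $v^+$ denote its cycle-successor. The key auxiliary fact is that $\fppd{v^*}$ postdominates $v^+$, which I would prove by prepending the edge $v^* \to v^+$ to an arbitrary path $v^+ \to \cdots \to \finalv$: the resulting path from $v^*$ must contain $\fppd{v^*}$, and since $\fppd{v^*} \neq v^*$ this occurrence lies in the $v^+ \to \cdots \to \finalv$ portion.

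With $(v^+, \fppd{v^*})$, $(v^*, \fppd{v^*})$ and $(\fppd{v^*}, \finalv)$ all in $\PD$, Lemma~\ref{lem:LAP-add} yields the two identities $\LAP{v^*}{\finalv} = \LAP{v^*}{\fppd{v^*}} + \LAP{\fppd{v^*}}{\finalv}$ and $\LAP{v^+}{\finalv} = \LAP{v^+}{\fppd{v^*}} + \LAP{\fppd{v^*}}{\finalv}$. Subtracting cancels the common tail $\LAP{\fppd{v^*}}{\finalv}$ and gives $\LAP{v^+}{\fppd{v^*}} - \LAP{v^*}{\fppd{v^*}} = \LAP{v^+}{\finalv} - \LAP{v^*}{\finalv}$; since $v^+ \in C$, minimality of $v^*$ makes the right-hand side nonnegative, so $\LAP{v^+}{\fppd{v^*}} \geq \LAP{v^*}{\fppd{v^*}}$. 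By Definition~\ref{def:cycle-induce}, $v^*$ is cycle-inducing, witnessed by its successor $v^+$.

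The step that requires the most forethought is the choice of what to minimize: minimizing $\LAP{\cdot}{\finalv}$ is precisely what makes the subtraction argument go through, because it converts the nonnegative difference $\LAP{v^+}{\finalv} - \LAP{v^*}{\finalv}$ (from minimality) into the desired $\LAP{v^+}{\fppd{v^*}} - \LAP{v^*}{\fppd{v^*}} \geq 0$. Everything else is a single path-prefixing observation and two invocations of Lemma~\ref{lem:LAP-add}; the only corner case, a self-loop with $v^+ = v^*$, reduces the required inequality to a tautology.
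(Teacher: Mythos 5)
Your proof is correct and essentially the same as the paper's: both arguments use the potential $\LAP{\cdot}{\finalv}$ together with the Lemma~\ref{lem:LAP-add} subtraction to turn a non-decreasing step along the cycle into the inequality $\LAP{v^+}{\fppd{v^*}} \geq \LAP{v^*}{\fppd{v^*}}$. The only (cosmetic) difference is how the witness is located — you take the node minimizing $\LAP{\cdot}{\finalv}$ on the cycle, while the paper notes the potential strictly decreases at single-successor nodes and so cannot decrease all the way around — and your explicit check that $\fppd{v^*}$ postdominates $v^+$ is a detail the paper leaves implicit.
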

\begin{proof}
Let a cycle $\pi$ be given. For each $v \in \pi$, define 
$f(v)$ as $\LAP{v}{\finalv}$. For a node $v$ that 
has only one successor, $v_1$,
we have $f(v_1) < f(v)$ (since by Lemma~\ref{lem:LAP-add}, 
$f(v) = \LAP{v}{v_1} + f(v_1) = 1 + f(v_1)$).
Thus $\pi$ must contain a branching node $v_0$ with a successor $v_i$
such that $f(v_i) \geq f(v_0)$.
But with $v' = \fppd{v_0}$ we then have (by Lemma~\ref{lem:LAP-add})
\[
\LAP{v_i}{v'} = f(v_i) - f(v')
\geq f(v_0) - f(v')
= \LAP{v_0}{v'}
\]
which shows that $v_0$ is cycle inducing.
\end{proof}

\begin{definition}[Data dependence]
\label{def:datadep}
We say that 
$v_2$ is \dt{data dependent} on $v_1$, written $\dd{v_1}{v_2}$,
if there exists $x \in \usev{v_2} \cap \defv{v_1}$,
and there exists a path $\pi$ (with at least one edge) from $v_1$ to $v_2$ 
such that $x \notin \defv{v}$ for all nodes $v$ that are interior in $\pi$.
\end{definition}
In Figure~\ref{fig:ex34}(right), $\dd{2}{4}$ and $\dd{5}{4}$. 
A set of nodes $Q$ is
\dt{closed under data dependence} if whenever
$v_2 \in Q$ and $\dd{v_1}{v_2}$ then also $v_1 \in Q$.

\begin{definition}[Relevant variable]
We say that $x$ is ($Q$-)relevant for $v$, 
written $x \in \rv{Q}{v}$, if
there exists $v' \in Q$ such that
$x \in \usev{v'}$, and an acyclic path $\pi$ from $v$ to $v'$ 
such that $x \notin \defv{v_1}$ for all $v_1 \in \pi \setminus \mkset{v'}$. 
\end{definition}
For example, in Figure~\ref{fig:ex34}(left), 
$\rv{\{4,5\}}{4} = \{x,z\}$ but
$\rv{\{4,5\}}{3} = \{x\}$. 
The following two lemmas follow from the above definition.
\begin{lemma}
\label{lem:rv-assign}
Assume that $v$ is an assignment,
of the form $\assignv{x}{E}$, with successor $v'$.
If $v \in Q$ then (with $\fv{E}$ the free variables in $E$)
\[
\rv{Q}{v} = (\rv{Q}{v'} \setminus\mkset{x}) \cup \fv{E}.
\]
\end{lemma}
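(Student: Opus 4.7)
The plan is to prove the set equality by showing containment in both directions, exploiting the definition of $\rv{Q}{v}$ in terms of the existence of an acyclic witness path $\pi$ from $v$ to some $w \in Q$ such that $y \in \usev{w}$ and $y \notin \defv{v_1}$ for $v_1 \in \pi \setminus \mkset{w}$. The case analysis will hinge on whether the witness path is trivial (so $v = w$) or continues through $v$'s unique successor $v'$; throughout, I will use that $\defv{v} = \mkset{x}$ and that $v$ has exactly one outgoing edge, to $v'$.

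For the inclusion $\supseteq$, I would split into two sub-cases. If $y \in \fv{E} = \usev{v}$, then since $v \in Q$ the single-node path $\pi = v$ is acyclic and witnesses $y \in \rv{Q}{v}$ (the condition on interior nodes is vacuous). If $y \in \rv{Q}{v'} \setminus \mkset{x}$, pick an acyclic witness path $\pi'$ from $v'$ to some $w \in Q$ with $y \in \usev{w}$ and $y$ not defined along $\pi' \setminus \mkset{w}$. Since $v$'s only successor is $v'$ and $\defv{v} = \mkset{x}$ with $y \neq x$, the sequence $v\pi'$ is a path from $v$ to $w$ along which $y$ is never defined outside $w$. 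If $v$ happens not to occur in $\pi'$, this prepended path is already acyclic; if $v$ does occur in $\pi'$, I would instead truncate $\pi'$ at its first occurrence of $v$, obtaining a strictly shorter acyclic path from $v$ to $w$ that inherits the non-definition property.

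For $\subseteq$, suppose $y \in \rv{Q}{v}$ with witness $w$ and acyclic path $\pi$. If $v = w$, then $y \in \usev{v} = \fv{E}$ and we are done. Otherwise $v$ belongs to $\pi \setminus \mkset{w}$, so $y \notin \defv{v} = \mkset{x}$, giving $y \neq x$; and since $v'$ is the only successor of $v$, the suffix $\pi'$ of $\pi$ starting at $v'$ is an acyclic path from $v'$ to $w$ witnessing $y \in \rv{Q}{v'}$, hence $y \in \rv{Q}{v'} \setminus \mkset{x}$.

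The main obstacle is the acyclicity bookkeeping in the $\supseteq$ direction: prepending $v$ to an acyclic path from $v'$ does not automatically yield an acyclic path from $v$, so the truncation step is essential. Aside from that subtlety, everything follows by directly chasing the definitions, using only that $v$ has a unique successor and that its sole defined variable is $x$.
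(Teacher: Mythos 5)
Your proof is correct and follows the same route as the paper, which only offers a one-sentence justification (variables of $E$ are relevant before $v$ since $v \in Q$; variables relevant after $v$, except the redefined $x$, remain relevant before $v$); you simply carry out that definition-chasing in full, covering both inclusions. The extra care you take with acyclicity when prepending $v$ (passing to the suffix of the witness path starting at $v$ if $v$ already occurs in it) is exactly the bookkeeping the paper leaves implicit, and it is handled correctly.
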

This follows since
the variables free in $E$ are relevant before $v$ (as $v \in Q$),
and all variables relevant after $v$ are also relevant before $v$ 
except for $x$ as it is being redefined.
\begin{lemma}
\label{lem:rv-branch}
Assume that $v$ is a branching node,
with condition $B$ and with successors $v_1$
and $v_2$. If $v \in Q$ then
\[
\rv{Q}{v} = \fv{B} \cup \rv{Q}{v_1} \cup \rv{Q}{v_2}.
\]
\end{lemma}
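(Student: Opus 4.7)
The plan is to prove the two inclusions $\supseteq$ and $\subseteq$ separately, exploiting the crucial fact that because $v$ is a branching node, $\defv{v} = \emptyset$ (branching nodes do not define any variable). This makes both the ``prepend $v$'' construction in one direction and the ``strip off $v$'' construction in the other completely clean with respect to the ``$x$ is not redefined along $\pi$'' side condition.

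For the $\supseteq$ direction, suppose $x$ lies in the right-hand side. If $x \in \fv{B} = \usev{v}$, then the trivial (zero-edge) path from $v$ to itself witnesses $x \in \rv{Q}{v}$ since $v \in Q$ by assumption. Otherwise $x \in \rv{Q}{v_i}$ for some $i \in \{1,2\}$, witnessed by an acyclic path $\pi$ from $v_i$ to some $v' \in Q$ with $x \in \usev{v'}$ and $x \notin \defv{v_0}$ for $v_0 \in \pi \setminus \{v'\}$. Since $v \to v_i$ is an edge and $\defv{v} = \emptyset$, the extended path $v \cdot \pi$ satisfies the data-flow side condition. The only subtlety is acyclicity: if $v \notin \pi$ then $v \cdot \pi$ is acyclic directly; if $v \in \pi$, replace $\pi$ by its suffix starting at $v$, which is a subpath of an acyclic path and still witnesses the required property.

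For the $\subseteq$ direction, suppose $x \in \rv{Q}{v}$, witnessed by an acyclic path $\pi$ from $v$ to some $v' \in Q$ with $x \in \usev{v'}$ and $x \notin \defv{v_0}$ for every $v_0 \in \pi \setminus \{v'\}$. If $\pi$ has zero edges then $v' = v$, so $x \in \usev{v} = \fv{B}$ and we are done. Otherwise the first edge of $\pi$ goes from $v$ to some successor $v_i$ with $i \in \{1,2\}$; let $\pi'$ be the suffix of $\pi$ starting at $v_i$. Then $\pi'$ is an acyclic path from $v_i$ to $v'$, and the side condition on non-redefinition is inherited from $\pi$ since $\pi' \setminus \{v'\} \subseteq \pi \setminus \{v'\}$. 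Hence $x \in \rv{Q}{v_i}$.

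There is no real obstacle here; the argument is essentially symmetric to Lemma~\ref{lem:rv-assign} but slightly simpler because a branching node contributes no definitions. The only minor bookkeeping concern is the acyclicity preservation in the $\supseteq$ direction, which is handled by the subpath truncation described above.
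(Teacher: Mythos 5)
Your proof is correct, and it is essentially the argument the paper has in mind: the paper states this lemma without an explicit proof (as an immediate consequence of the definition of $\rv{Q}{v}$, analogous to the informal justification given for Lemma~\ref{lem:rv-assign}), and your two-inclusion argument using $\defv{v}=\emptyset$ and the edge $v \to v_i$ fills in exactly those details. The one genuinely nontrivial point, preserving acyclicity when prepending $v$ in the $\supseteq$ direction, is handled correctly by passing to the suffix of the witnessing path starting at $v$.
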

\begin{lemma}
\label{lem:rv-cup} % part 1
\label{lem:rv-cap} % part 2, not referred to
For all nodes $v$, and all node sets $Q_1$ and $Q_2$
\begin{itemize}
\item
$\rv{Q_1 \cup Q_2}{v} = \rv{Q_1}{v} \cup \rv{Q_2}{v}$
\item
$\rv{Q_1}{v} \cap \rv{Q_2}{v} = \emptyset$
if $Q_1$ and $Q_2$ are both closed under data dependence,
and $Q_1 \cap Q_2 = \emptyset$.
\end{itemize}
\end{lemma}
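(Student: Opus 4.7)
The plan is to handle the two bullet-statements separately, with the second being the substantive one.

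For the union identity $\rv{Q_1 \cup Q_2}{v} = \rv{Q_1}{v} \cup \rv{Q_2}{v}$, I would simply unfold the definition: a variable $x$ belongs to $\rv{Q_1 \cup Q_2}{v}$ iff some witness $v' \in Q_1 \cup Q_2$ uses $x$ and is reachable from $v$ by an acyclic path along which $x$ is not redefined (except possibly at $v'$). The witness $v'$ lies in $Q_1$ or in $Q_2$, giving the two sides of the equality, and the reverse containment is immediate. This is dispatched in a couple of lines by case analysis on which $Q_i$ contains $v'$.

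The real work is the disjointness claim. I will argue by contradiction: suppose some $x \in \rv{Q_1}{v} \cap \rv{Q_2}{v}$, yielding witnesses $v_i' \in Q_i$ (for $i = 1, 2$) with $x \in \usev{v_i'}$ and acyclic paths $\pi_i$ from $v$ to $v_i'$ along which $x$ is not redefined (except possibly at $v_i'$). The aim is to locate a single node $v_0$ defining $x$ such that data-dependence closure forces $v_0$ to lie in \emph{both} $Q_1$ and $Q_2$, contradicting $Q_1 \cap Q_2 = \emptyset$.

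Concretely, in the main case $v \notin \{v_1', v_2'\}$, I first fix any acyclic path $\rho$ from $\startv$ to $v$ (available since $\startv$ reaches every node). Each concatenation $\rho \cdot \pi_i$ is a path from $\startv$ to $v_i'$, so the well-formedness assumption (``variables defined before use'') guarantees some node on $\rho \cdot \pi_i$ other than $v_i'$ that defines $x$. This node cannot lie in $\pi_i \setminus \{v_i'\}$ by hypothesis on $\pi_i$, and it cannot equal $v$ itself because $v \in \pi_i \setminus \{v_i'\}$ and hence does not define $x$; therefore it lies strictly before $v$ on $\rho$. Let $v_0$ be the latest node on $\rho$ strictly before $v$ that defines $x$; the preceding existence argument shows $v_0$ is well-defined, and crucially $v_0$ depends only on $\rho$ and $x$ (not on $i$), so the same $v_0$ serves both cases. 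Walking from $v_0$ along the tail of $\rho$ into $v$ and then along $\pi_i$ yields a path witnessing $\dd{v_0}{v_i'}$, since by construction no interior node on this path defines $x$. Closure of $Q_i$ under data dependence then gives $v_0 \in Q_i$ for each $i$, and hence $v_0 \in Q_1 \cap Q_2$, the desired contradiction.

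The edge cases $v = v_1'$ or $v = v_2'$ are handled analogously: the existence of $v_0$ is still extracted from whichever $\rho \cdot \pi_i$ has $v_i' \neq v$, and closure on the side with $v_i' = v$ is applied via the suffix of $\rho$ alone (which witnesses $\dd{v_0}{v}$). The fully degenerate case $v = v_1' = v_2'$ is immediate since it would already put $v \in Q_1 \cap Q_2$. The main obstacle I anticipate is the bookkeeping around the ``latest defining node on $\rho$ strictly before $v$'': ensuring it exists, lies strictly before $v$, and is independent of $i$. Once this is pinned down, the closure properties of $Q_1$ and $Q_2$ under data dependence close the argument mechanically.
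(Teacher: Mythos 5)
Your proof is correct and follows essentially the same route as the paper's: the union identity by unfolding the definition, and for disjointness, taking a path from $\startv$ to $v$, using the defined-before-use requirement to find the last node on it defining $x$, and then invoking closure under data dependence of both $Q_1$ and $Q_2$ to place that node in their (empty) intersection. Your version merely spells out the edge cases ($v = v_i'$) and the independence of the chosen defining node from $i$ in more detail than the paper does.
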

\begin{proof}
The first claim is obvious. For the second claim, assume
that $Q_1$ and $Q_2$ are closed under data dependence, that
$Q_1 \cap Q_2 = \emptyset$;
further assume, to get a contradiction, that $x \in \rv{Q_1}{v} \cap \rv{Q_2}{v}$.
That is, there exists $v_1 \in Q_1$ with $x \in \usev{v_1}$ and
a path from $v$ to $v_1$ that does not define $x$ until possibly $v_1$,
and there exists $v_2 \in Q_2$ with $x \in \usev{v_2}$ and
a path from $v$ to $v_2$ that does not define $x$ until possibly $v_2$.
As we have demanded that $x$ is defined before it is used,
we infer that with $\pi$ a path from $\startv$ to $v$, 
at least one of the nodes in $\pi$ defines $x$; let $v_x$ be the last
such node. As $Q_1$ and $Q_2$ are closed under data dependence,
we infer that $v_x \in Q_1$ and $v_x \in Q_2$, yielding the desired contradiction
since $Q_1 \cap Q_2 = \emptyset$.
\end{proof}
Next, a concept we have discovered useful
for the subsequent development:
\begin{definition}[Staying outside until]
\label{def:outside}
With $v'$ a postdominator of $v$, and $Q$ a set of nodes,
we say that \dt{$v$ stays outside $Q$ until $v'$} 
iff whenever $\pi$ is a path from $v$ to $v'$ 
where $v'$ occurs only at the end, 
$\pi$ will contain no node in $Q$ except possibly $v'$. 
\end{definition}
In Figure~\ref{fig:ex34}(right), 
node 4 stays outside $\mkset{1,6}$ until 6
but does not stay outside $\mkset{1,5,6}$ until 6.
Trivially, $v$ stays outside $Q$ until $v$ for
all $Q$ and $v$.

\begin{lemma}
\label{lem:outside-same-rv}
If $v$ stays outside $Q$ until $v'$
and $Q$ is closed under data dependence
then $\rv{Q}{v} = \rv{Q}{v'}$.
\end{lemma}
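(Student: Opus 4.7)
The plan is to prove the two inclusions $\rv{Q}{v} \subseteq \rv{Q}{v'}$ and $\rv{Q}{v'} \subseteq \rv{Q}{v}$ separately. The first direction needs only the stays-outside hypothesis together with postdomination, while the second additionally uses closure of $Q$ under data dependence.

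For $\rv{Q}{v} \subseteq \rv{Q}{v'}$, start from $x \in \rv{Q}{v}$ witnessed by some $w \in Q$ with $x \in \usev{w}$ and an acyclic path $\pi$ from $v$ to $w$ whose internal nodes do not define $x$. The key step is to show that $v'$ must appear on $\pi$. If not, extend $\pi$ by any path from $w$ to $\finalv$ (which exists by the pCFG connectivity assumption) and truncate at the first occurrence of $v'$, which must appear since $v'$ postdominates $v$. The result is a path from $v$ to $v'$ with $v'$ only at its end but containing $w \in Q$ with $w \neq v'$, contradicting the stays-outside hypothesis. So $v' \in \pi$, and the suffix of $\pi$ from $v'$ to $w$ witnesses $x \in \rv{Q}{v'}$.

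For $\rv{Q}{v'} \subseteq \rv{Q}{v}$, start from $x \in \rv{Q}{v'}$ with witness $w \in Q$ and acyclic path $\pi'$ from $v'$ to $w$. Pick any acyclic $\pi_0$ from $v$ to $v'$ and form $\pi_0 \pi'$; shortcutting repeated nodes yields an acyclic path from $v$ to $w$. Since shortcutting only shrinks the interior, it suffices to check that neither the interior of $\pi_0$ (excluding $v'$) nor the interior of $\pi'$ defines $x$. The interior of $\pi'$ is fine by assumption, so the real work is to show no $v_0 \in \pi_0 \setminus \{v'\}$ defines $x$. Assume otherwise and pick $v_0$ to be the last such node along $\pi_0$. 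The concatenation of the suffix of $\pi_0$ from $v_0$ onwards with $\pi'$ is then a path from $v_0$ to $w$ whose interior does not define $x$: nodes strictly after $v_0$ on $\pi_0$ do not (by choice of $v_0$), the node $v'$ does not when $v' \neq w$ (since $v' \in \pi' \setminus \{w\}$), and the interior of $\pi'$ does not. Hence $\dd{v_0}{w}$; closure of $Q$ under data dependence forces $v_0 \in Q$, contradicting the stays-outside hypothesis since $v_0 \in \pi_0$ and $v_0 \neq v'$.

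The main obstacle I expect is this alignment of the stays-outside hypothesis with the data-dependence mechanism in the second direction: one has to notice that a $Q$-free prefix from $v$ to $v'$ is automatically free of definitions of any variable $x$ that reaches a $Q$-node from $v'$, because any such definition would itself be data-dependence-related to a $Q$-node and hence sit in $Q$. The first direction, by contrast, is essentially topological.
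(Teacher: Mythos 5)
Your proof is correct and takes essentially the same route as the paper's: the forward inclusion uses postdomination together with the stays-outside hypothesis to force $v'$ onto the witness path, and the reverse inclusion concatenates an acyclic path from $v$ to $v'$ with the witness path and uses closure under data dependence to rule out any definition of $x$ before $v'$. Your extend-to-$\finalv$-and-truncate argument simply makes explicit a step the paper leaves implicit.
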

%% Moreover, if $Q$ satisfies certain additional properties,
%% the distribution at $v'$ (of the relevant variables)
%% will equal the distribution at $v$.
\begin{proof}
The claim is trivial if $v = v'$, so assume $v \neq v'$.
We shall show inclusions each way.

First assume that $x \in \rv{Q}{v}$. Thus
there exists $v_0 \in Q$ with $x \in \usev{v_0}$
and a path $\pi$ from $v$ to $v_0$ such that 
$x \notin \defv{v_1}$ for all $v_1 \in \pi \setminus \mkset{v_0}$.
Since $v'$ postdominates $v$, and $v$ stays outside $Q$ until $v'$,
we infer that $v'$ belongs to $\pi$
and thus a suffix of $\pi$ is a path from $v'$ to $v_0$ which
shows that $x \in \rv{Q}{v'}$.

Conversely, assume that $x \in \rv{Q}{v'}$. Thus
there exists $v_0 \in Q$ with $x \in \usev{v_0}$
and a path $\pi'$ from $v'$ to $v_0$ such that
$x \notin \defv{v_1}$ for all $v_1 \in \pi' \setminus \mkset{v_0}$.
With $\pi$ an acyclic path from $v$ to $v'$,
the concatenation of $\pi$ and $\pi'$ is a path
from $v$ to $v_0$ which will show the desired
$x \in \rv{Q}{v}$, 
provided that $\pi$ does not contain a node $v_1 \neq v'$
with $x \in \defv{v_1}$.
Towards a contradiction, assume that such a node does exist; 
with $v_1$ the last such node we would have
$\dd{v_1}{v_0}$ so from $v_0 \in Q$ and $Q$ closed under data dependence
we could infer $v_1 \in Q$ which contradicts the assumption
that $v$ stays outside $Q$ until $v'$.
\end{proof}

\section{Semantics}
\label{sec:sem}
In this section we shall define the meaning of the pCFGs introduced
in the previous section, in terms of an operational semantics 
that manipulates distributions which
assign probabilities to stores (Section~\ref{subsec:sem-dist}).
Section~\ref{subsec:sem-indpd} defines what it means for 
sets of variables to be independent wrt.~a given distribution.

The semantics of pCFGs is defined in a number of steps:
first (Section~\ref{subsec:sem-one}) we define
transfer functions for traversing one edge of the pCFG,
and next (Section~\ref{subsec:toplevel}) we present a functional, 
the fixed point of which provides the meaning of a pCFG. The 
semantics also applies to sliced programs and hence
provides the meaning of slicing.

In the companion article
\cite{Amt+Ban:ProbSemantics-2017}, we show that for a pCFG
that is the translation of a ``structured'' program, the semantics
given in this section 
is adequately related to the semantics given in
\cite{Gor+etal:ICSE-2014,Hur+etal:PLDI-2014}
for structured probabilistic programs.

\subsection{Stores and Distributions}
\label{subsec:sem-dist}
Let $\uvar$ be the universe of variables.
A store $s$ is a partial mapping from $\uvar$ to $\Zz$.
We write $\upd{s}{x}{z}$ for the store $s'$ that is like $s$
except $s'(x) = z$, and
write $\dom{s}$ for the domain of $s$.
We write $\sto{R}$ for the set of stores with domain $R$,
and also write $\fulls$ for $\sto{\uvar}$.
If $s_1 \in \sto{R_1}$ and $s_2 \in \sto{R_2}$
with $R_1 \cap R_2 = \emptyset$,
we may define $\adds{s_1}{s_2}$ with domain $R_1 \cup R_2$ the natural way.
If $s \in \sto{R'}$ and $R \subseteq R'$ we define
$\restrs{s}{R}$ as the restriction of $s$ to $R$.
With $R$ a subset of $\uvar$, we say that
$s_1$ agrees with $s_2$ on $R$, written $\sagree{s_1}{s_2}{R}$,
iff $R \subseteq \dom{s_1} \cap \dom{s_2}$ and for all $x \in R$, $s_1(x) = s_2(x)$. We assume that there is a function $\semee$
such that $\seme{E}s$ is the integer result of evaluating $E$ in store $s$
and $\seme{B}s$ is the boolean result of evaluating $B$ in store $s$
(the free variables of $E$ and $B$ must be in $\dom{s}$).

A distribution $D \in \Dist$ 
(we shall later also use the letter $\dlt$)
is a mapping from $\fulls$ to non-negative reals.
We shall often expect that $D$ is \emph{bounded}, that is
$\sumd{D} \leq 1$
where $\sumd{D}$ is a shorthand for $\sum_{s \in \fulls}D(s)$.
Thanks to our assumption that values are integers, and since $\uvar$
can be assumed finite,
$\fulls$ is a countable set and thus $\sumd{D}$ is well-defined
even without measure theory.
We define $D_1 + D_2$ by stipulating
$(D_1 + D_2)(s) = D_1(s) + D_2(s)$
(if $\sumd{D_1} + \sumd{D_2} \leq 1$ then $D_1 + D_2$ is bounded),
and for $c \geq 0$ we define $cD$ by stipulating 
$(cD)(s) = cD(s)$ (if $D$ is bounded and $c \leq 1$ then $cD$ is bounded);
we write $D = 0$ when $D(s) = 0$ for all $s$.
%% We write $D_1 \leq D_2$ iff
%% $D_1(s) \leq D_2(s)$ for all $s$,
%% We assume there is a designated initial distribution, $\initd$,
%% such that $\sumd{\initd} = 1$
%% ($\initd$ may be arbitrary as all variables must be defined
%% before they are used).
We say that $D$ is \emph{concentrated} if there exists $s_0 \in \fulls$
such that $D(s) = 0$ for all $s \in \fulls$ with $s \neq s_0$;
for that $s_0$, we say that $D$ is concentrated on $s_0$.
(Thus the distribution $0$ is concentrated on everything.)

Note that the set of real numbers in $[0..1]$ form a {\em pointed cpo}
with the usual ordering, as 0 is the bottom element and
the supremum operator yields the least upper bound of a chain.
(We refer to Appendix~\ref{app:domain} for the basics of domain theory.)
Hence also the set $\Dist$ of distributions form a pointed cpo,
with ordering defined pointwise ($D_1 \leq D_2$ iff
$D_1(s) \leq D_2(s)$ for all stores $s$), with 0 the bottom element,
and the least upper bound defined pointwise. 

The following result %% (proved in Appendix~\ref{app:proofs})
is often convenient;
in particular, it shows that
if each distribution in a chain $\chain{D_k}{k}$ is bounded
then also the least upper bound is a bounded distribution.
\begin{lemma}
\label{lem:sumlim-limsum}
Assume that $\chain{D_k}{k}$ 
is a chain of distributions (not necessarily bounded).
With $S$ a (countable) set of stores, we have
\[
\sum_{s \in S}{(\limit{k}{D_k})(s)} = \limit{k}{\sum_{s \in S}{D_k(s)}}.
\]
\end{lemma}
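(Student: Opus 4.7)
The plan is to prove the identity by showing both inequalities, exploiting that for a chain of distributions the pointwise supremum is the least upper bound (as noted just before the lemma, since $\Dist$ is a pointed cpo with pointwise ordering). In particular, for every fixed store $s$, the sequence $\{D_k(s)\}$ is a non-decreasing chain of non-negative reals, so $(\limit{k}{D_k})(s) = \limit{k}{D_k(s)} = \sup_k D_k(s)$.

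The first step would be to reduce to the case where $S$ is finite. For any finite $F \subseteq S$, the sum $\sum_{s \in F} D_k(s)$ is a finite sum of monotone non-negative sequences, so by elementary properties of limits of real numbers,
\[
\sum_{s \in F} \limit{k}{D_k(s)} = \limit{k}{\sum_{s \in F} D_k(s)}.
\]
Since $D_k(s') \geq 0$ for every $s' \in S$, we have $\sum_{s \in F} D_k(s) \leq \sum_{s \in S} D_k(s)$, and hence
\[
\sum_{s \in F} \limit{k}{D_k(s)} \leq \limit{k}{\sum_{s \in S} D_k(s)}.
\]
Taking the supremum over all finite $F \subseteq S$ on the left (which, by the definition of the countable sum as the supremum of its finite partial sums, yields $\sum_{s \in S} \limit{k}{D_k(s)}$) gives the $\leq$ direction of the claim.

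For the reverse direction, observe that for every $k$ and every finite $F \subseteq S$,
\[
\sum_{s \in F} D_k(s) \leq \sum_{s \in F} \limit{j}{D_j(s)} \leq \sum_{s \in S} \limit{j}{D_j(s)}.
\]
Taking the supremum over finite $F \subseteq S$ yields $\sum_{s \in S} D_k(s) \leq \sum_{s \in S} \limit{j}{D_j(s)}$, and taking $\limit{k}{\cdot}$ then gives the $\geq$ direction.

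The only real subtlety is the handling of the countably infinite sum, which is why I would carefully justify the step of exchanging the supremum over finite subsets with the limit in $k$; both operations are suprema of monotone nets of non-negative reals, and general order theory (or Tonelli's theorem for counting measures, if one prefers a measure-theoretic view) lets them be interchanged. Since the paper deliberately avoids measure theory, I would stick with the elementary finite-$F$ approximation argument sketched above, which needs nothing beyond the monotonicity provided by ``chain'' and the non-negativity of distributions.
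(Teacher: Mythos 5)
Your proof is correct, and while it shares the paper's overall skeleton (establish the two inequalities separately, with the direction $\lim_{k}\sum_{s\in S}D_k(s)\le\sum_{s\in S}(\lim_{k}D_k)(s)$ essentially free from $D_k\le\lim_{j}D_j$ and monotonicity of sums of non-negative terms), you handle the harder direction by a genuinely different, more direct route. The paper argues by contradiction: assuming a gap of size $\epsilon$, it extracts a finite $S_0\subseteq S$ whose partial sum for the limit distribution exceeds $\lim_{k}\sum_{s\in S}D_k(s)+\epsilon$, then picks $K$ so that $D_K$ is within $\epsilon/|S_0|$ of the limit on each of the finitely many stores in $S_0$, and derives a contradiction. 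You instead view the countable sum of non-negative terms as the supremum of its finite partial sums, interchange the finite sum with the limit (elementary), and then interchange the two suprema; your finite $F$ plays exactly the role of the paper's $S_0$, but you avoid the $\epsilon$-bookkeeping and the proof by contradiction. What your version buys is a cleaner, order-theoretic presentation that makes explicit that the lemma is an instance of exchanging two suprema of monotone families (monotone convergence for counting measure), while remaining, as you note, within the elementary measure-theory-free setting the paper insists on; the one convention you rely on, that the sum over a countable set of non-negative terms equals the supremum of finite partial sums, is consistent with how the paper uses such sums, so there is no gap.
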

As suggested by the calculation in Example~\ref{ex1}, we have 
\begin{definition}
\label{def:Dpartial}
\rm
For $s \in \sto{R}$, let
$D(s) = \sum_{s_0 \in \fulls\ \mid\ \sagree{s_0}{s}{R}}D(s_0)$.
\end{definition}
Observe that $D(\emptyset) = \sumd{D}$. 
\begin{lemma}
\label{lem:sto-partial}
If $R \subseteq R'$ then
for $s \in \sto{R}$ we have
\begin{displaymath}
D(s) = \sum_{s' \in \sto{R'} \ \mid\ \sagree{s'}{s}{R}} D(s'). 
\end{displaymath}
\end{lemma}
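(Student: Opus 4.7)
My plan is to reduce the lemma to the given Definition~\ref{def:Dpartial} by partitioning the set of full stores that agree with $s$ on $R$ according to their restriction to $R'$. More precisely, the plan is to observe that the map $s_0 \mapsto \restrs{s_0}{R'}$ gives a bijection
\[
\{s_0 \in \fulls \mid \sagree{s_0}{s}{R}\} \;\longleftrightarrow\;
\coprod_{\substack{s' \in \sto{R'} \\ \sagree{s'}{s}{R}}} \{s_0 \in \fulls \mid \sagree{s_0}{s'}{R'}\},
\]
where the disjoint union on the right is indexed by those $s'$ with domain $R'$ that agree with $s$ on $R$ (which is well-defined since $R \subseteq R'$ ensures $R \subseteq \dom{s} \cap \dom{s'}$). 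The first step is to verify this bijection carefully: if $s_0 \in \fulls$ agrees with $s$ on $R$, then $s' := \restrs{s_0}{R'}$ lies in $\sto{R'}$, satisfies $\sagree{s'}{s}{R}$ (because $R \subseteq R'$), and clearly $\sagree{s_0}{s'}{R'}$; conversely, any full store $s_0$ with $\sagree{s_0}{s'}{R'}$ for some such $s'$ automatically satisfies $\sagree{s_0}{s}{R}$.

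The second step is to apply the definition to both sides. The left-hand side unfolds directly as $D(s) = \sum_{s_0 \in \fulls,\ \sagree{s_0}{s}{R}} D(s_0)$, while the inner sum on the right-hand side is, by the same definition applied at $s' \in \sto{R'}$, equal to $\sum_{s_0 \in \fulls,\ \sagree{s_0}{s'}{R'}} D(s_0)$. The claim then reduces to the identity
\[
\sum_{s_0 \in \fulls,\ \sagree{s_0}{s}{R}} D(s_0) \;=\; \sum_{\substack{s' \in \sto{R'} \\ \sagree{s'}{s}{R}}} \;\sum_{s_0 \in \fulls,\ \sagree{s_0}{s'}{R'}} D(s_0),
\]
which is exactly the rearrangement corresponding to the partition established above.

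The final step is to justify the rearrangement. Since $\uvar$ is assumed finite and values are integers, $\fulls$ is countable, so $\sto{R'}$ is also countable, and all the summands $D(s_0)$ are non-negative reals. Hence Tonelli's theorem for the counting measure (equivalently, the basic fact that any reindexing of a countable series of non-negative terms yields the same value in $[0,\infty]$) justifies interchanging and regrouping the sums. I expect no real obstacle here; the only point requiring care is keeping track of the well-definedness of $\sagree{s'}{s}{R}$ when $s' \in \sto{R'}$ and confirming that the partition covers every $s_0$ exactly once.
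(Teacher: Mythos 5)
Your proof is correct and follows essentially the same route as the paper: both unfold $D(s)$ and $D(s')$ via Definition~\ref{def:Dpartial} and then regroup the resulting double sum using the fact that each full store $s_0$ has exactly one restriction $s' \in \sto{R'}$, with $\sagree{s'}{s}{R}$ holding iff $\sagree{s_0}{s}{R}$ since $R \subseteq R'$. Your explicit appeal to non-negativity for the rearrangement of the countable sums is a welcome (if routine) extra precision that the paper leaves implicit.
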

Lemma~\ref{lem:sto-partial}
amounts to Definition~\ref{def:Dpartial}
if $R' = \uvar$,
and is trivial if $R' = R$ (as the right hand side is then
the sum of the singleton set $\mkset{D(s)}$).

By letting $R = \emptyset$ in Lemma~\ref{lem:sto-partial} we get
$D(\emptyset) = \sum_{s' \in \sto{R'}}D(s')$ and thus (by renaming)
\begin{lemma}
\label{lem:sumd}
For all distributions $D$, and all $R$,
$\sumd{D} = \sum_{s \in \sto{R}}D(s)$.
\end{lemma}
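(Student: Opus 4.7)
The plan is to derive this as a direct specialization of Lemma~\ref{lem:sto-partial}, exactly as hinted in the text immediately preceding the statement. There is essentially no hard part here; the lemma repackages information already extracted from Lemma~\ref{lem:sto-partial}.

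First I would instantiate Lemma~\ref{lem:sto-partial} with the ``small'' domain taken to be $\emptyset$ and the ``large'' domain taken to be the given $R$. The hypothesis $\emptyset \subseteq R$ is immediate. There is a unique element $s \in \sto{\emptyset}$, namely the empty store, which I shall also write as $\emptyset$; every $s' \in \sto{R}$ trivially satisfies $\sagree{s'}{\emptyset}{\emptyset}$ since the agreement condition is vacuous over the empty set of variables. Lemma~\ref{lem:sto-partial} therefore specializes to the identity
\[
D(\emptyset) = \sum_{s' \in \sto{R}} D(s').
\]

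Next I would invoke the observation recorded just after Definition~\ref{def:Dpartial}, namely $D(\emptyset) = \sumd{D}$. Combining this with the displayed equation above and renaming the summation variable $s'$ to $s$ yields
\[
\sumd{D} = \sum_{s \in \sto{R}} D(s),
\]
which is exactly the claim. Since $R$ was arbitrary, the lemma holds for all $R$. The only mild subtlety worth double-checking is that the definition of $\sto{\emptyset}$ really does contain the (unique) empty store, so that the instantiation of Lemma~\ref{lem:sto-partial} is legitimate; this follows from the convention that a store is a partial mapping from $\uvar$ to $\Zz$, and the empty partial mapping qualifies.
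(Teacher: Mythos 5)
Your proposal is correct and follows exactly the paper's own derivation: instantiate Lemma~\ref{lem:sto-partial} with the empty domain against $R$, note that the agreement condition is vacuous so the sum ranges over all of $\sto{R}$, and combine with the observation $D(\emptyset) = \sumd{D}$ recorded after Definition~\ref{def:Dpartial}. The extra check that $\sto{\emptyset}$ contains the unique empty store is a fine (if unstated in the paper) detail, but the argument is the same.
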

\begin{definition}[agrees with]
We say that $D_1$ agrees with $D_2$ on $R$, written $\dagree{D_1}{D_2}{R}$, 
if $D_1(s) = D_2(s)$ for all $s \in \sto{R}$.
\end{definition}
Agreement on a set implies agreement on a subset:
\begin{lemma}
\label{lem:dist-partial}
If $\dagree{D_1}{D_2}{R'}$ and $R \subseteq R'$ then $\dagree{D_1}{D_2}{R}$.
\end{lemma}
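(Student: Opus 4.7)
The plan is to reduce the claim directly to Lemma~\ref{lem:sto-partial}, which expresses $D(s)$ for $s \in \sto{R}$ as a sum of $D(s')$ values over stores $s' \in \sto{R'}$ extending $s$. Since $R \subseteq R'$, that lemma applies and lets me rewrite $D_1(s)$ and $D_2(s)$ as sums indexed by the same set of extensions $s' \in \sto{R'}$ with $\sagree{s'}{s}{R}$.

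Concretely, I would fix an arbitrary $s \in \sto{R}$ and write
\[
D_1(s) \;=\; \sum_{s' \in \sto{R'}\ \mid\ \sagree{s'}{s}{R}} D_1(s')
\;=\; \sum_{s' \in \sto{R'}\ \mid\ \sagree{s'}{s}{R}} D_2(s')
\;=\; D_2(s),
\]
where the first and last equalities are instances of Lemma~\ref{lem:sto-partial}, and the middle one uses the hypothesis $\dagree{D_1}{D_2}{R'}$, applied termwise (each $s'$ in the index range lies in $\sto{R'}$, so $D_1(s') = D_2(s')$). Since $s \in \sto{R}$ was arbitrary, this yields $\dagree{D_1}{D_2}{R}$ by the definition of ``agrees with''.

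There is no real obstacle here: the only subtlety is noticing that the reindexing goes through because $R \subseteq R'$ is exactly what licenses Lemma~\ref{lem:sto-partial}, and that termwise equality of the two sums is legitimate because each sum is over the same index set of stores in $\sto{R'}$. No convergence issue arises since all values are non-negative reals and, by Lemma~\ref{lem:sumd} and boundedness arguments made earlier, the sums involved are well-defined (and in fact both bounded by $\sumd{D_i}$). The entire argument is thus a two-line calculation once Lemma~\ref{lem:sto-partial} is invoked.
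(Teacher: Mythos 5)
Your proof is correct and is essentially identical to the paper's own argument: both fix $s \in \sto{R}$, apply Lemma~\ref{lem:sto-partial} to expand $D_1(s)$ and $D_2(s)$ as sums over the stores $s' \in \sto{R'}$ with $\sagree{s'}{s}{R}$, and use $\dagree{D_1}{D_2}{R'}$ termwise.
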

\begin{prooff}
For $s \in \sto{R}$ we infer from
Lemma~\ref{lem:sto-partial} that
\begin{displaymath}
D_1(s) 
= \sum_{s' \in \sto{R'}\ \mid\ \sagree{s'}{s}{R}}D_1(s')
= \sum_{s' \in \sto{R'}\ \mid\ \sagree{s'}{s}{R}}D_2(s')
= D_2(s). 
\end{displaymath}
\end{prooff}

\subsection{Probabilistic Independence}
\label{subsec:sem-indpd} 
Some variables of a distribution $D$ may be \emph{independent}
of other variables. That is, knowing the values of the former gives no
extra information about the values of the latter, or vice versa.
Formally:
\begin{definition}[independence]
\label{def:indpd}
\rm
Let $R_1$ and $R_2$ be disjoint sets of variables.
We say that $R_1$ and $R_2$ are \dt{independent} in $D$ iff
for all $s_1 \in \sto{R_1}$ and $s_2 \in \sto{R_2}$, we have 
$D(\adds{s_1}{s_2})\sumd{D} = D(s_1)D(s_2)$.
\end{definition}
To motivate the definition, first observe that
if $\sumd{D} = 1$ it amounts to the well-known
definition of probabilistic independence; 
next observe that if $0 < \sumd{D}$,
it is equivalent to
the well-known definition of
``normalized'' probabilities:
\begin{displaymath}
   \frac{D(\adds{s_1}{s_2})}{\sumd{D}} = 
\frac{D(s_1)}{\sumd{D}} \cdot \frac{D(s_2)}{\sumd{D}}
\end{displaymath}
Trivially,
$R_1$ and $R_2$ are independent in $D$ if
$D = 0$ or $R_1 = \emptyset$ or $R_2 = \emptyset$.
%\begin{exmp}
\begin{example}
In Example~\ref{ex1}, $\mkset{x}$ and $\mkset{y}$ are independent in $D_1$.
To see this, first note that $D_1$ produces a non-zero value for only 8 stores:
for $i=0\dots 3$, these are the 
stores $\{x \mapsto i, y \mapsto 2\}$, $\{x \mapsto i, y \mapsto 3\}$.

For $i \in \{0,1,2,3\}$ and $j \in \{2,3\}$ we have
$D_1(\stotwo{x}{i}{y}{j}) = \displaystyle \frac{1}{16}$ and thus
$D_1(\stoone{x}{i})  = \displaystyle \frac{1}{8}$ and
$D_1(\stoone{y}{j})  =  \displaystyle \frac{1}{4}$.
As $\displaystyle \sumd{D_1}  =  \sum_{s\in \fulls}D_1(s) = 8 \cdot 
\frac{1}{16} = \frac{1}{2}$,
we have the desired equality
$\displaystyle
D_1(\stotwo{x}{i}{y}{j})\sumd{D_1} = \frac{1}{32} =
D_1(\stoone{x}{i}) \cdot D_1(\stoone{y}{j})$. 

The equality holds trivially 
if $i \notin \{0,1,2,3\}$ or $j \notin \{2,3\}$
since then $D_1(\stotwo{x}{i}{y}{j}) = 0$
and either $D_1(\stoone{x}{i}) = 0$ or $D_1(\stoone{y}{j}) = 0$.
\end{example}
%\end{exmp}
\begin{example}
%\begin{exmp}
In Example~\ref{ex2}, $\mkset{x}$ and $\mkset{y}$ are 
\emph{not} independent in $D_2$:
$D_2(\stotwo{x}{3}{y}{3}) \sumd{D_2}  = \frac{3}{256}$, while
$D_2(\stoone{x}{3}) D_2(\stoone{y}{3})  = \frac{4}{256}$.
%\end{exmp}
\end{example}

\subsection{Transfer Functions}
\label{subsec:sem-one}
To deal with traversing a single edge in the pCFG, 
we shall define a number of functions with functionality
$\Dist \rightarrow \Dist$. Each such \dt{transfer function} $f$ will be 
\begin{description}
\item[additive] if $D_1 + D_2$ is a distribution then
$f(D_1 + D_2) = f(D_1) + f(D_2)$ 
(this reflects that a distribution is not more than the sum of its components);
\item[multiplicative] $f(cD) = cf(D)$ for all distributions
$D$ and all real $c \geq 0$;
\item [continuous] $f(\limit{k}{D_k}) = \limit{k}{f(D_k)}$
when $\chain{D_k}{k}$ is a chain of distributions
(this is is a key property for functions
on cpos, cf.~Appendix~\ref{app:domain});
\item [non-increasing] $\sumd{f(D)} \leq \sumd{D}$ for all
distributions $D$ (this reflects that distribution may disappear,
as we have seen in our examples, but
cannot be created ex nihilo).
\end{description}
Some functions will even be
\begin{description}
\item[sum-preserving] $\sumd{f(D)} = \sumd{D}$ for all
distributions $D$ 
(if $D$ is such that this equation holds
we say that $f$ is sum-preserving for $D$).
%% \item[deterministic] $f(D)$ is concentrated if $D$ is concentrated.
\end{description}
%% Transfer functions for non-trivial random assignments are not deterministic.
Transfer functions for $\observevv$ nodes are 
not sum-preserving (unless the condition is always true), and neither is the
semantic function for a loop that has a non-zero probability of
non-termination;
a primary contribution of this article is to show that
if a loop is sum-preserving it may be safe to slice it away,
even if it occurs in a branch (cf.~Example~\ref{ex4}).

To show that a function is sum-preserving, it suffices
to consider concentrated distributions:
%% (as proved in Appendix~\ref{app:proofs}):
\begin{lemma}
\label{lem:sum-pres-conc}
Let $f \in \Dist \rightarrow \Dist$
be continuous and additive.
Assume that for all $D$ that are concentrated, $f$ is sum-preserving for $D$.
Then $f$ is sum-preserving.
\end{lemma}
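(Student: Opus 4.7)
The plan is to approximate an arbitrary distribution $D$ by a chain of finite sums of concentrated distributions, apply the hypothesis pointwise to each summand, and then move the limits past the summations using continuity of $f$ together with Lemma~\ref{lem:sumlim-limsum}.

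Concretely, since $\fulls$ is countable, enumerate its elements as $s_1, s_2, \ldots$ and for each $i$ let $D_i$ be the distribution that is concentrated on $s_i$ with $D_i(s_i) = D(s_i)$ (and zero elsewhere). Each $D_i$ is a bounded distribution since $\sumd{D_i} = D(s_i) \leq \sumd{D} \leq 1$. Define the partial-sum chain $D^{(k)} = \sum_{i=1}^{k} D_i$; these are bounded distributions (since $\sumd{D^{(k)}} \leq \sumd{D} \leq 1$), form an ascending chain in the pointwise order on $\Dist$, and satisfy $\limit{k}{D^{(k)}} = D$ (for any fixed $s = s_j$, the pointwise sup stabilises at $D(s_j)$ for $k \geq j$).

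Now I would chain the three properties of $f$. By repeated use of additivity, $f(D^{(k)}) = \sum_{i=1}^{k} f(D_i)$, so by the hypothesis applied to each concentrated $D_i$,
\[
\sumd{f(D^{(k)})} \;=\; \sum_{i=1}^{k} \sumd{f(D_i)} \;=\; \sum_{i=1}^{k} \sumd{D_i} \;=\; \sum_{i=1}^{k} D(s_i).
\]
By continuity of $f$, $f(D) = \limit{k}{f(D^{(k)})}$. Then Lemma~\ref{lem:sumlim-limsum} (applied with $S = \fulls$ to the chain $\chain{f(D^{(k)})}{k}$) lets me interchange limit and sum:
\[
\sumd{f(D)} \;=\; \sum_{s \in \fulls}\bigl(\limit{k}{f(D^{(k)})}\bigr)(s) \;=\; \limit{k}{\sumd{f(D^{(k)})}} \;=\; \limit{k}{\sum_{i=1}^{k} D(s_i)} \;=\; \sumd{D}.
\]

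The one subtle point, and the step I would be most careful about, is the applicability of additivity to the partial sums $D^{(k)}$: additivity as stated requires $D_1 + D_2$ to be a distribution. This is fine here because all the $D^{(k)}$ are dominated pointwise by the bounded $D$, so every intermediate sum is itself a bounded distribution; induction on $k$ then legitimises the expansion $f(D^{(k)}) = \sum_{i=1}^{k} f(D_i)$. Everything else is routine given the lemmas already in hand.
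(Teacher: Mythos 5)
Your proposal is correct and follows essentially the same route as the paper's own proof: decompose $D$ into concentrated pieces along an enumeration of $\fulls$, use additivity on the finite partial sums, continuity to pass to the limit, and Lemma~\ref{lem:sumlim-limsum} to interchange limit and summation. The only cosmetic difference is your appeal to boundedness of $D$ when justifying additivity, which is not actually needed (the lemma and the paper's definition of additivity apply to arbitrary, not necessarily bounded, distributions), but this does not affect the argument.
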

For a boolean expression $B$, we define $\selectf{B}$ by letting
$\selectf{B}(D) = D'$ where
\[
\begin{array}{rcll}
D'(s) & = & D(s) & \mbox{if } \seme{B}s \\[1mm]
D'(s) & = & 0 & \mbox{otherwise}
\end{array}
\]
\begin{lemma}
\label{lem:selectB}
For all $B$,
$\selectf{B}$ is continuous, additive, multiplicative, and
non-increasing; also, for all $D$ we have
\[
\selectf{B}(D) + \selectf{\neg B}(D) = D.
\]
\end{lemma}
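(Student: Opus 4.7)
The plan is to verify each listed property by unfolding the definition of $\selectf{B}$ pointwise on stores, using that sums and least upper bounds in $\Dist$ are all taken pointwise. Concretely, for every store $s$ we split into the two cases $\seme{B}s$ (in which case $\selectf{B}(D)(s) = D(s)$) and $\neg \seme{B}s$ (in which case $\selectf{B}(D)(s) = 0$), and check that the claimed identity holds in each case. This reduces every clause to an elementary equality between real numbers.

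First I would dispatch \textbf{additivity} and \textbf{multiplicativity}: in the case $\seme{B}s$ both sides equal $(D_1+D_2)(s) = D_1(s)+D_2(s)$, respectively $cD(s)$; in the opposite case both sides are $0$. \textbf{Non-increasingness} follows from Lemma~\ref{lem:sumd} together with the pointwise bound $\selectf{B}(D)(s) \leq D(s)$:
\[
\sumd{\selectf{B}(D)} = \sum_{s \in \fulls} \selectf{B}(D)(s) = \sum_{s:\seme{B}s} D(s) \leq \sum_{s \in \fulls} D(s) = \sumd{D}.
\]
For the \textbf{final identity}, pick any $s$: if $\seme{B}s$ then $\selectf{B}(D)(s) + \selectf{\neg B}(D)(s) = D(s) + 0 = D(s)$, and symmetrically if $\neg \seme{B}s$.

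The only slightly delicate clause is \textbf{continuity}. Here I would invoke that, by the definition of the cpo structure on $\Dist$ given earlier, the lub of a chain $\chain{D_k}{k}$ is computed pointwise, so $(\limit{k}{D_k})(s) = \limit{k}{D_k(s)}$. Then in the case $\seme{B}s$,
\[
\selectf{B}(\limit{k}{D_k})(s) = (\limit{k}{D_k})(s) = \limit{k}{D_k(s)} = \limit{k}{\selectf{B}(D_k)(s)},
\]
and in the case $\neg \seme{B}s$ both sides are $0$ (the right one because the constant chain $0$ has limit $0$). Since the equality holds at every store, $\selectf{B}(\limit{k}{D_k}) = \limit{k}{\selectf{B}(D_k)}$. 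I do not anticipate any real obstacle; the proof is essentially a bookkeeping exercise once one commits to the pointwise viewpoint, and the only subtlety worth flagging is the appeal to pointwise suprema in the continuity step.
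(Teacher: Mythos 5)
Your proof is correct: the pointwise case split on $\seme{B}s$ handles additivity, multiplicativity, non-increasingness, the decomposition identity, and (using that lubs in $\Dist$ are taken pointwise) continuity, which is exactly the routine verification intended here — the paper itself regards Lemma~\ref{lem:selectB} as trivial and gives no proof. The only cosmetic remark is that your appeal to Lemma~\ref{lem:sumd} is unnecessary, since $\sumd{D}$ is by definition $\sum_{s\in\fulls}D(s)$.
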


\paragraph{Assignments}
For a variable $x$ and an expression $E$,
we define $\assignf{x}{E}$ by letting
$\assignf{x}{E}(D)$ 
be a distribution $D'$ such that
for each $s' \in \fulls$, 
\[
D'(s') =
\sum_{s \in \fulls\ \mid\ s' = \upd{s}{x}{\seme{E}s}} D(s)
\]
That is, the ``new'' probability of a store $s'$ is the sum of the 
``old'' probabilities
of the stores that become like $s'$ after the assignment
(this will happen for a store $s$ if $s' = \upd{s}{x}{\seme{E}s}$).
\begin{lemma}
\label{lem:assignRirrel}
Assume that $\assignf{x}{E}(D) = D'$
and $x \notin R$. Then
$\dagree{D}{D'}{R}$.
\end{lemma}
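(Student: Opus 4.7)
The plan is to compute $D'(s)$ directly for an arbitrary $s \in \sto{R}$, and to show it equals $D(s)$ by manipulating the summations. Throughout, I will freely exchange the order of (countable, non-negative) sums; this is legitimate because all summands are $\geq 0$.

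First, applying Definition~\ref{def:Dpartial} to $D'$, then substituting the definition of $\assignf{x}{E}$, I get
\[
D'(s) \;=\; \sum_{s_0 \in \fulls\ \mid\ \sagree{s_0}{s}{R}} D'(s_0)
      \;=\; \sum_{s_0 \in \fulls\ \mid\ \sagree{s_0}{s}{R}} \;\;\sum_{s_1 \in \fulls\ \mid\ s_0 = \upd{s_1}{x}{\seme{E}s_1}} D(s_1).
\]
Swapping the two sums puts $s_1$ on the outside, so that
\[
D'(s) \;=\; \sum_{s_1 \in \fulls} \;\;\sum_{s_0 \in \fulls\ \mid\ s_0 = \upd{s_1}{x}{\seme{E}s_1}\ \wedge\ \sagree{s_0}{s}{R}} D(s_1).
\]
For each $s_1$ the inner sum has at most one term, namely $s_0 = \upd{s_1}{x}{\seme{E}s_1}$, and that term is present exactly when $\sagree{s_0}{s}{R}$ holds.

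The crux of the argument is the observation that, since $x \notin R$, the stores $s_1$ and $\upd{s_1}{x}{\seme{E}s_1}$ agree on $R$. Hence $\sagree{s_0}{s}{R}$ is equivalent to $\sagree{s_1}{s}{R}$, and the double sum collapses to
\[
D'(s) \;=\; \sum_{s_1 \in \fulls\ \mid\ \sagree{s_1}{s}{R}} D(s_1) \;=\; D(s),
\]
the last equality again by Definition~\ref{def:Dpartial}. There is no real obstacle here; the only subtlety is noticing that $x \notin R$ is exactly what is needed to identify the two agreement conditions after swapping the order of summation.
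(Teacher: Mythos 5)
Your proof is correct and follows essentially the same route as the paper's: unfold $D'(s)$ via Definition~\ref{def:Dpartial}, substitute the definition of $\assignf{x}{E}$, reorganize the double sum, and use $x \notin R$ to replace agreement of $\upd{s_1}{x}{\seme{E}s_1}$ with $s$ on $R$ by agreement of $s_1$ with $s$ on $R$. The paper merges the two sums into one joint sum over pairs rather than explicitly swapping them, but this is only a presentational difference.
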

%% The proof is in Appendix~\ref{app:proofs}.
\begin{lemma}
\label{lem:assign-misc}
Each $\assignf{x}{E}$ is additive, multiplicative, non-increasing, and sum-preserving.
\end{lemma}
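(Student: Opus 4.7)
My plan is to verify each of the four properties of $\assignf{x}{E}$ in turn, relying almost entirely on rearrangement of non-negative countable sums and on Lemma~\ref{lem:sum-pres-conc} for the last property. Write $D' = \assignf{x}{E}(D)$ and, for each store $s$, let $T(s) = \upd{s}{x}{\seme{E}s}$; then the defining equation becomes $D'(s') = \sum_{s \mid T(s) = s'} D(s)$.

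First I would handle additivity and multiplicativity. Both follow pointwise from the linearity of sums of non-negative reals: for each $s'$,
\[
\sum_{s \mid T(s) = s'} (D_1(s) + D_2(s)) = \sum_{s \mid T(s) = s'} D_1(s) + \sum_{s \mid T(s) = s'} D_2(s),
\]
and similarly for $cD$. Next I would establish continuity. Given a chain $\chain{D_k}{k}$, applying Lemma~\ref{lem:sumlim-limsum} with $S = \{s \mid T(s) = s'\}$ gives
\[
\assignf{x}{E}(\limit{k}{D_k})(s') = \sum_{s \mid T(s) = s'} \limit{k}{D_k(s)} = \limit{k}{\sum_{s \mid T(s) = s'} D_k(s)} = \limit{k}{\assignf{x}{E}(D_k)(s')},
\]
which is the required equality of distributions since the suprema are computed pointwise.

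To establish the sum-preserving property, I would invoke Lemma~\ref{lem:sum-pres-conc}: continuity and additivity are already in hand, so it suffices to verify sum-preservation on concentrated distributions. If $D$ is concentrated on $s_0$, then for any $s'$ the defining sum contains at most the single term $D(s_0)$, contributed exactly when $s' = T(s_0)$; hence $D'$ is itself concentrated on $T(s_0)$ with $D'(T(s_0)) = D(s_0)$, and thus $\sumd{D'} = D(s_0) = \sumd{D}$. (If one preferred a direct argument, swapping the order of summation yields $\sumd{D'} = \sum_{s'}\sum_{s \mid T(s) = s'} D(s) = \sum_s \sum_{s' \mid T(s) = s'} D(s) = \sum_s D(s) = \sumd{D}$, which is valid by Tonelli since all terms are non-negative and every $s$ has a unique image $T(s)$.) Finally, non-increasing is an immediate consequence of sum-preserving, since equality implies $\leq$.

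The only mild subtlety is making sure that $\assignf{x}{E}(D)$ really lands in $\Dist$ and preserves boundedness, but this is exactly what the sum-preserving property delivers: $\sumd{D'} = \sumd{D} \leq 1$ when $D$ is bounded. I do not expect a significant obstacle; the main care is in justifying the swap of summations (or equivalently the use of Lemma~\ref{lem:sum-pres-conc}) so that the argument remains rigorous for the countable, unbounded index sets involved.
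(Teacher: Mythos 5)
Your proof is correct, but it takes a different route from the paper for the only non-trivial part, sum-preservation. The paper's proof is a one-liner: it invokes Lemma~\ref{lem:assignRirrel} with $R = \emptyset$, since $\dagree{D}{D'}{\emptyset}$ means $D$ and $D'$ agree on the empty store, and $D(\emptyset) = \sumd{D}$, so $\sumd{D'} = \sumd{D}$ falls out immediately; additivity and multiplicativity are dismissed as trivial, and non-increasing follows as you say. You instead first prove continuity (which in the paper is the separate, later Lemma~\ref{lem:assign-cont}, established with exactly your Lemma~\ref{lem:sumlim-limsum} argument) in order to apply Lemma~\ref{lem:sum-pres-conc} to concentrated distributions; your parenthetical direct argument --- swapping the order of the non-negative double sum and using that each $s$ has a unique image $\upd{s}{x}{\seme{E}s}$ --- is essentially the computation hidden inside the paper's proof of Lemma~\ref{lem:assignRirrel}, specialized to $R = \emptyset$. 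What your approach buys is self-containedness and a reusable pattern (check concentrated inputs, then extend by continuity and additivity); what the paper's approach buys is brevity and no dependence on continuity, which it deliberately keeps as a separate lemma. Both are rigorous; the route via Lemma~\ref{lem:sum-pres-conc} is heavier machinery than needed here, while your direct summation swap matches the paper's underlying calculation.
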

\begin{proof}
That $\assignf{x}{E}$ is sum-preserving, 
and hence non-increasing, follows from
Lemma~\ref{lem:assignRirrel}, with $R = \emptyset$. 
Additivity and multiplicativity are trivial.
\end{proof}
%% In Appendix~\ref{app:proofs}, we prove:
\begin{lemma}
\label{lem:assign-cont}
$\assignf{x}{E}$ is continuous.
\end{lemma}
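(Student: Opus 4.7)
The plan is to establish continuity pointwise and reduce the work to the sum--limit interchange already packaged as Lemma~\ref{lem:sumlim-limsum}. Fix a chain $\chain{D_k}{k}$ of distributions and let $D = \limit{k}{D_k}$. Because least upper bounds in $\Dist$ are computed pointwise, it suffices to show, for each $s' \in \fulls$, that $\assignf{x}{E}(D)(s') = \limit{k}{\assignf{x}{E}(D_k)(s')}$.

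The central step is to observe that, by the definition of $\assignf{x}{E}$, both sides are sums indexed by the same countable set $S_{s'} = \chain{s \in \fulls}{s' = \upd{s}{x}{\seme{E}s}}$; countability is immediate since $\fulls$ is countable. Thus $\assignf{x}{E}(D)(s') = \sum_{s \in S_{s'}} D(s) = \sum_{s \in S_{s'}} \limit{k}{D_k(s)}$, and a single application of Lemma~\ref{lem:sumlim-limsum} to the chain $\chain{D_k}{k}$ with the index set $S_{s'}$ pulls the limit outside the summation, yielding exactly $\limit{k}{\sum_{s \in S_{s'}} D_k(s)} = \limit{k}{\assignf{x}{E}(D_k)(s')}$.

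Before invoking the sum--limit exchange on the right-hand side, I would pause to verify that $\chain{\assignf{x}{E}(D_k)}{k}$ really is a chain, so that its limit is well defined. This follows from additivity (Lemma~\ref{lem:assign-misc}): if $D_k \leq D_{k+1}$ then $D_{k+1} - D_k$ is again a bounded distribution, and additivity gives $\assignf{x}{E}(D_{k+1}) = \assignf{x}{E}(D_k) + \assignf{x}{E}(D_{k+1} - D_k) \geq \assignf{x}{E}(D_k)$. I expect no further obstacles, because the only nontrivial analytical content --- commuting an infinite sum with a monotone limit of non-negative reals --- has already been isolated in Lemma~\ref{lem:sumlim-limsum}; everything else amounts to unfolding definitions and noting that pointwise suprema commute with the pointwise-defined transfer function.
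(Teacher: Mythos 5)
Your proof is correct and follows essentially the same route as the paper's: both reduce continuity to the pointwise identity $\assignf{x}{E}(D)(s') = \limit{k}{\assignf{x}{E}(D_k)(s')}$ and discharge it with a single application of Lemma~\ref{lem:sumlim-limsum} to the index set $\{s \in \fulls \mid s' = \upd{s}{x}{\seme{E}s}\}$. The only (harmless) variation is that you obtain monotonicity of $\assignf{x}{E}$ via additivity applied to $D_{k+1} = D_k + (D_{k+1}-D_k)$, whereas the paper simply notes monotonicity is obvious from the defining sum of non-negative terms.
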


\paragraph{Random Assignments}
For a variable $x$ and a random distribution $\psi$,
we define $\rassignf{x}{\psi}$ by letting
$\rassignf{x}{\psi}(D)$
be a distribution $D'$ such that
for each $s' \in \fulls$,
\[
D'(s') =
\sum_{s \in \fulls\ \mid\ \sagree{s'}{s}{\uvar \setminus \mkset{x}}} \psi(s'(x))D(s)
\]
\begin{lemma}
\label{lem:rassignRirrel}
Assume that $\rassignf{x}{E}(D) = D'$
and $x \notin R$. Then
$\dagree{D}{D'}{R}$.
\end{lemma}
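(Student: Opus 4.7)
The plan is to mirror the structure used for Lemma~\ref{lem:assignRirrel}, exploiting Lemma~\ref{lem:sto-partial} to lift the definition, which is given pointwise on $\fulls$, to a statement about arbitrary $s \in \sto{R}$. Concretely, fix $s \in \sto{R}$ with $x \notin R$; the goal is to show $D'(s) = D(s)$.

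Starting from Lemma~\ref{lem:sto-partial} (instantiated with $R' = \uvar$), I would write
\[
D'(s) \;=\; \sum_{s' \in \fulls\ \mid\ \sagree{s'}{s}{R}} D'(s')
\;=\; \sum_{s' \in \fulls\ \mid\ \sagree{s'}{s}{R}} \ \sum_{s_0 \in \fulls\ \mid\ \sagree{s'}{s_0}{\uvar \setminus \mkset{x}}} \psi(s'(x))\,D(s_0),
\]
by unfolding the definition of $\rassignf{x}{\psi}$. The key move is then to swap the order of summation, with $s_0$ on the outside. The main technical observation — and the only place where the hypothesis $x \notin R$ is really used — is that given $s_0$, the constraint $\sagree{s'}{s_0}{\uvar \setminus \mkset{x}}$ determines $s'$ uniquely on every variable except $x$, and so the joint constraint ``$\sagree{s'}{s}{R}$ and $\sagree{s'}{s_0}{\uvar \setminus \mkset{x}}$'' becomes ``$\sagree{s_0}{s}{R}$ and $s'(x)$ arbitrary'' (this is where $x \notin R$ is needed: otherwise $s'(x)$ would be pinned down by $s$ and the equivalence would fail).

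After this rewriting, the sum factorizes as
\[
D'(s) \;=\; \sum_{s_0 \in \fulls\ \mid\ \sagree{s_0}{s}{R}} D(s_0)\, \sum_{z \in \Zz} \psi(z)
\;=\; \sum_{s_0 \in \fulls\ \mid\ \sagree{s_0}{s}{R}} D(s_0),
\]
using $\sum_{z \in \Zz} \psi(z) = 1$. A final application of Lemma~\ref{lem:sto-partial} (again with $R' = \uvar$) identifies the right-hand side with $D(s)$, which establishes $\dagree{D}{D'}{R}$.

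I do not expect any real obstacle; the proof is essentially a bookkeeping argument. The only subtlety worth being careful about is the swap of the double sum, which is legal because all terms are non-negative (so Tonelli-style rearrangement is unconditional for countable sums), and the clean reduction of the two agreement conditions into one, which requires exactly the hypothesis $x \notin R$.
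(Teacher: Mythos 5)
Your proposal is correct and follows essentially the same route as the paper's proof: unfold $D'(s)$ via Definition~\ref{def:Dpartial}, unfold the definition of $\rassignf{x}{\psi}$, merge and reorder the (non-negative, countable) double sum using $x \notin R$ to turn the agreement condition on $s'$ into one on the inner store, factor out $\sum_{z \in \Zz}\psi(z) = 1$, and re-collect to obtain $D(s)$. The only difference is cosmetic (your explicit appeal to Lemma~\ref{lem:sto-partial} with $R' = \uvar$ and to Tonelli-style rearrangement, which the paper leaves implicit).
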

%% The proof is in Appendix~\ref{app:proofs}.
\begin{lemma}
\label{lem:rassign-misc}
Each $\rassignf{x}{E}$ is additive, multiplicative, non-increasing, and sum-preserving.
\end{lemma}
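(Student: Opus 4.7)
The plan is to mirror the proof structure of Lemma~\ref{lem:assign-misc}, with sum-preservation as the substantive ingredient and the remaining three properties following by inspection or as immediate consequences.

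First, additivity and multiplicativity fall out directly from the defining equation of $\rassignf{x}{\psi}(D)(s')$: the factor $\psi(s'(x))$ does not depend on $D$, and the index set $\{s \in \fulls \mid \sagree{s'}{s}{\uvar \setminus \mkset{x}}\}$ does not depend on $D$ either. Hence distributing $D_1(s)+D_2(s)$, or $c\,D(s)$, through each summation yields respectively $\rassignf{x}{\psi}(D_1+D_2) = \rassignf{x}{\psi}(D_1)+\rassignf{x}{\psi}(D_2)$ and $\rassignf{x}{\psi}(cD) = c\,\rassignf{x}{\psi}(D)$. Both verifications are one-line calculations.

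For sum-preservation, I would invoke Lemma~\ref{lem:rassignRirrel} with $R = \emptyset$. The side condition $x \notin R$ holds vacuously, so with $D' = \rassignf{x}{\psi}(D)$ the lemma gives $\dagree{D}{D'}{\emptyset}$. Since $\sto{\emptyset}$ contains only the empty store, this single equation reads $D(\emptyset) = D'(\emptyset)$, which by the observation following Definition~\ref{def:Dpartial} is exactly $\sumd{D} = \sumd{D'}$. Non-increasing then follows at once, with equality.

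The only reason to suspect genuine difficulty is that the sum defining $\rassignf{x}{\psi}(D)(s')$ is weighted by a probability mass function $\psi$, which a priori could either lose or amplify total mass. However, that concern is absorbed entirely into the proof of Lemma~\ref{lem:rassignRirrel}, which must at some point collapse an inner sum via $\sum_{z \in \Zz} \psi(z) = 1$. Granting that lemma as a black box, the main obstacle has already been dealt with, and the present statement reduces to the trivial verifications sketched above.
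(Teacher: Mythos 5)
Your proposal is correct and matches the paper's own proof: the paper likewise derives sum-preservation (and hence non-increasingness) from Lemma~\ref{lem:rassignRirrel} instantiated with $R = \emptyset$, noting $D(\emptyset) = \sumd{D}$, and dismisses additivity and multiplicativity as trivial. Your additional remark that the normalization $\sum_{z \in \Zz}\psi(z) = 1$ is absorbed into the proof of Lemma~\ref{lem:rassignRirrel} is exactly where that step occurs in the appendix.
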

\begin{proof}
That $\rassignf{x}{E}$ is sum-preserving, 
and hence non-increasing, follows from
Lemma~\ref{lem:rassignRirrel}, with $R = \emptyset$. 
Additivity and multiplicativity is trivial.
\end{proof}
%% In Appendix~\ref{app:proofs}, we prove:
\begin{lemma}
\label{lem:rassign-cont}
$\rassignf{x}{E}$ is continuous.
\end{lemma}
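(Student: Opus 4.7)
My plan is to reduce the continuity of $\rassignf{x}{\psi}$ to the summation-limit interchange provided by Lemma~\ref{lem:sumlim-limsum}, in exactly the same pattern one would use for the (omitted) proof of Lemma~\ref{lem:assign-cont}. Concretely, given a chain $\chain{D_k}{k}$ of distributions with pointwise least upper bound $D = \limit{k}{D_k}$, I would show that $\rassignf{x}{\psi}(D) = \limit{k}{\rassignf{x}{\psi}(D_k)}$. Since lubs in $\Dist$ are taken pointwise, it suffices to verify equality of these two distributions at each store $s' \in \fulls$.

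Fix such an $s'$ and let $S' = \mkset{s \in \fulls \mid \sagree{s'}{s}{\uvar \setminus \mkset{x}}}$; as $\fulls$ is countable so is $S'$. Unfolding the definition of $\rassignf{x}{\psi}$ and pulling the non-negative constant $\psi(s'(x))$ out of the non-negative sum gives
\[
\rassignf{x}{\psi}(D)(s') = \sum_{s \in S'} \psi(s'(x))\, D(s) = \psi(s'(x)) \sum_{s \in S'} D(s).
\]
Substituting $D(s) = \limit{k}{D_k(s)}$ and applying Lemma~\ref{lem:sumlim-limsum} with the countable set $S'$ in place of its $S$, the sum commutes with the limit:
\[
\sum_{s \in S'} D(s) = \sum_{s \in S'} \limit{k}{D_k(s)} = \limit{k}{\sum_{s \in S'} D_k(s)}.
\]
Multiplying this equality back by $\psi(s'(x))$ (a non-negative scalar, which therefore commutes with the monotone limit of non-negative reals) and re-folding the definition of $\rassignf{x}{\psi}(D_k)(s')$ yields the desired pointwise identity $\rassignf{x}{\psi}(D)(s') = \limit{k}{\rassignf{x}{\psi}(D_k)(s')}$.

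The one remaining detail is that for the right-hand limit to denote a lub in $\Dist$ the sequence $\chain{\rassignf{x}{\psi}(D_k)}{k}$ must itself be a chain, which follows from the monotonicity of $\rassignf{x}{\psi}$; monotonicity is immediate from additivity (Lemma~\ref{lem:rassign-misc}), since writing $D_{k+1}$ as $D_k$ plus a distribution representing the pointwise difference preserves $\leq$ under $\rassignf{x}{\psi}$. I do not foresee a real obstacle: once Lemma~\ref{lem:sumlim-limsum} is invoked, the only non-routine move is pulling the factor $\psi(s'(x))$ through the countable sum, which is harmless by non-negativity of every quantity involved.
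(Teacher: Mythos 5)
Your proposal is correct and follows essentially the same route as the paper: establish that $\chain{\rassignf{x}{\psi}(D_k)}{k}$ is a chain via monotonicity, then verify the limit pointwise at each $s'$ by unfolding the definition and interchanging the countable sum with the limit using Lemma~\ref{lem:sumlim-limsum}. The only cosmetic differences are that you factor out $\psi(s'(x))$ before applying the lemma and justify monotonicity via additivity, whereas the paper keeps the factor inside the sum and simply calls monotonicity obvious.
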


\subsection{Fixed-point Semantics}
\label{subsec:toplevel}
\label{subsec:sem-slice}
Having expressed the semantics of a single edge,
we shall now express the semantics of a full pCFG.
Our goal is to compute ``modification functions''
to express how a distribution is modified
as ``control'' moves from $\startv$ to $\finalv$.
To accomplish this, we shall solve a more general problem:
for each $(v,v') \in \PD$,
state how a given distribution
is modified as ``control'' moves from $v$ to $v'$
along paths that may contain
multiple branches and even loops
but which do \emph{not} contain $v'$ until the end.

We would have liked to have a definition of the modification
function that is inductive in $\LAP{v}{v'}$,
but this is not possible due to cycle-inducing nodes
(cf.~Definition~\ref{def:cycle-induce}).
For such nodes, the semantics cannot be expressed by recursive calls
on the successors, but the semantics of (at least) one of
the successors will have to be provided as an {\em argument}.
This motivates that our main semantic function be a \emph{functional}
that transforms a modification function into another modification function,
with the desired meaning being the \emph{fixed point}
(cf.~Lemma~\ref{lem:cont-cpo-fix} in Appendix~\ref{app:domain}) of this functional.

We shall now specify a functional $\HH{X}$
which is parametrized on a set $X$ of nodes;
the idea is that only the nodes in $X$ are taken into account.
To get a semantics for the original program, we must let $X$ be the set
$\unodes$ of all nodes; to get a semantics for a sliced program,
we must let $X$ be the set $Q$ of nodes included in the slice.

$\HH{X}$ operates on $\PD \rightarrow \Dist \rightarrow \Dist$
and we shall show (Lemma~\ref{lem:hcontH}) that it even operates
on $\PD \rightarrow \Dist \contarrow \Dist$ 
(we let $\contarrow$ denote the set of continuous functions)
which, as stated in
Lemma~\ref{lem:cont-cpo}
in Appendix~\ref{app:domain},
is a pointed cpo:
\begin{lemma}
\label{lem:is-cpo}
$\PD \rightarrow (\Dist \contarrow \Dist)$ is a pointed cpo,
with the ordering given pointwise, and
with least element $0$ given as $\lambda (v_1,v_2).\lambda D.0$.
\end{lemma}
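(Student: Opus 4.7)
The plan is to reduce this to two standard facts from the domain-theory appendix: (i) the space of continuous functions between two pointed cpos is itself a pointed cpo, and (ii) the set of functions from an arbitrary index set to a pointed cpo is a pointed cpo under the pointwise order. Both facts are classical, and the text has already observed that $\Dist$ is a pointed cpo with pointwise ordering, zero distribution as bottom, and pointwise suprema.

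First I would invoke fact (i) --- which is Lemma~\ref{lem:cont-cpo} referenced just above the statement --- applied to $\Dist$ and $\Dist$, to conclude that $\Dist \contarrow \Dist$ is a pointed cpo. Its bottom element is $\lambda D.\,0$, and the least upper bound of a chain $\{f_k\}$ of continuous functions is the pointwise supremum $\lambda D.\,\limit{k}{f_k(D)}$, which is continuous by the standard diagonal argument.

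Next I would lift this to the indexed function space by fact (ii), treating $\PD$ simply as a set of indices. Concretely, given a chain $\{h_k\}$ in $\PD \rightarrow (\Dist \contarrow \Dist)$ with $h_k \leq h_{k+1}$ pointwise, I would define $h = \lambda (v_1,v_2).\,\bigsqcup_k h_k(v_1,v_2)$. Each component chain $\{h_k(v_1,v_2)\}$ lives in $\Dist \contarrow \Dist$, so by fact (i) its supremum exists in that cpo; therefore $h$ is well-defined and lands in $\PD \rightarrow (\Dist \contarrow \Dist)$. Verifying that $h$ is the least upper bound of $\{h_k\}$ and that $\lambda (v_1,v_2).\lambda D.0$ is below every element are both immediate from the pointwise definition of the order.

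I do not anticipate a real obstacle: the argument is purely structural lifting, and all the heavy lifting (closure of continuous functions under pointwise suprema of chains) is packaged into Lemma~\ref{lem:cont-cpo}. The only thing to be slightly careful about is to keep the two levels of ``pointwise'' straight --- one over $\PD$ (set-indexed), one over $\Dist$ (cpo-indexed, where continuity of the supremum matters) --- so in the write-up I would present them in that order and explicitly note that the suprema used at the inner level are the ones guaranteed by Lemma~\ref{lem:cont-cpo}.
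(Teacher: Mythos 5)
Your proof is correct and follows essentially the same route as the paper, which treats the lemma as an immediate consequence of Lemma~\ref{lem:cont-cpo}: the inner space $\Dist \contarrow \Dist$ is a pointed cpo by that lemma, and the outer level is handled by its discrete-domain clause (viewing $\PD$ as a discrete cpo, so that $\PD \rightarrow (\Dist \contarrow \Dist)$ coincides with the continuous-function space). Your ``fact (ii)'' pointwise lifting over the index set $\PD$ is exactly the content of that discrete-domain clause, just spelled out by hand, so the two arguments differ only in packaging.
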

\begin{definition}[$\HH{X}$]
\label{def:HHX}
The functionality of $\HH{X}$ is given by 
\[
\HH{X}: (\PD \rightarrow \Dist \rightarrow \Dist)
 \rightarrow
   (\PD \rightarrow \Dist \rightarrow \Dist)
\]
where, given 
\[
h_0: \PD \rightarrow \Dist \rightarrow \Dist
\]
we define 
\[
h = \HH{X}(h_0): \PD \rightarrow \Dist \rightarrow \Dist
\]
by letting $h (v,v')$, written $\hh{v}{v'}$, be stipulated by the
following rules that are inductive in $\LAP{v}{v'}$:
\begin{enumerate}
\item
if $v' = v$ then $\hh{v}{v'}(D) = D$;
\item
\label{evalk-transitive}
otherwise, if $v' \neq v''$ with $v'' = \fppd{v}$
then 
\[
\hh{v}{v'}(D) = \hh{v''}{v'}(\hh{v}{v''}(D))
\]
(this is well-defined by Lemma~\ref{lem:LAP-add});
\item
otherwise, that is if $v' = \fppd{v}$:
\begin{enumerate}
\item
\label{defn:evalk:notinX}
if $v \notin X$ 
or $\labv{v} = \skipv$
then $\hh{v}{v'}(D) = D$;
\item
if $v \in X$ with $\labv{v}$ of the form $\assignv{x}{E}$
then $\hh{v}{v'}(D) = \assignf{x}{E}(D)$;
\item
if $v \in X$ with $\labv{v}$ of the form $\rassignv{x}{\psi}$
then $\hh{v}{v'}(D) = \rassignf{x}{\psi}(D)$;
\item
if $v \in X$ with $\labv{v}$ of the form $\observev{B}$
then $\hh{v}{v'}(D) = \selectf{B}(D)$;
\item
\label{def:evalk:cond}
otherwise, that is if $v$ is a branching node with condition $B$,
we compute $\hh{v}{v'}$ as follows:
with $v_1$ the $\true$-successor of $v$ and $v_2$ the $\false$-successor
of $v$, let $D_1 = \selectf{B}(D)$ and $D_2 = \selectf{\neg B}(D)$;
we then let $\hh{v}{v'}(D)$ be $D'_1 + D'_2$ where for each
$i \in \{1,2\}$, $D'_i$ is computed as
\begin{itemize}
\item
if $\LAP{v_i}{v'} < \LAP{v}{v'}$ then
$D'_i =\hh{v_i}{v'}(D_i)$;
\item
if $\LAP{v_i}{v'} \geq \LAP{v}{v'}$ (and thus $v$ is cycle-inducing)
then 
$D'_i = \hpd{h_0}{v_i}{v'}(D_i)$.
\end{itemize}
\end{enumerate}
\end{enumerate}
\end{definition}
\begin{lemma}
\label{lem:hcontH}
Assume that $\hpd{h_0}{v}{v'}$ is continuous for all $(v,v') \in \PD$ and
let $h = \HH{X}(h_0)$. Then $\hh{v}{v'}$ is continuous for all $(v,v') \in \PD$.
\end{lemma}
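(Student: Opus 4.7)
The plan is a straightforward induction on $\LAP{v}{v'}$, mirroring the case analysis in Definition~\ref{def:HHX}. For each clause I need to exhibit $\hh{v}{v'}$ as built from continuous ingredients, using two standard facts about the pointed cpo $\Dist \contarrow \Dist$: (i) composition of continuous functions is continuous, and (ii) if $f,g$ are continuous and $f(D)+g(D)$ is always a distribution, then $D \mapsto f(D)+g(D)$ is continuous (since least upper bounds in $\Dist$ are taken pointwise, and addition of non-negative reals commutes with suprema of chains, e.g.\ via Lemma~\ref{lem:sumlim-limsum}). I will also use scalar multiplication continuity where needed, though it is not required here.

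First I would dispatch the base and easy inductive cases: when $v'=v$, $\hh{v}{v'}$ is the identity, trivially continuous; when $v \notin X$ or $\labv{v}=\skipv$ (with $v'=\fppd{v}$), it is again the identity. For the transitivity case (clause~\ref{evalk-transitive}), $\hh{v}{v'} = \hh{v''}{v'} \circ \hh{v}{v''}$ with $v''=\fppd{v}$; by Lemma~\ref{lem:LAP-add} both $\LAP{v}{v''}$ and $\LAP{v''}{v'}$ are strictly smaller than $\LAP{v}{v'}$ (since $v' \neq v''$ forces $\LAP{v''}{v'} \geq 1$, and $v''\neq v$ forces $\LAP{v}{v''} \geq 1$), so the induction hypothesis gives continuity of both factors, and composition preserves continuity. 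The (random) assignment and $\observev{B}$ cases reduce to $\assignf{x}{E}$, $\rassignf{x}{\psi}$, and $\selectf{B}$, which are continuous by Lemmas~\ref{lem:assign-cont}, \ref{lem:rassign-cont}, and~\ref{lem:selectB} respectively.

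The one delicate case is the branching clause (\ref{def:evalk:cond}). Here $\hh{v}{v'}(D) = D'_1 + D'_2$, where each $D'_i$ is obtained by first applying $\selectf{B}$ or $\selectf{\neg B}$ to $D$ and then applying either $\hh{v_i}{v'}$ (when $\LAP{v_i}{v'} < \LAP{v}{v'}$) or $\hpd{h_0}{v_i}{v'}$ (when $\LAP{v_i}{v'} \geq \LAP{v}{v'}$). In the first subcase the induction hypothesis applies to $\hh{v_i}{v'}$; in the second, the hypothesis of the lemma supplies continuity of $\hpd{h_0}{v_i}{v'}$ directly. In either subcase $D \mapsto D'_i$ is a composition of two continuous functions, hence continuous. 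To conclude that $D \mapsto D'_1+D'_2$ is continuous, I need that the pointwise sum is well-defined as a distribution; this follows from Lemma~\ref{lem:selectB}, since $\selectf{B}(D)+\selectf{\neg B}(D)=D$ means $\sumd{D_1}+\sumd{D_2} \leq \sumd{D} \leq 1$, and each transfer function applied afterwards is non-increasing.

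The main (small) obstacle is simply verifying that addition of two continuous maps into $\Dist$ is continuous in our cpo. Because least upper bounds in $\Dist$ are pointwise and for each fixed store $s$ the map $(a,b)\mapsto a+b$ on non-negative reals preserves suprema of chains, we get for any chain $\chain{D_k}{k}$ and each $s$,
\[
(D'_1+D'_2)(\limit{k}{D_k})(s)=\limit{k}{D'_{1}(D_k)(s)}+\limit{k}{D'_{2}(D_k)(s)}=\limit{k}{(D'_1(D_k)+D'_2(D_k))(s)},
\]
so additivity of the limit, together with the already-established continuity of $D \mapsto D'_i$, gives continuity of $\hh{v}{v'}$. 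This closes the induction.
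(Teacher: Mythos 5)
Your proof is correct and follows essentially the same route as the paper, which disposes of the lemma by ``an easy induction in $\LAP{v}{v'}$'' using Lemmas~\ref{lem:selectB}, \ref{lem:assign-cont} and \ref{lem:rassign-cont} together with closure of continuity under composition; your elaboration of the branching case (invoking the continuity hypothesis on $h_0$ for the successor with $\LAP{v_i}{v'} \geq \LAP{v}{v'}$, and the fact that pointwise sums of continuous maps into $\Dist$ are continuous because limits in $\Dist$ are pointwise) is precisely what the paper leaves implicit. The only quibble is that you need not appeal to boundedness via Lemma~\ref{lem:selectB} to justify forming $D'_1 + D'_2$: elements of $\Dist$ are arbitrary non-negative real-valued maps, so the pointwise sum is always defined.
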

\begin{proof}
This follows by an easy induction in $\LAP{v}{v'}$,
using Lemmas~\ref{lem:selectB}, \ref{lem:assign-cont} and
\ref{lem:rassign-cont}, and the fact that the composition of two continuous
functions is continuous.
\end{proof}
Thus $\HH{X}$ is a mapping from
$\PD \rightarrow (\Dist \contarrow \Dist)$ to itself.
\begin{lemma}
\label{lem:HHcont}
The functional $\HH{X}$ is continuous
on $\PD \rightarrow (\Dist \contarrow \Dist)$.
\end{lemma}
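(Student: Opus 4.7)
The plan is to exploit the pointwise structure of the cpo $\PD \rightarrow (\Dist \contarrow \Dist)$: since both the order and the suprema of chains are taken pointwise in $(v,v')$ and then pointwise in $D$, continuity of $\HH{X}$ reduces to showing that, for any chain $\{h_0^k\}_k$, any $(v,v') \in \PD$, and any $D \in \Dist$,
\[
\HH{X}(\limit{k}{h_0^k})(v,v')(D) \;=\; \limit{k}{\HH{X}(h_0^k)(v,v')(D)}.
\]
Writing $h = \HH{X}(\limit{k}{h_0^k})$ and $h^k = \HH{X}(h_0^k)$, I would prove this identity by induction on $\LAP{v}{v'}$, following the case split of Definition~\ref{def:HHX}. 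Before starting, I would establish monotonicity of $\HH{X}$ via an analogous but strictly simpler induction, so that $\{h^k\}_k$ is itself a chain and the suprema on the right are well-defined.

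The base case $v = v'$ gives $D$ on both sides. The sub-cases of clause 3 in which $\labv{v}$ is $\skipv$, an assignment, a random assignment, or an $\observevv$ node (as well as the case $v \notin X$) all reduce $\hh{v}{v'}(D)$ to one of $\assignf{x}{E}, \rassignf{x}{\psi}, \selectf{B}$, or the identity, none of which depends on $h_0$, so the equality is trivial. In clause 2, where $v'' = \fppd{v} \neq v'$, we have $\hh{v}{v'}(D) = \hh{v''}{v'}(\hh{v}{v''}(D))$ with both factors at strictly smaller $\LAP$. Setting $g_k(D) = h^k(v,v'')(D)$ and $f_k(D') = h^k(v'',v')(D')$, the inductive hypothesis gives $h(v,v'')(D) = \limit{k}{g_k(D)}$ and $h(v'',v')(D') = \limit{k}{f_k(D')}$ for every $D'$. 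Continuity of $h(v'',v')$ in its distribution argument (Lemma~\ref{lem:hcontH}) together with the standard diagonal-sup identity
\[
\limit{k}{\limit{j}{f_j(g_k(D))}} \;=\; \limit{k}{f_k(g_k(D))},
\]
valid because $\{f_j(g_k(D))\}_{j,k}$ is monotone in each coordinate (using monotonicity of $\HH{X}$ and of each $f_j$), then delivers the desired equation.

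For clause 3e, $D_1 = \selectf{B}(D)$ and $D_2 = \selectf{\neg B}(D)$ do not depend on $h_0$. For any successor $v_i$ with $\LAP{v_i}{v'} < \LAP{v}{v'}$, the inductive hypothesis applies to $\hh{v_i}{v'}(D_i)$. For a cycle-inducing successor, the definition plugs $h_0(v_i,v')(D_i)$ in directly, and pointwise sup in the cpo yields $(\limit{k}{h_0^k})(v_i,v')(D_i) = \limit{k}{h_0^k(v_i,v')(D_i)}$. Since pointwise addition of a chain of distributions commutes with supremum, summing the two contributions $D'_1 + D'_2$ preserves the equality and closes the induction. The principal obstacle is the double-supremum step in clause 2: it requires both the per-distribution continuity already granted by Lemma~\ref{lem:hcontH} and the bi-monotonicity of $\{f_j(g_k(D))\}_{j,k}$, which is why establishing monotonicity of $\HH{X}$ as a preliminary lemma is essential; with that in hand, the remaining verifications are routine.
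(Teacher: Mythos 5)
Your proposal is correct and takes essentially the same route as the paper's proof: reduce to pointwise equality, induct on $\LAP{v}{v'}$ with a case analysis on Definition~\ref{def:HHX}, dispatch the label cases trivially, use Lemma~\ref{lem:hcontH} to exchange the limits in the composition case, and read off the cycle-inducing branch case directly from the pointwise definition of the supremum. Your explicit preliminary monotonicity lemma and the diagonal-sup identity are just a more careful spelling-out of steps the paper leaves implicit.
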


Lemmas~\ref{lem:is-cpo} and
\ref{lem:HHcont}, together with
Lemma~\ref{lem:cont-cpo-fix} in Appendix~\ref{app:domain}, 
% (where also the notation $\limit{k}{}$ is defined)
give
\begin{proposition}
The functional $\HH{X}$ has a least fixed point
(belonging to $\PD \rightarrow (\Dist \contarrow \Dist)$),
called $\fixed{\HH{X}}$, and given as
$\limit{k}{\HH{X}^k(0)}$, that is
the limit of the chain $\chain{\HH{X}^k(0)}{k}$
(where $\HH{X}^k(0)$ denotes $k$ applications of
$\HH{X}$ to the modification function that maps all distributions to 0). 
\end{proposition}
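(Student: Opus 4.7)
The plan is to observe that this proposition is an immediate consequence of the classical Kleene (or Scott) fixed-point theorem, which in the paper's formulation appears as Lemma~\ref{lem:cont-cpo-fix} in Appendix~\ref{app:domain}: any continuous endofunction on a pointed cpo has a least fixed point, obtained as the supremum of the chain of iterates of the function applied to the bottom element. So the proof is simply a matter of verifying that the two hypotheses of this theorem are met for $\HH{X}$, and then reading off the conclusion.

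First I would invoke Lemma~\ref{lem:is-cpo} to note that the space $\PD \rightarrow (\Dist \contarrow \Dist)$ is a pointed cpo, with bottom element $0 = \lambda(v_1,v_2).\lambda D.0$. Next, by Lemma~\ref{lem:hcontH}, if each $\hpd{h_0}{v}{v'}$ is continuous then so is each component of $\HH{X}(h_0)$, which tells us that $\HH{X}$ does indeed operate as an endofunction on $\PD \rightarrow (\Dist \contarrow \Dist)$. Lemma~\ref{lem:HHcont} then provides the continuity of $\HH{X}$ itself as a map on this pointed cpo.

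With both hypotheses discharged, Lemma~\ref{lem:cont-cpo-fix} directly yields the existence of a least fixed point $\fixed{\HH{X}} \in \PD \rightarrow (\Dist \contarrow \Dist)$, together with the explicit characterization $\fixed{\HH{X}} = \limit{k}{\HH{X}^k(0)}$ as the least upper bound of the ascending chain $\chain{\HH{X}^k(0)}{k}$ (which is an ascending chain because $0$ is the bottom element and $\HH{X}$, being continuous, is in particular monotone). There is no real obstacle here: all the substantive work — establishing that the domain is a pointed cpo and that $\HH{X}$ is continuous — has already been carried out in the preceding lemmas, and the proposition is essentially just the formal statement that Kleene's theorem applies.
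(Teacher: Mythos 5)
Your proposal is correct and matches the paper's own argument: the paper derives this proposition precisely by combining Lemma~\ref{lem:is-cpo} (the space is a pointed cpo), Lemma~\ref{lem:HHcont} (continuity of $\HH{X}$, with Lemma~\ref{lem:hcontH} ensuring it is an endofunction on the space), and Lemma~\ref{lem:cont-cpo-fix} (the Kleene fixed-point theorem). Nothing further is needed.
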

We can now define the meaning of the original program:
\begin{definition}[Meaning of Original Program]
Given a pCFG, with $\unodes$ the set of its nodes, 
we define its meaning $\omega$ as $\omega = \fixed{\HH{\unodes}}$.
Thus $\omega = \limit{k}{\omega_k}$ 
where $\omega_k = \HH{\unodes}^k(0)$ (thus $\omega_0 = 0$).
\end{definition}
Thus for all $k > 0$ we have $\omega_k = \HH{\unodes}(\omega_{k-1})$.
Intuitively speaking, $\omega_k$ is the meaning of the program assuming
that control is allowed to loop, that is move ``backwards'',
at most $k-1$ times.

A \emph{slice set} is a set $Q$ of nodes 
 (which must satisfy certain
conditions, \cf~Definition~\ref{defn:slicing-pair})
to be included in the slice:
\begin{definition}[Meaning of Sliced Program]
\label{def:slice-meaning}
Given a pCFG, and given a slice set $Q$,
we define the meaning of the sliced program as
$\phi = \fixed{\HH{Q}}$.
Thus $\phi = \limit{k}{\phi_k}$ 
where $\phi_k = \HH{Q}^k(0)$.
\end{definition}
\begin{example}
%\begin{exmp}
Consider Example~\ref{ex1}, with $Q$ containing nodes 1 and 4.
Thus nodes 2 and 3 are treated like $\skipv$ nodes, and
for $D$ we thus have
$\hpd{\phi}{1}{4}(D) = \rassignf{x}{\psi_4}(D)$.
%\end{exmp}
\end{example}
The following result is applicable to the original program
as well as to the sliced program:
%% are proved in Appendix~\ref{app:proofs}:
\begin{lemma}
\label{lem:fixed-mult-nonincr-determ}
Let $h = \fixed{\HH{X}}$. Then for each $(v,v') \in \PD$,
$\hpd{h}{v}{v'}$ is additive, multiplicative and non-increasing
(as is also $\hpd{\omega_k}{v}{v'}$ for each $k \geq 0$).
%% Also, for a deterministic CFG, $\hpd{h}{v}{v'}$ is deterministic,
%% that is: if $D$ is concentrated then $\hpd{h}{v}{v'}(D)$ is concentrated.
\end{lemma}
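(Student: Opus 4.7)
The plan is a double induction followed by a continuity argument. First I would prove, by induction on $k \geq 0$, that each iterate $\HH{X}^k(0)$ produces a function $\hpd{h}{v}{v'}$ that is additive, multiplicative, and non-increasing for every $(v,v') \in \PD$. The base case $k=0$ is immediate since $\hpd{0}{v}{v'} = \lambda D.0$ trivially has all three properties. For the inductive step, set $h_0 = \HH{X}^k(0)$ (which satisfies the properties by outer IH) and $h = \HH{X}(h_0)$; I would then perform a secondary induction on $\LAP{v}{v'}$ following the case analysis in Definition~\ref{def:HHX}.

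The identity case ($v = v'$) and the $\skipv$/non-$X$ case are trivial. The transitive case $\hh{v}{v'} = \hh{v''}{v'} \circ \hh{v}{v''}$ uses the inner IH (both subterms have strictly smaller $\LAPf$ by Lemma~\ref{lem:LAP-add}), together with the easy observation that the three properties are preserved under composition. The assignment, random-assignment, and observe cases invoke Lemmas~\ref{lem:assign-misc}, \ref{lem:rassign-misc}, and \ref{lem:selectB} directly. For a branching node with $\hh{v}{v'}(D) = D'_1 + D'_2$, each $D'_i$ is produced either by the inner IH (when $\LAP{v_i}{v'} < \LAP{v}{v'}$) or by $\hpd{h_0}{v_i}{v'}$ (when cycle-inducing), both of which have the three properties; additivity and multiplicativity then transfer to $\hh{v}{v'}$ because $\selectf{B}$ and $\selectf{\neg B}$ share them (Lemma~\ref{lem:selectB}) and the sum of two additive (resp.~multiplicative) functions is additive (resp.~multiplicative), while non-increasingness follows from $\sumd{D'_1} + \sumd{D'_2} \leq \sumd{\selectf{B}(D)} + \sumd{\selectf{\neg B}(D)} = \sumd{D}$ by the decomposition in Lemma~\ref{lem:selectB}.

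Having established the properties for every iterate, I would pass to the limit $h = \fixed{\HH{X}} = \limit{k}{\HH{X}^k(0)}$. Writing $h_k = \HH{X}^k(0)$, pointwise convergence of the cpo gives $\hpd{h}{v}{v'}(D)(s) = \limit{k}{\hpd{h_k}{v}{v'}(D)(s)}$ for every store $s$, so additivity and multiplicativity transfer to $h$ by the standard arithmetic of limits of monotone real sequences. For non-increasingness, I would apply Lemma~\ref{lem:sumlim-limsum} to exchange the sum over stores with the limit, obtaining
\[
\sumd{\hpd{h}{v}{v'}(D)} \;=\; \limit{k}{\sumd{\hpd{h_k}{v}{v'}(D)}} \;\leq\; \sumd{D}.
\]
The parenthetical claim about $\omega_k = \HH{\unodes}^k(0)$ is exactly the finite-$k$ statement proved in the first induction (specialized to $X = \unodes$). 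The main subtlety is the branching case with a cycle-inducing successor, where the inner induction on $\LAP{v}{v'}$ cannot progress and we must instead invoke the outer induction hypothesis through $h_0$; separating the two inductions cleanly is the crux of the argument.
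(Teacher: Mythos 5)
Your proposal is correct and follows essentially the same route as the paper: the paper's proof applies Lemma~\ref{lem:HH-preserve-misc} (whose proof is precisely your inner induction on $\LAP{v}{v'}$ using Lemmas~\ref{lem:selectB}, \ref{lem:assign-misc} and \ref{lem:rassign-misc}) iteratively to get the three properties for every $\HH{X}^k(0)$, and then passes to the limit via Lemmas~\ref{lem:limit-add}, \ref{lem:limit-mult} and \ref{lem:limit-non-incr}, the last of which is exactly your Lemma~\ref{lem:sumlim-limsum} argument. You have merely inlined these helper lemmas rather than citing them.
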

On the other hand, $\hpd{h}{v}{v'}$ may fail to be sum-preserving, due to
$\observevv$ nodes (\cf~Examples~\ref{ex1}--\ref{ex3}),
or due to infinite loops (\cf~Example~\ref{ex4}).

\begin{figure}
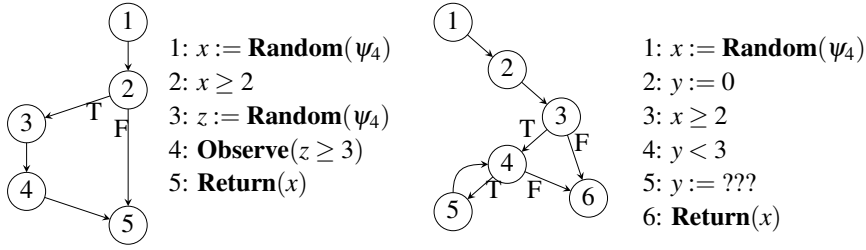

\begin{pgfpicture}{0mm}{5mm}{130mm}{35mm}
\begin{pgfmagnify}{0.9}{0.9}

\pgfnodecircle{v1}[stroke]{\pgfxy(3,3.5)}{8pt}

\pgfputat{\pgfxy(3,3.4)}{\pgfbox[center,base]{1}}

\pgfnodecircle{v2}[stroke]{\pgfxy(3,2.5)}{8pt}

\pgfputat{\pgfxy(3,2.4)}{\pgfbox[center,base]{2}}

\pgfnodecircle{v3}[stroke]{\pgfxy(1.5,2)}{8pt}

\pgfputat{\pgfxy(1.5,1.9)}{\pgfbox[center,base]{3}}

\pgfnodecircle{v4}[stroke]{\pgfxy(1.5,1)}{8pt}

\pgfputat{\pgfxy(1.5,0.9)}{\pgfbox[center,base]{4}}

\pgfnodecircle{v5}[stroke]{\pgfxy(3,0.5)}{8pt}

\pgfputat{\pgfxy(3,0.4)}{\pgfbox[center,base]{5}}

\pgfsetarrowsend{stealth}
\pgfsetendarrow{\pgfarrowtriangle{4pt}}

\pgfnodeconnline{v1}{v2}

\pgfnodeconnline{v2}{v3}

\pgfputat{\pgfxy(2.5,2.05)}{\pgfbox[center,base]{T}}

\pgfputat{\pgfxy(2.9,1.8)}{\pgfbox[center,base]{F}}

\pgfnodeconnline{v2}{v5}

\pgfnodeconnline{v3}{v4}

\pgfnodeconnline{v4}{v5}

\pgfnodecircle{w1}[stroke]{\pgfxy(7.8,3.5)}{8pt}

\pgfputat{\pgfxy(7.8,3.4)}{\pgfbox[center,base]{1}}

\pgfnodecircle{w2}[stroke]{\pgfxy(8.6,2.8)}{8pt}

\pgfputat{\pgfxy(8.6,2.7)}{\pgfbox[center,base]{2}}

\pgfnodecircle{w3}[stroke]{\pgfxy(9.4,2.1)}{8pt}

\pgfputat{\pgfxy(9.4,2.0)}{\pgfbox[center,base]{3}}

\pgfnodecircle{w4}[stroke]{\pgfxy(8.6,1.4)}{8pt}

\pgfputat{\pgfxy(8.6,1.3)}{\pgfbox[center,base]{4}}

\pgfnodecircle{w5}[stroke]{\pgfxy(7.8,0.7)}{8pt}

\pgfputat{\pgfxy(7.8,0.6)}{\pgfbox[center,base]{5}}

\pgfnodecircle{w6}[stroke]{\pgfxy(9.8,0.9)}{8pt}

\pgfputat{\pgfxy(9.8,0.8)}{\pgfbox[center,base]{6}}

\pgfsetarrowsend{stealth}
\pgfsetendarrow{\pgfarrowtriangle{4pt}}

\pgfnodeconnline{w1}{w2}

\pgfnodeconnline{w2}{w3}

\pgfnodeconnline{w3}{w4}

\pgfputat{\pgfxy(8.9,1.8)}{\pgfbox[center,base]{T}}

\pgfputat{\pgfxy(9.7,1.6)}{\pgfbox[center,base]{F}}

\pgfnodeconnline{w3}{w6}

\pgfnodeconnline{w4}{w5}

\pgfputat{\pgfxy(8.4,0.9)}{\pgfbox[center,base]{T}}

\pgfputat{\pgfxy(9.0,0.9)}{\pgfbox[center,base]{F}}

\pgfnodeconnline{w4}{w6}

\pgfnodeconncurve{w5}{w4}{90}{180}{10pt}{10pt}

\pgfputat{\pgfxy(3.6,3.0)}{\pgfbox[left,base]{1: $\rassignv{x}{\psi_4}$}}
\pgfputat{\pgfxy(3.6,2.5)}{\pgfbox[left,base]{2: $x \geq 2$}}
\pgfputat{\pgfxy(3.6,2.0)}{\pgfbox[left,base]{3: $\rassignv{z}{\psi_4}$}}
\pgfputat{\pgfxy(3.6,1.5)}{\pgfbox[left,base]{4: $\observev{z \geq 3}$}}
\pgfputat{\pgfxy(3.6,1.0)}{\pgfbox[left,base]{5: $\retv{x}$}}

\pgfputat{\pgfxy(10.6,3.0)}{\pgfbox[left,base]{1: $\rassignv{x}{\psi_4}$}}
\pgfputat{\pgfxy(10.6,2.5)}{\pgfbox[left,base]{2: $\assignv{y}{0}$}}
\pgfputat{\pgfxy(10.6,2.0)}{\pgfbox[left,base]{3: $x \geq 2$}}
\pgfputat{\pgfxy(10.6,1.5)}{\pgfbox[left,base]{4: $y < 3$}}
\pgfputat{\pgfxy(10.6,1.0)}{\pgfbox[left,base]{5: $\assignv{y}{???}$}}
\pgfputat{\pgfxy(10.6,0.5)}{\pgfbox[left,base]{6: $\retv{x}$}}

\end{pgfmagnify}
\end{pgfpicture}

\caption{\label{fig:ex34r} The pCFGs for $P_3$ (left) and $P_4$ (right)
from Examples~\ref{ex3} and \ref{ex4} (copied from Figure~\ref{fig:ex34}).}
\end{figure}

\begin{example}
%\begin{exmp}
\label{ex:sem-ex}
Consider Example~\ref{ex4}, with pCFG depicted in the right of 
Figure~\ref{fig:ex34r} (which for the reader's convenience we
copy from Figure~\ref{fig:ex34}).
We shall consider various possibilities
for the assignment at node 5;
we shall prove that $\hpd{\omega}{4}{6}$ is sum-preserving
when $\labv{5}$ is an assignment $\assignv{y}{y+1}$
or a random assignment $\rassignv{y}{\psi_4}$
(but not when it is $\assignv{y}{1}$).

In all cases, it is convenient to define,
for integers $i,j$ and for real $r \geq 0$,
the concentrated distribution $\Dijr{i}{j}{r}$ by stipulating that
\begin{eqnarray*}
\Dijr{i}{j}{r}(s) & = & r \mbox{ if } s = \stotwo{x}{i}{y}{j} \\
\Dijr{i}{j}{r}(s) & = & 0 \mbox{ otherwise}
\end{eqnarray*}
When control first reaches node 4, the distribution is
$\displaystyle \sum_{i = 2,3} \Dijr{i}{0}{0.25}$
since $y$ is zero, and for $x$, only the values 2 and 3 lead to
node $4$ while the values 0 and 1 do not.

To show that $\hpd{\omega}{4}{6}$ is sum-preserving,
by Lemma~\ref{lem:sum-pres-conc}
it is enough to show that 
$\hpd{\omega}{4}{6}$ is sum-preserving for each $\Dijr{i}{j}{r}$.
For that purpose, we shall consider a given $i$, and 
a given $r \geq 0$.
Recall (Example~\ref{ex:cycle-induce})
that $\LAP{5}{6} > \LAP{4}{6}$, 

First consider $j \geq 3$.
Then $\selectf{y < 3}(\Dijr{i}{j}{r}) = 0$ and 
$\selectf{\neg(y < 3)}(\Dijr{i}{j}{r}) = \Dijr{i}{j}{r}$;
thus we see from clause~\ref{def:evalk:cond}
in Definition~\ref{def:HHX} 
(substituting $\omega_{k-1}$ for $h_0$)
that for $k \geq 1$ we have
\[
\hpd{\omega_k}{4}{6}(\Dijr{i}{j}{r}) =
\hpd{\omega_{k-1}}{5}{6}(0) +
\hpd{\omega_k}{6}{6}(\Dijr{i}{j}{r}) = \Dijr{i}{j}{r}
\]
and we thus infer that
\begin{equation}
\label{eq:Dijr-false} 
\forall j \geq 3:\
\hpd{\omega}{4}{6}(\Dijr{i}{j}{r}) = \Dijr{i}{j}{r}.
\end{equation}
Similarly,
clause~\ref{def:evalk:cond} in Definition~\ref{def:HHX}
also gives us
\begin{equation}
\label{eq:Dijr-true}
\forall j < 3, \forall k \geq 1:\
\hpd{\omega_k}{4}{6}(\Dijr{i}{j}{r}) = 
\hpd{\omega_{k-1}}{5}{6}(\Dijr{i}{j}{r}).
\end{equation}
Also, clause~\ref{evalk-transitive} 
in Definition~\ref{def:HHX} (and the definition of $\omega_0$) gives us
\begin{equation}
\label{eq:Dijr-trans}
\forall k \geq 0, \forall D \in Dist:\
\hpd{\omega_k}{5}{6}(D) = \hpd{\omega_k}{4}{6}(\hpd{\omega_k}{5}{4}(D)).
\end{equation}
We shall now look at the various cases for the assignment at node 5.
\begin{description}
\item[$\assignv{y}{1}$]
For all $k \geq 1$, and all $j < 3$, we have
$\hpd{\omega_k}{5}{4}(\Dijr{i}{j}{r}) = \Dijr{i}{1}{r}$
and by (\ref{eq:Dijr-trans}) thus
$\hpd{\omega_k}{5}{6}(\Dijr{i}{j}{r}) = \hpd{\omega_k}{4}{6}(\Dijr{i}{1}{r})$
(which also holds for $k \geq 0$).
Thus from (\ref{eq:Dijr-true}) we get that
$\hpd{\omega_k}{4}{6}(\Dijr{i}{j}{r}) = 
\hpd{\omega_{k-1}}{4}{6}(\Dijr{i}{1}{r})$ for all $k \geq 1$ and $j < 3$.
As $\omega_0 = 0$, we see by induction that
$\hpd{\omega_k}{4}{6}(\Dijr{i}{j}{r}) = 0$ 
for all $k \geq 0$ and $j < 3$, and for all $j < 3$ we thus have
\[
\hpd{\omega}{4}{6}(\Dijr{i}{j}{r}) = 0
\]
which confirms that from node 4 
the probability of termination is zero (actually termination is impossible)
and that certainly $\hpd{\omega}{4}{6}$ is not sum-preserving.
\item[$\assignv{y}{y+1}$]
For all $k \geq 1$, and all $j < 3$, we have
$\hpd{\omega_k}{5}{4}(\Dijr{i}{j}{r}) = \Dijr{i}{j+1}{r}$
and by (\ref{eq:Dijr-trans}) thus
$\hpd{\omega_k}{5}{6}(\Dijr{i}{j}{r}) = \hpd{\omega_k}{4}{6}(\Dijr{i}{j+1}{r})$
(which also holds for $k \geq 0$).
Thus from (\ref{eq:Dijr-true}) we get that
$\hpd{\omega_k}{4}{6}(\Dijr{i}{j}{r}) = 
\hpd{\omega_{k-1}}{4}{6}(\Dijr{i}{j+1}{r})$ for all $k \geq 1$ and $j < 3$.
We infer that for
all $j < 3$, and all $k > 3-j$, 
\[
\hpd{\omega_k}{4}{6}(\Dijr{i}{j}{r}) =
\hpd{\omega_{k-(3-j)}}{4}{6}(\Dijr{i}{3}{r}) = \Dijr{i}{3}{r}
\]
and thus we infer that for all $j < 3$ we have
\[
\hpd{\omega}{4}{6}(\Dijr{i}{j}{r}) = \Dijr{i}{3}{r}
\]
which together with (\ref{eq:Dijr-false}) confirms that
$\hpd{\omega}{4}{6}$ is sum-preserving as
any loop from node 4 will eventually terminate.
\item[$\rassignv{y}{\psi_4}$]
For all $k \geq 1$, and all $j < 3$, 
we have 
\[
\hpd{\omega_k}{5}{4}(\Dijr{i}{j}{r}) = 
0.25 \cdot (\Dijr{i}{0}{r} + \Dijr{i}{1}{r} + \Dijr{i}{2}{r} + \Dijr{i}{3}{r})
\]
and by (\ref{eq:Dijr-trans}), together with the fact
(Lemma~\ref{lem:fixed-mult-nonincr-determ}) that
$\hpd{\omega_k}{4}{6}$ is additive and multiplicative, thus
\[
\hpd{\omega_k}{5}{6}(\Dijr{i}{j}{r}) = 
0.25 \cdot (\hpd{\omega_k}{4}{6}(\Dijr{i}{0}{r}) + 
\hpd{\omega_k}{4}{6}(\Dijr{i}{1}{r}) + 
\hpd{\omega_k}{4}{6}(\Dijr{i}{2}{r}) + 
\hpd{\omega_k}{4}{6}(\Dijr{i}{3}{r}))
\]
(which also holds for $k \geq 0$)
so from (\ref{eq:Dijr-true}) we get that
\begin{equation}
\label{eq:Dijr-recur}
\forall k \geq 1, j < 3:
\hpd{\omega_k}{4}{6}(\Dijr{i}{j}{r}) = 
0.25 \cdot \left(\sum_{q = 0,1,2,3}\hpd{\omega_{k-1}}{4}{6}(\Dijr{i}{q}{r})\right).
\end{equation}
One can easily prove by induction in $k$ that if
$j_1 < 3$ and $j_2 < 3$ then
$\hpd{\omega_k}{4}{6}(\Dijr{i}{j_1}{r}) =
\hpd{\omega_k}{4}{6}(\Dijr{i}{j_2}{r})$
so if we define
$D_k = \hpd{\omega_k}{4}{6}(\Dijr{i}{0}{r})$
we have
$\hpd{\omega_k}{4}{6}(\Dijr{i}{j}{r}) = D_k$ for all $j < 3$.
We shall now establish 
\begin{equation}
\label{eq:Dijr-limit}
\limit{k}{D_k} = \Dijr{i}{3}{r}
\end{equation}
which together with (\ref{eq:Dijr-false}) will demonstrate
that $\hpd{\omega}{4}{6}$ is sum-preserving as
any loop from node 4 will terminate with probability 1.

To show (\ref{eq:Dijr-limit}), observe that 
(\ref{eq:Dijr-recur}) together with (\ref{eq:Dijr-false})
makes it easy to prove by induction that
$D_k(s) = 0 = \Dijr{i}{3}{r}(s)$ 
for all $k \geq 0$ when
$s  \neq \stotwo{x}{i}{y}{3}$, 
and also gives the recurrences
\begin{eqnarray*}
D_0(s_3) & = & 0 \\
D_1(s_3) & = & 0 \\
D_k(s_3) & = & 0.75 \cdot D_{k-1}(s_3) + 0.25 \cdot r \mbox{ for }
k \geq 2
\end{eqnarray*}
when $s_3 = \stotwo{x}{i}{y}{3}$.
We must prove that $\limit{k}{D_k(s_3)} = r$
(as $\Dijr{i}{3}{r}(s_3) = r$)
but this follows, with $a = 0.75$ and $b = 0.25$, from a general result: 
\begin{lemma}
If $\chain{x_i}{i}$ is a sequence of non-negative reals, satisfying 
$x_0 = x_1 = 0$ and $x_k = a\,x_{k-1} +b\,r$ for $k > 1$ 
where $a,b,r$ are non-negative reals
with $b > 0$ and $a + b = 1$, then $\limit{i}{x_i} = r$.
\end{lemma}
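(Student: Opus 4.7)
The plan is to reduce the problem to a geometric decay by introducing the error term $y_k = r - x_k$ and showing it satisfies a simple linear recurrence with ratio $a$. Since $b > 0$ and $a + b = 1$, we have $0 \leq a < 1$, so the errors decay geometrically to zero, which gives $x_k \to r$.

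First I would compute $y_k$ for $k \geq 2$:
\[
y_k = r - x_k = r - a\,x_{k-1} - b\,r = (1-b)r - a\,x_{k-1} = a\,r - a\,x_{k-1} = a\,y_{k-1},
\]
using $1 - b = a$. Since $y_1 = r - x_1 = r$, an easy induction yields $y_k = a^{k-1}\,r$ for all $k \geq 1$.

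Next, since $b > 0$ and $a = 1 - b$, we have $0 \leq a < 1$, so $\lim_{k \to \infty} a^{k-1} = 0$ and therefore $\lim_{k \to \infty} y_k = 0$. Consequently $\lim_{k \to \infty} x_k = r - \lim_{k \to \infty} y_k = r$, as required.

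There is no real obstacle here; the only point worth noting is the role of the hypothesis $b > 0$, which is what guarantees $a < 1$ and hence geometric (rather than merely non-increasing) decay of $y_k$. If $b = 0$ were allowed, the recurrence would be $x_k = x_{k-1}$, giving $x_k = 0$ for all $k$ and the claim would fail. The hypothesis that all quantities are non-negative is not actually used in the convergence argument but is consistent with the setting where $x_k$ represents a probability mass.
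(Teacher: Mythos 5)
Your proof is correct, but it takes a different route from the paper. You solve the recurrence explicitly: setting $y_k = r - x_k$ you get $y_k = a\,y_{k-1}$ for $k \geq 2$, hence $y_k = a^{k-1}r$, and since $0 \leq a < 1$ this decays geometrically, giving $x_k \to r$. The paper instead never computes a closed form: it shows \emph{(i)} that $\chain{x_i}{i}$ is a chain (monotone, by induction), \emph{(ii)} that it is bounded above by $r$ (otherwise the recurrence would force a decrease, contradicting monotonicity), so the limit exists, and then \emph{(iii)} passes to the limit in the recurrence to get $\limit{i}{x_i} = a\,\limit{i}{x_i} + br$, which with $b = 1-a > 0$ forces the limit to be $r$. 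Your argument is more quantitative (it exhibits the convergence rate $a^{k-1}r$) and sidesteps the separate monotonicity and boundedness steps; the paper's argument fits more directly into its domain-theoretic setting, where $\limit{i}{x_i}$ is the least upper bound of a chain, so establishing the chain property is the natural first step — though in your version monotonicity is immediate from the closed form $x_k = r(1-a^{k-1})$, so the two notions of limit agree. One small correction to your closing remark: the hypothesis that $a$ is non-negative \emph{is} used in your argument, since $a = 1-b$ with $b>0$ only gives $a<1$, and you need $a \geq 0$ (or at least $|a|<1$) to conclude $a^{k-1} \to 0$.
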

\begin{proof}
Observe that: \emph{(i)} $\chain{x_i}{i}$ is a chain
(as can be seen by induction since $x_0 = x_1 \leq x_2$ and
if $x_k \leq x_{k+1}$ then $x_{k+1} \leq x_{k+2}$);
\emph{(ii)} $x_i \leq r$ for all $i$ since if $x_k > r$ for some $k$
then  $x_{k+1} = ax_k + br = (1-b)x_k + br = x_k + b(r-x_k) < x_k$
which contradicts $\chain{x_i}{i}$ being a chain;
\emph{(iii)} thus $\limit{i}{x_i} < \infty$ and
since $\limit{i}{x_i} = a \cdot \limit{i}{x_i} + br$
we get $b \cdot \limit{i}{x_i} = (1-a)\limit{i}{x_i} = br$
from which we infer the desired $\limit{i}{x_i} = r$.
\end{proof}
\end{description}
%\end{exmp}
\end{example}
The final two lemmas below provide justification that Definition~\ref{def:outside}
is indeed useful:
if $v$ stays outside $Q$ until $v'$ then the sliced program behaves
as the identity from $v$ to $v'$, and so does
the original program --- at least on the relevant variables --- if 
it is sum-preserving,
\begin{lemma}
\label{lem:phi1-id}
Given a pCFG, a slice set $Q$, and $(v,v') \in \PD$ such that
$v$ stays outside $Q$ until $v'$. 
We then have
$\hpd{\HH{Q}(h)}{v}{v'}(D) = D$ for all $D \in \Dist$
and all modification functions $h$.
\end{lemma}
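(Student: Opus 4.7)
My plan is to prove the lemma by induction on $\LAP{v}{v'}$, mirroring the case analysis of Definition~\ref{def:HHX}. The base case $v = v'$ is immediate from clause~1.

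For the inductive step, suppose $v \neq v'$ and let $v'' = \fppd{v}$. The first preliminary observation I would record is that, since $v$ stays outside $Q$ until $v'$ and $v \neq v'$, taking any path from $v$ to $v'$ where $v'$ occurs only at the end shows that $v \notin Q$ (the starting node $v$ lies on such a path and is not equal to $v'$). This observation handles the case $v'' = v'$ directly: the node $v$ falls under clause~3(a) of Definition~\ref{def:HHX} (either because $v \notin Q = X$ or, in the branching sub-case, still because $v \notin X$), and hence $\hpd{\HH{Q}(h)}{v}{v'}(D) = D$.

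The substantive case is $v'' \neq v'$, handled by clause~2, where $\hh{v}{v'}(D) = \hh{v''}{v'}(\hh{v}{v''}(D))$. To invoke the inductive hypothesis I need to establish that (i) $v$ stays outside $Q$ until $v''$ and (ii) $v''$ stays outside $Q$ until $v'$. Both $v''$ and $v'$ are proper postdominators of $v$, and since $v'' = \fppd{v}$ and $v' \neq v''$, Lemma~\ref{lem:prec-ordering} gives $v'' \prec v'$ and in particular that $v'$ postdominates $v''$. For (i), any path $\pi$ from $v$ to $v''$ with $v''$ only at the end can be extended by any path from $v''$ to $v'$ with $v'$ only at the end (such a path exists since $v'$ properly postdominates $v''$) to produce a path from $v$ to $v'$ with $v'$ only at the end; this extension lies in the set ruled by our hypothesis, so $\pi$ contains no node of $Q$ except possibly $v''$, and since we may also arrange $v'' \notin Q$ (if $v'' \in Q$ the extended path would contain a non-terminal node of $Q$, contradicting the hypothesis), we conclude $\pi$ contains no node of $Q$ except possibly $v''$, establishing (i). For (ii), prepend any acyclic path from $v$ to $v''$ (which we can choose so as not to pass through $v'$, since $v''$ strictly precedes $v'$ in the $\prec$-ordering) to any path $\pi'$ from $v''$ to $v'$ with $v'$ only at the end, obtaining a path from $v$ to $v'$ with $v'$ only at the end; the hypothesis yields that $\pi'$ contains no node of $Q$ except possibly $v'$, which is exactly (ii).

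With (i) and (ii) in hand, and both $\LAP{v}{v''} < \LAP{v}{v'}$ and $\LAP{v''}{v'} < \LAP{v}{v'}$ by Lemma~\ref{lem:LAP-add}, the induction hypothesis applies to yield $\hh{v}{v''}(D) = D$ and $\hh{v''}{v'}(D) = D$, whence composition gives $\hh{v}{v'}(D) = D$. The main obstacle I anticipate is precisely the splitting argument just outlined: making sure the ``stays outside until'' property distributes across $v''$ requires the careful combination of Lemma~\ref{lem:prec-ordering} (to get $v'$ postdominates $v''$) with path concatenation arguments, and also the small observation that $v'' \notin Q$ drops out of the hypothesis rather than needing a separate argument.
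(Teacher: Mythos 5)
Your proof is correct and follows essentially the same route as the paper: induction on $\LAP{v}{v'}$, with the base case from clause~1, the case $v'=\fppd{v}$ from $v\notin Q$ and clause~\ref{defn:evalk:notinX}, and the transitive case by showing that ``stays outside $Q$ until'' decomposes across $v''=\fppd{v}$ --- a fact the paper delegates to its auxiliary Lemma~\ref{lem:outside-decomp}, whose proof you reproduce inline. The only small point to make explicit in your part (i) is that $v'$ cannot occur on the path $\pi$ from $v$ to $v''$ (otherwise the prefix of $\pi$ ending at $v'$ would have to contain $v''$ before its end, contradicting that $v''$ occurs only at the end of $\pi$), which is what justifies that your extended path contains $v'$ only at the end.
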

\begin{lemma}
\label{lem:sum-preserve-not-affect-rv}
Let $(v,v') \in \PD$. Assume that
$Q$ is closed under data dependence and that $v$ stays outside $Q$
until $v'$ (by Lemma~\ref{lem:outside-same-rv} it thus makes sense
to define $R = \rv{Q}{v} = \rv{Q}{v'}$).

%% Let $\omega = \fixed{\HH{\unodes}}$. 
For all distributions $D$,
if $\sumd{\hpd{\omega}{v}{v'}(D)} = \sumd{D}$ 
then $\dagree{\hpd{\omega}{v}{v'}(D)}{D}{R}$.
\end{lemma}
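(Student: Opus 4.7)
The strategy is to first establish the pointwise inequality $\hpd{\omega}{v}{v'}(D)(s) \leq D(s)$ for every $s \in \sto{R}$, and then upgrade this to equality using the sum-preservation hypothesis together with Lemma~\ref{lem:sumd}. The upgrade is automatic: once the pointwise $\leq$ is in place,
\[
\sum_{s \in \sto{R}} D(s) \;=\; \sumd{D} \;=\; \sumd{\hpd{\omega}{v}{v'}(D)} \;=\; \sum_{s \in \sto{R}} \hpd{\omega}{v}{v'}(D)(s)
\]
forces equality at every $s \in \sto{R}$, which is exactly $\dagree{\hpd{\omega}{v}{v'}(D)}{D}{R}$.

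For the pointwise inequality, I would prove by outer induction on $k$ and inner induction on $\LAP{v}{v'}$ that, whenever $v$ stays outside $Q$ until $v'$,
\[
\hpd{\omega_k}{v}{v'}(D)(s) \;\leq\; D(s) \quad \text{for all } s \in \sto{R},
\]
and then take $k \to \infty$. Note that $R$ coincides at every node touched by the recursion, by repeated application of Lemma~\ref{lem:outside-same-rv}. The bases $k = 0$ (since $\omega_0 = 0$) and $v = v'$ (identity transfer) are trivial. For $v' \neq v'' = \fppd{v}$, every path from $v$ to $v'$ avoiding $v'$ until the end passes through $v''$, which shows that $v$ stays outside $Q$ until $v''$ and $v''$ stays outside $Q$ until $v'$; Lemma~\ref{lem:LAP-add} makes both $\LAP$s strictly smaller, so composing the inner IH with itself closes this subcase. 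For $v' = \fppd{v}$, I case-split on $\labv{v}$: the crux (for the (random) assignment labels) is that $\defv{v} \cap R = \emptyset$, for otherwise prepending the direct edge $v \to v'$ to a witness path from $v'$ to some $v_q \in Q$ not redefining $x = \defv{v}$ would exhibit $\dd{v}{v_q}$ and force $v \in Q$ by closure, contradicting $v \notin Q$ (which holds because $v \neq v'$ here and $v$ stays outside $Q$ until $v'$); Lemmas~\ref{lem:assignRirrel} and \ref{lem:rassignRirrel} then give the needed equality on $R$. The $\observev{B}$ case is immediate from the definition of $\selectf{B}$, and $\skipv$ is trivial. For a branching node with condition $B$ and successors $v_1, v_2$, any offending path from $v_i$ to $v'$ could be prepended with the edge $v \to v_i$ to contradict the hypothesis for $v$; hence each $v_i$ stays outside $Q$ until $v'$, and the inner IH (when $\LAP{v_i}{v'} < \LAP{v}{v'}$) or the outer IH applied to $\omega_{k-1}$ (for the cycle-inducing successor) bounds each $D'_i(s)$ by $D_i(s)$; summing and using $\selectf{B}(D) + \selectf{\neg B}(D) = D$ (Lemma~\ref{lem:selectB}) closes this case.

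The main obstacle will be the bookkeeping of the nested induction, in particular verifying that the ``stays outside $Q$ until'' property propagates correctly both when the path is split at $v''$ and when the recursion descends into a branch successor, and carefully dispatching the cycle-inducing branch through the outer induction on $k$ rather than the inner induction on $\LAP{v}{v'}$. Once those invariants are maintained, the argument is a direct unwinding of Definition~\ref{def:HHX} against the transfer-function lemmas, and the final step from pointwise $\leq$ to pointwise $=$ falls out for free from the sum-preservation hypothesis.
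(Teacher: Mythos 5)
Your proposal is correct and follows essentially the same route as the paper: the paper isolates your pointwise bound as an auxiliary lemma ($\hpd{\omega_k}{v}{v'}$ is non-increasing wrt.\ $R$, proved by the same outer induction on $k$ and inner induction on $\LAP{v}{v'}$, with the same data-dependence argument that $\defv{v} \cap R = \emptyset$ and the same treatment of cycle-inducing branch successors via $\omega_{k-1}$), then passes to the limit and upgrades $\leq$ to $=$ from the sum-preservation hypothesis exactly as you do.
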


%\section{Consistency with Denotational Semantics}
%\label{sec:consistent}
%\input{consistency}

\section{Conditions for Slicing}
\label{sec:conditions}
With $Q$ the slice set,
we now develop conditions for $Q$ that ensure semantic correctness.
It is standard to require $Q$ to be closed
under data dependence, cf.~Def.~\ref{def:datadep},
and additionally also under some kind of
``control dependence'', a concept which in this section 
we shall elaborate on and then study the extra conditions
needed in our probabilistic setting. 
Eventually, Definition~\ref{defn:slicing-pair} 
gives conditions that involve not only $Q$ but also
another slice set $Q_0$ containing all $\observevv$ nodes to 
be sliced away.
As stated in Proposition~\ref{prop:indpd2indpd} (Section~\ref{subsec:prob-indpd}),
these conditions are sufficient
to establish probabilistic independence 
of $Q$ and $Q_0$. This in turn is crucial for establishing
the correctness of slicing, as stated in 
Theorem~\ref{thm:slicing-correct} (Section~\ref{subsec:correct-slicing}).

\paragraph*{Weak Slice Sets}
Danicic~\etal~\cite{Danicic+etal:TCS-2011} showed
that various kinds of control dependence can all be elegantly expressed
within a general framework whose core is the following notion:
\begin{definition}[next visible]
\rm
With $Q$ a set of nodes, $v'$ is a \dt{next visible} in $Q$ of $v$ iff
$v' \in Q \cup \mkset{\finalv}$, and
$v'$ occurs on all paths from $v$ to a node in $Q \cup \mkset{\finalv}$.
\end{definition}
A node $v$ can have at most one next visible in $Q$.
It thus makes sense to write
$v' = \nextq{Q}{v}$ if 
$v'$ is a next visible in $Q$ of $v$.
We say that $Q$ \dt{provides next visibles}
iff $\nextq{Q}{v}$ exists for all nodes $v$.
If $v' = \nextq{Q}{v}$ then $v'$ is a postdominator of $v$,
and if $v \in Q \cup \mkset{\finalv}$ then 
$\nextq{Q}{v} = v$.

In the pCFG for $P_3$ (Figure~\ref{fig:ex34r}(left)),
letting $Q = \{1,3,5\}$, node 5 is a next visible
in $Q$ of 4: all paths from 4 to a node in $Q$ will contain 5.
But no node is a next visible in $Q$ of 2:
node 3 is not since there is a path from 2 to 5 not containing 3,
and node 5 is not since there is a path from 2 to 3 not containing 5.
Therefore $Q$ cannot be the 
slice set: node 1 can have only one successor in the sliced program
but we have no reason to choose either of the nodes 3 and 5 over the other
as that successor. This motivates the following definition:
\begin{definition}[weak slice set]
\rm
We say that $Q$ is a \dt{weak slice set} iff it 
provides next visibles, and is closed under data dependence.
\end{definition}
While the importance of ``provides next visible''
was recognized already in
\cite{Ran+Amt+Ban+Dwy+Hat:TOPLAS-2007,Amtoft:IPL-2007},
Danicic \etal\ %\cite{Danicic+etal:TCS-2011}
were the first to realize that it is
\emph{the} key property (together with data dependence) to ensure
semantically correct slicing.
They call the property ``weakly committing'' 
(thus our use of ``weak'') and our definition differs slightly from theirs
in that we always consider $\finalv$ as ``visible''. 

Observe that the empty set is a weak slice set,
since it is vacuously closed under data dependence,
and since for all $v$ we have $\finalv = \nextq{\emptyset}{v}$;
also the set $\unodes$ of all nodes is a weak slice set
since it is
trivially closed under data dependence,
and since for all $v$ we have $v = \nextq{\unodes}{v}$.
The property of being a weak slice set is also 
%% (as proved in Appendix~\ref{app:proofs})
closed under union:
\begin{lemma}
\label{lem:weak-union}
If $Q_1$ and $Q_2$ are weak slice sets,
also $Q_1 \cup Q_2$ is a weak slice set.
\end{lemma}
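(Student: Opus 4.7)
The plan is to verify the two conjuncts of ``weak slice set'' separately for $Q_1 \cup Q_2$. Closure under data dependence is immediate from the hypothesis: if $v_2 \in Q_1 \cup Q_2$ and $\dd{v_1}{v_2}$, then $v_2 \in Q_i$ for some $i$, whence closure of $Q_i$ gives $v_1 \in Q_i \subseteq Q_1 \cup Q_2$.

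The real work is to exhibit, for each $v$, a next visible of $v$ in $Q_1 \cup Q_2$. Let $v_1 = \nextq{Q_1}{v}$ and $v_2 = \nextq{Q_2}{v}$, which exist by hypothesis; both postdominate $v$. If $v \in Q_1 \cup Q_2 \cup \mkset{\finalv}$, then $v$ itself serves as its own next visible. Otherwise $v_1$ and $v_2$ are proper postdominators of $v$, and by Lemma~\ref{lem:prec-ordering} the relation $\prec$ orders them totally. Without loss of generality assume $v_1 \prec v_2$ or $v_1 = v_2$, so that on every path from $v$ to $\finalv$ the first occurrence of $v_1$ is at or before the first occurrence of $v_2$. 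I claim $v_1$ is a next visible of $v$ in $Q_1 \cup Q_2$.

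Since $v_1 \in Q_1 \cup \mkset{\finalv} \subseteq (Q_1 \cup Q_2) \cup \mkset{\finalv}$, it suffices to show that $v_1$ lies on every path $\pi$ from $v$ to a node $u \in (Q_1 \cup Q_2) \cup \mkset{\finalv}$. The case $u \in Q_1 \cup \mkset{\finalv}$ is direct from $v_1 = \nextq{Q_1}{v}$. When $u \in Q_2$, the definition of $v_2 = \nextq{Q_2}{v}$ forces $v_2 \in \pi$; extending $\pi$ by any path from $u$ to $\finalv$ (one exists because $\finalv$ is reachable from every node) yields a path $\pi'$ from $v$ to $\finalv$ whose first occurrence of $v_1$ is at or before its first occurrence of $v_2$. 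But that first occurrence of $v_2$ already lies inside the prefix $\pi$, so the first occurrence of $v_1$ must as well, i.e.\ $v_1 \in \pi$.

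The main obstacle is exactly this second sub-case: the ordering given by Lemma~\ref{lem:prec-ordering} speaks only about paths that terminate at $\finalv$, whereas the definition of next visible demands control over paths terminating anywhere in $Q_1 \cup Q_2 \cup \mkset{\finalv}$. The path-extension trick bridges this gap, combining the ``first occurrence'' clause of Lemma~\ref{lem:prec-ordering} with the global assumption that $\finalv$ is reachable from every node.
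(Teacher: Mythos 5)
Your proof is correct and follows essentially the same route as the paper's: WLOG ordering $v_1 = \nextq{Q_1}{v}$ before $v_2 = \nextq{Q_2}{v}$ via Lemma~\ref{lem:prec-ordering}, and then extending a path ending in $Q_2$ to a path to $\finalv$ to conclude $v_1$ already occurs in the prefix. The only difference is that you spell out the first-occurrence bookkeeping a bit more explicitly than the paper does.
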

The following result is frequently used:
\begin{lemma}
\label{lem:notQ-outside}
Assume that $Q$ and $v$ are such that
$\nextq{Q}{v}$ exists, and that
$v \notin Q \cup \mkset{\finalv}$.
Then $v$ stays outside $Q$ until $\fppd{v}$.
\end{lemma}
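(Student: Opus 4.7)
The plan is to fix $v' = \nextq{Q}{v}$. Because $v \notin Q \cup \mkset{\finalv}$, we have $v' \neq v$, so $v'$ is a \emph{proper} postdominator of $v$. I would argue by contradiction, assuming that some path $\pi$ from $v$ to $\fppd{v}$, in which $\fppd{v}$ occurs only at the very end, nevertheless contains an interior node $w \in Q$.

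First I would consider the prefix $\pi_w$ of $\pi$ running from $v$ to $w$. This is a path from $v$ to a node of $Q \cup \mkset{\finalv}$, so the defining property of $v' = \nextq{Q}{v}$ forces $v'$ to appear somewhere on $\pi_w$; in particular, the first occurrence of $v'$ along $\pi$ is at or before $w$, and hence strictly before the terminating occurrence of $\fppd{v}$. Next I would extend $\pi$ by appending any path from $\fppd{v}$ to $\finalv$ (which exists by the well-formedness of the pCFG) to obtain a path $\pi''$ from $v$ to $\finalv$ whose first occurrence of $\fppd{v}$ is exactly at the end of the original $\pi$.

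Then I would split into two cases. If $v' = \fppd{v}$, the previous paragraph places $\fppd{v}$ somewhere inside $\pi_w$, contradicting the assumption that $\fppd{v}$ appears only at the end of $\pi$ (since $w$ is interior and hence strictly inside $\pi$). If $v' \neq \fppd{v}$, then both $v'$ and $\fppd{v}$ are proper postdominators of $v$, so Lemma~\ref{lem:prec-ordering} compares them via $\prec$; since $\fppd{v}$ is the \emph{first} proper postdominator we must have $\fppd{v} \prec v'$. The second clause of Lemma~\ref{lem:prec-ordering} then says that on \emph{every} path from $v$ to $\finalv$ the first occurrence of $\fppd{v}$ precedes the first occurrence of $v'$. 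Applying this to $\pi''$ contradicts the placement of $v'$ at or before $w$ derived above.

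The main (and really only) non-routine point is to notice that the existence of a $Q$-node strictly before $\fppd{v}$ on $\pi$ forces the next visible $v'$ to appear strictly before $\fppd{v}$ as well, which then clashes with $\fppd{v}$ being the \emph{first} proper postdominator via Lemma~\ref{lem:prec-ordering}. Everything else is direct unwinding of the definitions of \emph{stays outside until}, \emph{next visible}, and \emph{first proper postdominator}.
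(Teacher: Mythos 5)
Your proof is correct: the negation of \emph{stays outside until} is unwound properly, the prefix-to-$w$ argument legitimately places $v' = \nextq{Q}{v}$ strictly before the terminal occurrence of $\fppd{v}$ on $\pi$, and both branches of your case split do yield contradictions (your claim $\fppd{v} \prec v'$ in the second branch is indeed immediate from the definition of first proper postdominator applied to the prefix of an acyclic path ending at $v'$). The overall strategy is the same as the paper's --- a $Q$-node strictly before $\fppd{v}$ forces the next visible to show up strictly before $\fppd{v}$, clashing with $\fppd{v}$ being \emph{first} --- but the finish is routed differently. The paper takes the \emph{first} node $v_0 \in Q$ on the offending path, argues that $v_0$ must in fact be $\nextq{Q}{v}$ (hence a proper postdominator of $v$), and then applies the defining property of the first proper postdominator directly: every path from $v$ to $v_0$, in particular the prefix of $\pi$, contains $\fppd{v}$, so $\fppd{v}$ occurs before the end of $\pi$, an immediate contradiction. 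That version needs no case split on whether $\nextq{Q}{v} = \fppd{v}$, no extension of $\pi$ to $\finalv$, and no appeal to the totality and first-occurrence clauses of Lemma~\ref{lem:prec-ordering}; conversely, your version never has to identify the offending $Q$-node with the next visible, at the price of invoking that ordering machinery. Both arguments are sound; the paper's is a little shorter and more local to $\pi$ itself.
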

\begin{proof}
Let $v' = \fppd{v}$ and
assume, to get a contradiction, that $\pi$ is a path from 
$v$ to $v'$ where $v'$ occurs only at the end
and which contains a node in $Q \setminus \mkset{v'}$;
let $v_0$ be the first such node. We infer that $v_0 \neq v$ (as $v \notin Q$)
and $v_0 \neq v'$,
and that $v_0 = \nextq{Q}{v}$.
Thus $v_0$ is a proper postdominator of $v$, which entails
(since $v' = \fppd{v}$) that $v'$ occurs on all paths from
$v$ to $v_0$. For the path $\pi$ it is thus the case that
$v'$ occurs before $v_0$,
which contradicts our assumption that $v'$ occurs only at the end.
\end{proof}
A weak slice set that covers all cycles must include
all nodes that are cycle-inducing (cf.~Definition~\ref{def:cycle-induce}):
\begin{lemma}
\label{lem:cycleQinduce}
Assume that $Q$ provides next visibles,
and that each cycle contains a node in $Q$.
Then all cycle-inducing nodes belong to $Q$.
\end{lemma}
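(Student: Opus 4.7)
The plan is to argue by contradiction: assume $v$ is cycle-inducing but $v \notin Q$, and derive a violation of ``$v$ stays outside $Q$ until $\fppd{v}$''. Since $v$ is cycle-inducing it is a branching node (cf.\ the remark after Definition~\ref{def:cycle-induce}) and in particular $v \neq \finalv$, so $v' = \fppd{v}$ exists. Because $Q$ provides next visibles and $v \notin Q \cup \{\finalv\}$, Lemma~\ref{lem:notQ-outside} gives that $v$ stays outside $Q$ until $v'$. The goal is then to exhibit a path from $v$ to $v'$, with $v'$ occurring only at the end, that contains some node of $Q$ distinct from $v'$.

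First I would invoke Lemma~\ref{lem:LAPcycle} to obtain a cycle $C$ that contains $v$ but does not contain $v'$; by hypothesis $C$ must contain some $v_0 \in Q$, and necessarily $v_0 \neq v'$. If $v_0 = v$ then $v \in Q$ already, a contradiction, so we may assume $v_0 \neq v$. Next I would construct the offending path as a concatenation: traverse $C$ once starting and ending at $v$ (this segment visits $v_0$ and avoids $v'$), and then append any acyclic path $\rho$ from $v$ to $v'$ (which exists since $v'$ is a postdominator of $v$). On $\rho$ the node $v'$ appears only at the end, and $v'$ does not appear on $C$, so $v'$ occurs exactly once in the concatenation, at the end; meanwhile $v_0 \in Q \setminus \{v'\}$ is visited. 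This contradicts the conclusion of Lemma~\ref{lem:notQ-outside}, completing the argument.

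The only subtlety I anticipate is being careful with the ``$v'$ occurs only at the end'' clause in Definition~\ref{def:outside}: one must check that neither the cycle portion nor the tail $\rho$ can introduce a premature occurrence of $v'$. The former is immediate because $v' \notin C$, and the latter is handled by choosing $\rho$ to be acyclic, so every node (including $v'$) appears at most once. Everything else is a straightforward bookkeeping exercise using results already in the paper.
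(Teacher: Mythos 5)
Your proof is correct, but it is organized differently from the paper's. The paper argues directly: it sets $v'' = \nextq{Q}{v}$, uses Lemma~\ref{lem:LAPcycle} to get a cycle through $v$ avoiding $v' = \fppd{v}$, observes that the path inside the cycle from $v$ to the assumed $Q$-node must pass through $v''$ (so $v''$ lies on the cycle and is in $Q$), and then rules out $v''$ being a \emph{proper} postdominator of $v$ — since $v'$, as first proper postdominator, would then have to lie on the cycle — forcing $v'' = v \in Q$. You instead proceed by contradiction and delegate the postdominator reasoning to Lemma~\ref{lem:notQ-outside}: if $v \notin Q$, that lemma says $v$ stays outside $Q$ until $v'$, and you refute this with an explicit witness path (one full traversal of the cycle from $v$ back to $v$, which hits a $Q$-node and avoids $v'$, followed by an acyclic path from $v$ to $v'$). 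Your handling of the ``$v'$ occurs only at the end'' side condition is exactly right: $v'$ is absent from the cycle segment and, by acyclicity, appears only once in the tail. The two arguments are close in substance — Lemma~\ref{lem:notQ-outside} internally contains the same ``$\nextq{Q}{v}$ versus $\fppd{v}$'' reasoning the paper does inline — but your version is a cleaner modularization (reuse of an already-proved lemma plus a concrete path construction), at the cost of an extra contradiction layer, while the paper's direct argument pins down $\nextq{Q}{v} = v$ without constructing any new path.
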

\begin{proof}
Let $v$ be a cycle-inducing node, and let $v' = \fppd{v}$
and $v'' = \nextq{Q}{v}$.
By Lemma~\ref{lem:LAPcycle} there is a cycle $\pi$ which contains $v$ but not $v'$.
By assumption, there exists $v_1$ that belongs to $\pi$ and also to $Q$.
Since there is a path within $\pi$ from $v$ to $v_1$, $v''$ will occur on that path;
hence $v''$ belongs to $\pi$, and $v'' \in Q$ (as a cycle cannot contain $\finalv$). 

We know that $v''$ is a postdominator of $v$.
Assume, to get a contradiction, that $v''$ is a proper postdominator of $v$;
then $v'$ will occur on all paths from $v$ to $v''$, and hence $v'$ belongs to $\pi$
which is a contradiction. We infer $v'' = v$ and thus the desired $v \in Q$.
\end{proof}

\subsection{Adapting to the Probabilistic Setting}
As already motivated through Examples~\ref{ex1}--\ref{ex4}, the key challenge in slicing probabilistic programs is how to handle $\observevv$ nodes.
In Section~\ref{sec:examples} we hinted at some tentative conditions
a slice set $Q$ must satisfy; we can now phrase them more precisely:
\begin{enumerate}
\item
$Q$ must be a weak slice set that contains $\finalv$, and
\item
there exists another weak slice set $Q_0$ such that
(a) $Q$ and $Q_0$ are disjoint and 
(b) all $\observevv$ nodes belong to either $Q$ or $Q_0$.
\end{enumerate}
We shall now see how these conditions work out for our example programs.

For programs $P_1, P_2$,
the control flow is linear and hence all nodes have a next
visible, no matter the choice of $Q$;
thus a node set is a weak slice set iff it is closed under data dependence.

For $P_1$ we may choose $Q = \{1,4\}$ and $Q_0 = \{2,3\}$
as they are disjoint, and both closed under data dependence.
As can be seen from Defs.~\ref{def:slice-meaning} and~\ref{def:HHX},
the resulting sliced program has the same meaning as the program that results from
$P_1$ by replacing all nodes not in $Q$ by $\skipv$, that is
\[
\begin{array}{ll}
1: & \rassignv{x}{\psi_4};\\
2: & \skipv;\\
3: & \skipv;\\
4: & \retv{x}
\end{array}
\]
which is obviously equivalent to $P_x$ as defined
in Section~\ref{sec:examples}.

Next consider the program $P_2$ 
where $Q$ should contain 4 and hence (by data dependence) also contain 1.
Now assume, in order to remove the $\observevv$ node (and produce $P_x$),
that $Q$ does not contain 3. Then $Q_0$ must contain 3, and
(as $Q_0$ is closed under data dependence) also 1.
But then $Q$ and $Q_0$ are not disjoint, which contradicts our
requirements. Thus $Q$ does contain 3, and hence also 2.
That is, $Q = \{1,2,3,4\}$. We see that the only possible slicing
is the trivial one.

Any slice for $P_3$ (Figure~\ref{fig:ex34r}) will also be trivial. 
From $5 \in Q$ we infer (by data dependence) that $1 \in Q$.
Assume, to get a contradiction, that $4 \notin Q$.
As $4$ is an $\observevv$ node
we must thus have $4 \in Q_0$, and for node 2 to have a next visible in $Q_0$
we must then also have $2 \in Q_0$ 
which by data dependence implies $1 \in Q_0$ which 
contradicts $Q$ and $Q_0$ being disjoint.
This shows $4 \in Q$ which implies $3 \in Q$ (by data dependence)
and $2 \in Q$ (as otherwise 2 has no next visible in $Q$).

For $P_4$, we need $6 \in Q$ and by data dependence thus also
$1 \in Q$; actually,
our tentative conditions can be satisfied by choosing
$Q = \{1,6\}$ and $Q_0 = \emptyset$, as for all $v \neq 1$
we would then have $6 = \nextq{Q}{v}$.
From Definitions~\ref{def:slice-meaning} and~\ref{def:HHX}
we see that the resulting sliced program has the same meaning as
\[
\begin{array}{ll}
1: & \rassignv{x}{\psi_4};\\
2: & \skipv;\\
3: & \skipv;\\
6: & \retv{x} 
\end{array}
\]
Yet, in Example~\ref{ex4} we saw that in
general (as when node $C$ is labeled $\assignv{y}{1}$)
this is \emph{not} a correct slice of $P_4$.
This reveals a problem with our tentative correctness conditions;
they do not take into account that
$\observevv$ nodes may be ``encoded'' as infinite loops.

To repair that, we shall demand
that just like all $\observevv$ nodes
must belong to either $Q$ or $Q_0$, also all cycles
must touch either $Q$ or $Q_0$,
except if the cycle is known
to terminate with probability 1.
Allowing this exception is an added contribution to 
the conference version of this article~\cite{Amt+Ban:FoSSaCS-2016}.

Observe that all cycles touch either $Q$ or $Q_0$
iff all cycle-inducing nodes belong to $Q$ or $Q_0$:
``only if'' follows from Lemma~\ref{lem:cycleQinduce}
(and~\ref{lem:weak-union}),
and ``if'' follows from Lemma~\ref{lem:cycle1inducing}.

We are now done motivating how to arrive at the following definition
which shall serve as our final stipulation of what is a correct slice:
\begin{definition}[slicing pair]
\label{defn:slicing-pair}
Let $Q, Q_0$ be sets of nodes.
$(Q,Q_0)$ is a \dt{slicing pair} iff
\begin{enumerate}
\item
$Q, Q_0$ are both weak slice sets with $\finalv \in Q$;
\item
$Q, Q_0$ are disjoint;
\item
all $\observevv$ nodes are in $Q \cup Q_0$; and
\item
for all cycle-inducing nodes $v$, either
\begin{enumerate}
\item
$v \in Q \cup Q_0$, or
\item
$\hpd{\omega}{v}{\fppd{v}}$ is sum-preserving
(in which case we shall say that $v$ is sum-preserving).
\end{enumerate}
\end{enumerate}
\end{definition}
%% Observe by Lemma~\ref{lem:weak-union} that
%% if $(Q,Q_0)$ is a slicing pair then
%% $Q \cup Q_0$ is a weak slice set.

For Example~\ref{ex4}, we saw in Example~\ref{ex:cycle-induce}
that 4 is the only cycle-inducing node,
and we must demand either $4 \in Q \cup Q_0$ or 
that $\hpd{\omega}{4}{6}$ is sum-preserving.
Recalling the findings in Example~\ref{ex:sem-ex}, we see that:
\begin{enumerate}
\item
If node 5 is labeled $\assignv{y}{y+1}$ or
$\rassignv{y}{\psi_4}$ then
we don't need $4 \in Q \cup Q_0$, and thus
$(\{1,6\},\emptyset)$ is a valid slicing pair.
\item
If node 5 is labeled $\assignv{y}{1}$ then
we must require $4 \in Q \cup Q_0$.
But $4 \in Q_0$ is impossible, as then $3 \in Q_0$ 
(since otherwise 3 has no next visible in $Q_0$)
which by data dependence implies $1 \in Q_0$ which 
contradicts $Q \cap Q_0 = \emptyset$, since $1 \in Q$.
Thus $4 \in Q$, and then $3 \in Q$ (since otherwise
3 has no next visible in $Q$) and $2,5 \in Q$ (by data dependence).
We see that $Q$ contains all nodes, giving a trivial slice.
\end{enumerate}
The concepts involved in Definition~\ref{defn:slicing-pair} help
towards establishing a result showing
%% proved in Appendix~\ref{app:proofs},
that a larger class of semantic functions are sum-preserving:
\begin{lemma}
\label{lem:sum-pres}
Assume $Q'$ is a node set which 
contains all $\observevv$ nodes, and 
that for each cycle-inducing node $v_0$,
either $v_0 \in Q'$ or $\hpd{\omega}{v_0}{\fppd{v_0}}$ is sum-preserving.
If $v$ stays outside $Q'$ until $v'$
then $\sumd{\hpd{\omega}{v}{v'}(D)} = \sumd{D}$ for all $D$.
\end{lemma}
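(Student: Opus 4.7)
The plan is to prove this by induction on $\LAP{v}{v'}$, unfolding $\omega$ using the fact that $\omega = \HH{\unodes}(\omega)$ and thus satisfies the clauses of Definition~\ref{def:HHX} with $h_0 = \omega$. In the base case $\LAP{v}{v'} = 0$, so $v = v'$ and clause~1 gives $\hpd{\omega}{v}{v}(D) = D$, which is trivially sum-preserving. For the inductive step the first observation is that, since $v \neq v'$ and $v$ stays outside $Q'$ until $v'$, we have $v \notin Q'$; in particular $v$ cannot be an $\observevv$ node, since $Q'$ contains all of them.

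Next I would case-split on whether $v' = \fppd{v}$. If $v' \neq \fppd{v}$, set $v'' = \fppd{v}$; clause~2 gives $\hpd{\omega}{v}{v'}(D) = \hpd{\omega}{v''}{v'}(\hpd{\omega}{v}{v''}(D))$, and Lemma~\ref{lem:LAP-add} ensures both $\LAP{v}{v''}$ and $\LAP{v''}{v'}$ are strictly smaller than $\LAP{v}{v'}$, enabling the induction hypothesis. It remains to show that $v$ stays outside $Q'$ until $v''$ and $v''$ stays outside $Q'$ until $v'$. For this I would argue via Lemma~\ref{lem:prec-ordering} that on any path $\pi$ from $v$ to $v''$ where $v''$ occurs only at the end, $v'$ cannot appear (its first occurrence must follow that of $v''$); concatenating such a $\pi$ with any acyclic path from $v''$ to $v'$ produces a path from $v$ to $v'$ having $v'$ only at the end, whose sole $Q'$-occurrence (by hypothesis) is possibly $v'$. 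Restricting to the prefix and suffix respectively yields the two sub-claims.

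When $v' = \fppd{v}$, I would case on the label of $v$. The cases where $v$ is $\skipv$, not in $\unodes$, an (ordinary) assignment, or a random assignment follow directly from clause~3 together with Lemmas~\ref{lem:assign-misc} and~\ref{lem:rassign-misc}; the $\observevv$ case is excluded by $v \notin Q'$. If $v$ is a branching node and is cycle-inducing, then since $v \notin Q'$ the second disjunct of the hypothesis applies and gives sum-preservation of $\hpd{\omega}{v}{\fppd{v}} = \hpd{\omega}{v}{v'}$ outright. If $v$ is a branching node that is \emph{not} cycle-inducing, both successors $v_1, v_2$ satisfy $\LAP{v_i}{v'} < \LAP{v}{v'}$, so clause~3e yields $\hpd{\omega}{v}{v'}(D) = \hpd{\omega}{v_1}{v'}(\selectf{B}(D)) + \hpd{\omega}{v_2}{v'}(\selectf{\neg B}(D))$. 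A short prepending argument shows each $v_i$ also stays outside $Q'$ until $v'$, so the induction hypothesis applies to each summand, and combined with Lemma~\ref{lem:selectB} this delivers $\sumd{\hpd{\omega}{v}{v'}(D)} = \sumd{\selectf{B}(D)} + \sumd{\selectf{\neg B}(D)} = \sumd{D}$.

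The main obstacle is the cycle-inducing branching case: there the defining clause uses $h_0$ on a successor whose $\LAP$ to $v'$ is not smaller, so the natural recursion cannot be closed by the induction hypothesis. The proof must instead appeal directly to the lemma's hypothesis, which is precisely tailored to this situation. A secondary technical point is the bookkeeping needed to propagate the ``stays outside'' condition across sub-paths, which relies essentially on Lemma~\ref{lem:prec-ordering} to rule out stray occurrences of $v'$ before $\fppd{v}$.
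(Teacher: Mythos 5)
Your proposal is correct and follows essentially the same route as the paper's proof: induction on $\LAP{v}{v'}$, decomposition through $\fppd{v}$ via clause~2 and Lemma~\ref{lem:LAP-add}, a case analysis on $\labv{v}$ at $v' = \fppd{v}$ using Lemmas~\ref{lem:assign-misc}, \ref{lem:rassign-misc} and \ref{lem:selectB}, the lemma's hypothesis invoked directly for cycle-inducing branching nodes, and the induction hypothesis for non-cycle-inducing ones. The only cosmetic difference is that you re-derive inline the propagation of ``stays outside'' to $(v,v'')$, $(v'',v')$ and to the successors, which the paper factors out as Lemmas~\ref{lem:outside-decomp} and \ref{lem:outside-branch}.
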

We can now state a key result which shows that
nodes not in a slicing pair are not relevant for computing
the final result:
\begin{lemma}
\label{lem:outsideQQ0irrelevant}
Assume that $(Q,Q_0)$ is a slicing pair,
and that $v$ stays outside $Q \cup Q_0$ until $v'$.
With $R = \rv{Q \cup Q_0}{v} = \rv{Q \cup Q_0}{v'}$ 
(equality holds by Lemma~\ref{lem:outside-same-rv}),
we have $\dagree{\hpd{\omega}{v}{v'}(D)}{D}{R}$ for all $D$.
\end{lemma}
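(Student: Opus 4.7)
The plan is to obtain the desired agreement by combining Lemma~\ref{lem:sum-pres} (which yields sum-preservation) with Lemma~\ref{lem:sum-preserve-not-affect-rv} (which upgrades sum-preservation to agreement on relevant variables), using $Q' = Q \cup Q_0$ as the node set in both invocations.

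First I would check that $Q \cup Q_0$ meets the hypotheses of Lemma~\ref{lem:sum-pres}. Condition~3 of Definition~\ref{defn:slicing-pair} gives that every $\observevv$ node lies in $Q \cup Q_0$, and condition~4 gives that every cycle-inducing node $v_0$ either lies in $Q \cup Q_0$ or has $\hpd{\omega}{v_0}{\fppd{v_0}}$ sum-preserving. Combined with the hypothesis that $v$ stays outside $Q \cup Q_0$ until $v'$, Lemma~\ref{lem:sum-pres} yields
\[
\sumd{\hpd{\omega}{v}{v'}(D)} = \sumd{D} \quad \text{for all } D.
\]

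Next I would invoke Lemma~\ref{lem:sum-preserve-not-affect-rv} with the node set $Q \cup Q_0$ in place of the ``$Q$'' occurring in that lemma. Its preconditions are: (i)~$Q \cup Q_0$ is closed under data dependence, which follows from Lemma~\ref{lem:weak-union} since both $Q$ and $Q_0$ are weak slice sets (whose union is therefore a weak slice set, hence in particular closed under data dependence); (ii)~$v$ stays outside $Q \cup Q_0$ until $v'$, which is the given hypothesis; and (iii)~the sum-preservation just established. The conclusion is exactly $\dagree{\hpd{\omega}{v}{v'}(D)}{D}{R}$ with $R = \rv{Q \cup Q_0}{v} = \rv{Q \cup Q_0}{v'}$, which is what we want.

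There is no real obstacle beyond bookkeeping: both of the substantive semantic facts have already been proved, and the role of the present lemma is simply to package them together under the hypotheses supplied by the notion of slicing pair. The only point that requires a moment's thought is observing that the ``closure under data dependence'' hypothesis of Lemma~\ref{lem:sum-preserve-not-affect-rv} is secured via Lemma~\ref{lem:weak-union} rather than directly from Definition~\ref{defn:slicing-pair}, and that the equality $\rv{Q \cup Q_0}{v} = \rv{Q \cup Q_0}{v'}$ asserted in the statement is already guaranteed by Lemma~\ref{lem:outside-same-rv} applied to this same weak slice set.
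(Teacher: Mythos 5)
Your proposal is correct and matches the paper's own proof: it applies Lemma~\ref{lem:sum-pres} with $Q' = Q \cup Q_0$ to get sum-preservation and then concludes via Lemma~\ref{lem:sum-preserve-not-affect-rv}. Your extra check that closure under data dependence for $Q \cup Q_0$ follows from Lemma~\ref{lem:weak-union} is a correct (and slightly more explicit) justification of a hypothesis the paper leaves implicit.
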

\begin{proof}
With the given assumptions, Lemma~\ref{lem:sum-pres}
is applicable (with $Q' = Q \cup Q_0$) to establish that for all $D$ we have
$\sumd{\hpd{\omega}{v}{v'}(D)} = \sumd{D}$.
The claim now follows from Lemma~\ref{lem:sum-preserve-not-affect-rv}.
\end{proof}

\section{Slicing and its Correctness}
\label{sec:correct}
In this section, we shall embark on proving the semantic correctness 
of the slicing conditions developed in Section~\ref{sec:conditions}.
This will involve reasoning about the behavior of $\hpd{\omega}{v}{v'}$
for $(v,v') \in \PD$, but which reasoning principle should we employ?
Just doing induction in $\LAP{v}{v'}$ will obviously not 
work for a cycle-inducing node;
instead, the following approach is often feasible:
\begin{enumerate}
\item
prove results about $\hpd{\omega_k}{v}{v'}$ by induction in $k$,
where for each $k$ we do an inner induction on $\LAP{v}{v'}$
(possible since $\omega_{k+1} = \HH{\unodes}(\omega_{k})$
where $\hpd{\HH{\unodes}(\omega_{k})}{v}{v'}$ is defined
inductively in $\LAP{v}{v'}$);
\item
lift the results about $\hpd{\omega_k}{v}{v'}$
to results about the limit $\hpd{\omega}{v}{v'}$.
\end{enumerate}
Unfortunately, this approach does not work for certain properties,
such as being sum-preserving as this will hold only in the limit.
We need to be more clever!

Our idea is, given a slicing pair $(Q,Q_0)$ which will serve as implicit parameters, 
to introduce a family ($k \geq 0$) of
functions
\[
  \gamma_k: \PD \rightarrow (\Dist \contarrow \Dist)
\]
such that we can prove that $\chain{\gamma_k}{k}$ is a chain
with $\limit{k}{\gamma_k} = \omega$. Of course, already $\chain{\omega_k}{k}$ is such
a chain, but we shall define $\gamma_k$ in a way such that we can
still reason by induction, but also get certain properties already from
the beginning of the fixed point iteration, not just at the limit.
This is achieved by this inductive definition:
\begin{definition}
\label{def:gamma-k}
Given a slicing pair $(Q,Q_0)$, for $k \geq 0$ define $\gamma_k$ as follows:
\begin{eqnarray*}
\hpd{\gamma_0}{v}{v'} & = & \hpd{\omega}{v}{v'} \mbox{ if $v$ stays outside $Q \cup Q_0$ until $v'$} \\
\hpd{\gamma_0}{v}{v'} & = & 0 \hspace*{4mm} \mbox{ otherwise} \\[1mm]
\gamma_{k} & = & \HH{\unodes}(\gamma_{k-1}) \mbox{ for $k > 0$}
\end{eqnarray*}
\end{definition}
%% In Appendix~\ref{app:proofs} we shall prove the following result: 
\begin{lemma}
\label{lem:gamma-as-omega}
Assume that $v$ stays outside $Q \cup Q_0$ until $v'$.
Then $\hpd{\gamma_k}{v}{v'} = \hpd{\omega}{v}{v'}$ for all $k \geq 0$.
\end{lemma}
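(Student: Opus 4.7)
The plan is to induct on $k$, with a nested induction on $\LAP{v}{v'}$ for the step case. The base $k = 0$ is immediate from Definition~\ref{def:gamma-k}. For the step, note that by definition $\gamma_k = \HH{\unodes}(\gamma_{k-1})$, and since $\omega$ is a fixed point, also $\omega = \HH{\unodes}(\omega)$; hence Definition~\ref{def:HHX} dictates how both $\hpd{\gamma_k}{v}{v'}$ and $\hpd{\omega}{v}{v'}$ are computed, and they can be compared clause by clause. The inner induction on $\LAP{v}{v'}$ handles the structural clauses (identity at $v = v'$, and composition through $\fppd{v}$ when $v' \neq \fppd{v}$); the outer induction on $k$ is exactly what is needed for the cycle-inducing branching case, whose definition substitutes the supplied modification function $\gamma_{k-1}$ (respectively $\omega$) directly on a successor.

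Before the case analysis I would establish that the hypothesis ``$v$ stays outside $Q \cup Q_0$ until $v'$'' propagates to every subproblem spawned during the induction. When $v' \neq \fppd{v}$, with $v'' = \fppd{v}$, Lemma~\ref{lem:prec-ordering} gives $v'' \prec v'$, so no path from $v$ to $v''$ whose only occurrence of $v''$ is at the end can contain $v'$; concatenating such a path with a ``stays outside'' path from $v''$ to $v'$ realizes the original hypothesis on the combined path, which shows that both ``$v$ stays outside until $v''$'' and ``$v''$ stays outside until $v'$'' hold. For a successor $v_i$ of a branching node $v$ with $v' = \fppd{v}$, prepending $v$ turns any path from $v_i$ to $v'$ into one from $v$ to $v'$, and the hypothesis passes to $v_i$. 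Finally, whenever $v \neq v'$, evaluating the hypothesis on the start node forces $v \notin Q \cup Q_0$; by Definition~\ref{defn:slicing-pair}(3), $v$ is then not an observe node.

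With propagation in hand, the case analysis follows the clauses of Definition~\ref{def:HHX}. The case $v = v'$ and the transitive case $v' \neq \fppd{v}$ are dispatched by the inner IH using Lemma~\ref{lem:LAP-add}, which makes both $\LAP{v}{\fppd{v}}$ and $\LAP{\fppd{v}}{v'}$ strictly smaller than $\LAP{v}{v'}$. In the case $v' = \fppd{v}$, the non-branching labels (skip, assignment, random assignment) produce transfer functions independent of the argument $h_0$, so $\hpd{\gamma_k}{v}{v'}$ and $\hpd{\omega}{v}{v'}$ coincide automatically; the branching sub-case combines the inner IH (on the successor with strictly smaller $\LAPf$) with the outer IH (on the cycle-inducing successor, where clause~\ref{def:evalk:cond} calls $\gamma_{k-1}$ versus $\omega$). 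The most delicate point to articulate cleanly is the split at $v'' = \fppd{v}$, since it requires invoking Lemma~\ref{lem:prec-ordering} to verify that the concatenated path from $v$ through $v''$ to $v'$ is a legitimate ``$v'$ only at the end'' witness; once this is done, the rest is a routine side-by-side reading of Definition~\ref{def:HHX} applied to $\gamma_{k-1}$ versus $\omega$.
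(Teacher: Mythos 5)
Your proposal is correct and follows essentially the same route as the paper's proof: an outer induction on $k$ with an inner induction on $\LAP{v}{v'}$, a clause-by-clause comparison of $\HH{\unodes}(\gamma_{k-1})$ against $\HH{\unodes}(\omega)$ using that $\omega$ is a fixed point, and the outer hypothesis exactly where clause~\ref{def:evalk:cond} substitutes the supplied modification function. The only difference is presentational: you re-derive inline the propagation of ``stays outside $Q \cup Q_0$'' through $\fppd{v}$ and to branch successors, which the paper factors out as Lemmas~\ref{lem:outside-decomp} and~\ref{lem:outside-branch}.
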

Observe that $\hpd{\gamma_0}{v}{v'} \leq \hpd{\gamma_1}{v}{v'}$ holds for all
$(v,v') \in \PD$ since by
Lemma~\ref{lem:gamma-as-omega} we have equality when $v$ stays outside
$Q \cup Q_0$ until $v'$, and the left hand side is 0 otherwise.
Thus $\gamma_0 \leq \gamma_1$ which enables us 
(since $\HH{\unodes}$ is monotone) 
to infer inductively that $\chain{\gamma_k}{k}$ is a chain.
Moreover, since 
$0 = \omega_0 \leq \gamma_0 \leq \omega$ trivially holds,
we can inductively 
(since $\omega$ is a fixed point of $\HH{\unodes}$)
infer that
$\omega_k \leq \gamma_k \leq \omega$ for all $k \geq 0$.
This allows us to deduce
that $\limit{k}{\gamma_k} = \omega$;
we have thus proved:
\begin{proposition}
\label{prop:gamma-limit-omega}
The sequence $\chain{\gamma_k}{k}$ is a chain,
with $\limit{k}{\gamma_k} = \omega$.
\end{proposition}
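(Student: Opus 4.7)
The plan is to follow the two-step argument sketched in the discussion preceding the proposition, turning the informal observations into a clean inductive proof that simultaneously establishes the chain property and sandwiches $\gamma_k$ between $\omega_k$ and $\omega$.

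First I would establish the base case $\gamma_0 \leq \gamma_1$ pointwise on $\PD$. For a pair $(v,v')$ where $v$ stays outside $Q \cup Q_0$ until $v'$, Lemma~\ref{lem:gamma-as-omega} gives $\hpd{\gamma_0}{v}{v'} = \hpd{\omega}{v}{v'} = \hpd{\gamma_1}{v}{v'}$ (the second equality using $\gamma_1 = \HH{\unodes}(\gamma_0)$ and reapplying Lemma~\ref{lem:gamma-as-omega}). For any other pair, $\hpd{\gamma_0}{v}{v'} = 0$, and continuity of each $\hpd{\gamma_1}{v}{v'}$ forces $\hpd{\gamma_1}{v}{v'}(D) \geq 0$, so $\gamma_0 \leq \gamma_1$ pointwise. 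Then, since continuity of $\HH{\unodes}$ (Lemma~\ref{lem:HHcont}) implies monotonicity, applying $\HH{\unodes}$ to both sides yields $\gamma_k \leq \gamma_{k+1}$ for every $k$, so $\chain{\gamma_k}{k}$ is a chain.

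Next I would prove the sandwich $\omega_k \leq \gamma_k \leq \omega$ by induction on $k$. The base case $k=0$ is immediate: $\omega_0 = 0 \leq \gamma_0$ (values of $\gamma_0$ are distributions, hence non-negative), and $\gamma_0 \leq \omega$ holds pair-by-pair, being an equality on the ``stays outside'' pairs by Lemma~\ref{lem:gamma-as-omega} and trivially satisfied as $0 \leq \omega$ elsewhere. For the inductive step, monotonicity of $\HH{\unodes}$ together with $\omega_k \leq \gamma_k \leq \omega$ gives $\HH{\unodes}(\omega_k) \leq \HH{\unodes}(\gamma_k) \leq \HH{\unodes}(\omega)$, which is exactly $\omega_{k+1} \leq \gamma_{k+1} \leq \omega$ since $\omega$ is a fixed point.

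Finally I would pass to the limit: taking suprema of the chains pointwise yields $\omega = \limit{k}{\omega_k} \leq \limit{k}{\gamma_k} \leq \omega$, forcing equality and establishing $\limit{k}{\gamma_k} = \omega$. The only place that needs a little care is the pointwise comparison of functions in $\Dist \contarrow \Dist$: all inequalities should be understood in the pointwise order on $\PD \rightarrow (\Dist \contarrow \Dist)$ from Lemma~\ref{lem:is-cpo}, and the limits exist in this cpo because $\chain{\omega_k}{k}$, $\chain{\gamma_k}{k}$, and the constant chain at $\omega$ are all chains there. No genuine obstacle arises; the key prerequisites---Lemma~\ref{lem:gamma-as-omega} and the monotonicity of $\HH{\unodes}$---are already in hand.
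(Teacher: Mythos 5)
Your proposal is correct and follows essentially the same route as the paper: establish $\gamma_0 \leq \gamma_1$ pair-by-pair via Lemma~\ref{lem:gamma-as-omega} (equality on ``stays outside'' pairs, $0$ on the bottom elsewhere), propagate by monotonicity of $\HH{\unodes}$ to get the chain, sandwich $\omega_k \leq \gamma_k \leq \omega$ inductively using the fixed-point property of $\omega$, and squeeze in the limit. The only cosmetic point is that $\hpd{\gamma_1}{v}{v'}(D) \geq 0$ needs no appeal to continuity --- it is immediate because distributions are non-negative, i.e.\ $0$ is the bottom of the cpo.
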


\subsection{Probabilistic Independence}
\label{subsec:prob-indpd}
We shall now show a main contribution of this paper:
that we have provided (in Definition~\ref{defn:slicing-pair}) 
syntactic conditions for probabilistic independence,
in that the $Q$-relevant
variables are probabilistically independent (as defined in Definition~\ref{def:indpd})
of the $Q_0$-relevant variables:
\begin{proposition}[Independence]
\label{prop:indpd2indpd}
Let $(Q,Q_0)$ be a slicing pair.
Assume that $\hpd{\omega}{v}{v'}(D) = D'$.
If $\rv{Q}{v}$ and $\rv{Q_0}{v}$ are independent in $D$
then $\rv{Q}{v'}$ and $\rv{Q_0}{v'}$ are independent in $D'$.
\end{proposition}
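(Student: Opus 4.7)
\begin{proofff}
The plan is to prove the stronger claim, by induction on $k \geq 0$, that for every $(v,v') \in \PD$ and every distribution $D$, if $\rv{Q}{v}$ and $\rv{Q_0}{v}$ are independent in $D$, then $\rv{Q}{v'}$ and $\rv{Q_0}{v'}$ are independent in $\hpd{\gamma_k}{v}{v'}(D)$. The Proposition then follows by a limit argument using Proposition~\ref{prop:gamma-limit-omega}: since $\hpd{\omega}{v}{v'}(D) = \limit{k}{\hpd{\gamma_k}{v}{v'}(D)}$, each of the four quantities $D'(\adds{s_1}{s_2})$, $D'(s_1)$, $D'(s_2)$, $\sumd{D'}$ appearing in Definition~\ref{def:indpd} is a countable sum over stores and, by Lemma~\ref{lem:sumlim-limsum}, is the limit of the corresponding quantity for $\hpd{\gamma_k}{v}{v'}(D)$; the independence equation therefore survives in the limit.

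The base case $k=0$ splits according to Definition~\ref{def:gamma-k}. If $v$ stays outside $Q \cup Q_0$ until $v'$, Lemma~\ref{lem:outsideQQ0irrelevant} gives $\dagree{D'}{D}{R}$ for $R = \rv{Q \cup Q_0}{v} = \rv{Q \cup Q_0}{v'}$, and Lemmas~\ref{lem:rv-cup}, \ref{lem:outside-same-rv}, and~\ref{lem:dist-partial} let every marginal in the independence equation transfer directly from $D$ to $D'$. Otherwise, $\hpd{\gamma_0}{v}{v'}(D) = 0$, and independence holds vacuously.

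For the inductive step $k \to k{+}1$, I write $\gamma_{k+1} = \HH{\unodes}(\gamma_k)$ and perform an inner induction on $\LAP{v}{v'}$ following the clauses of Definition~\ref{def:HHX}. The identity, $\skipv$, and transitive clauses are immediate or reduce to two shorter applications through $\fppd{v}$. For a node $v$ with $v' = \fppd{v}$ and $v \notin Q \cup Q_0$, Lemma~\ref{lem:notQ-outside} shows that $v$ stays outside $Q \cup Q_0$ until $v'$, and Lemma~\ref{lem:gamma-as-omega} reduces the subcase to the base-case argument. When $v \in Q$ is a (random) assignment to $x$, Lemma~\ref{lem:rv-assign} describes $\rv{Q}{v}$ in terms of $\rv{Q}{v'}$ and $\fv{E}$, while $v \notin Q_0$ combined with Lemma~\ref{lem:outside-same-rv} gives $\rv{Q_0}{v} = \rv{Q_0}{v'}$; moreover $x \notin \rv{Q_0}{v'}$ because otherwise prepending $v$ to a witnessing path would yield $\dd{v}{u}$ for some $u \in Q_0$, forcing $v \in Q_0$ by data-dependence closure and contradicting disjointness. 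Lemmas~\ref{lem:assignRirrel} and~\ref{lem:rassignRirrel} then give that the $\rv{Q_0}{v'}$-marginal is preserved exactly, while the new $\rv{Q}{v'}$-marginal factors through the old $\rv{Q}{v}$-marginal by direct calculation from the definitions of $\assignf{x}{E}$ and $\rassignf{x}{\psi}$, and the independence equation follows. The $v \in Q_0$ cases are symmetric; for an $\observev{B}$ node one of $v \in Q$ or $v \in Q_0$ holds by Definition~\ref{defn:slicing-pair}, and $\fv{B}$ lies entirely inside whichever of $\rv{Q}{v}, \rv{Q_0}{v}$ contains $v$, so a short calculation (the filter $\selectf{B}$ scales each of the two marginals on the ``active'' side and leaves the ``other'' marginal proportional to $D$'s) preserves independence.

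The main obstacle is the branching case $v' = \fppd{v}$ with $v$ branching on $B$. Here the output is $D'_1 + D'_2$ where $D'_i = \hpd{\gamma}{v_i}{v'}(\selectf{B_i}(D))$, using $\gamma_{k+1}$ and the inner IH when $\LAP{v_i}{v'} < \LAP{v}{v'}$, and $\gamma_k$ and the outer IH when $v$ is cycle-inducing so that $\LAP{v_i}{v'} \geq \LAP{v}{v'}$. A sum of independent distributions is not in general independent, so the argument must combine (i) the fact that $\fv{B}$ lies in $\rv{Q}{v}$ or $\rv{Q_0}{v}$ (Lemma~\ref{lem:rv-branch}), so filtering and the two IH applications give independence of $\rv{Q}{v'}$ and $\rv{Q_0}{v'}$ in each $D'_i$, with (ii) the additional structural fact that the marginal of $D'_i$ on the side \emph{not} carrying $\fv{B}$ is a scaling of the corresponding marginal of $D$. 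This scaling follows by unrolling the path: each atomic step along the branch lies in $Q$, in $Q_0$, or in neither, and by the $\rv{Q}{v} \cap \rv{Q_0}{v} = \emptyset$ lemma (Lemma~\ref{lem:rv-cap}) together with Lemmas~\ref{lem:assignRirrel}, \ref{lem:rassignRirrel}, and~\ref{lem:notQ-outside}, no step defines a variable in the ``other'' side's relevant set, so the corresponding marginal evolves only by a uniform scalar. The two proportionality factors then reassemble $D'_1 + D'_2$ into a distribution that satisfies the equation of Definition~\ref{def:indpd}, completing the step; carrying through this accounting of marginals along arbitrary intra-branch paths is the most delicate part of the proof and is where the slicing-pair hypothesis is used most critically.
\end{proofff}
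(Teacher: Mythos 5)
Your overall architecture --- induction on $k$ over the $\gamma_k$ of Definition~\ref{def:gamma-k}, an inner induction on $\LAP{v}{v'}$, and a final limit argument via Proposition~\ref{prop:gamma-limit-omega} and Lemma~\ref{lem:sumlim-limsum} --- is the same as the paper's, and your base case, composition case, and straight-line cases are essentially right. The gap is in the branching case, and it is exactly where your inductive statement is too weak. You correctly note that $D' = D'_1 + D'_2$ and that independence of the summands does not give independence of the sum, and you identify the missing ingredient (your fact (ii)): that the marginal of $D'_i$ on the side not carrying $\fv{B}$ is a uniform scaling of the corresponding marginal of $D$. But your justification of (ii) --- ``unrolling the path'' and observing that no step defines a variable of the other side --- does not go through. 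The semantics between $v_i$ and $v'$ is not a per-path product: the distribution is repeatedly split at nested branching nodes, processed by $\gamma_{k+1}$ or $\gamma_k$ on sub-segments, and recombined. More fundamentally, the untouched-side marginal is not preserved up to a scalar merely because none of its variables is assigned: an $\observevv$ node (or a nested branching node) lying in $Q$ strictly between $v_i$ and $v'$ filters on a condition whose free variables lie in the $Q$-relevant set, and this rescales the $\rv{Q_0}{\cdot}$-marginal uniformly only if independence holds in the intermediate distribution at that point (this is precisely what the proof of Lemma~\ref{lem:indpd-branch}, part~2, and the $\observevv$ computation use independence for). Chaining such scalings therefore requires having both the independence claim and the proportionality claim available at every intermediate decomposition point --- i.e., the proportionality must be part of what is proved by induction, not derived afterwards by a syntactic argument about which variables get defined.

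This is exactly how the paper repairs it: Proposition~\ref{prop:indpd2indpd} is derived from the stronger Lemma~\ref{lem:indpd2indpd} (packaged as Definition~\ref{def:indpd-pres}), whose claims 2 and 3 assert, in addition to preservation of independence, that whenever $v$ stays outside $Q$ (respectively $Q_0$) until $v'$, the corresponding marginal satisfies $D(s)\sumd{D'} = D'(s)\sumd{D}$. Lemma~\ref{lem:indpd2indpdk} proves all three claims simultaneously for every $\gamma_k$ by the same double induction you set up, and in the branching case the two proportionality claims are what let the two filtered-and-processed pieces be reassembled into the independence equation for $D'_1 + D'_2$; the same strengthened hypothesis is also what covers the sub-branches handled by the outer hypothesis $\gamma_{k-1}$ at cycle-inducing successors. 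If you add claims 2 and 3 of Definition~\ref{def:indpd-pres} to your induction hypothesis, your outline becomes the paper's proof; without them the branching case cannot be closed.
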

This follows from a more general result:
%% proved in Appendix~\ref{app:proofs}:
\begin{lemma}
\label{lem:indpd2indpd}
Let $(Q,Q_0)$ be a slicing pair.
Assume that $\hpd{\omega}{v}{v'}(D) = D'$.
Let $R = \rv{Q}{v}$, $R' = \rv{Q}{v'}$,
$R_0 = \rv{Q_0}{v}$, and $R'_0 = \rv{Q_0}{v'}$.
If $R$ and $R_0$ are independent in $D$ then
\begin{enumerate}
\item 
%% \label{indpdlem:indp}
$R'$ and $R'_0$ are independent in $D'$
\item
%% \label{indpdlem:outq}
if $v$ stays outside $Q$ until $v'$
(and by Lemma~\ref{lem:outside-same-rv} thus $R' = R$)
then for all $s \in \sto{R}$ we have
\begin{displaymath}
D(s)\sumd{D'} = D'(s)\sumd{D}
\end{displaymath}
\item
%% \label{indpdlem:outq0}
if $v$ stays outside $Q_0$ until $v'$
(and thus $R'_0 = R_0$)
then for all $s_0 \in \sto{R_0}$ we have
\begin{displaymath}
D(s_0)\sumd{D'} = D'(s_0)\sumd{D}
\end{displaymath}
\end{enumerate}
\end{lemma}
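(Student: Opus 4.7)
The strategy is to exploit the chain introduced in Definition~\ref{def:gamma-k}: since $\omega = \limit{k}{\gamma_k}$ by Proposition~\ref{prop:gamma-limit-omega}, I will prove the three conclusions with $\hpd{\omega}{v}{v'}$ replaced by $\hpd{\gamma_k}{v}{v'}$ for every $k \geq 0$, and then pass to the limit. The limit step is clean: pointwise we have $D'(s') = \limit{k}{\hpd{\gamma_k}{v}{v'}(D)(s')}$, and $\sumd{\hpd{\gamma_k}{v}{v'}(D)} \to \sumd{D'}$ by Lemma~\ref{lem:sumlim-limsum}; moreover $D'(s)$ and $D'(s_0)$, being sums of $D'$ over completions of $s$ and $s_0$, are also limits of the analogous quantities for $\gamma_k$ by another application of Lemma~\ref{lem:sumlim-limsum}. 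Continuity of multiplication then transports every identity from $\gamma_k$ to $\omega$.

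For the base case $k = 0$, I split on whether $v$ stays outside $Q \cup Q_0$ until $v'$. If not, $\hpd{\gamma_0}{v}{v'} = 0$, so $D' = 0$ and all three conclusions hold trivially (the hypotheses of parts 2 and 3 make both sides of the claimed identities vanish). If yes, Lemma~\ref{lem:outsideQQ0irrelevant} gives $\dagree{D'}{D}{R \cup R_0}$ (using Lemma~\ref{lem:rv-cup} to rewrite $\rv{Q \cup Q_0}{v}$) and the hypotheses of Lemma~\ref{lem:sum-pres} hold (from the slicing-pair conditions), yielding $\sumd{D'} = \sumd{D}$; invoking the assumed independence of $R$ and $R_0$ in $D$, together with Lemma~\ref{lem:dist-partial} to project agreement down to $R$ or to $R_0$ alone, yields the three conclusions by direct calculation (and Lemma~\ref{lem:outside-same-rv} ensures $R = R'$, $R_0 = R'_0$).

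For the inductive step $k > 0$, I do an inner induction on $\LAP{v}{v'}$ following the clauses of Definition~\ref{def:HHX}. The trivial clauses ($v = v'$ and $\skipv$) preserve $D$ outright; the transitive clause chains the inner inductive hypothesis at $v'' = \fppd{v}$, using that ``$v$ stays outside $Q$ until $v'$'' factors through $v''$ (by Lemma~\ref{lem:prec-ordering}, $v''$ precedes $v'$ on every path, so the two halves of the path each stay outside $Q$ as needed), and similarly for $Q_0$. The assignment and random-assignment clauses combine Lemmas~\ref{lem:assignRirrel}/\ref{lem:rassignRirrel} with Lemma~\ref{lem:rv-assign} to track variables in $R$ and $R_0$; the $\observevv$ clause uses that such a node belongs to exactly one of $Q, Q_0$ (by conditions~2 and~3 of Definition~\ref{defn:slicing-pair}), so $\selectf{B}$ filters only along that relevant set. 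For branching, write $D' = D'_1 + D'_2$ with $D_i = \selectf{B_i}(D)$; the inner induction handles successors with $\LAP{v_i}{v'} < \LAP{v}{v'}$, and the outer induction on $k$ (applied to $\gamma_{k-1}$) handles any successor with $\LAP{v_i}{v'} \geq \LAP{v}{v'}$, which by Definition~\ref{def:cycle-induce} occurs only when $v$ is cycle-inducing.

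The main obstacle is the branching clause, because we must simultaneously propagate three intertwined identities through a sum of two recursive subcomputations whose relevant-variable profiles depend on whether $v$ lies in $Q$, in $Q_0$, or in neither. The key simplification is to prove the three conclusions jointly: after applying $\selectf{B}$ and $\selectf{\neg B}$, the variables in $\fv{B}$ (which by Lemma~\ref{lem:rv-branch} enter either $R$ or $R_0$ according to which slice set contains $v$, or enter neither when $v \notin Q \cup Q_0$) must appear consistently on both sides of the multiplicative identities. With the joint formulation in hand, the recombination $\selectf{B}(D) + \selectf{\neg B}(D) = D$ from Lemma~\ref{lem:selectB}, the additivity and multiplicativity of transfer functions from Lemma~\ref{lem:fixed-mult-nonincr-determ}, and the pointwise-summation identity of Lemma~\ref{lem:sto-partial} collapse the argument into routine algebra on each of the four subcases for the classification of $v$ relative to $Q$ and $Q_0$.
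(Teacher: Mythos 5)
Your overall strategy is the paper's: the paper proves (its Lemma~\ref{lem:indpd2indpdk}) that every $\gamma_k$ preserves the three properties, by induction on $k$ with an inner induction on $\LAP{v}{v'}$, and then passes to the limit exactly as you describe, via Proposition~\ref{prop:gamma-limit-omega} and Lemma~\ref{lem:sumlim-limsum}. Your base case, transitive case, observe case and the use of the outer induction for cycle-inducing successors all match. However, there is a genuine gap in the branching clause when $v \notin Q \cup Q_0$. There you propose to recurse into the two branches and recombine by ``routine algebra'', but the induction hypothesis at $(v_i,v')$ requires $\rv{Q}{v_i}$ and $\rv{Q_0}{v_i}$ to be independent in $\selectf{B}(D)$ (resp.\ $\selectf{\neg B}(D)$), and when $v \notin Q \cup Q_0$ nothing forces $\fv{B} \subseteq R \cup R_0$; filtering on a test over variables outside both relevant sets can destroy independence. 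For instance, with $x,y$ independent and uniform, $z := x + y$ computed earlier, $R = \mkset{x}$, $R_0 = \mkset{y}$ and $B \equiv (z = 3)$, the filtered distribution makes $x$ determine $y$, so the per-branch hypothesis simply fails even though the lemma's conclusion at $(v,v')$ is true. Moreover, the proportionality facts needed to stitch the two branches back together (the analogue of Lemma~\ref{lem:indpd-branch}) also require $\fv{B} \subseteq R$ or $\fv{B} \subseteq R_0$, i.e.\ $v \in Q$ or $v \in Q_0$.

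The paper handles precisely this subcase by a different route, which your inductive step omits: if $v \notin Q \cup Q_0$ then, since $Q \cup Q_0$ is a weak slice set (Lemma~\ref{lem:weak-union}), Lemma~\ref{lem:notQ-outside} gives that $v$ stays outside $Q \cup Q_0$ until $v'$; Lemma~\ref{lem:gamma-as-omega} then gives $\hpd{\gamma_k}{v}{v'} = \hpd{\omega}{v}{v'}$, and Lemma~\ref{lem:outsideQQ0irrelevant} yields $\dagree{D'}{D}{R \cup R_0}$, from which all three claims follow at once. This is exactly where clause 4 of Definition~\ref{defn:slicing-pair} (cycle-inducing nodes outside $Q \cup Q_0$ must be sum-preserving) enters the proof; a purely branchwise argument never invokes it, and Example~\ref{ex4} shows it cannot be dispensed with, since one branch may contain a loop. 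So you must peel off the $v \notin Q \cup Q_0$ case for branching nodes via the stays-outside/sum-preservation machinery (this is the very reason $\gamma_0$ is seeded with $\omega$ on stays-outside segments rather than with $0$, not only for the base case), and reserve the branchwise induction for $v \in Q$ or $v \in Q_0$. A smaller point: for (random) assignments whose target variable lies in $R'$, Lemmas~\ref{lem:assignRirrel}/\ref{lem:rassignRirrel} do not suffice; verifying the multiplicative identities needs the marginalization computations of Lemmas~\ref{lem:assign-simp} and~\ref{lem:rassign-simp}.
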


\subsection{Correctness of Slicing}
\label{subsec:correct-slicing}

We can now precisely phrase the desired correctness result,
which (as hinted at in Section~\ref{sec:examples})
states that the sliced program produces 
the same \emph{relative} distribution
over the values of the relevant variables
as does the original program, and will be at least as ``defined'':
\begin{thm}
\label{thm:slicing-correct}
For a given pCFG,
let $(Q,Q_0)$ be a slicing pair,
and let $\phi = \fixed{\HH{Q}}$ 
(cf.~Definition~\ref{def:slice-meaning})
be the meaning of the sliced program.
For a given $(v,v') \in \PD$, and a given $D \in \Dist$
such that $\rv{Q}{v}$ and $\rv{Q_0}{v}$ are independent in $D$,
there exists a real number $c$ (depending on $v,v'$ and $D$)
with $0 \leq c \leq 1$ such that
\[
\dagree{\hpd{\omega}{v}{v'}(D)}{c \cdot \hpd{\phi}{v}{v'}(D)}{\rv{Q}{v'}}.
\]
Moreover, if $v$ stays outside $Q_0$ until $v'$ then $c = 1$.
\end{thm}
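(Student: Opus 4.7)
The plan is to introduce a constant $c \in [0,1]$ (depending on $v$, $v'$, $D$) and establish the equation by a fixed-point approximation argument. I would use the approximants $\gamma_k$ from Definition~\ref{def:gamma-k}, which by Proposition~\ref{prop:gamma-limit-omega} converge to $\omega$, and reason about $\gamma_k$ versus the full sliced semantics $\phi$. The strengthened invariant to establish by outer induction on $k$, with an inner induction on $\LAP{v}{v'}$ mirroring the case structure of Definition~\ref{def:HHX}, is: for each $(v,v') \in \PD$ and each $D$ with $\rv{Q}{v}$ and $\rv{Q_0}{v}$ independent in $D$, there exists $c_k \in [0,1]$ such that $\hpd{\gamma_k}{v}{v'}(D)$ and $c_k \cdot \hpd{\phi}{v}{v'}(D)$ agree on $\rv{Q}{v'}$. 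Passing to the limit via continuity of $\hpd{\phi}{v}{v'}$ then delivers the theorem, with $c$ obtained as a (sub)limit of the $c_k$'s --- taking $c = 0$ if the sliced distribution vanishes on $R' = \rv{Q}{v'}$, and otherwise pinning $c$ down pointwise via any $s \in \sto{R'}$ with $\hpd{\phi}{v}{v'}(D)(s) > 0$.

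For the inner case analysis: trivial cases ($v = v'$, skip, nodes outside $Q \cup Q_0$) give $c_k = 1$. For an assignment or random assignment at $v \in Q$ both semantics apply the same transfer function, so the constant propagates unchanged; for $v \notin Q$, $\phi$ acts as identity while $\omega$ applies $\assignf{x}{E}$ or $\rassignf{x}{\psi}$, and Lemmas~\ref{lem:assignRirrel}/\ref{lem:rassignRirrel} give agreement on $R'$, since data-dependence closure of $Q$ forces $x \notin R'$. An $\observevv$ node in $Q$ applies $\selectf{B}$ in both semantics; an $\observevv$ node in $Q_0$ is precisely where $c_k$ genuinely shrinks by a $\Pr[B]$-factor, and the independence of $R$ and $R_0$ in $D$ makes this a scalar multiplication on $R'$-marginals (via Lemma~\ref{lem:indpd2indpd}). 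The transitivity case $v' \neq \fppd{v}$ composes two sub-calls; here I would use Proposition~\ref{prop:indpd2indpd} (at the $\omega$ level, together with $\gamma_k \le \omega$) to justify that the intermediate distribution preserves independence so that the inductive hypothesis applies to the second leg. Branching nodes use Lemma~\ref{lem:rv-branch} together with the $\selectf{B}/\selectf{\neg B}$ decomposition, with both branches reduced by the IH.

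The ``moreover'' claim that $c = 1$ whenever $v$ stays outside $Q_0$ until $v'$ is maintained as a separate invariant throughout: this property propagates from $v$ to $\fppd{v}$ and to each successor of a branching node. Under this assumption every node encountered is either in $Q$ (where $\HH{\unodes}$ and $\HH{Q}$ coincide) or outside $Q \cup Q_0$ (where both semantics act as identity on $R$, by Lemmas~\ref{lem:outsideQQ0irrelevant} and~\ref{lem:phi1-id}), so no $Q_0$-factor is ever accumulated and the constant remains $1$. The main obstacle I anticipate is the cycle-inducing branching case: one successor $v_i$ has $\LAP{v_i}{v'} \geq \LAP{v}{v'}$, forcing an appeal to $\gamma_{k-1}$ rather than $\gamma_k$, so the constants across iteration levels must be reconciled in the limit. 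What makes this work is precisely the design of $\gamma_k$ in Definition~\ref{def:gamma-k}, which pre-computes the $\omega$-behavior exactly on paths that stay outside $Q \cup Q_0$ --- these are the same paths on which $\phi$ acts as identity on $R$ by Lemma~\ref{lem:phi1-id} --- so the base case of the outer induction already aligns $\gamma_0$ and $\phi$ on the ``trivial'' paths, and only the genuinely $Q$- or $Q_0$-interacting edges contribute non-trivially to the accumulating constant.
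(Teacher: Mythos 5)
Your overall architecture (the family $\gamma_k$ of Definition~\ref{def:gamma-k}, a double induction on $k$ and $\LAP{v}{v'}$, then a passage to the limit via Proposition~\ref{prop:gamma-limit-omega}) is the paper's, but the invariant you propose to carry through the induction is not provable, and in fact is false. You compare $\hpd{\gamma_k}{v}{v'}(D)$ against the \emph{full} sliced semantics $\hpd{\phi}{v}{v'}(D)$. Consider a loop that is retained in $Q$ and whose number of iterations depends on a $Q$-relevant variable (e.g.\ $x:=\mathtt{R}(\psi_4)$; while $y<x$ do $y:=y+1$; return $x$, with $Q=\unodes$, $Q_0=\emptyset$, so the independence hypothesis is vacuous). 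For finite $k$ the approximant $\hpd{\gamma_k}{v}{v'}(D)$ truncates the loop and so lets through only the initial $x$-values that terminate within the budget, whereas $\hpd{\phi}{v}{v'}(D)$ does not truncate; the two marginals on $\rv{Q}{v'}$ are then not proportional, so no constant $c_k$ exists at level $k$. Concretely the induction breaks exactly at the cycle-inducing branching case you flagged: one branch is handled at level $k$, the other at level $k-1$, both relative to the \emph{same} untruncated $\phi$, so the two branch constants differ and cannot be merged (and here independence cannot help, since the branch condition is $Q$-relevant). The paper's fix is precisely to truncate the sliced side in lockstep: it introduces $\Phi_k$ (Definition~\ref{def:phipk}, with $\limit{k}{\Phi_k}=\phi$ by Lemma~\ref{lem:Phi-limit-phi}) and proves Lemma~\ref{lem:slicing-correct} relating $\gamma_k$ to $\cnst{k}{v}{v'}{D}\cdot\Phi_k$, with explicitly defined constants (Definition~\ref{def:slicing-const}) that form a chain (Lemma~\ref{lem:cnst-chain}) and equal $1$ whenever $v$ stays outside $Q_0$ (Lemma~\ref{lem:cnst1}); this last point also rescues your ``moreover'' bookkeeping, which with full $\phi$ on the right already fails at $k=0$ (there $\hpd{\gamma_0}{v}{v'}=0$ on pairs that stay outside $Q_0$ but not outside $Q$, forcing $c_0=0$, not $1$).

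A second genuine gap is your treatment of independence inside the induction. In the transitivity case you need $\rv{Q}{v''}$ and $\rv{Q_0}{v''}$ to be independent in the \emph{intermediate} distribution $\hpd{\gamma_k}{v}{v''}(D)$, and in the $Q_0$-cases (which include branching nodes in $Q_0$, not only $\observevv$ nodes --- your per-branch appeal to the IH does not cover them) you need the quantitative scaling claims $D(s)\sumd{D'}=D'(s)\sumd{D}$ for $D'=\hpd{\gamma_k}{v}{v'}(D)$, again at the $\gamma_k$ level. Deriving these from Proposition~\ref{prop:indpd2indpd} ``at the $\omega$ level, together with $\gamma_k\le\omega$'' does not work: probabilistic independence (and the scaling identities) are not inherited from a pointwise-dominating distribution. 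The paper has to prove a $k$-indexed preservation lemma, Lemma~\ref{lem:indpd2indpdk}, whose three claims are exactly what is consumed inside the proof of Lemma~\ref{lem:slicing-correct}; this lemma requires its own nontrivial double induction and cannot be replaced by the monotonicity argument you sketch.
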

We need to assume that the $Q$-relevant and $Q_0$-relevant variables are independent, 
so as to allow observe nodes in $Q_0$ to be sliced away 
(since then such nodes will not change the relative distribution of 
the $Q$-relevant variables), and also to allow certain branching nodes
to be sliced away.

To prove Theorem~\ref{thm:slicing-correct}
(as done at the end of this section), 
we need a result that involves $\gamma_k$ 
as introduced in
Definition~\ref{def:gamma-k},
and which also (so as to facilitate a proof by induction in $\LAP{v}{v'}$)
allows the sliced program to be given a distribution that, while
agreeing on the relevant variables, may differ from the distribution
given to the original program:
\begin{lemma}
\label{lem:slicing-correct}
For a given pCFG,
let $(Q,Q_0)$ be a slicing pair.
For all $k \geq 0$, all $(v,v') \in \PD$
with $R = \rv{Q}{v}$ and $R' = \rv{Q}{v'}$ and $R_0 = \rv{Q_0}{v}$,
all $D \in \Dist$
such that $R$ and $R_0$ are independent in $D$,
and all $\dlt \in \Dist$ such
that $\dagree{D}{\dlt}{R}$, we have
\[
\dagree{\hpd{\gamma_k}{v}{v'}(D)}{\cnst{k}{v}{v'}{D} \cdot 
\hpd{\Phi_k}{v}{v'}(\dlt)}{R'}.
\]
\end{lemma}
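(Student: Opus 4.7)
The plan is a nested induction: outer on $k$, inner on $\LAP{v}{v'}$ following the case structure of Definition~\ref{def:HHX}, treating $\Phi_k$ throughout as the sliced analog of $\gamma_k$ (built from $\HH{Q}$ rather than $\HH{\unodes}$). For the outer base $k=0$, the definition of $\gamma_0$ splits into two subcases: when $v$ stays outside $Q\cup Q_0$ until $v'$, Lemma~\ref{lem:outsideQQ0irrelevant} yields $\dagree{\hpd{\omega}{v}{v'}(D)}{D}{\rv{Q\cup Q_0}{v}}$, so in particular agreement on $R$, and combined with $\dagree{D}{\dlt}{R}$ we can take $c=1$ and read off the conclusion; otherwise $\hpd{\gamma_0}{v}{v'}=0$ and $c=0$ works vacuously.

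For the inductive step, the inner base $v=v'$ is immediate with $c=1$ from $\dagree{D}{\dlt}{R}$ and $R=R'$. The transitivity case $v'' = \fppd{v} \neq v'$ proceeds by splitting $\hpd{\gamma_k}{v}{v'}$ as $\hpd{\gamma_k}{v''}{v'} \circ \hpd{\gamma_k}{v}{v''}$: apply the inner IH to the first factor to obtain a constant $c_1$ and an intermediate $\dlt$-agreement on $\rv{Q}{v''}$, then invoke Proposition~\ref{prop:indpd2indpd} (and Lemma~\ref{lem:indpd2indpd}) to carry the independence hypothesis forward to $v''$, and finally apply the inner IH again with constant $c_2$; the required constant is $c = c_1 c_2$. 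When $v' = \fppd{v}$ and $v \in Q$, both the original and sliced programs apply the same transfer function, so agreement on $R$ propagates through Lemmas~\ref{lem:assignRirrel}, \ref{lem:rassignRirrel}, and~\ref{lem:selectB}, and the shape of $R$ on the successor side is controlled by Lemmas~\ref{lem:rv-assign} and~\ref{lem:rv-branch}. When $v \notin Q \cup Q_0$, Lemma~\ref{lem:notQ-outside} tells us $v$ stays outside $Q$ until $v'$, so Lemma~\ref{lem:phi1-id} makes the sliced factor the identity, while Lemma~\ref{lem:sum-preserve-not-affect-rv} (fed by Lemma~\ref{lem:sum-pres}) pins down the unsliced behavior on $R'$ up to a scalar that becomes~$c$.

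The substantive cases are $v \in Q_0$ and branching nodes. If $v \in Q_0$ is an $\observevv$ node, the slice treats $v$ as skip while the original applies $\selectf{B}$; since $\fv{B} \subseteq \rv{Q_0}{v} = R_0$ which is independent of $R$ in $D$, the application of $\selectf{B}$ simply rescales the $R$-marginal of $D$ by $\sumd{\selectf{B}(D)}/\sumd{D}$, and this factor gets folded into $c$. For a branching node with condition $B$, we split $D$ as $\selectf{B}(D) + \selectf{\neg B}(D)$: if $v \in Q$, both programs branch and we apply the inner IH on each successor, passing to the outer IH (with $\gamma_{k-1}$, $\Phi_{k-1}$) on the cycle-inducing branch; if $v \in Q_0$, the slice does not branch, and we must argue that the two branch contributions, summed and restricted to $R'$, equal a common scalar multiple of $\hpd{\Phi_k}{v}{v'}(\dlt)$, which again works because $\fv{B} \subseteq R_0$ is independent of $R$.

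The main obstacle will be choreographing the constants $c$ in the branching case when $v \in Q_0$ is also cycle-inducing, since this is where (i) the splitting by $\selectf{B}$ interacts nontrivially with the independence hypothesis, (ii) the outer IH must be invoked on each branch with intermediate distributions that still satisfy both the $R$-agreement and independence hypotheses, and (iii) the constant $c$ must be chosen well-definedly even in degenerate cases (e.g.~when $\hpd{\Phi_k}{v}{v'}(\dlt)$ vanishes on $R'$), which likely forces us to define $c$ as the ratio $\sumd{\hpd{\gamma_k}{v}{v'}(D)}/\sumd{\hpd{\Phi_k}{v}{v'}(\dlt)}$ when this makes sense and to verify compatibility otherwise. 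Success ultimately hinges on the design of $\gamma_k$, which buys us sum-preservation on the subpaths that stay outside $Q \cup Q_0$ already at finite stages, so that the inductive argument does not need to wait for the limit to close.
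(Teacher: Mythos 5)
Your skeleton --- outer induction on $k$, inner induction on $\LAP{v}{v'}$, $\Phi_k$ as the $\HH{Q}$-analogue of $\gamma_k$, constant $1$ along paths staying outside $Q_0$, and a rescaling factor when control passes through $Q_0$ --- is the same as the paper's. The genuine gap is in how you transport the independence hypothesis and the rescaling identities to intermediate stages. In the transitive case you invoke Proposition~\ref{prop:indpd2indpd} and Lemma~\ref{lem:indpd2indpd}, but those are statements about $\hpd{\omega}{v}{v''}$, whereas the distribution you must feed to the second application of the induction hypothesis is $D'' = \hpd{\gamma_k}{v}{v''}(D)$, and $\gamma_k \neq \omega$ at finite stages. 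What is needed (and what the paper proves separately, by its own double induction, as Lemma~\ref{lem:indpd2indpdk}) is that \emph{each} $\gamma_k$ preserves probabilistic independence, together with the marginal identities $D(s)\sumd{D'} = D'(s)\sumd{D}$ for $s\in\sto{R}$ when $v$ stays outside $Q$ until $v'$, and symmetrically for $R_0$. That finite-stage lemma is exactly what closes the case you flag as the main obstacle: for $v\in Q_0$ (whatever its label, so no case split on observe versus branching is needed there), $v$ stays outside $Q$ until $v'$, the sliced side is the identity by clause~\ref{defn:evalk:notinX}, and the identity $D(s)\sumd{D'} = D'(s)\sumd{D}$ immediately gives agreement with constant $\sumd{D'}/\sumd{D}$; it is also what guarantees that the per-branch constants recombine into a single ratio when a $Q_0$-branching node splits $D$ via $\selectf{B}$. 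Your plan to redo this reasoning inside the present induction does not close as sketched, because the constants delivered by the induction hypothesis on the two branches are not known to be the sum-ratios you need to add them back together.

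Two further corrections. The constant in the statement is the fixed quantity $\cnst{k}{v}{v'}{D}$ of Definition~\ref{def:slicing-const}, depending only on $k,v,v',D$; your fallback of defining $c$ as $\sumd{\hpd{\gamma_k}{v}{v'}(D)}/\sumd{\hpd{\Phi_k}{v}{v'}(\dlt)}$ would make it depend on $\dlt$, contradicting the statement (which quantifies over all $\dlt$ with $\dagree{D}{\dlt}{R}$ for one constant) and breaking the later limit argument; the correct ratio is $\sumd{\hpd{\gamma_k}{v}{v'}(D)}/\sumd{D}$. Also, in the $v\in Q$ assignment case with $\defv{v}$ relevant at $v'$, agreement does not propagate merely because ``the same transfer function is applied'': Lemma~\ref{lem:assignRirrel} only covers the case where the assigned variable is outside the relevant set, and one needs marginal computations (the paper's Lemmas~\ref{lem:assign-simp} and~\ref{lem:rassign-simp}, combined with Lemma~\ref{lem:rv-assign}) showing that the $R'$-marginal of the output is determined by the $R$-marginal of the input.
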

Here the numbers $\cnst{k}{v}{v'}{D}$, and the modification functions
$\Phi_k$, are defined below (Definitions~\ref{def:slicing-const} and \ref{def:phipk}).

\begin{definition}
\label{def:slicing-const}
For $k \geq 0$, $(v,v') \in \PD$, and $D \in \Dist$, the number
$\cnst{k}{v}{v'}{D}$ is given by the following rules that are
inductive in $\LAP{v}{v'}$:
\begin{enumerate}
\item
if $v = v'$ then $\cnst{k}{v}{v'}{D} = 1$
\item
otherwise, if $v' \neq v''$ where $v'' = \fppd{v}$ then
\[
\cnst{k}{v}{v'}{D} = \cnst{k}{v}{v''}{D} \cdot 
\cnst{k}{v''}{v'}{\hpd{\gamma_k}{v}{v''}(D)}
\]
\item
otherwise, if $v$ stays outside $Q_0$ until $v'$
then $\cnst{k}{v}{v'}{D} = 1$
\item
otherwise, if $D = 0$ then
$\cnst{k}{v}{v'}{D} = 1$ else
\[
\cnst{k}{v}{v'}{D} = \frac{\sumd{\hpd{\gamma_k}{v}{v'}(D)}}{\sumd{D}}.
\]
\end{enumerate}
\end{definition}
We could have swapped the order of the first three clauses
of Definition~\ref{def:slicing-const}, since it is easy to prove
by induction in $\LAP{v}{v'}$ that
\begin{lemma}
\label{lem:cnst1}
If $v$ stays outside $Q_0$ until $v'$ then 
$\cnst{k}{v}{v'}{D} = 1$ for all $k \geq 0$ and $D \in \Dist$.
\end{lemma}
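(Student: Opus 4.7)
The plan is to induct on $\LAP{v}{v'}$, following the case split of Definition~\ref{def:slicing-const}. The base case $v = v'$ is exactly clause~1 of that definition and yields $\cnst{k}{v}{v'}{D} = 1$ immediately. For $v \neq v'$, set $v'' = \fppd{v}$. When $v' = v''$, the hypothesis that $v$ stays outside $Q_0$ until $v'$ matches clause~3 verbatim, so $\cnst{k}{v}{v'}{D} = 1$. Thus the only substantive case is $v' \neq v''$, where clause~2 gives
\[
\cnst{k}{v}{v'}{D} = \cnst{k}{v}{v''}{D} \cdot \cnst{k}{v''}{v'}{\hpd{\gamma_k}{v}{v''}(D)},
\]
and the task reduces to applying the inductive hypothesis to each factor.

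To do so I need the ``stays outside'' property to split across $v''$, namely: (a)~$v$ stays outside $Q_0$ until $v''$, and (b)~$v''$ stays outside $Q_0$ until $v'$. Since $v''$ is the first proper postdominator of $v$ and $v'$ is also a proper postdominator, Lemma~\ref{lem:prec-ordering} implies that on every path from $v$ to $v'$ the first occurrence of $v''$ strictly precedes the first occurrence of $v'$; a short argument using this fact additionally shows that $v'$ postdominates $v''$. For (a), given a path $\pi$ from $v$ to $v''$ with $v''$ only at the end, I extend by a path from $v''$ to $v'$ where $v'$ occurs only at the end (which exists because $v'$ postdominates $v''$); the concatenation then has $v'$ only at the end, and the hypothesis on $v$ forces the prefix $\pi$ to avoid $Q_0$ entirely. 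For (b), dually, I prepend any acyclic path from $v$ to $v''$ that halts at its first occurrence of $v''$ (hence does not contain $v'$), and the hypothesis on the combined path transfers directly to the given suffix. Since $\LAP{v}{v''}$ and $\LAP{v''}{v'}$ are both strictly less than $\LAP{v}{v'}$ by Lemma~\ref{lem:LAP-add}, the inductive hypothesis applies to each factor, and the product is $1 \cdot 1 = 1$.

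The only real obstacle is verifying the splitting of ``stays outside'' in the case $v' \neq v''$; this rests on the interplay between the first-postdominator ordering (Lemma~\ref{lem:prec-ordering}) and path composition. Once that splitting is in hand, the rest of the proof is mechanical inspection of the four clauses of Definition~\ref{def:slicing-const}.
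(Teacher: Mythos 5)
Your proof is correct and follows exactly the route the paper intends: the paper leaves this lemma as an easy induction on $\LAP{v}{v'}$ over the clauses of Definition~\ref{def:slicing-const}, which is what you carry out. The ``stays outside'' splitting across $v'' = \fppd{v}$ that you re-derive is precisely the paper's Lemma~\ref{lem:outside-decomp}, proved there by the same path-extension argument, so you could simply cite it.
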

Since each $\gamma_k$ is non-increasing (cf.~Lemma~\ref{lem:gamma-mult-nonincr}),
it is easy to prove by induction in $\LAP{v}{v'}$ that
\begin{lemma}
We have $0 \leq \cnst{k}{v}{v'}{D} \leq 1$ for
all $k \geq 0$, $(v,v') \in \PD$, $D \in \Dist$.
\end{lemma}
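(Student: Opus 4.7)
My plan is to prove the bounds by induction on $\LAP{v}{v'}$, mirroring the four clauses of Definition~\ref{def:slicing-const}.

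For the base case $v = v'$, clause~1 yields $\cnst{k}{v}{v'}{D} = 1$, which lies in $[0,1]$. For the inductive step, set $v'' = \fppd{v}$ and treat the remaining clauses in turn.

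In clause~2 (when $v' \neq v''$), $\cnst{k}{v}{v'}{D}$ is the product of $\cnst{k}{v}{v''}{D}$ and $\cnst{k}{v''}{v'}{\hpd{\gamma_k}{v}{v''}(D)}$. Lemma~\ref{lem:LAP-add} gives $\LAP{v}{v'} = \LAP{v}{v''} + \LAP{v''}{v'}$, and both summands are at least $1$ (since $v''$ is a proper postdominator of $v$, so $v \neq v''$, and $v'' \neq v'$ by the case hypothesis). Thus the inductive hypothesis applies to each factor, so both lie in $[0,1]$ and so does their product. Clause~3 gives the constant $1$, trivially in range.

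Clause~4 splits further: if $D = 0$ the value is $1$; otherwise $\cnst{k}{v}{v'}{D} = \sumd{\hpd{\gamma_k}{v}{v'}(D)}/\sumd{D}$. I would first observe the denominator is strictly positive: $D \neq 0$ means $D(s) > 0$ for some $s$, whence $\sumd{D} \geq D(s) > 0$. The numerator is a sum of non-negative reals, giving the lower bound $\cnst{k}{v}{v'}{D} \geq 0$. For the upper bound I invoke the non-increasing property of $\hpd{\gamma_k}{v}{v'}$ (Lemma~\ref{lem:gamma-mult-nonincr}) to conclude $\sumd{\hpd{\gamma_k}{v}{v'}(D)} \leq \sumd{D}$, so the ratio is at most $1$.

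The argument is essentially routine, with no substantial obstacle; the only point requiring care is the strict positivity of the denominator in clause~4, which is immediate from the definition of the zero distribution. Everything else reduces to direct application of the inductive hypothesis together with the non-increasing property of $\gamma_k$ inherited from its fixed-point-flavored construction.
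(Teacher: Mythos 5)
Your proof is correct and follows the same route the paper indicates: induction on $\LAP{v}{v'}$ over the clauses of Definition~\ref{def:slicing-const}, with the upper bound in the ratio case supplied by the non-increasing property of $\gamma_k$ from Lemma~\ref{lem:gamma-mult-nonincr}. The paper leaves this as an easy induction, and your write-up fills in exactly those details (including the strict positivity of $\sumd{D}$ and the applicability of the inductive hypothesis to both factors in the composition clause).
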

Since we know (Proposition~\ref{prop:gamma-limit-omega})
that $\chain{\gamma_k}{k}$ is a chain, we get:
\begin{lemma}
\label{lem:cnst-chain}
$\chain{\cnst{k}{v}{v'}{D}}{k}$ is a chain
for each $(v,v') \in \PD$ and $D \in \Dist$.
\end{lemma}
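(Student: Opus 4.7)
The plan is to induct on $\LAP{v}{v'}$, driven by the fact that $\chain{\gamma_k}{k}$ is a chain (Proposition~\ref{prop:gamma-limit-omega}). For $\LAP{v}{v'} = 0$ we have $v = v'$ and clause~1 of Definition~\ref{def:slicing-const} gives $\cnst{k}{v}{v'}{D} = 1$ for all $k$. For $v' = \fppd{v}$ (clauses~3 and~4), either $\cnst{k}{v}{v'}{D}$ is constantly~$1$, or $\cnst{k}{v}{v'}{D} = \sumd{\hpd{\gamma_k}{v}{v'}(D)}/\sumd{D}$; in the latter case $\chain{\hpd{\gamma_k}{v}{v'}(D)}{k}$ is a chain of distributions, so its sums form a chain of reals, and dividing through by the positive constant $\sumd{D}$ preserves monotonicity.

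The subtle case is $v' \neq v'' := \fppd{v}$, where clause~2 gives
\[
\cnst{k}{v}{v'}{D} = \cnst{k}{v}{v''}{D} \cdot \cnst{k}{v''}{v'}{\hpd{\gamma_k}{v}{v''}(D)},
\]
with both subproblems having strictly smaller $\LAP$ by Lemma~\ref{lem:LAP-add}. The first factor is a chain in $k$ by the inductive hypothesis. The second factor is the main obstacle: both its index and its input distribution $\hpd{\gamma_k}{v}{v''}(D)$ vary with $k$, so the plain IH (which assumes a fixed input) does not suffice.

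To overcome this, I would fully unfold clause~2 along the $\fppd$-chain $v = v_0,\ v_1 := \fppd{v_0},\ \ldots,\ v_n = v'$ and use the identity $\hpd{\gamma_k}{v_j}{v_{j+1}}(\hpd{\gamma_k}{v}{v_j}(D)) = \hpd{\gamma_k}{v}{v_{j+1}}(D)$, which follows from iterating clause~2 of $\HH{X}$ applied to $\gamma_k = \HH{\unodes}(\gamma_{k-1})$, to rewrite
\[
\cnst{k}{v}{v'}{D} = \prod_{i=0}^{n-1} \cnst{k}{v_i}{v_{i+1}}{D_i^{(k)}}, \qquad D_i^{(k)} := \hpd{\gamma_k}{v}{v_i}(D).
\]
Each factor falls under clause~3 or~4 (since $v_{i+1} = \fppd{v_i}$) and hence equals either~$1$ or the ratio $\sumd{D_{i+1}^{(k)}}/\sumd{D_i^{(k)}}$. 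The hard step is arguing this product is monotone in $k$: as $k$ grows, some $D_i^{(k)}$ may transition from $0$ to nonzero, introducing new sub-$1$ factors into the product, and individual ratios themselves need not be monotone. I plan to reconcile this by a case analysis that separates the indices where $v_i$ stays outside $Q_0$ until $v_{i+1}$ (which always contribute~$1$) from the others, and then exploits the chain property of $\gamma_k$ to let the surviving ratios telescope, yielding the desired monotonicity.
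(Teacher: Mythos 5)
You have correctly disposed of clauses~1, 3 and~4 of Definition~\ref{def:slicing-const} (for fixed $D$ the ratio $\sumd{\hpd{\gamma_k}{v}{v'}(D)}/\sumd{D}$ is monotone because $\chain{\gamma_k}{k}$ is a chain), and you have put your finger on the genuine crux: in clause~2 the second factor is evaluated at $\hpd{\gamma_k}{v}{v''}(D)$, which moves with $k$, so the plain induction on $\LAP{v}{v'}$ does not go through. But the repair you sketch does not close this. First, the rewriting $D_i^{(k)} = \hpd{\gamma_k}{v}{v_i}(D)$ relies on the composition identity for $\gamma_k$, which comes from clause~\ref{evalk-transitive} of Definition~\ref{def:HHX} and hence holds only for $k \geq 1$; $\gamma_0$ is defined directly and violates it when an intermediate node $v_i$ lies in $Q \cup Q_0$, so the $k=0$ versus $k=1$ comparison needs separate treatment. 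Second, and decisively, the telescoping you appeal to is blocked: the factors that clause~3 forces to be $1$ correspond to segments that stay outside $Q_0$ but may well traverse $\observevv$ nodes of $Q$ (or $Q$-loops), hence are not sum-preserving, so the surviving clause-4 ratios do not share endpoints and nothing cancels. You are left with a product of ratios $\sumd{D_{i+1}^{(k)}}/\sumd{D_i^{(k)}}$ whose numerators \emph{and} denominators grow with $k$ --- exactly the situation you yourself observe need not be monotone. The ``hard step'' therefore remains entirely open in your proposal.

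Moreover, this is not a gap you could have closed by more careful clause bookkeeping, because the distribution-free statement you are inducting on is false: the monotonicity can genuinely fail for correlated $D$. Take the pCFG $1\colon \rassignv{x}{\psi_4};\ 2\colon \rassignv{y}{\psi_4};\ 3\colon \whilec{y>0}{(4\colon \assignv{y}{y-1})};\ 5\colon \observev{x \geq 2};\ 6\colon \retv{y}$, with slicing pair $Q=\{2,3,4,6\}$, $Q_0=\{1,5\}$. For $v=3$, $v'=6$ we have $\fppd{3}=5$, node $3$ stays outside $Q_0$ until $5$, and clause~2 gives $\cnst{k}{3}{6}{D} = \sumd{\selectf{x\geq2}(E_k)}/\sumd{E_k}$ with $E_k = \hpd{\gamma_k}{3}{5}(D)$. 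Choosing $D$ with mass $\tfrac12$ on a store with $y=0,x=3$ and $\tfrac12$ on a store with $y=1,x=0$, one computes $E_1$ concentrated on $x=3$ but $E_2$ split evenly between $x=3$ and $x=0$, so $\cnst{1}{3}{6}{D}=1 > \tfrac12 = \cnst{2}{3}{6}{D}$. Here $\rv{Q}{3}=\{y\}$ and $\rv{Q_0}{3}=\{x\}$ are \emph{not} independent in $D$; under the independence hypothesis (which is all that Theorem~\ref{thm:slicing-correct} uses) the mass exiting the $Q$-loop carries the same relative $x$-distribution at every stage and the ratio stabilizes. So any successful argument must restrict to such $D$ and propagate independence through the clause-2 decomposition via Lemma~\ref{lem:indpd2indpdk} (or Lemma~\ref{lem:slicing-correct}); the paper itself offers nothing beyond the remark that $\chain{\gamma_k}{k}$ is a chain, so the obstacle you found is real, but your proposed route, which never invokes independence, cannot resolve it.
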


\begin{definition}
\label{def:phipk}
Given a slicing pair $(Q,Q_0)$, for $k \geq 0$ define $\Phi_k$ as follows:
\begin{eqnarray*}
\hpd{\Phi_0}{v}{v'}(D) & = & D \mbox{ if $v$ stays outside $Q \cup Q_0$ until $v'$} \\
\hpd{\Phi_0}{v}{v'}(D) & = & 0 \hspace*{4mm} \mbox{ otherwise} \\[1mm]
\Phi_{k} & = & \HH{Q}(\Phi_{k-1}) \mbox{ for $k > 0$}
\end{eqnarray*}
\end{definition}
\begin{lemma}
\label{lem:Phi-limit-phi}
$\chain{\Phi_k}{k}$ is a chain,
with $\limit{k}{\Phi_k} = \limit{k}{\phi_k} = \phi$.
\end{lemma}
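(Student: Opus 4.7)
The plan is to mirror the argument used for $\chain{\gamma_k}{k}$ in Proposition~\ref{prop:gamma-limit-omega}, replacing $\HH{\unodes}$ by $\HH{Q}$ and using Lemma~\ref{lem:phi1-id} in place of the ``staying outside'' observation used there. The key ingredients are the monotonicity of $\HH{Q}$ (which follows from its continuity, Lemma~\ref{lem:HHcont}) and the fact that $\phi$ is a fixed point of $\HH{Q}$.

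First, I would establish that $\Phi_0 \leq \Phi_1$. Fix $(v,v') \in \PD$ and a distribution $D$. If $v$ stays outside $Q \cup Q_0$ until $v'$, then certainly $v$ stays outside $Q$ until $v'$, so Lemma~\ref{lem:phi1-id} applied to $\HH{Q}(\Phi_0)$ gives $\hpd{\Phi_1}{v}{v'}(D) = D = \hpd{\Phi_0}{v}{v'}(D)$. Otherwise $\hpd{\Phi_0}{v}{v'}(D) = 0$, and the inequality is trivial. A routine induction using monotonicity of $\HH{Q}$ then yields $\Phi_k \leq \Phi_{k+1}$ for all $k$, so $\chain{\Phi_k}{k}$ is a chain.

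Next I would sandwich the new chain between the two we already understand by proving $\phi_k \leq \Phi_k \leq \phi$ for all $k \geq 0$. For the base case, $\phi_0 = 0 \leq \Phi_0$ trivially; for $\Phi_0 \leq \phi$, the only nontrivial case is when $v$ stays outside $Q \cup Q_0$ until $v'$, and then (since $v$ also stays outside $Q$ until $v'$) applying Lemma~\ref{lem:phi1-id} to the fixed-point equation $\phi = \HH{Q}(\phi)$ yields $\hpd{\phi}{v}{v'}(D) = D = \hpd{\Phi_0}{v}{v'}(D)$. The inductive step follows immediately from monotonicity of $\HH{Q}$ together with the fact that $\phi$ is a fixed point and $\phi_k = \HH{Q}(\phi_{k-1})$.

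Finally, taking limits (which exist by the chain property and the fact that $\PD \rightarrow (\Dist \contarrow \Dist)$ is a pointed cpo, Lemma~\ref{lem:is-cpo}) gives
\[
\phi = \limit{k}{\phi_k} \leq \limit{k}{\Phi_k} \leq \phi,
\]
so $\limit{k}{\Phi_k} = \phi$. I do not anticipate a serious obstacle: the only subtlety is making sure that ``staying outside $Q \cup Q_0$'' is used to invoke Lemma~\ref{lem:phi1-id} (which only asks for staying outside $Q$), and that the fixed-point semantics $\phi = \HH{Q}(\phi)$ is available to compare $\Phi_0$ with $\phi$ at the base of the sandwich induction.
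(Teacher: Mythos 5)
Your proof is correct and takes essentially the same route as the paper: the paper's one-line argument uses Lemma~\ref{lem:phi1-id} to get $\phi_0 \leq \Phi_0 \leq \phi_1$ and then monotonicity of $\HH{Q}$ to sandwich $\phi_k \leq \Phi_k \leq \phi_{k+1}$, which yields the chain property and the common limit in one stroke, whereas you prove the chain property ($\Phi_0 \leq \Phi_1$ plus monotonicity) separately and bound $\Phi_k$ above by the fixed point $\phi$ instead of by $\phi_{k+1}$. The ingredients are identical (Lemma~\ref{lem:phi1-id}, monotonicity of $\HH{Q}$, and the fixed-point equation), so the difference is only cosmetic.
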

\begin{proof}
By Lemma~\ref{lem:phi1-id}
we get
$\phi_0 \leq \Phi_0 \leq \phi_1$
so by the monotonicity of $\HH{Q}$ we inductively get
\[
\phi_k \leq \Phi_k \leq \phi_{k+1} \mbox{ for all } k \geq 0
\]
which yields the claim.
\end{proof}

\paragraph{Proof of Theorem~\ref{thm:slicing-correct}}
%% In Appendix~\ref{app:proofs} we prove Lemma~\ref{lem:slicing-correct}
%% which will establish Theorem~\ref{thm:slicing-correct} 
We are given $(v,v') \in \PD$, and $D$ such that
$\rv{Q}{v}$ and $\rv{Q_o}{v}$ are independent in $D$;
let $R' = \rv{Q}{v'}$.
For each $s' \in \sto{R'}$ we have the calculation
\begin{eqnarray*}
& & \hpd{\omega}{v}{v'}(D)(s') \\
(Proposition~\ref{prop:gamma-limit-omega}) & = &
\limit{k}{\hpd{\gamma_k}{v}{v'}(D)(s')} 
\\ (Lemma~\ref{lem:slicing-correct}) & = & 
\limit{k}{(\cnst{k}{v}{v'}{D} \cdot \hpd{\Phi_k}{v}{v'}(D)(s'))}
\\ (Lemma~\ref{lem:Phi-limit-phi} & = &
(\limit{k}{\cnst{k}{v}{v'}{D}}) \cdot \hpd{\phi}{v}{v'}(D)(s').
\end{eqnarray*}
With $c = \limit{k}{\cnst{k}{v}{v'}{D}}$
(well-defined by Lemma~\ref{lem:cnst-chain})
we thus have
\[
\dagree{\hpd{\omega}{v}{v'}(D)}{c \cdot 
\hpd{\phi}{v}{v'}(D)}{R'}
\]
which yields the result since if
$v$ stays outside $Q_0$ until $v'$
then $c = 1$
(by Lemma~\ref{lem:cnst1}).

\section{Computing the (Least) Slice}
\label{sec:alg-least}
There always exists at least one slicing pair,
with $Q$ the set of all nodes and with $Q_0$ the empty 
set;
in that case, the sliced program is the same as the original.
Our goal, however, is to find a slicing pair $(Q,Q_0)$ where
$Q$ is as small as possible.
This section describes an algorithm for doing so. 

Looking at Definition~\ref{defn:slicing-pair},
we see a couple of potential obstacles:
\begin{enumerate}
\item
Detecting whether a node is sum-preserving is undecidable,
as it is easy to see that the halting problem can be reduced to it
(see~\cite{Kam+Kat:MFSC-2015} for more results about the decidability of termination in a probabilistic setting).
\item
Since finding the longest acyclic path is in general an NP-hard problem
(as the Hamiltonian path problem can be reduced to it),
it may not be feasible in polynomial time to
detect whether a node is cycle-inducing --- but since we only
consider graphs where each node has at most two outgoing edges,
and since we do not need to actually compute the longest acyclic paths
but only to compare their lengths, there may still exist a polynomial
algorithm for checking if a node is cycle-inducing (finding such
an algorithm is a topic for future work).
\end{enumerate}
Therefore, our approach shall be to assume that we have been provided
(perhaps by an \emph{oracle}) a list $\ESS$ that
approximates the \emph{essential} nodes:
\begin{definition}[essential nodes]
A node $v$ is essential iff
\begin{enumerate}
\item
$v$ is an $\observevv$ node, or
\item
$v$ is cycle-inducing but not sum-preserving.
\end{enumerate}
\end{definition}
We can now provide a computable version of
Definition~\ref{defn:slicing-pair}:
\begin{definition}
\label{defn:slicing-pair-ess}
Let $\ESS$ be a set of nodes that contains all essential nodes.
Then $(Q,Q_0)$ is a \dt{slicing pair wrt.~$\ESS$} iff
\begin{enumerate}
\item
$Q, Q_0$ are both weak slice sets with $\finalv \in Q$;
\item
$Q, Q_0$ are disjoint;
\item
$\ESS \subseteq Q \cup Q_0$.
\end{enumerate}
\end{definition}
If we find $(Q,Q_0)$ satisfying Definition~\ref{defn:slicing-pair-ess}
then $(Q,Q_0)$ will also satisfy Definition~\ref{defn:slicing-pair}
(and hence Theorem~\ref{thm:slicing-correct}, etc, will apply):
\begin{proposition}
If $(Q,Q_0)$ is a slicing pair wrt.~$\ESS$ then $(Q,Q_0)$ is a slicing pair.
\end{proposition}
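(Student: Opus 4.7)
The plan is to verify each of the four conditions in Definition~\ref{defn:slicing-pair} directly from the hypotheses, exploiting the fact that Definition~\ref{defn:slicing-pair-ess} is essentially a weakening obtained by collapsing the case analysis of condition~4 (and the ``all $\observevv$ nodes'' requirement of condition~3) into a single set-inclusion $\ESS \subseteq Q \cup Q_0$.

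First, conditions~1 and~2 of Definition~\ref{defn:slicing-pair} are literally the same as conditions~1 and~2 of Definition~\ref{defn:slicing-pair-ess}, so there is nothing to prove there. Next, for condition~3 of Definition~\ref{defn:slicing-pair}, I would observe that every $\observevv$ node is essential by clause~1 of the definition of essential, so it lies in $\ESS$ by the standing assumption that $\ESS$ contains all essential nodes; together with $\ESS \subseteq Q \cup Q_0$ (condition~3 of Definition~\ref{defn:slicing-pair-ess}) this yields that every $\observevv$ node lies in $Q \cup Q_0$, as required.

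Finally, for condition~4 of Definition~\ref{defn:slicing-pair}, let $v$ be any cycle-inducing node. If $\hpd{\omega}{v}{\fppd{v}}$ is sum-preserving then case~4(b) holds and we are done. Otherwise, by clause~2 of the definition of essential, $v$ is essential; hence $v \in \ESS \subseteq Q \cup Q_0$, which is case~4(a).

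There is no real obstacle here: the proposition is a straightforward unpacking of definitions, and the only ``content'' is the observation that $\ESS$ dominates both the $\observevv$ requirement and the cycle-inducing-but-not-sum-preserving requirement. The proof is therefore essentially a two- or three-line case analysis on the form of a cycle-inducing node and a one-line remark about $\observevv$ nodes.
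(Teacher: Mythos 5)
Your proof is correct and is exactly the intended argument: the paper states this proposition without proof, treating it as an immediate unpacking of Definitions~\ref{defn:slicing-pair} and~\ref{defn:slicing-pair-ess}, and your case analysis (conditions 1--2 are identical, $\observevv$ nodes are essential hence in $\ESS \subseteq Q \cup Q_0$, and any cycle-inducing node is either sum-preserving or essential hence in $Q \cup Q_0$) is precisely that unpacking. No gaps.
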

On the other hand, the converse does not necessarily hold
as $\ESS \not\subseteq Q \cup Q_0$ may happen if non-essential nodes are included in $\ESS$. For example, in Example~\ref{ex4} with
$C$ as ``$\assignv{y}{y + 1}$'' there are no essential nodes,
and thus (cf.~the discussion after Definition~\ref{defn:slicing-pair})
$(\{1,6\},\emptyset)$ is a slicing pair. However
if we were unable to infer that 4 is sum-preserving,
we may have $\ESS = \{4\}$, in which case
$(\{1,6\},\emptyset)$ is not a slicing pair wrt.~$\ESS$.

To approximate the essential nodes (which is outside the scope of this article)
one may use techniques from
\cite{Monniaux:SAS-2001,Cha+San:CAV-2013,Fio+Her:POPL-2015}
for detecting that loops terminate with probability one,
or techniques from \cite{Kam+Kat+etal:ESOP-2016}
for detecting a stronger property:
that the expected run-time is finite.

If the pCFG in question is a translation of a structured command
%% ~Section~\ref{sec:consistent}),
(cf.~the companion article~\cite{Amt+Ban:ProbSemantics-2017}),
it will be safe to let $\ESS$ contain (in addition to the $\observevv$ nodes)
the branching nodes created when translating while loops
(but $\ESS$ does not need to contain the branching nodes created
when translating conditionals since such nodes will not be cycle-inducing). 

With the set $\ESS$ given, we can now develop our algorithm
to find the least $Q$ that for some $Q_0$ satisfies the conditions
in Definition~\ref{defn:slicing-pair-ess}.
We shall measure its running time in terms 
of $|\unodes|$, the number of nodes in the pCFG;
we shall often write $n$ instead of $|\unodes|$
(note that the number of edges is at most $2n$ and thus in $O(n)$).

Our approach has four stages:
\begin{enumerate}
\item
to compute (Section~\ref{subsec:DD}) the data dependences
(in time $O(n^3)$);
\item
to construct an algorithm $\PN$ (Section~\ref{subsec:pn})
that (in linear time) checks is a given set of nodes provides next visibles,
and if not, returns a set of nodes that 
definitely needs to be added;
\item
to construct an algorithm $\LWS$ (Section~\ref{subsec:least-weak})
that computes the least weak slice set that contains a given set of nodes
(each call to $\LWS$ takes time in $O(n^2)$);
\item
to compute (Section.~\ref{subsec:least-pair}) the best slicing
pair wrt.~the given $\ESS$.
\end{enumerate}
The resulting algorithm $\BSP$ (for \underline{b}est \underline{s}licing
\underline{p}air) has a total running time in $O(n^3)$.

\subsection{Computing Data Dependencies}
\label{subsec:DD}

Our algorithms use a boolean table $\DDS$
such that $\DDS(v,v')$ is true
iff $\dds{v}{v'}$ where $\ddssym$ is the reflexive and
transitive closure of 
$\ddsym$ 
defined in Definition~\ref{def:datadep}.
\begin{lemma}
\label{lem:DDS-n3}
There exists an algorithm that computes $\DDS$ in time $O(n^3)$.
\end{lemma}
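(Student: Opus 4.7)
The plan is to split the computation into two stages: first build the one-step data-dependence table $\DD$ in time $O(n^2)$, and then take its reflexive, transitive closure $\DDS$ in time $O(n^3)$, giving an overall bound of $O(n^3)$.

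For the first stage I would iterate over each node $v_1$. Since $v_1$ defines at most one variable, I consider the (unique, if any) $x \in \defv{v_1}$ and perform a forward traversal of the pCFG starting from the successors of $v_1$. The traversal has the standard BFS/DFS shape, but with a ``barrier'' rule: from a node $v$ with $x \in \defv{v}$ we do not expand further. Whenever the traversal visits a node $v_2$ with $x \in \usev{v_2}$, I record $\DD(v_1,v_2) := \true$. This faithfully enumerates the pairs satisfying Definition~\ref{def:datadep}: a witnessing path exists iff some successor of $v_1$ reaches $v_2$ through intermediate nodes that do not define $x$; the barrier rule is exactly what enforces the interior condition, while still allowing the endpoints $v_1$ and $v_2$ themselves to (re)define $x$. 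Each traversal visits $O(n)$ nodes and $O(n)$ edges (since each node has at most two outgoing edges), so the per-source cost is $O(n)$ and the total cost of stage one is $O(n^2)$.

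For the second stage I treat $\DD$ as an $n \times n$ boolean matrix and compute $\DDS$ by Warshall's algorithm in $O(n^3)$ time, finally setting the diagonal to $\true$ to enforce reflexivity. The combined cost is $O(n^2) + O(n^3) = O(n^3)$, as required.

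The only point requiring care is the ``interior'' clause of Definition~\ref{def:datadep}: because $v_2$ itself is allowed to define $x$ (it is an endpoint, not interior), we must not let the barrier rule suppress $v_2$ as a destination, and conversely we must block propagation strictly \emph{through} any other node that defines $x$. The algorithm above handles this cleanly by separating the check ``does $v$ use $x$?'' (performed unconditionally on arrival) from the check ``does $v$ define $x$?'' (used only to decide whether to keep expanding). No further subtlety is needed, since the statement only demands the cubic bound rather than a tighter one.
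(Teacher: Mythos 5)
Your proposal is correct and follows essentially the same route as the paper: for each defining node, a linear-time graph search that is cut off at nodes redefining the variable yields the one-step table $\DD$ in $O(n^2)$ total, and then a Floyd--Warshall-style reflexive transitive closure gives $\DDS$ in $O(n^3)$. Your extra care about checking ``uses $x$'' on arrival while using ``defines $x$'' only to stop expansion is a correct (and welcome) elaboration of the same idea.
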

\begin{proof}
First, for each node $v$ with $\defv{v} \neq \emptyset$,
we find the nodes $v'$ with $\dd{v}{v'}$ which
can be done in time $O(n)$
by a depth-first search which does not go past the nodes
that redefine the variable defined in $v$.
Thus in time $O(n^2)$, we can compute a boolean table $\DD$ such that
$\DD(v,v')$ is true iff $\dd{v}{v'}$.
To compute $\DDS$ we now take the reflexive and transitive closure of $\DD$
which can be done in time $O(n^3)$ (for example
using Floyd's algorithm).
\end{proof}
Given $\DDS$, it is easy to ensure that sets are closed under data dependence,
and we shall
do that in an incremental way, as stated by the following result:
%% (proved in Appendix~\ref{app:proofs}):
\begin{lemma}
\label{lem:DDclose}
There exists an algorithm $\DDclose$
which given a node set $Q$ that is closed under data dependence,
and a node set $Q_1$,
returns the least set containing $Q$ and $Q_1$
that is closed under data dependence.
Moreover, assuming $\DDS$ is given,
$\DDclose$ runs in time $O(n \cdot |Q_1|)$.
\end{lemma}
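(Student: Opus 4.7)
The plan is to define $\DDclose(Q,Q_1)$ as a simple incremental procedure that exploits the fact that $\DDS$ is already the \emph{transitive} closure of $\ddsym$, so no iterative propagation is required. Concretely, I would initialize $Q' := Q$, and then for each $v' \in Q_1$ iterate once through all $n$ nodes $v$, adding $v$ to $Q'$ whenever $\DDS(v,v')$ holds; finally also add $v'$ itself to $Q'$. A constant-time membership representation (for instance, a boolean array indexed by node) lets every individual insertion and lookup run in $O(1)$, giving the claimed $O(n \cdot |Q_1|)$ bound.

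For correctness, I would verify two directions. To see that the returned set $Q'$ is closed under data dependence, suppose $w \in Q'$ and $\dd{u}{w}$. There are three cases: if $w \in Q$, closure of $Q$ under data dependence gives $u \in Q \subseteq Q'$; if $w \in Q_1$, then $\dds{u}{w}$ so $u$ is added by the loop iteration for $w$; if $w$ was added because $\dds{w}{v'}$ for some $v' \in Q_1$, then $\dd{u}{w}$ together with $\dds{w}{v'}$ yields $\dds{u}{v'}$ by transitivity of $\ddssym$, so $u$ is added by the loop iteration for $v'$. Conversely, any set $Q''$ containing $Q \cup Q_1$ and closed under data dependence must, by a routine induction on the length of a chain witnessing $\dds{v}{v'}$, contain every $v$ with $\dds{v}{v'}$ for some $v' \in Q_1$, so $Q' \subseteq Q''$, establishing leastness.

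For the running time, assuming $\DDS$ is represented so that $\DDS(v,v')$ can be looked up in $O(1)$ (which is the natural representation produced by the $O(n^3)$ preprocessing of Lemma~\ref{lem:DDS-n3}) and $Q'$ is maintained with $O(1)$ membership test and insertion, the outer loop executes $|Q_1|$ times and each iteration does $O(n)$ work, yielding the claimed $O(n \cdot |Q_1|)$ bound.

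There is no serious obstacle here; the only subtle point worth emphasizing in the write-up is why a single sweep suffices. This is precisely because $\DDS$ is the reflexive-transitive closure of $\ddsym$: any node reachable from $Q_1$ via an arbitrarily long chain of one-step data dependences is already recorded as a direct $\DDS$-predecessor of some member of $Q_1$, so no second pass is needed to handle newly inserted nodes.
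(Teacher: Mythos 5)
Your proposal is correct and follows essentially the same route as the paper's own proof: a single sweep over $Q_1$ adding every $\DDS$-predecessor, with closure under data dependence argued via transitivity of $\ddssym$ (so any $\ddsym$-predecessor of an added node is itself a $\DDS$-predecessor of some member of $Q_1$), leastness from the fact that any closed superset of $Q \cup Q_1$ must contain all such predecessors, and the $O(n \cdot |Q_1|)$ bound from constant-time table lookups. Your added remarks on the data-structure representation and the explicit induction for leastness are fine elaborations of the same argument.
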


\subsection{Checking For Next Visibles}
\label{subsec:pn}
%% The algorithm is quite similar in spirit to what was presented
%% in \cite[Sect.9]{Amtoft+etal:TR-2013} though the settting there
%% is somewhat different.
A key ingredient in our approach
is the function $\PN$, presented in Figure~\ref{fig:provides-next},
that for a given $Q$ checks if it provides next visibles,
and if not, returns a non-empty set of nodes which must be part of any
set that provides next visibles and contains $Q$.
The function $\PN$ works by doing a
backward breadth-first search (with $F$ being the current ``frontier'')
from $Q \cup \mkset{\finalv}$ 
to find (using the table $N$ that approximates ``next visible'')
the first node(s), if any, from which
two nodes in $Q \cup \finalv$ are reachable without going through $Q$; 
such ``conflict'' nodes are stored in $C$ and
must be included in any superset providing next visibles.

\begin{figure}
\algname{$\PN$}{$Q$}
\begin{algtab*}
\algbegin
  $F \leftarrow Q \cup \{\finalv\}$ \\
  $C \leftarrow \emptyset$ \\
  \algforeach{$v \in \unodes \setminus F$}
    $N[v] \leftarrow \bot$ \\
  \algend
  \algforeach{$v \in F$}
    $N[v] \leftarrow v$ \\
  \algend
  \algwhile{$F \neq \emptyset\ \wedge\ C = \emptyset$}
    $F' \leftarrow \emptyset$ \\
    \algforeach{edge from $v \notin Q$ to $v' \in F$} 
        \algif{$N(v) = \bot$}
           $N(v) \leftarrow N(v')$ \\
           $F' \leftarrow F' \cup \{v\}$ \\
        \algelsif{$N(v) \neq N(v')$}
           $C \leftarrow C \cup \{v\}$ \\
    \algend
    \algend
    $F \leftarrow F'$ \\
   \algend
  \algreturn{$C$}  
\end{algtab*}

\caption{\label{fig:provides-next} 
An algorithm to check if $Q$ provides next visibles.}
\end{figure}

\setlength{\parindent}{0.15in}

\begin{example}
%\begin{exmp}
\label{ex:PN1}
Consider the program $P_1$ from Example~\ref{ex1}.
\begin{itemize}
\item
Calling $\PN$ on $\{1,4\}$ returns $\emptyset$
after a sequence of iterations
where $F$ is first $\{1,4\}$ and next $\{3\}$ and next 
$\{2\}$ and finally $\emptyset$.
\item
Calling $\PN$ on $\{2,3\}$ returns $\emptyset$
after a sequence of iterations
where $F$ is first $\{1\}$ and finally $\emptyset$.
\end{itemize}
%\end{exmp}
\end{example}

%\begin{exmp}
\begin{example}
\label{ex:PN2}
Consider the program $P_4$ from Example~\ref{ex4},
with pCFG depicted in Figure~\ref{fig:ex34r}(right).
Then
\begin{itemize}
\item
$\PN(\{1,6\})$
returns $\emptyset$, after a sequence of iterations
where $F$ is first $\{1,6\}$ and next $\{3,4\}$ and next 
$\{2,5\}$ and finally $\emptyset$.
\item
$\PN(\{2,4,5\})$ returns $\{3\}$,
as initially $F = \{2,4,5,6\}$ which causes
the first iteration of the while loop to put $3$ in $C$.
\end{itemize}
%\end{exmp}
\end{example}
%% In Appendix~\ref{app:proofs} 
The following result establishes the correctness of $\PN$:
\begin{lemma}
\label{lem:PN}
The function $\PN$ runs in time $O(n)$ and, given $Q$,
returns $C$ such that $C \cap Q = \emptyset$ and
\begin{itemize}
\item
if $C$ is empty then $Q$ provides next visibles
\item
if $C$ is non-empty then %$C \cap Q = \emptyset$
%and 
all supersets of $Q$ that provide next
visibles will contain $C$.
\end{itemize}
\end{lemma}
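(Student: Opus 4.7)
The running time is $O(n)$: each directed edge $v \to v'$ of the pCFG is inspected at most once by the inner \texttt{for}-loop, namely in the single iteration where $v' \in F$ (each node enters $F$ at most once, either initially or by being added to $F'$), and initialization is linear; together with the bound $|E| \le 2n$ this gives the claim. The constraint $C \cap Q = \emptyset$ is immediate because additions to $C$ only occur in the inner loop whose guard restricts $v \notin Q$. Throughout the argument I use the following structural invariant, proved by induction on the iteration in which $N[v]$ is first assigned: whenever $N[v] = u \neq \bot$, there exists a path from $v$ to $u$ with $u \in Q \cup \{\finalv\}$ whose interior avoids $Q$. The base case $v \in Q \cup \{\finalv\}$ with trivial path is clear, and the inductive step prepends $v$ (which satisfies $v \notin Q$) to the clean path provided for $v'$ by the hypothesis.

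If the algorithm terminates with $C = \emptyset$, first note that every node has $N[v]$ set, since every node can reach $\finalv$, and truncating any such path at its first $Q \cup \{\finalv\}$-node yields a witness clean path that the BFS discovers. Then I show $N[v] = \nextq{Q}{v}$ by induction on $\LAP{v}{N[v]}$: any path from $v$ to $Q \cup \{\finalv\}$ truncated to its first entry into $Q \cup \{\finalv\}$ starts with some edge $v \to v'$, whose tail by the inductive hypothesis on $v'$ ends at $N[v']$; since no conflict ever arose, all of $v$'s successors $v'$ yielded a common $N$-value, so $N[v] = N[v']$ is the endpoint and therefore lies on the original path.

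If $C \neq \emptyset$, pick any $v \in C$. The algorithm added $v$ to $C$ because $v$ has two distinct successors $v_1, v_2$ with $N[v_1] = u_1 \neq u_2 = N[v_2]$; combining this with the structural invariant gives clean paths $\pi_i : v \to v_i \to \cdots \to u_i$ to distinct $u_1, u_2 \in Q \cup \{\finalv\}$. A crucial subsidiary observation is that no conflict can have been detected in any iteration strictly earlier than the one detecting $v$'s conflict (else the \texttt{while}-loop would have exited before $v$ was ever processed), from which one deduces, via Paragraph~2's argument restricted to the conflict-free portion of the run, that $N[v_i]$ is in fact the \emph{unique} endpoint $\nextq{Q}{v_i}$ of every clean path from $v_i$ to $Q \cup \{\finalv\}$.

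To show $v$ belongs to every next-visible-providing superset $Q' \supseteq Q$, suppose for contradiction $v \notin Q'$ and set $w = \nextq{Q'}{v}$, which must lie on both $\pi_1$ and $\pi_2$. A case analysis concludes: if $w \in \{u_1, u_2\}$ or $w = \finalv$, contradictions arise because $\finalv$ has no outgoing edges and no $Q$-node may lie in the clean interior of either path; otherwise $w \in Q' \setminus Q$ lies strictly interior to both paths, so concatenating the $v_1$-to-$w$ prefix of $\pi_1$ with the $w$-to-$u_2$ suffix of $\pi_2$ produces a clean path from $v_1$ to $u_2 \neq u_1$, contradicting the uniqueness of $v_1$'s endpoint established above. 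The main obstacle is this last step; it hinges on justifying uniqueness via the delicate observation about the ordering of conflict detections within the BFS, where I must argue that any conflict at a node $v^\dagger$ is detected in iteration $\max_j d(v^\dagger_j) < d(v^\dagger)$, and that for nodes reachable from $v$ through $v_1$ or $v_2$ this is strictly earlier than $v$'s own detection iteration.
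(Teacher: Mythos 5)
Your $C=\emptyset$ direction and the complexity/disjointness claims are fine in spirit, but the $C\neq\emptyset$ direction rests on a false intermediate claim. In paragraph 3 you assert that, because no conflict was detected before the one at $v$, the value $N[v_i]$ is the \emph{unique} endpoint $\nextq{Q}{v_i}$ of \emph{every} clean path from $v_i$ to $Q\cup\{\finalv\}$. That is not a consequence of conflict-freeness up to $v$'s detection, and it is false in general: let $v$ branch to $v_1$ and $v_2$, with $v_2\to q_2$, $v_1\to q_1$, and $v_1\to w\to w'\to q_2$, where $q_1,q_2\in Q$. The first BFS level assigns $N[v_1]=q_1$, $N[v_2]=q_2$, $N[w']=q_2$; the second level sets $N[v]=q_1$ and immediately detects the conflict at $v$, so the loop exits --- yet $v_1$ has clean paths to both $q_1$ and $q_2$, so $\nextq{Q}{v_1}$ does not exist, and $v_1$'s own ``conflict'' would only have surfaced at level $3$ via the edge $v_1\to w$. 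This also refutes the ``subsidiary observation'' you flag as the main obstacle: a node reachable from $v$ through $v_1$ (here $v_1$ itself) can have its conflict detected strictly \emph{later} than $v$'s. Consequently the last step of paragraph 4 (concatenating the $v_1$-to-$w$ prefix of $\pi_1$ with the $w$-to-$u_2$ suffix of $\pi_2$ and contradicting ``uniqueness of $v_1$'s endpoint'') has no valid premise; moreover the concatenated path can be longer than the number of completed conflict-free iterations, so no level-bounded version of uniqueness at $v_1$ covers it either. What conflict-freeness actually yields is uniqueness only among clean paths of bounded length ($N^m(x)$ is the common endpoint of all clean paths from $x$ with at most $m$ nodes, while $C^m=\emptyset$), and the paper's proof places the contradiction at the common node $w$ (its $v_0$) rather than at $v_1$: both witness paths from $v$ have at most $m$ nodes, so their suffixes from $w\neq v$ have at most $m-1$ nodes and fall inside the horizon where $C$ was still empty, forcing the two endpoints to coincide. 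To repair your argument you need these explicit level-indexed invariants and you need to move the contradiction to $w$.

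A secondary, fixable flaw: in paragraph 2 you induct on $\LAP{v}{N[v]}$. This measure is not yet known to be well-defined ($N[v]$ has not been shown to postdominate $v$ at that point), and it does not decrease when passing from $v$ to a successor in the presence of cycles: in the right pCFG of Figure 2 with $Q=\{1,6\}$ one has $N[4]=N[5]=6$ but $\LAP{4}{6}=1<2=\LAP{5}{6}$, so the inductive hypothesis cannot be applied to the successor $5$ when arguing about $4$. Inducting instead on the length of the truncated witness path (or on the BFS iteration at which $N$ was assigned, as the paper does) makes that part go through.
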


\subsection{Computing Least Weak Slice Set}
\label{subsec:least-weak}
We are now ready to define, in Figure~\ref{fig:lws-alg},
a function $\LWS$ 
which constructs the least weak slice set that contains a given set $\hat{Q}$;
it works by successively adding nodes to the set until it 
is closed under 
data dependence, and provides next visibles. 

\begin{figure}
\algname{$\LWS$}{$\hat{Q}$}
\begin{algtab*}
\algbegin
  $Q \leftarrow \DDclose(\emptyset,\hat{Q})$ \\
  $C \leftarrow \PN(Q)$ \\
  \algwhile{$C \neq \emptyset$}
     $Q \leftarrow \DDclose(Q,C)$ \\ 
     $C \leftarrow \PN(Q)$ \\
  \algend
  \algreturn{$Q$}
\end{algtab*}

\caption{\label{fig:lws-alg}
An algorithm that finds the least weak slice set containing $\hat{Q}$.}
\end{figure}

\setlength{\parindent}{0.15in}

\begin{example}
%\begin{exmp}
\label{ex:LWS1}
We shall continue Example~\ref{ex:PN1}
(which considers the program $P_1$ from Example~\ref{ex1}).
First observe that the
non-trivial true entries of $\DDS$ 
are $(1,4)$ (since $\dd{1}{4}$) and $(2,3)$.
\begin{itemize}
\item
When running $\LWS$ on $\{4\}$,
initially $Q = \{1,4\}$
which is also the final value of $Q$ since 
$\PN(\{1,4\})$ returns $\emptyset$.
\item
When running $\LWS$ on $\{3\}$,
initially $Q = \{2,3\}$
which is also the final value of $Q$ since 
$\PN(\{2,3\})$ returns $\emptyset$.
\end{itemize}
%\end{exmp}
\end{example}

\begin{example}
%\begin{exmp}
\label{ex:LWS2}
We shall continue Example~\ref{ex:PN2}
(which considers the program $P_4$ from Example~\ref{ex4},
with pCFG depicted in Figure~\ref{fig:ex34r}(right)).

First observe that 
$\ddsym$ is given as follows:
$\dd{1}{3}$, $\dd{1}{6}$, $\dd{2}{4}$, $\dd{2}{5}$, $\dd{5}{4}$, 
and $\dd{5}{5}$.
\begin{itemize}
\item
When running $\LWS$ on $\{6\}$, initially $Q = \{1,6\}$
which is also the final value of $Q$ since $\PN(\{1,6\})$
returns $\emptyset$.
\item
When running $\LWS$ on $\{4\}$, we initially have $Q = \{2,4,5\}$.
The first call to $\PN$ thus (Example~\ref{ex:PN2}) returns $\{3\}$.
Since $\dd{1}{3}$ holds, the next iteration of $\LWS$ will have 
$Q = \{1,2,3,4,5\}$ which is also the final value of $Q$
since $\PN$ will return $\emptyset$ on that set.
\end{itemize}
%\end{exmp}
\end{example}
%% In Appendix~\ref{app:proofs} 
The following result establishes the correctness of $\LWS$:
\begin{lemma}
\label{lem:lws-correct}
The function $\LWS$, given $\hat{Q}$,
returns $Q$ such that 
\begin{itemize}
\item
$Q$ is a weak slice set
\item
$\hat{Q} \subseteq Q$
\item
if $Q'$ is a weak slice set with $\hat{Q} \subseteq Q'$
then $Q \subseteq Q'$.
\end{itemize}
Moreover, assuming $\DDS$ is given, 
$\LWS$ runs in time $O(n^2)$.
\end{lemma}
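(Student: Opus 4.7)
The plan is to establish the three correctness properties and then the complexity bound separately. The containment $\hat{Q}\subseteq Q$ and the fact that the returned $Q$ is a weak slice set will follow almost directly from the loop structure and the guarantees of the subroutines $\DDclose$ and $\PN$; the interesting part is minimality, which I would prove by maintaining a loop invariant.

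First I would verify termination: $Q$ strictly grows whenever $\PN(Q)$ returns a non-empty $C$ (because Lemma~\ref{lem:PN} guarantees $C\cap Q=\emptyset$, and $\DDclose(Q,C)\supseteq Q\cup C$), and $Q\subseteq \unodes$ is bounded, so the $\mathbf{while}$ loop runs at most $n$ times. On exit we have $C=\PN(Q)=\emptyset$, so by Lemma~\ref{lem:PN} the final $Q$ provides next visibles; closure under data dependence is preserved throughout, since $\DDclose$ is invoked at every update (Lemma~\ref{lem:DDclose}). Hence the returned $Q$ is a weak slice set, and $\hat{Q}\subseteq Q$ holds because the initial assignment sets $Q=\DDclose(\emptyset,\hat{Q})\supseteq\hat{Q}$ and $Q$ only grows thereafter.

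For minimality, let $Q'$ be any weak slice set with $\hat{Q}\subseteq Q'$. I would prove the invariant $Q\subseteq Q'$ at every point in the execution. Initially, since $Q'$ is closed under data dependence and contains $\hat{Q}$, the minimality clause of Lemma~\ref{lem:DDclose} gives $\DDclose(\emptyset,\hat{Q})\subseteq Q'$. For the inductive step, assume $Q\subseteq Q'$ before an iteration. Then $Q'\supseteq Q$ provides next visibles, so by the second bullet of Lemma~\ref{lem:PN} the returned $C=\PN(Q)$ satisfies $C\subseteq Q'$. Again by Lemma~\ref{lem:DDclose}, $\DDclose(Q,C)\subseteq Q'$, preserving the invariant. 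At termination this yields $Q\subseteq Q'$, as desired.

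Finally, for the $O(n^2)$ bound: $\PN$ costs $O(n)$ per call (Lemma~\ref{lem:PN}), and there are at most $n$ iterations, contributing $O(n^2)$. For the $\DDclose$ calls, I would observe that if $Q_1^{(1)}, Q_1^{(2)}, \ldots$ are the successive ``added'' sets passed to $\DDclose$ across iterations, then these are pairwise disjoint (each $Q_1^{(i)}=C$ is disjoint from the current $Q$), so $\sum_i |Q_1^{(i)}| \leq n$; with each call costing $O(n\cdot |Q_1^{(i)}|)$, the total cost of the $\DDclose$ calls is $O(n^2)$. The only real subtlety I anticipate is confirming this amortized bound on the $\DDclose$ invocations, since a naive per-iteration estimate would give $O(n^3)$; using the disjointness of the $C$'s returned by consecutive $\PN$ calls (which follows from $C\cap Q=\emptyset$ in Lemma~\ref{lem:PN} combined with the monotone growth of $Q$) is the key observation that tightens the bound to $O(n^2)$.
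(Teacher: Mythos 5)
Your proof is correct and follows essentially the same route as the paper's: the same loop invariant ($Q$ closed under data dependence, $\hat{Q}\subseteq Q$, and $Q$ contained in any weak slice set containing $\hat{Q}$), the same appeals to Lemmas~\ref{lem:PN} and \ref{lem:DDclose}, and the same exit argument. Your amortized accounting for the $\DDclose$ calls via disjointness of the successive sets $C$ is exactly the justification the paper leaves implicit when it asserts the total $\DDclose$ time is $O(n^2)$, so this is a faithful (slightly more detailed) version of the paper's proof.
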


\subsection{Computing The Best Slicing Pair}
\label{subsec:least-pair}

We are now ready to define, in Figure~\ref{fig:best-slicing-pair},
an algorithm $\BSP$ which given a set $\ESS$ that contains 
all essential nodes (for an implicitly given pCFG)
returns a slicing pair $(Q,Q_0)$ such that $Q \subseteq Q'$
for any other slicing pair $(Q',\_)$.
The idea is to build $Q$ incrementally, with $Q$ initially containing only $\finalv$; each iteration will 
process the nodes in $\ESS$ that are not already in $Q$, 
and add them to $Q$ (via $F$) if they cannot 
be placed in $Q_0$ without causing $Q$ and $Q_0$ to overlap.

\begin{figure}
\algname{$\BSP$}{$\ESS$}
\begin{algtab*}
\algbegin
  $W \leftarrow \ESS$ \\
  \algforeach{$v \in W \cup \mkset{\finalv}$}
     $Q_v \leftarrow$  \algcall{$\LWS$}{$\mkset{v}$} \\
  \algend
  $Q \leftarrow \emptyset$ \\ 
  $F \leftarrow Q_{\finalv}$ \\
  \algwhile{$F \neq \emptyset$}
    {\bf Invariants:} \\
      \hspace*{5mm} $Q$ and $F$ are both weak slice sets, with $\finalv \in Q \cup F$
          \\
      \hspace*{5mm} $W \subseteq \ESS$ and if $v \in W$ then $Q_v \cap Q = \emptyset$
       \\
      \hspace*{5mm} if $v \in \ESS$ but $v \notin W$ then $v \in Q \cup F$
      \\
     \hspace*{5mm} if $(Q',Q'_0)$ is a slicing pair wrt.~$\ESS$
then $Q \cup F \subseteq Q'$ \\
    $Q \leftarrow Q \cup F$ \\ 
    $F \leftarrow \emptyset$ \\
    \algforeach{$v \in W$}
      \algif{$Q_v \cap Q \neq \emptyset$}
         $W \leftarrow W \setminus \mkset{v}$ \\ 
         $F \leftarrow F \cup Q_v$; \\
    \algend
    \algend
    \algend
 $Q_0 \leftarrow  \bigcup_{v \in W} Q_v$ \\
 \algreturn{$(Q,Q_0)$}  
\end{algtab*}

\caption{\label{fig:best-slicing-pair} 
Finding the best slicing pair ($\BSP$).}
\end{figure}

\setlength{\parindent}{0.15in}

\begin{example}
%\begin{exmp}
We shall continue Examples~\ref{ex:PN1} and \ref{ex:LWS1}
(which consider the program $P_1$ from Example~\ref{ex1}).
Here 3 is the only essential node so we may assume that
$\ESS = \{3\}$;
$\BSP$ thus needs to run
$\LWS$ on $\{4\}$ and on $\{3\}$ 
and from Example~\ref{ex:LWS1} we see
that we get
$Q_4 = \{1,4\}$ and $Q_3 = \{2,3\}$.
When the members of $W = \{3\}$ are first examined in the $\BSP$ algorithm,
we have $Q = Q_4$ and thus $Q_3 \cap Q = \emptyset$.
Hence the while loop terminates after one iteration, with $Q = \{1,4\}$,
and subsequently we get $Q_0 = Q_3 = \{2,3\}$.
%\end{exmp}
\end{example}

\begin{example}
%\begin{exmp}
We shall continue Examples~\ref{ex:PN2} and \ref{ex:LWS2}
(which consider the program $P_4$ from Example~\ref{ex4},
with pCFG depicted in Figure~\ref{fig:ex34r}(right)).
We know from Example~\ref{ex:cycle-induce} that
node 4 is cycle-inducing but node 3 is not;
in Example~\ref{ex:sem-ex} we showed that
node 4 is essential when $\labv{5}$ is an assignment $\assignv{y}{1}$
(as then $\hpd{\omega}{4}{6}$ is not sum-preserving)
and that node 4 is not essential
when $\labv{5}$ is an assignment  $\assignv{y}{y+1}$ 
or a random assignment $\rassignv{y}{\psi_4}$
(as then $\hpd{\omega}{4}{6}$ is sum-preserving).

There are thus two natural possibilities for $\ESS$: the set $\{4\}$,
and the empty set; we shall consider both:
\begin{itemize}
\item
First assume that $\ESS = \emptyset$.
$\BSP$ thus needs to run
$\LWS$ on only $\{6\}$,
and from Example~\ref{ex:LWS2} we see
that we get $Q_6 = \{1,6\}$.
As $W = \emptyset$,
the while loop terminates after one iteration with
$Q = Q_6 = \{1,6\}$,  
and subsequently we get $Q_0 = \emptyset$.
\item
Next assume that $\ESS = \{4\}$.
$\BSP$ thus needs to run 
$\LWS$ on $\{4\}$ and $\{6\}$,
and from Example~\ref{ex:LWS2} we see
that we get $Q_4 = \{1,2,3,4,5\}$ and $Q_6 = \{1,6\}$.

When the members of $W = \{4\}$ are first examined in the $\BSP$ algorithm,
we have $Q = Q_6$ and thus $Q_4 \cap Q = \{1\} \neq \emptyset$.
Hence $W$ will become empty, and eventually the loop will terminate with
$Q = Q_6 \cup Q_4 = \{1,2,3,4,5,6\}$
(and we also get $Q_0 = \emptyset$).
\end{itemize}
%\end{exmp}
\end{example}
That $\BSP$ produces 
the \emph{best slicing pair} is captured by
the following result:
%% (proved in Appendix~\ref{app:proofs}):
\begin{thm}
\label{thm:BSP-correct}
The algorithm $\BSP$ returns,
given a pCFG and a set of nodes $\ESS$,
sets $Q$ and $Q_0$
such that
\begin{itemize}
\item
$(Q,Q_0)$ is a slicing pair wrt.~$\ESS$
\item
if $(Q',Q'_0)$ is a slicing pair wrt.~$\ESS$
then $Q \subseteq Q'$.
\end{itemize}
Moreover, $\BSP$ runs in time $O(n^3)$
(with $n$ the number of nodes in the pCFG).
\end{thm}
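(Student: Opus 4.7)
The plan is to prove the theorem by verifying the four loop invariants listed in Figure~\ref{fig:best-slicing-pair}, then reading off the three claims (correctness, minimality, complexity) from the invariants at termination. I would organize the argument around preservation of the minimality invariant ``if $(Q',Q'_0)$ is a slicing pair wrt.~$\ESS$ then $Q \cup F \subseteq Q'$'', since the other three invariants are mostly bookkeeping.

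For initialization, I would check that with $Q = \emptyset$ and $F = Q_{\finalv}$: the first invariant follows from Lemma~\ref{lem:lws-correct} (which makes $Q_{\finalv}$ a weak slice set containing $\finalv$) and the fact that the empty set is vacuously a weak slice set; the second and third invariants are immediate from $W = \ESS$; and the minimality invariant holds because any slicing pair $(Q',Q'_0)$ wrt.~$\ESS$ has $\finalv \in Q'$, so by the minimality clause of Lemma~\ref{lem:lws-correct}, $Q_{\finalv} \subseteq Q'$.

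For preservation, the updates $Q \leftarrow Q \cup F$ and $F \leftarrow \emptyset$ keep $Q$ a weak slice set by Lemma~\ref{lem:weak-union}. The nontrivial step is when the inner loop encounters some $v \in W$ with $Q_v \cap Q \neq \emptyset$: I would argue that $Q_v \subseteq Q'$ for every slicing pair $(Q',Q'_0)$ wrt.~$\ESS$. Since $v \in W \subseteq \ESS \subseteq Q' \cup Q'_0$ and $Q \subseteq Q'$ by the inductive hypothesis, placing $v$ in $Q'_0$ would, by the minimality of $\LWS$ applied to the weak slice set $Q'_0$, force $Q_v \subseteq Q'_0$; combined with $Q_v \cap Q \neq \emptyset$ this would contradict the disjointness of $Q'$ and $Q'_0$. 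Hence $v \in Q'$, and minimality of $\LWS$ applied to $Q'$ gives $Q_v \subseteq Q'$. This preserves the minimality invariant when $Q_v$ is folded into $F$; $F$ itself remains a weak slice set by Lemma~\ref{lem:weak-union}, and the remaining bookkeeping invariants are straightforward.

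At termination ($F = \emptyset$), the first invariant gives $Q$ a weak slice set with $\finalv \in Q$; $Q_0 = \bigcup_{v \in W} Q_v$ is a weak slice set by Lemma~\ref{lem:weak-union}; the second invariant yields $Q \cap Q_0 = \emptyset$; the third, together with $F = \emptyset$, gives $\ESS \setminus W \subseteq Q$ while $v \in W$ gives $v \in Q_v \subseteq Q_0$, so $\ESS \subseteq Q \cup Q_0$; and the minimality invariant gives $Q \subseteq Q'$ for any competing slicing pair. For complexity, I would bound: precomputing $\DDS$ costs $O(n^3)$ via Lemma~\ref{lem:DDS-n3}; the $O(n)$ calls to $\LWS$, each costing $O(n^2)$ by Lemma~\ref{lem:lws-correct}, contribute $O(n^3)$; and since each non-terminating iteration of the while loop strictly decreases $|W|$, there are $O(n)$ iterations, each of which scans $W$ in $O(n^2)$ to perform intersection checks, for another $O(n^3)$. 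The main obstacle is the preservation argument for the minimality invariant: pinning down precisely why an essential node whose natural slice hits $Q$ cannot be relegated to $Q_0$ uses disjointness and $\LWS$ minimality in combination, and is where the algorithm's cleverness lies.
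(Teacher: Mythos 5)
Your proposal is correct and follows essentially the same route as the paper's proof: the same loop invariants from Figure~\ref{fig:best-slicing-pair}, the same key preservation argument (a node $v \in W$ with $Q_v \cap Q \neq \emptyset$ cannot lie in $Q'_0$, since $\LWS$-minimality would then force $Q_v \subseteq Q'_0$, contradicting disjointness with $Q'$ given $Q \subseteq Q'$), and the same termination and $O(n^3)$ complexity accounting via Lemmas~\ref{lem:DDS-n3} and~\ref{lem:lws-correct}.
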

%% The algorithm given in~\cite{Danicic+etal:TCS-2011}
%% for computing (their version of) weak slices runs in cubic time.
We do not expect that there exists an algorithm wih
lower asymptotic complexity,
since we need to compute data dependencies which is known to involve
computing a transitive closure.

\section{Improving Precision}
\label{sec:future}
Section~\ref{sec:alg-least} presented an 
algorithm for computing the least slice 
satisfying Definition~\ref{defn:slicing-pair-ess};
such a slice will also 
satisfy Definition~\ref{defn:slicing-pair}
and hence be semantically correct
(as phrased in Theorem~\ref{thm:slicing-correct}).
Still, a smaller semantically correct slice may exist;
in this section we 
briefly discuss two approaches for finding such slices:
semantic analysis of the pCFG, and syntactic transformation of the pCFG.
(Obviously, it is undecidable to always find
the smallest semantically correct slice.)

\subsection{Improvement by Semantic Analysis}
Already in Section~\ref{sec:alg-least} we discussed
how a precise (termination) analysis may help us
to construct a set $\ESS$ that contains fewer (if any)
non-essential nodes which in turn may enable us to slice away some loops.

The size of the slice may also be reduced
if a semantic analysis can determine
that a boolean expression always evaluates to $\true$.
This is illustrated by the pCFGs in Figure~\ref{fig:redundant},
as we shall now discuss.

First consider the pCFG on the left.
As $y = 7$ holds at node 4, the $\observevv$ statement can be discarded,
and indeed, the pCFG is semantically equivalent to the pCFG
containing only nodes 1 and 5. Yet it
has no smaller syntactic slice, since
if $(Q,Q_0)$ is a slicing pair, implying 
$5 \in Q$ and thus $1 \in Q$, then
$Q = \{1,2,3,4,5\}$ as we now show.
If $4 \in Q_0$ then $3 \in Q_0$ (as $Q_0$ provides next visibles)
and thus $1 \in Q_0$ (by data dependence) 
which contradicts $Q \cap Q_0 = \emptyset$. 
As $4$ (as it is essential)
must belong to $Q \cup Q_0$, 
we see that $4 \in Q$; but then $2 \in Q$
(by data dependence) and $3 \in Q$ (as $Q$ provides next visibles).

Next consider the (generic) pCFG on the right,
where $B_1$ and $B_2$ are expressions involving $y$.
There exists no smaller syntactic slice,
since if $(Q,Q_0)$ is a slicing pair and thus $6 \in Q$
then (by data dependence) $4,5 \in Q$ and thus
(as $Q$ provides next visibles) $3 \in Q$ and thus
(by data dependence) $1 \in Q$; also $2 \in Q$ as otherwise
$2 \in Q_0$ and thus (by data dependence) $1 \in Q_0$ which
contradicts $Q \cap Q_0 = \emptyset$.
Still, if say $B_2$ is a logical consequence of $B_1$,
then it is semantically
sound to slice away nodes $3$ and $5$.
Thus, even though $(Q,Q_0) = (\{1,2,4,6\},\emptyset)$
is not a slicing pair according to Definition~\ref{defn:slicing-pair}
as 3 has no next visible in $\{1,2,4,6\}$,
it may be considered a ``semantically valid slicing pair''.

\begin{figure}
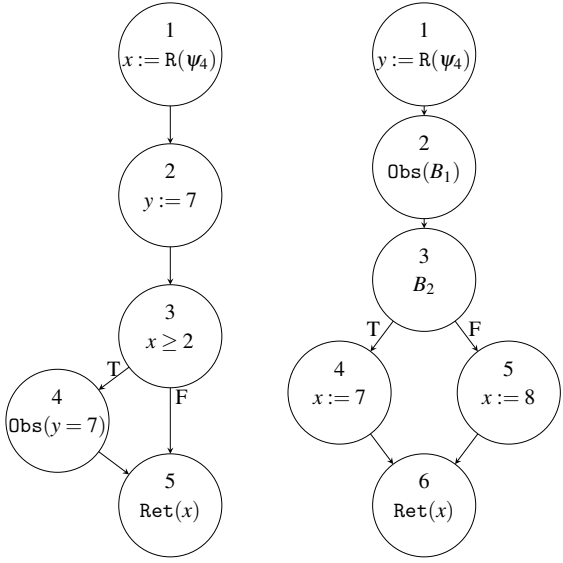


\begin{pgfpicture}{0mm}{0mm}{130mm}{75mm}
\begin{pgfmagnify}{0.75}{0.75}

\pgfnodecircle{v1}[stroke]{\pgfxy(3,9)}{26pt}

\pgfputat{\pgfxy(3,9.3)}{\pgfbox[center,base]{1}}

\pgfputat{\pgfxy(3,8.8)}{\pgfbox[center,base]{$\rassignvs{x}{\psi_4}$}}

\pgfnodecircle{v2}[stroke]{\pgfxy(3,6.5)}{26pt}

\pgfputat{\pgfxy(3,6.8)}{\pgfbox[center,base]{2}}

\pgfputat{\pgfxy(3,6.3)}{\pgfbox[center,base]{$\assignv{y}{7}$}}

\pgfnodecircle{v3}[stroke]{\pgfxy(3,4)}{26pt}

\pgfputat{\pgfxy(3,4.3)}{\pgfbox[center,base]{3}}

\pgfputat{\pgfxy(3,3.8)}{\pgfbox[center,base]{$x \geq 2$}}

\pgfnodecircle{v4}[stroke]{\pgfxy(1,2.5)}{26pt}

\pgfputat{\pgfxy(1,2.8)}{\pgfbox[center,base]{4}}

\pgfputat{\pgfxy(1,2.3)}{\pgfbox[center,base]{$\observevs{y = 7}$}}

\pgfnodecircle{v5}[stroke]{\pgfxy(3,1)}{26pt}

\pgfputat{\pgfxy(3,1.3)}{\pgfbox[center,base]{5}}

\pgfputat{\pgfxy(3,0.8)}{\pgfbox[center,base]{$\retvs{x}$}}

\pgfsetarrowsend{stealth}
%\pgfsetendarrow{\pgfarrowtriangle{4pt}}

\pgfnodeconnline{v1}{v2}

\pgfnodeconnline{v2}{v3}

\pgfputat{\pgfxy(2.0,3.3)}{\pgfbox[center,base]{T}}

\pgfputat{\pgfxy(3.2,2.8)}{\pgfbox[center,base]{F}}

\pgfnodeconnline{v3}{v5}

\pgfnodeconnline{v3}{v4}

\pgfnodeconnline{v4}{v5}

\pgfnodecircle{v1}[stroke]{\pgfxy(7.5,9)}{26pt}

\pgfputat{\pgfxy(7.5,9.3)}{\pgfbox[center,base]{1}}

\pgfputat{\pgfxy(7.5,8.8)}{\pgfbox[center,base]{$\rassignvs{y}{\psi_4}$}}

\pgfnodecircle{v2}[stroke]{\pgfxy(7.5,7)}{26pt}

\pgfputat{\pgfxy(7.5,7.3)}{\pgfbox[center,base]{2}}

\pgfputat{\pgfxy(7.5,6.8)}{\pgfbox[center,base]{$\observevs{B_1}$}}

\pgfnodecircle{v3}[stroke]{\pgfxy(7.5,5)}{26pt}

\pgfputat{\pgfxy(7.5,5.3)}{\pgfbox[center,base]{3}}

\pgfputat{\pgfxy(7.5,4.8)}{\pgfbox[center,base]{$B_2$}}

\pgfnodecircle{v4}[stroke]{\pgfxy(6,3)}{26pt}

\pgfputat{\pgfxy(6,3.3)}{\pgfbox[center,base]{4}}

\pgfputat{\pgfxy(6,2.8)}{\pgfbox[center,base]{$\assignv{x}{7}$}}

\pgfnodecircle{v5}[stroke]{\pgfxy(9,3)}{26pt}

\pgfputat{\pgfxy(9,3.3)}{\pgfbox[center,base]{5}}

\pgfputat{\pgfxy(9,2.8)}{\pgfbox[center,base]{$\assignv{x}{8}$}}

\pgfnodecircle{v6}[stroke]{\pgfxy(7.5,1)}{26pt}

\pgfputat{\pgfxy(7.5,1.3)}{\pgfbox[center,base]{6}}

\pgfputat{\pgfxy(7.5,0.8)}{\pgfbox[center,base]{$\retvs{x}$}}

\pgfsetarrowsend{stealth}
%\pgfsetendarrow{\pgfarrowtriangle{4pt}}

\pgfnodeconnline{v1}{v2}

\pgfnodeconnline{v2}{v3}

\pgfputat{\pgfxy(6.6,4)}{\pgfbox[center,base]{T}}

\pgfputat{\pgfxy(8.4,4)}{\pgfbox[center,base]{F}}

\pgfnodeconnline{v3}{v4}

\pgfnodeconnline{v3}{v5}

\pgfnodeconnline{v4}{v6}

\pgfnodeconnline{v5}{v6}

\end{pgfmagnify}
\end{pgfpicture}

\caption{\label{fig:redundant} 
A redundant $\observevv$ node (left) and a potentially
redundant branch (right).}
\end{figure}

\subsection{Improvement by Syntactic Transformation}
Simple analyses like constant propagation may
improve the precision of slicing even in a deterministic
setting, but the probabilistic setting gives an extra opportunity:
after an $\observev{B}$ node, we know that $B$ holds.
As richly exploited in \cite{Hur+etal:PLDI-2014}, 
a simple syntactic transformation often suffices to get the benefits
of that information, as we illustrate on the program
from \cite[Figure~4]{Hur+etal:PLDI-2014}
whose pCFG (in slightly modified form) is depicted in 
Figure~\ref{fig:Hur4}.
In our setting, if $(Q,Q_0)$ with $18 \in Q$ is the best slicing pair,
then $Q$ will contain everything
except nodes 12, 13, 14, as can be seen as follows:
$16,17 \in Q$ by data dependence;
$15 \in Q$ as $Q$ provides next visibles;
$6,7,8,9 \in Q$ by data dependence;
$3,4,5 \in Q$ as $Q$ provides next visibles;
$1,2 \in Q$ by data dependence;
also $10 \in Q$ as otherwise $10 \in Q_0$ and thus also $9 \in Q_0$
which contradicts $Q \cap Q_0 = \emptyset$.

Alternatively, suppose we insert a node 11 
labeled $\assignv{g}{0}$ between nodes 10 and 12.
This clearly preserves the semantics, but allows a much smaller slice:
choose $Q = \{11,15,16,17,18\}$ and 
$Q_0 = \{1,2,3,4,5,6,7,8,9,10\}$.
This is much like what is arrived at (through a more complex process) in 
\cite[Figure~15]{Hur+etal:PLDI-2014}.

Future work involves 
exploring a larger range of examples, and
(while somewhat orthogonal to the current work)
investigating useful techniques for computing slices
that are smaller than the least syntactic slice yet semantically correct.

\begin{figure}
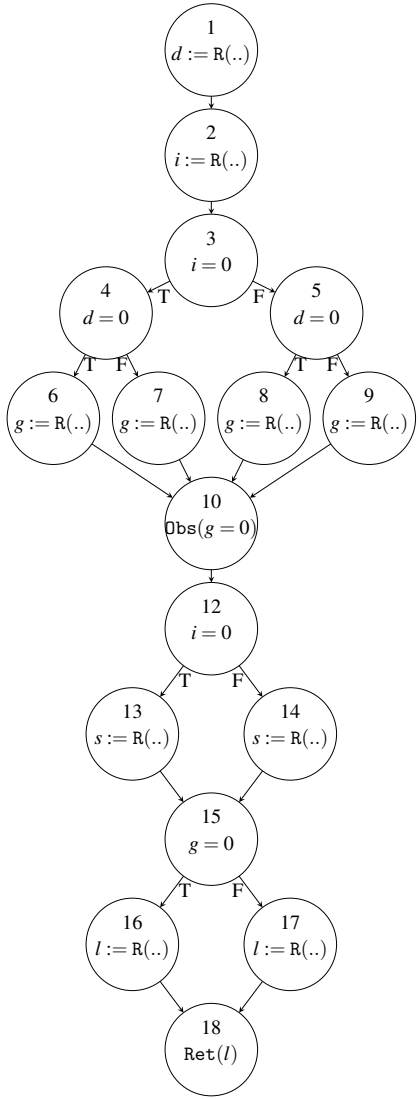

   
\begin{pgfpicture}{0mm}{0mm}{80mm}{150mm}
\begin{pgfmagnify}{0.7}{0.7}

\pgfnodecircle{v1}[stroke]{\pgfxy(5,20)}{25pt}

\pgfputat{\pgfxy(5,20.3)}{\pgfbox[center,base]{1}}
\pgfputat{\pgfxy(5,19.8)}{\pgfbox[center,base]{$\rassignvs{d}{..}$}}

\pgfnodecircle{v2}[stroke]{\pgfxy(5,18)}{25pt}

\pgfputat{\pgfxy(5,18.3)}{\pgfbox[center,base]{2}}
\pgfputat{\pgfxy(5,17.8)}{\pgfbox[center,base]{$\rassignvs{i}{..}$}}

\pgfnodecircle{v3}[stroke]{\pgfxy(5,16)}{25pt}

\pgfputat{\pgfxy(5,16.3)}{\pgfbox[center,base]{3}}
\pgfputat{\pgfxy(5,15.8)}{\pgfbox[center,base]{$i = 0$}}

\pgfputat{\pgfxy(4.1,15.2)}{\pgfbox[center,base]{T}}
\pgfputat{\pgfxy(5.9,15.2)}{\pgfbox[center,base]{F}}

\pgfnodecircle{v4}[stroke]{\pgfxy(3,15)}{25pt}

\pgfputat{\pgfxy(3,15.3)}{\pgfbox[center,base]{4}}
\pgfputat{\pgfxy(3,14.8)}{\pgfbox[center,base]{$d = 0$}}

\pgfputat{\pgfxy(2.7,13.9)}{\pgfbox[center,base]{T}}
\pgfputat{\pgfxy(3.3,13.9)}{\pgfbox[center,base]{F}}

\pgfnodecircle{v5}[stroke]{\pgfxy(7,15)}{25pt}

\pgfputat{\pgfxy(7,15.3)}{\pgfbox[center,base]{5}}
\pgfputat{\pgfxy(7,14.8)}{\pgfbox[center,base]{$d = 0$}}

\pgfputat{\pgfxy(6.7,13.9)}{\pgfbox[center,base]{T}}
\pgfputat{\pgfxy(7.3,13.9)}{\pgfbox[center,base]{F}}

\pgfnodecircle{v6}[stroke]{\pgfxy(2,13)}{25pt}

\pgfputat{\pgfxy(2,13.3)}{\pgfbox[center,base]{6}}
\pgfputat{\pgfxy(2,12.8)}{\pgfbox[center,base]{$\rassignvs{g}{..}$}}

\pgfnodecircle{v7}[stroke]{\pgfxy(4,13)}{25pt}

\pgfputat{\pgfxy(4,13.3)}{\pgfbox[center,base]{7}}
\pgfputat{\pgfxy(4,12.8)}{\pgfbox[center,base]{$\rassignvs{g}{..}$}}

\pgfnodecircle{v8}[stroke]{\pgfxy(6,13)}{25pt}

\pgfputat{\pgfxy(6,13.3)}{\pgfbox[center,base]{8}}
\pgfputat{\pgfxy(6,12.8)}{\pgfbox[center,base]{$\rassignvs{g}{..}$}}

\pgfnodecircle{v9}[stroke]{\pgfxy(8,13)}{25pt}

\pgfputat{\pgfxy(8,13.3)}{\pgfbox[center,base]{9}}
\pgfputat{\pgfxy(8,12.8)}{\pgfbox[center,base]{$\rassignvs{g}{..}$}}

\pgfnodecircle{v10}[stroke]{\pgfxy(5,11)}{25pt}

\pgfputat{\pgfxy(5,11.3)}{\pgfbox[center,base]{10}}
\pgfputat{\pgfxy(5,10.8)}{\pgfbox[center,base]{$\observevs{g = 0}$}}

\pgfnodecircle{v12}[stroke]{\pgfxy(5,9)}{25pt}

\pgfputat{\pgfxy(5,9.3)}{\pgfbox[center,base]{12}}
\pgfputat{\pgfxy(5,8.8)}{\pgfbox[center,base]{$i = 0$}}

\pgfputat{\pgfxy(4.5,7.9)}{\pgfbox[center,base]{T}}
\pgfputat{\pgfxy(5.5,7.9)}{\pgfbox[center,base]{F}}

\pgfnodecircle{v13}[stroke]{\pgfxy(3.5,7)}{25pt}

\pgfputat{\pgfxy(3.5,7.3)}{\pgfbox[center,base]{13}}
\pgfputat{\pgfxy(3.5,6.8)}{\pgfbox[center,base]{$\rassignvs{s}{..}$}}

\pgfnodecircle{v14}[stroke]{\pgfxy(6.5,7)}{25pt}

\pgfputat{\pgfxy(6.5,7.3)}{\pgfbox[center,base]{14}}
\pgfputat{\pgfxy(6.5,6.8)}{\pgfbox[center,base]{$\rassignvs{s}{..}$}}

\pgfnodecircle{v15}[stroke]{\pgfxy(5,5)}{25pt}

\pgfputat{\pgfxy(5,5.3)}{\pgfbox[center,base]{15}}
\pgfputat{\pgfxy(5,4.8)}{\pgfbox[center,base]{$g = 0$}}

\pgfputat{\pgfxy(4.5,3.9)}{\pgfbox[center,base]{T}}
\pgfputat{\pgfxy(5.5,3.9)}{\pgfbox[center,base]{F}}

\pgfnodecircle{v16}[stroke]{\pgfxy(3.5,3)}{25pt}

\pgfputat{\pgfxy(3.5,3.3)}{\pgfbox[center,base]{16}}
\pgfputat{\pgfxy(3.5,2.8)}{\pgfbox[center,base]{$\rassignvs{l}{..}$}}

\pgfnodecircle{v17}[stroke]{\pgfxy(6.5,3)}{25pt}

\pgfputat{\pgfxy(6.5,3.3)}{\pgfbox[center,base]{17}}
\pgfputat{\pgfxy(6.5,2.8)}{\pgfbox[center,base]{$\rassignvs{l}{..}$}}

\pgfnodecircle{v18}[stroke]{\pgfxy(5,1)}{25pt}

\pgfputat{\pgfxy(5,1.3)}{\pgfbox[center,base]{18}}
\pgfputat{\pgfxy(5,0.8)}{\pgfbox[center,base]{$\retvs{l}$}}

\pgfsetarrowsend{stealth}
%\pgfsetendarrow{\pgfarrowtriangle{4pt}}

\pgfnodeconnline{v1}{v2}

\pgfnodeconnline{v2}{v3}

\pgfnodeconnline{v3}{v4}

\pgfnodeconnline{v3}{v5}

\pgfnodeconnline{v4}{v6}

\pgfnodeconnline{v4}{v7}

\pgfnodeconnline{v5}{v8}

\pgfnodeconnline{v5}{v9}

\pgfnodeconnline{v6}{v10}

\pgfnodeconnline{v7}{v10}

\pgfnodeconnline{v8}{v10}

\pgfnodeconnline{v9}{v10}

\pgfnodeconnline{v10}{v12}

\pgfnodeconnline{v12}{v13}

\pgfnodeconnline{v12}{v14}

\pgfnodeconnline{v13}{v15}

\pgfnodeconnline{v14}{v15}

\pgfnodeconnline{v15}{v16}

\pgfnodeconnline{v15}{v17}

\pgfnodeconnline{v16}{v18}

\pgfnodeconnline{v17}{v18}

\end{pgfmagnify}
\end{pgfpicture}
\caption{\label{fig:Hur4} The program from Figure 4 of Hur et al. (modified).} 
\end{figure}

%\cite{Hur+etal:PLDI-2014}(modified).}

\section{Conclusion and Related Work}
\label{sec:conclude}
We have developed a theory for the slicing of probabilistic imperative
programs. 
We have used and extended techniques from the literature 
\cite{Pod+Cla:TSE-1990,Bal+Hor:AAD-1993,Ran+Amt+Ban+Dwy+Hat:TOPLAS-2007,Amtoft:IPL-2007} on the slicing of 
deterministic imperative programs.
These frameworks, some of which have been partly verified
by mechanical proof assistants 
\cite{Wasserrab:PhD-2010,Blazy+etal:CPP-2015},
were recently coalesced by Danicic~\etal~\cite{Danicic+etal:TCS-2011} who 
provide solid 
semantic foundations for the slicing of a large class of deterministic programs.
Our extension of that work is non-trivial in that we need to capture 
probabilistic independence between two sets of variables,
as done in Proposition~\ref{prop:indpd2indpd},
which requires \emph{two} slice sets rather than one. The technical
foundations of our work rest on a novel semantics of pCFGs. 
%% probabilistic control-flow graphs.
In a companion article~\cite{Amt+Ban:ProbSemantics-2017} 
we establish an adequacy result that shows that for pCFGs 
%% control-flow graphs
that are translations of programs in a structured probabilistic language, 
our semantics is suitably
related to that language's denotational semantics
as formulated first by Kozen~\cite{Kozen:JCSS-81} and later
augmented by Gordon~\etal~\cite{Gor+etal:ICSE-2014} 
(in particular to handle conditioning).

We were directly inspired by Hur~\etal~\cite{Hur+etal:PLDI-2014} 
who point out the challenges involved in the slicing of probabilistic
programs, and present an algorithm which constructs
a semantically correct slice. The paper does not state whether
it is in some sense the least possible slice; neither does it
address the complexity of the algorithm.
While Hur \etal's approach differs from ours, for example
it is for a structured language and uses the
denotational semantics presented by Gordon~\etal~\cite{Gor+etal:ICSE-2014},
it is not surprising that their correctness proof also
has probabilistic independence (termed ``decomposition'') as a key notion.
Our theory separates specification and implementation which we believe
provides for a cleaner approach. But
as mentioned in Section~\ref{sec:future}, they incorporate
powerful optimizations that we do not (yet) allow.

Future work includes investigating how our techniques can be used
to analyze which sets of variables in a given probabilistic program
are probabilistically independent of each other 
(a topic explored in, for example, \cite{Bouissou+etal:TACAS-2016}).
%% Sriram wrote us in September 2016 about \cite{Bouissou+etal:TACAS-2016} 
%% {\em Regarding tracking which variables are dependent and independent, we try to do this dynamically in our TACAS 2016 paper.
%% I think that if we can ``factor'' the code fragment into independent units, computations in in the TACAS paper become much faster. Of course, I have stayed away from observe statements since the act of conditioning causes the resulting distributions to become way more complex than is possible for my approaches to handle currently.}}

\textit{Acknowledgements.} We much appreciate the feedback we 
have received on earlier versions; 
in addition to anonymous conference reviewers, 
we would in particular like to thank Gordon Stewart.
% who already has a COI; we could also mention Sasa Misailovic and Daniel Ritchie and Sriram Sankaranarayanan but may want to have them as reviewers

\bibliographystyle{abbrv} 
\bibliography{bib1}

\appendix

\section{Domain Theory}
\label{app:domain}
This section summarizes key aspects of domain theory,
as presented in, \eg, \cite{Schmidt:DenSemantics,Winskel:semantics}.

A domain is a set $D$ equipped with a partial order 
$\sqsubseteq$, that is $\sqsubseteq$ is reflexive, transitive, and
anti-symmetric.
A chain $\chain{x_k}{k}$ is a mapping from the natural numbers into $D$
such that if $i < j$ then $x_i \sqsubseteq x_j$.
We say that $D$ is a \emph{cpo} if each chain $\chain{x_k}{k}$ has a 
\emph{least upper bound} (also called \emph{limit}), 
that is $x \in D$ such that
$x_k \sqsubseteq x$ for all $k$ and such that if also
$x_k \sqsubseteq y$ for all $k$ then $x \sqsubseteq y$;
we shall often write $\limit{k}{x_k}$ for that least upper bound.
We say that a cpo is a \emph{pointed} cpo if there exists a least
element, that is an element $\bot$ such that $\bot \sqsubseteq x$ for all $x \in D$.

We say that a domain $D$ is \emph{discrete} if $x \sqsubseteq y$ implies $x = y$;
a discrete domain is trivially a cpo (but not a pointed cpo unless a singleton).

A function $f$ from a cpo $D_1$ to a cpo $D_2$ is \emph{continuous}
if for each chain $\chain{x_k}{k}$ in $D_1$ the following holds:
$\chain{f(x_k)}{k}$ is a chain in $D_2$,
and $\limit{k}{f(x_k)} = f(\limit{k}{x_k})$.
We let $D_1 \contarrow D_2$ denote the set of 
continuous functions from $D_1$ to $D_2$.
A continuous function $f$ is also monotone, that is $f(x_1) \sqsubseteq f(x_2)$
when $x_1 \sqsubseteq x_2$
(for then $x_1,x_2,x_2,x_2....$ is a chain
and by continuity thus
$f(x_2)$ is the least upper bound of $f(x_1),f(x_2)$
implying $f(x_1) \sqsubseteq f(x_2)$).

\begin{lemma}
\label{lem:cont-cpo}
Let $D_1$ and $D_2$ be cpos. Then $D_1 \contarrow D_2$
is a cpo, with ordering defined pointwise: $f_1 \sqsubseteq f_2$
iff $f_1(x) \sqsubseteq f_2(x)$ for all $x \in D_1$.

If $D_2$ is a pointed cpo then also $D_1 \contarrow D_2$ is a pointed cpo.

If $D_1$ is discrete then $D_1 \contarrow D_2$ contains
all functions from $D_1$ to $D_2$
(and thus we may just write $D_1 \rightarrow D_2$).
\end{lemma}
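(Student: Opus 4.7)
The plan is to verify the three claims in turn, handling first the cpo structure, then pointedness, then the discrete case.

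For the first claim, I would begin by checking that the pointwise order on $D_1 \contarrow D_2$ is a partial order; this is inherited directly from the partial order on $D_2$ since reflexivity, transitivity and antisymmetry are preserved under universal quantification over $x \in D_1$. Next, given a chain $\{f_k\}_k$ in $D_1 \contarrow D_2$, I would define the candidate least upper bound $f$ pointwise by $f(x) = \limit{k}{f_k(x)}$, observing first that $\{f_k(x)\}_k$ is indeed a chain in $D_2$ because $f_k \sqsubseteq f_{k+1}$ in the pointwise order, and so the limit exists by the cpo assumption on $D_2$. That $f$ is pointwise an upper bound, and the least such, is then immediate from the corresponding property in $D_2$.

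The main obstacle is showing that the pointwise-defined $f$ is itself continuous, i.e.\ lies in $D_1 \contarrow D_2$. Given a chain $\{x_j\}_j$ in $D_1$, I would compute
\[
f(\limit{j}{x_j}) = \limit{k}{f_k(\limit{j}{x_j})} = \limit{k}{\limit{j}{f_k(x_j)}},
\]
using the definition of $f$ and the continuity of each $f_k$. On the other hand, $\limit{j}{f(x_j)} = \limit{j}{\limit{k}{f_k(x_j)}}$. Equating these two expressions is the classical interchange-of-limits argument for a doubly-indexed family $a_{k,j} = f_k(x_j)$ that is monotone in both indices: in any cpo such a family satisfies $\sup_k \sup_j a_{k,j} = \sup_j \sup_k a_{k,j}$, which one proves by sandwiching both iterated suprema between the diagonal supremum $\sup_n a_{n,n}$ and a pointwise bound. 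Applying this gives $f(\limit{j}{x_j}) = \limit{j}{f(x_j)}$, establishing continuity and hence that $f$ is the least upper bound of $\{f_k\}_k$ in $D_1 \contarrow D_2$.

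For the second claim, if $D_2$ is pointed with least element $\bot$, then the constant function $\lambda x.\bot$ is trivially continuous (it maps every chain to the constant chain $\bot,\bot,\ldots$, whose limit is $\bot$), and it lies pointwise below every other function, so it serves as the least element of $D_1 \contarrow D_2$.

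For the third claim, if $D_1$ is discrete, then any chain $\{x_k\}_k$ in $D_1$ satisfies $x_k \sqsubseteq x_{k+1}$ and hence $x_k = x_{k+1}$ by discreteness, so the chain is constant with limit $x_0$. Any function $f \colon D_1 \to D_2$ therefore trivially satisfies $f(\limit{k}{x_k}) = f(x_0) = \limit{k}{f(x_k)}$ and is continuous, so $D_1 \contarrow D_2$ coincides with the full function space $D_1 \rightarrow D_2$.
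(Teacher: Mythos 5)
Your proof is correct and takes essentially the same route as the paper's: define the candidate least upper bound pointwise, establish its continuity by interchanging the two iterated limits over the doubly monotone family $f_k(x_j)$, and dispose of pointedness (constant bottom function) and the discrete case (chains are constant) directly. You are in fact more explicit than the paper, which performs the limit interchange without comment, whereas you justify it by the standard diagonal-sandwiching argument.
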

\begin{proof}
Let $\chain{f_k}{k}$ be a chain of continuous functions from $D_1$ to $D_2$,
with $f$ their pointwise limit, that is:
$f(x) = \limit{k}{f_k(x)}$ for all $x \in D_1$.
We have to show that $f$ is continuous.
But if $\chain{x_k}{k}$ is a chain in $D_1$ then
\begin{eqnarray*}
f(\limit{k}{x_k}) & = & \limit{m}{f_m(\limit{k}{x_k})}
\\ & = & \limit{m}{\limit{k}{f_m(x_k)}}
\\ & = & \limit{k}{\limit{m}{f_m(x_k)}}
\\ & = & \limit{k}{f(x_k)}.
\end{eqnarray*}
If $D_2$ has a bottom element $\bot$ then
$\lambda x.\bot$ is the bottom element in
$D_1 \contarrow D_2$, and if $D_1$ is discrete 
then all functions from $D_1$ to $D_2$ are continuous since
a chain in $D_1$ can contain only one element.
\end{proof}

\begin{lemma}
\label{lem:cont-cpo-fix}
Let $f$ be a continuous function on
a pointed cpo $D$. Then\footnote{Recall that $f^k$ is defined
by letting $f^0(x) = x$, and $f^{k+1}(x) = f(f^{k}(x))$ for $k \geq 0$.}
$\chain{f^k(\bot)}{k}$ is a chain,
and $\limit{k}{f^k(\bot)}$ is the least fixed point of $f$.
\end{lemma}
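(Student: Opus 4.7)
The plan is to prove this classical Kleene fixed-point theorem in three standard steps, mirroring the textbook argument.

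First, I would verify that $\chain{f^k(\bot)}{k}$ is a chain by induction on $k$. The base case $f^0(\bot) = \bot \sqsubseteq f(\bot)$ follows from $\bot$ being the least element. For the inductive step, assuming $f^k(\bot) \sqsubseteq f^{k+1}(\bot)$, the monotonicity of $f$ (which follows from continuity, as noted in the text preceding Lemma~\ref{lem:cont-cpo}) yields $f^{k+1}(\bot) \sqsubseteq f^{k+2}(\bot)$.

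Next, letting $x = \limit{k}{f^k(\bot)}$, I would show $f(x) = x$ by the calculation
\[
f(x) \;=\; f\bigl(\limit{k}{f^k(\bot)}\bigr) \;=\; \limit{k}{f(f^k(\bot))} \;=\; \limit{k}{f^{k+1}(\bot)} \;=\; \limit{k}{f^k(\bot)} \;=\; x,
\]
where the second equality uses continuity of $f$, and the fourth equality uses that dropping the first element of a chain does not change the least upper bound (the chain $\chain{f^{k+1}(\bot)}{k}$ and the chain $\chain{f^k(\bot)}{k}$ share the same set of upper bounds, since $f^0(\bot) = \bot$ is below every other element).

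Finally, for leastness, suppose $y \in D$ satisfies $f(y) = y$. Then $\bot \sqsubseteq y$, and by induction using monotonicity of $f$, one gets $f^k(\bot) \sqsubseteq f^k(y) = y$ for all $k$. Hence $y$ is an upper bound of the chain, so $x = \limit{k}{f^k(\bot)} \sqsubseteq y$, as required. There is no real obstacle here; the only subtlety worth being careful about is the tail-equivalence of limits used in the fixed-point calculation, which is immediate from the definition of least upper bound.
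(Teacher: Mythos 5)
Your proposal is correct and follows essentially the same argument as the paper: establish the chain via $\bot \sqsubseteq f(\bot)$ and monotonicity, obtain the fixed point by continuity (noting the tail of the chain has the same limit), and prove leastness by showing $f^k(\bot) \sqsubseteq z$ for any fixed point $z$. The only difference is that you spell out the tail-of-chain step explicitly, which the paper leaves implicit.
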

\begin{proof}
From $\bot \sqsubseteq f(\bot)$ we by monotonicity of $f$ infer that
$f^k(\bot) \sqsubseteq f^{k+1}(\bot)$ for all $k$
so $\chain{f^k(\bot)}{k}$ is indeed a chain.
With $y = \limit{k}{f^k(\bot)}$ we see by continuity of $f$ 
that $y$ is indeed a fixed point of $f$:
$f(y)= \limit{k}{f^{k+1}(\bot)} = y$.
And if $z$ is also a fixed point, we have
$\bot \sqsubseteq z$ and by monotonicity of $f$ thus
$f^k(\bot) \sqsubseteq f^{k}(z) = z$ for all $k$, from which
we infer $y \sqsubseteq z$.
\end{proof}

%% For functional composition, with $f$ a function from $D_1$ to $D_2$
%% and $g$ a function from $D_2$ to $D_3$, we shall often write
%% $f ; g$ for the function $h$ such that $h(x) = g(f(x))$ for $x \in D_1$.
%% Function composition is continuous:
%% \begin{lemma}
%% \label{lem:fun-comp-cont}
%% Let $D$, $D_1$ and $D_2$ be cpos,
%% let $\chain{f_k}{k}$ be a chain in $D \contarrow D_1$,
%% and let $\chain{g_k}{k}$ be a chain in $D_1 \contarrow D_2$.
%% With $f = \limit{k}{f_k}$ and $g = \limit{k}{g_k}$,
%% we have $f ; g = \limit{k}{(f_k; g_k)}$.
%% \end{lemma}
%% \begin{proof}
%% \begin{eqnarray*}
%% (f ; g)(x) & = & g(f(x)) = \limit{k}{g(f_k(x))} = 
%% \limit{k}{\limit{j}{g_j(f_k(x))}}
%% \\ & = & \limit{k}{g_k(f_k(x))}
%% = (\limit{k}{(f_k ; g_k)})(x).
%% \end{eqnarray*}
%% \end{proof}

\section{Miscellaneous Proofs}
\label{app:proofs}
\subsection{Proofs for Section~\ref{sec:CFG}}

{\bf Lemma~\ref{lem:prec-ordering}}:
For given $v$, let $\prec$ be an ordering among
proper postdominators of $v$,
by stipulating that $v_1 \prec v_2$ iff in all acyclic paths from $v$
to $\finalv$, $v_1$ occurs strictly before $v_2$.
Then $\prec$ is transitive, antisymmetric, and total.
Also, if $v_1 \prec v_2$ then 
for \emph{all} paths from $v$ to $\finalv$ it is the case that
the first occurrence of $v_1$ is before the first occurrence of $v_2$.

\begin{proof}
The first two properties are obvious.

We next show that $\prec$ is total.
Assume, to get a contradiction, that there exists an 
acyclic path $\pi_1$ from $v$ to $\finalv$
that contains $v_1$ strictly before $v_2$,
and also an
acyclic path $\pi_2$ from $v$ to $\finalv$
that contains $v_2$ strictly before $v_1$.
But then the concatenation of the prefix of $\pi_1$ that ends with $v_1$,
and the suffix of $\pi_2$ that starts with $v_1$, is a path
from $v$ to $\finalv$
that avoids $v_2$, yielding a contradiction
as $v_2$ postdominates $v$.

Finally, assume that $v_1 \prec v_2$, and that $\pi$ is a path
from $v$ to $\finalv$; to get a contradiction, assume that
there is a prefix $\pi_1$ of $\pi$ that ends with $v_2$ but does 
not contain $v_1$. Since there exists an acyclic path
from $v$ to $\finalv$, we infer from $v_1 \prec v_2$ that
there is an acyclic path $\pi_2$ from $v_2$ that does not contain $v_1$.
But the concatenation of $\pi_1$ and $\pi_2$ is a path from
$v$ to $\finalv$ that does not contain $v_1$,
which contradicts $v_1$ being a proper postdominator of $v$.
\end{proof}

\noindent
{\bf Lemma~\ref{lem:LAP-add}}:
If $(v,v_1) \in \PD$ and $(v_1,v_2) \in \PD$ (and thus
$(v,v_2) \in \PD$) then $\LAP{v}{v_2} = \LAP{v}{v_1} + \LAP{v_1}{v_2}$.

\begin{proof}
If $v = v_1$ or $v_1 = v_2$, the claim is obvious;
we can thus assume that $v_1$ and $v_2$ are proper postdominators of $v$
and by Lemma~\ref{lem:prec-ordering} we further infer that $v_1$
will occur before $v_2$ in all paths from $v$ to $\finalv$.

First consider an acyclic path $\pi$ from $v$ to $v_2$. 
We have argued that $\pi$ will contain $v_1$,
and hence $\pi$ is the concatenation of an acyclic path from
$v$ to $v_1$, thus of length $\leq \LAP{v}{v_1}$,
and an acyclic path from $v_1$ to $v_2$, thus of length $\leq \LAP{v_1}{v_2}$.
Thus the length of $\pi$ is $\leq \LAP{v}{v_1} + \LAP{v_1}{v_2}$;
as $\pi$ was an arbitrary acyclic path from $v$ to $v_2$, 
this shows ``$\leq$''. 

To show ``$\geq$'', let $\pi_1$ be an acyclic path from
$v$ to $v_1$ of length $\LAP{v}{v_1}$, and $\pi_2$ be an acyclic path from
$v_1$ to $v_2$ of length $\LAP{v_1}{v_2}$. 
Let $\pi$ be the concatenation
of $\pi_1$ and $\pi_2$;
$\pi$ is an acyclic path from $v$ to $v_2$
since if $v' \neq v_1$ occurs in both
paths then there is a path from $v$ to $v_2$ that avoids $v_1$
which is a contradiction. 
As $\pi$ is of length 
$\LAP{v}{v_1} + \LAP{v_1}{v_2}$, this shows ``$\geq$''.
\end{proof}

\begin{lemma}
\label{lem:outside-decomp}
Assume that $v'$ is a proper postdominator of $v$, that
with $v'' = \fppd{v}$ we have $v' \neq v''$, 
and that $Q$ is a set of nodes.

If $v$ stays outside $Q$ until $v'$ then
\emph{(i)}
$v$ stays outside $Q$ until $v''$, and \emph{(ii)} $v''$
stays outside $Q$ until $v'$.
\end{lemma}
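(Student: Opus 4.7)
My plan is to reduce both parts to the hypothesis by forming suitable concatenations and using Lemma~\ref{lem:prec-ordering} to control where $v'$ and $v''$ occur. The key enabling fact is the relation $v'' \prec v'$ (in the ordering of Lemma~\ref{lem:prec-ordering}): since both $v''$ and $v'$ are proper postdominators of $v$ and $v''$ is the \emph{first} one, the first occurrence of $v''$ precedes the first occurrence of $v'$ along every path from $v$ to $\finalv$. From this I will first derive the auxiliary fact that $v'$ postdominates $v''$ (needed to even state~(ii)): given any path $\pi$ from $v''$ to $\finalv$, prepend an acyclic path from $v$ to $v''$; the result is a path from $v$ to $\finalv$, so it must contain $v'$, and by $v'' \prec v'$ the first $v'$ comes after the first $v''$, i.e.\ it lies in $\pi$.

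For part~(i), I take an arbitrary path $\pi$ from $v$ to $v''$ where $v''$ occurs only at the end, and extend it by an acyclic path $\pi'$ from $v''$ to $v'$ on which $v'$ occurs only at the end (such $\pi'$ exists since $v'$ postdominates $v''$). The extended path $\pi\pi'$ goes from $v$ to $v'$ with $v'$ at the end. To apply the hypothesis I must show $v'$ does not occur earlier, i.e.\ $v'$ does not occur in $\pi$: if it did, then extending $\pi\pi'$ to $\finalv$ would give a path in which the first occurrence of $v'$ precedes the only (hence first) occurrence of $v''$, contradicting $v'' \prec v'$. The hypothesis then tells us $\pi\pi'$ contains no node of $Q$ except possibly $v'$, so in particular $\pi$ contains no node of $Q$.

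Part~(ii) is symmetric. Given a path $\pi$ from $v''$ to $v'$ with $v'$ only at the end, I prepend an acyclic path $\pi_0$ from $v$ to $v''$ with $v''$ only at the end. The same argument with Lemma~\ref{lem:prec-ordering} rules out any occurrence of $v'$ in $\pi_0$ (it would appear before the first $v''$), so $\pi_0\pi$ is a path from $v$ to $v'$ with $v'$ only at the end. The hypothesis then yields that $\pi_0\pi$, and hence $\pi$, contains no node of $Q$ except possibly $v'$.

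I expect no serious obstacle: the argument is essentially bookkeeping about first occurrences along concatenated paths, with Lemma~\ref{lem:prec-ordering} doing all the real work. The one subtle point I want to be careful about is ensuring that the constructed paths have $v'$ (respectively $v''$) occurring \emph{only} at the end, so that the hypothesis is literally applicable; this is precisely what the acyclicity choices of $\pi'$ and $\pi_0$ and the $v'' \prec v'$ argument are designed to guarantee.
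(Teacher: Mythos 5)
Your proof is correct and follows essentially the same route as the paper's: in each part you extend the given path to a path from $v$ to $v'$ on which $v'$ occurs only at the end and then invoke the hypothesis that $v$ stays outside $Q$ until $v'$. The only difference is cosmetic: where the paper rules out an early occurrence of $v'$ directly from the definition of $\fppd{v}$ (every path from $v$ to $v'$ passes through $v''$), you route the same bookkeeping through Lemma~\ref{lem:prec-ordering} and the ordering $v'' \prec v'$, also making explicit that $v'$ postdominates $v''$.
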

\begin{proof}
For (i), let $\pi$ be a path from $v$ to $v''$ that contains $v''$ only at the
end. Hence $v'$ cannot be in $\pi$ (as $v''$ occurs in all paths from $v$ to $v'$), 
so we can extend $\pi$ into a path $\pi'$
from $v$ to $v'$
that contains $v'$ only at the end. Since $v$ stays outside $Q$ until $v'$,
$\pi'$ contains no node in $Q$ except possibly $v'$, and hence $\pi$ contains
no node in $Q$.

For (ii), let $\pi$ be a path from $v''$ to $v'$ that contains $v'$ only at the end.
There is a path from $v$ to $v''$ that does not contain $v'$, so we can extend
$\pi$ into a path $\pi'$ from $v$ to $v'$ that contains $v'$ only at the end.
Since $v$ stays outside $Q$ until $v'$,
$\pi'$ contains no node in $Q$ except possibly $v'$, and hence $\pi$ contains
no node in $Q$ except possibly $v'$.
\end{proof}

\begin{lemma}
\label{lem:outside-branch}
Assume that $v'$ is a proper postdominator of $v$, that
$v_1$ is a successor of $v$, and that $Q$ is a set of nodes.

If $v$ stays outside $Q$ until $v'$ then
also $v_1$ stays outside $Q$ until $v'$.
\end{lemma}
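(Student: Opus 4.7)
The plan is to argue by contraposition (really, by direct contradiction): suppose the conclusion fails for $v_1$, and then prepend the edge $v \to v_1$ to produce a violating path witnessing that the hypothesis fails for $v$.

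First I would need to verify that the statement even makes sense, that is, that $v'$ is a postdominator of $v_1$, so that the phrase ``$v_1$ stays outside $Q$ until $v'$'' is defined. This follows easily: if $\pi'$ is any path from $v_1$ to $\finalv$, then $v \, \pi'$ is a path from $v$ to $\finalv$, so it must contain $v'$ (since $v'$ postdominates $v$); as $v'$ is a \emph{proper} postdominator of $v$ we have $v' \neq v$, hence $v'$ must occur in $\pi'$ itself.

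Next, suppose for contradiction that $v_1$ does \emph{not} stay outside $Q$ until $v'$. Then there exists a path $\pi$ from $v_1$ to $v'$ in which $v'$ occurs only at the end, and which contains some node $v_0 \in Q$ with $v_0 \neq v'$. Consider the extended path $v \, \pi$, which goes from $v$ to $v'$ via the edge $v \to v_1$. I claim $v'$ occurs in $v \, \pi$ only at the end: by assumption $v'$ occurs only at the end of $\pi$, and the prepended node $v$ is not $v'$ (again because $v'$ is a proper postdominator of $v$). But then $v \, \pi$ is a path from $v$ to $v'$ with $v'$ only at the end, and it contains $v_0 \in Q \setminus \{v'\}$, contradicting the hypothesis that $v$ stays outside $Q$ until $v'$.

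I do not anticipate a real obstacle here; the only subtlety is the bookkeeping step of checking that prepending $v$ does not accidentally introduce a second occurrence of $v'$, which is handled by the ``proper'' in ``proper postdominator.'' The structure mirrors part (ii) of Lemma~\ref{lem:outside-decomp}, with the single-edge path $v \to v_1$ playing the role of the connecting path.
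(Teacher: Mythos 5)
Your proposal is correct and matches the paper's proof in essence: both prepend the edge $v \to v_1$ to a path from $v_1$ to $v'$ containing $v'$ only at the end, using $v \neq v'$ (from properness) to see the extended path still has $v'$ only at the end, and then invoke the hypothesis on $v$. The paper states this directly while you phrase it as a contradiction, but the argument is the same; your preliminary check that $v'$ postdominates $v_1$ is a harmless extra.
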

\begin{proof}
Let $\pi$ be a path from $v_1$ to $v'$ that contains $v'$ only at the end.
Since $v \neq v'$, we can extend $\pi$ into a path $\pi'$ from $v$ to $v'$
that contains $v'$ only at the end.
Since $v$ stays outside $Q$ until $v'$,
$\pi'$ contains no node in $Q$ except possibly $v'$,
and hence $\pi$ contains no node in $Q$ except possibly $v'$.
\end{proof}

\subsection{Proofs for Section~\ref{sec:sem}}

\noindent
{\bf Lemma~\ref{lem:sumlim-limsum}}:
Assume that $\chain{D_k}{k}$ 
is a chain of distributions (not necessarily bounded)
with $D' = \limit{k}{D_k}$.
With $S$ a (countable) set of stores, we have
\[
\sum_{s \in S}{D'(s)} = \limit{k}{\sum_{s \in S}{D_k(s)}}
\]
\begin{proof}
From $D_k \leq D'$ we get that
$\sum_{s \in S}{D'(s)}$ is an upper bound for 
$\chain{\sum_{s \in S}{D_k(s)}}{k}$; as
$\limit{k}{\sum_{s \in S}{D_k(s)}}$ is the least upper bound,
we get
\[
\limit{k}{\sum_{s \in S}{D_k(s)}} \leq \sum_{s \in S}{D'(s)}.
\]
To establish that equality holds, we shall assume
$\limit{k}{\sum_{s \in S}{D_k(s)}} < \sum_{s \in S}{D'(s)}$
so as to get a contradiction.
Then there exists $\epsilon > 0$ such that
$\limit{k}{\sum_{s \in S}{D_k(s)}} + \epsilon < \sum_{s \in S}{D'(s)}$.
We infer that there exists a finite set $S_0$ with $S_0 \subseteq S$
such that 
$\limit{k}{\sum_{s \in S}{D_k(s)}} + \epsilon < \sum_{s \in S_0} D'(s)$.
For each $s \in S_0$ there exists $K_s$ such that
$D_k(s) > D'(s) - \epsilon/|S_0|$ for $k \geq K_s$,
and thus there exists $K$ (the maximum element of the finite set $\{K_s \mid s \in S_0\}$)
such that for each $s \in S_0$, and each $k \geq K$,
$D_k(s) + \epsilon/|S_0| > D'(s)$.
But then we get the desired contradiction:
\begin{eqnarray*}
\sum_{s \in S_0} D'(s) & < & \sum_{s \in S_0}(D_K(s) + \epsilon/|S_0|)
= \sum_{s \in S_0}D_K(s) + \epsilon
\leq \limit{k}{\sum_{s \in S}{D_k(s)}} + \epsilon
\\ & < & \sum_{s \in S_0} D'(s).
\end{eqnarray*}
\end{proof}

\noindent
{\bf Lemma~\ref{lem:sto-partial}}:
If $R \subseteq R'$ then
for $s \in \sto{R}$ we have
\begin{displaymath}
D(s) = \sum_{s' \in \sto{R'} \ \mid\ \sagree{s'}{s}{R}} D(s'). 
\end{displaymath}
\begin{proof}
We have the calculation 
\begin{eqnarray*}
\sum_{s' \in \sto{R'} \ \mid\ \sagree{s'}{s}{R}} D(s')
& = &
\sum_{s' \in \sto{R'} \ \mid\ \sagree{s'}{s}{R}}
\left(\sum_{s_0 \in \fulls\ \mid\ \sagree{s_0}{s'}{R'}} D(s_0) \right)
\\[2mm] & = & \sum_{s_0 \in \fulls,\ s' \in \sto{R'}\ \mid\ \sagree{s'}{s}{R},\ \sagree{s_0}{s'}{R'}} D(s_0)
 = 
\sum_{s_0 \in \fulls\ \mid\ \sagree{s_0}{s}{R}}D(s_0)
 = D(s) 
\end{eqnarray*}
where the third equality is justified as follows:
for a given $s_0 \in \fulls$, exactly one $s' \in \sto{R'}$ will
satisfy $\sagree{s_0}{s'}{R'}$, and for that $s'$ we will
have (since $R \subseteq R'$)
that $\sagree{s'}{s}{R}$ iff $\sagree{s_0}{s}{R}$.
\end{proof}

\noindent
{\bf Lemma~\ref{lem:sum-pres-conc}}:
Let $f \in \Dist \rightarrow \Dist$
be continuous and additive.
Assume that for all $D$ that are concentrated, $f$ is sum-preserving for $D$.
Then $f$ is sum-preserving.

\begin{proof}
Let $s_1$, $s_2$, \ldots be an enumeration of stores in $\fulls$.
For given $D \in \Dist$, and for each $k \geq 1$,
let $D_k$ be given by stipulating $D_k(s_k) = D(s_k)$
but $D_k(s) = 0$ when $s \neq s_k$. Thus each $D_k$ is concentrated,
and $D = \limit{k}{D'_k}$ where 
$D'_k = D_1 + \ldots + D_k$. 
Since $f$ is assumed continuous and additive, 
\[
f(D) = f(\limit{k}{D'_k}) = \limit{k}{f(D'_k)} =
\limit{k}{(f(D_1) + \ldots + f(D_k))}
\]
and thus the desired result follows from the calculation
(where we use Lemma~\ref{lem:sumlim-limsum} twice,
and exploit that $f$ is sum-preserving for each $D_k$)
\begin{eqnarray*}
\sumd{f(D)} & = & 
\sum_{s \in \fulls} f(D)(s) \\
& = & 
\sum_{s \in \fulls} \limit{k} (f(D_1)(s) + \ldots + f(D_k)(s)) \\
& = & 
\limit{k}{\sum_{s \in \fulls} (f(D_1)(s) + \ldots + f(D_k)(s))} \\
& = &
\limit{k}{(\sum_{s \in \fulls} 
f(D_1)(s) + \ldots + \sum_{s \in \fulls} f(D_k)(s))} \\
& = &
\limit{k}{(\sum_{s \in \fulls} D_1(s) + \ldots + \sum_{s \in \fulls} D_k(s))} \\
& = &
\limit{k}{\sum_{s \in \fulls} D'_k(s)} \\
& = &
\sum_{s \in \fulls}\limit{k}{D'_k(s)} = \sum_{s \in \fulls}D(s)
=
\sumd{D}.
\end{eqnarray*}
\end{proof}

\noindent
{\bf Lemma~\ref{lem:assignRirrel}}:
Assume that $\assignf{x}{E}(D) = D'$
and that $x \notin R$. Then
$\dagree{D}{D'}{R}$.

\begin{proof}
Given $s_0 \in \sto{R}$, we must show that
$D'(s_0) = D(s_0)$.
But this follows since
\begin{eqnarray*}
D'(s_0) & = & \displaystyle 
\sum_{s' \in \fulls\ \mid\ \sagree{s'}{s_0}{R}} D'(s')\ = 
\sum_{s' \in \fulls\ \mid\ \sagree{s'}{s_0}{R}} 
\left( \sum_{s \in \fulls\ \mid\ s' = \upd{s}{x}{\seme{E}s}} D(s)\right) \\[1mm]
& = & \displaystyle
\sum_{s,s' \in \fulls\ \mid\ \sagree{s'}{s_0}{R},\ s'= \upd{s}{x}{\seme{E}s}}D(s) 
\\[1mm] \mbox{(as $x \notin R$)} & = &
\sum_{s,s' \in \fulls\ \mid\ \sagree{s}{s_0}{R},\ s' = \upd{s}{x}{\seme{E}s}}D(s) 
\ =
\sum_{s \in \fulls\ \mid\ \sagree{s}{s_0}{R}} D(s)
=
D(s_0)
\end{eqnarray*}
\end{proof}

\noindent
{\bf Lemma~\ref{lem:assign-cont}}
$\assignf{x}{E}$ is continuous.

\begin{proof}
Obviously, $\assignf{x}{E}$ is monotone.
To show continuity,
let $\chain{D_k}{k}$ be a chain. With
$D'_k = \assignf{x}{E}(D_k)$, monotonicity implies that also
$\chain{D'_k}{k}$ is a chain;
let $D = \limit{k}{D_k}$ and $D' = \limit{k}{D'_k}$.
Our goal is to prove that $D' = \assignf{x}{E}(D)$.
But this follows since by Lemma~\ref{lem:sumlim-limsum}
for each $s'$ we have the calculation
\begin{eqnarray*}
D'(s') & = & \limit{k}{D'_k(s')} =
     \limit{k}{\sum_{s \in \fulls\ \mid\ s' = \upd{s}{x}{\seme{E}s}} D_k(s)} 
\\ & = & \sum_{s \in \fulls\ \mid\ s' = \upd{s}{x}{\seme{E}s}}\limit{k}{D_k(s)}
    = \sum_{s \in \fulls\ \mid\ s' = \upd{s}{x}{\seme{E}s}} D(s)
\end{eqnarray*}
\end{proof}

\noindent
{\bf Lemma~\ref{lem:rassignRirrel}}:
Assume that $\rassignf{x}{E}(D) = D'$
and that $x \notin R$. Then
$\dagree{D}{D'}{R}$.

\begin{proof}
Given $s_0 \in \sto{R}$, we must show that
$D'(s_0) = D(s_0)$.
But this follows since 
\begin{eqnarray*}
D'(s_0) & = & \displaystyle 
\sum_{s' \in \fulls\ \mid\ \sagree{s'}{s_0}{R}} D'(s')\ = 
\sum_{s' \in \fulls\ \mid\ \sagree{s'}{s_0}{R}} 
\left( \sum_{s \in \fulls\ \mid\ \sagree{s'}{s}{\uvar \setminus \mkset{x}}} \psi(s'(x))D(s)  \right) \\[1mm] \mbox{(as $x \notin R$)} 
& = & \displaystyle
\sum_{s,s' \in \fulls\ \mid\ \sagree{s}{s_0}{R},\ \sagree{s'}{s}{\uvar \setminus \mkset{x}}} \psi(s'(x))D(s) 
 = 
\sum_{s \in \fulls\ \mid\ \sagree{s}{s_0}{R}} \left(
\sum_{s' \in \fulls\ \mid\ \sagree{s'}{s}{\uvar \setminus \mkset{x}}}
\psi(s'(x))D(s) \right) \\[1mm] & = &
\sum_{s \in \fulls\ \mid\ \sagree{s}{s_0}{R}} \left( D(s) \left(
\sum_{s' \in \fulls\ \mid\ \sagree{s'}{s}{\uvar \setminus \mkset{x}}}
\psi(s'(x)) \right) \right) \ =
\sum_{s \in \fulls\ \mid\ \sagree{s}{s_0}{R}} \left( D(s) \left(
\sum_{z \in \Zz}
\psi(z) \right) \right) 
\\[1mm] & = &
\sum_{s \in \fulls\ \mid\ \sagree{s}{s_0}{R}} (D(s) \cdot 1)
=
D(s_0)
\end{eqnarray*}
\end{proof}

\noindent
{\bf Lemma~\ref{lem:rassign-cont}}
$\rassignf{x}{E}$ is continuous.

\begin{proof}
Obviously, $\rassignf{x}{E}$ is monotone.
To show continuity,
let $\chain{D_k}{k}$ be a chain. With
$D'_k = \rassignf{x}{E}(D_k)$, monotonicity implies that also
$\chain{D'_k}{k}$ is a chain;
let $D = \limit{k}{D_k}$ and $D' = \limit{k}{D'_k}$.
Our goal is to prove that $D' = \rassignf{x}{E}(D)$.
But this follows since by Lemma~\ref{lem:sumlim-limsum}
for each $s'$ we have the calculation
\begin{eqnarray*}
D'(s') & = & \limit{k}{D'_k(s')} =
     \limit{k}{\sum_{s \in \fulls\ \mid\ \sagree{s'}{s}{\uvar \setminus \mkset{x}}} 
   \psi(s'(x))D_k(s)}
\\ & = & \sum_{s \in \fulls\ \mid\ \sagree{s'}{s}{\uvar \setminus \mkset{x}}}\limit{k}{\psi(s'(x))D_k(s)}
\\ & = &  \sum_{s \in \fulls\ \mid\ \sagree{s'}{s}{\uvar \setminus \mkset{x}}}\psi(s'(x)) D(s)
\end{eqnarray*}
\end{proof}

\noindent
{\bf Lemma~\ref{lem:HHcont}}
The functional $\HH{X}$ is continuous
on $\PD \rightarrow (\Dist \contarrow \Dist)$.

\begin{proof}
Consider a chain $\chain{g_k}{k}$,
so as to prove that
$\HH{X}(\limit{k}{g_k}) = \limit{k}{\HH{X}(g_k)}$. For all $(v,v') \in \PD$
and all $D$ in $\Dist$, we must thus prove
\[
\HH{X}(\limit{k}{g_k})(v,v')(D) = \limit{k}{\HH{X}(g_k)(v,v')(D)}
\]
and shall do so by induction in $\LAP{v}{v'}$, with a case analysis 
in Definition~\ref{def:HHX}. We shall consider some sample cases:
\begin{itemize}
\item
If $v \in X$ with $\labv{v}$ of the form $\assignv{x}{E}$
then both sides evaluate to
$\assignf{x}{E}(D)$.
\item
If $v' \neq v''$ where $v'' = \fppd{v}$
then we have the calculation 
\begin{eqnarray*}
\HH{X}(\limit{k}{g_k})(v,v')(D) 
& = &
\HH{X}(\limit{k}{g_k})(v'',v')(\HH{X}(\limit{k}{g_k})(v,v'')(D))
 \\ & = &
(\limit{k}{\HH{X}(g_k)(v'',v')})(\limit{k}{\HH{X}(g_k)(v,v'')(D)})
 \\ & = &
\limit{k}{\HH{X}(g_k)(v'',v')(\HH{X}(g_k)(v,v'')(D))}
 \\ & = &
\limit{k}{\HH{X}(g_k)(v,v')(D)}
\end{eqnarray*}
where the second equality follows from the induction hypothesis,
and the third equality from continuity of $\HH{X}(g_k)(v'',v')$
(Lemma~\ref{lem:hcontH}).
\item
If $v$ is a branching node with condition $B$,
$\true$-successor $v_1$, and $\false$-successor $v_2$,
where $\LAP{v_1}{v'} \geq \LAP{v}{v'}$ and
$\LAP{v_2}{v'} < \LAP{v}{v'}$ (other cases are similar),
with $D_1 = \selectf{B}(D)$ and $D_2 = \selectf{\neg B}(D)$
we have the calculation (where the second equality follows
from the induction hypothesis):
\begin{eqnarray*}
\HH{X}(\limit{k}{g_k})(v,v')(D) 
 & = &
\limit{k}{g_k}(v_1,v')(D_1) + \HH{X}(\limit{k}{g_k})(v_2,v')(D_2)
 \\ & = &
\limit{k}{(g_k(v_1,v')(D_1) + \HH{X}(g_k)(v_2,v')(D_2))}
 \\ & = &
\limit{k}{\HH{X}(g_k)(v,v')(D)}
\end{eqnarray*}
\end{itemize}
\end{proof}

The following two lemmas are often convenient.
\begin{lemma}
\label{lem:assign-simp}
Assume that $\assignf{x}{E}(D) = D'$.
Assume that $R,R'$ are such that $x \in R'$, and that
$R'' \cup \fv{E} \subseteq R$
where $R'' = R' \setminus \mkset{x}$.
For $s' \in \sto{R'}$ we then have
\begin{displaymath}
D'(s') 
= \sum_{s \in \sto{R} \ \mid\ \sagree{s}{s'}{R''},\ s'(x) = \seme{E}s} D(s).
\end{displaymath}
\end{lemma}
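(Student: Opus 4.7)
The plan is to chain together the definition of $\assignf{x}{E}$, Lemma~\ref{lem:sto-partial} applied twice (at both $R'$ and $R$), and a reindexing of the inner sum. The main obstacle is bookkeeping: we must be careful that our aggregation condition splits cleanly into the two promised conditions on $s \in \sto{R}$, which is where the hypothesis $R'' \cup \fv{E} \subseteq R$ does its work.

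First I would apply Lemma~\ref{lem:sto-partial} with ambient set $\uvar$ to rewrite
\[
D'(s') = \sum_{s'_0 \in \fulls \,\mid\, \sagree{s'_0}{s'}{R'}} D'(s'_0),
\]
and then substitute the definition of $\assignf{x}{E}$ to unfold each $D'(s'_0)$ as a sum over those $s_0 \in \fulls$ with $s'_0 = \upd{s_0}{x}{\seme{E}s_0}$. Swapping the two summations, and observing that for every $s_0$ there is exactly one $s'_0$ of the form $\upd{s_0}{x}{\seme{E}s_0}$, the double sum collapses to
\[
D'(s') = \sum_{s_0 \in \fulls \,\mid\, \sagree{\upd{s_0}{x}{\seme{E}s_0}}{s'}{R'}} D(s_0).
\]

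Next I would simplify the indexing condition. Since $x \in R'$ and $R'' = R' \setminus \mkset{x}$, the agreement $\sagree{\upd{s_0}{x}{\seme{E}s_0}}{s'}{R'}$ decomposes into agreement on $R''$ (unaffected by the update, so equivalent to $\sagree{s_0}{s'}{R''}$) together with matching the value of $x$ (equivalent to $\seme{E}s_0 = s'(x)$). Both resulting conditions depend on $s_0$ only through the variables in $R'' \cup \fv{E}$, which by hypothesis is contained in $R$.

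Finally, I would partition the sum over $\fulls$ according to the restriction of $s_0$ to $R$: each $s \in \sto{R}$ corresponds to the class $\{s_0 \in \fulls \mid \sagree{s_0}{s}{R}\}$, and inside each class the two conditions are either uniformly satisfied or uniformly violated. Using Lemma~\ref{lem:sto-partial} in reverse to collapse each inner sum $\sum_{s_0 \,\mid\, \sagree{s_0}{s}{R}} D(s_0)$ to $D(s)$ yields
\[
D'(s') = \sum_{s \in \sto{R} \,\mid\, \sagree{s}{s'}{R''},\ s'(x) = \seme{E}s} D(s),
\]
as required.
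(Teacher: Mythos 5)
Your proposal is correct and follows essentially the same route as the paper's own proof: expand $D'(s')$ over full stores agreeing with $s'$ on $R'$, unfold the definition of $\assignf{x}{E}$, collapse the double sum using that each $s_0$ determines a unique updated store, split the agreement condition at $x$ versus $R''$, and regroup over $\sto{R}$ using $R'' \cup \fv{E} \subseteq R$ so the inner sums collapse to $D(s)$. No gaps.
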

\begin{prooff}
This follows from the calculation
\begin{eqnarray*} 
D'(s') & = &
\sum_{s'_0 \in \fulls \ \mid\ \sagree{s'_0}{s'}{R'}} D'(s'_0) \ 
= 
\sum_{s'_0 \in \fulls \ \mid\ \sagree{s'_0}{s'}{R'}} 
\left( \sum_{s_0 \in \fulls \ \mid\ s'_0 = \upd{s_0}{x}{\seme{E}s_0}} D(s_0) \right)
\\[2mm] & = &
\sum_{s'_0,s_0 \in \fulls \ \mid\ s'_0 = \upd{s_0}{x}{\seme{E}s_0},\ \sagree{s'_0}{s'}{R''},\ s'(x) = \seme{E}s_0} D(s_0)\ 
=
\sum_{s_0 \in \fulls \ \mid\ \sagree{s_0}{s'}{R''},\ s'(x) = \seme{E}s_0} D(s_0)
\\[2mm] & = &
\sum_{s_0 \in \fulls \ \mid\ \sagree{\restrs{s_0}{R}}{s'}{R''},\ s'(x) = \seme{E}(\restrs{s_0}{R})} D(s_0)
=
\sum_{s_0 \in \fulls}\sum_{s \in \sto{R}\ \mid\ s = \restrs{s_0}{R},\ \sagree{s}{s'}{R''},\ s'(x) = \seme{E}s} D(s_0)
\\[2mm] & = &
\sum_{s \in \sto{R} \ \mid\ \sagree{s}{s'}{R''},\ s'(x) = \seme{E}s} 
\left(\sum_{s_0 \in \fulls \ \mid\ \sagree{s_0}{s}{R}} D(s_0)\right) \ 
= 
\sum_{s \in \sto{R} \ \mid\ \sagree{s}{s'}{R''},\ s'(x) = \seme{E}s} D(s)
\end{eqnarray*}
\end{prooff}
\begin{lemma}
\label{lem:rassign-simp}
Assume that $\rassignf{x}{\psi}(D) = D'$.
Assume that $R,R'$ are such that $x \in R'$, and that 
$R'' \subseteq R$
where $R'' = R' \setminus \mkset{x}$.
For $s' \in \sto{R'}$ we then have
\begin{displaymath}
D'(s') 
= 
\psi(s'(x)) 
\sum_{s \in \sto{R} \ \mid\ \sagree{s}{s'}{R''}}
D(s)
\end{displaymath}
\end{lemma}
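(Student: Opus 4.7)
The plan is to mirror the chain of equalities used for Lemma~\ref{lem:assign-simp}, adapting only the step that encodes how the transfer function acts on stores. Concretely, I would start from Definition~\ref{def:Dpartial} to expand
\[
D'(s') = \sum_{s'_0 \in \fulls \ \mid\ \sagree{s'_0}{s'}{R'}} D'(s'_0),
\]
then substitute the defining equation of $\rassignf{x}{\psi}$ to replace each $D'(s'_0)$ by $\sum_{s_0 \in \fulls \ \mid\ \sagree{s'_0}{s_0}{\uvar \setminus \{x\}}} \psi(s'_0(x)) D(s_0)$, and swap the order of summation (Fubini is legitimate as all summands are non-negative).

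Next I would exploit the hypothesis $x \in R'$: since $\sagree{s'_0}{s'}{R'}$, we get $s'_0(x) = s'(x)$, so $\psi(s'_0(x)) = \psi(s'(x))$ can be pulled outside the whole double sum. The crux of the argument is then to simplify the joint index set $\{(s'_0, s_0) \mid \sagree{s'_0}{s'}{R'},\ \sagree{s'_0}{s_0}{\uvar \setminus \{x\}}\}$: for any fixed $s_0$, the second condition determines $s'_0$ on $\uvar \setminus \{x\}$ while the first forces $s'_0(x) = s'(x)$, so $s'_0$ is uniquely determined as $\upd{s_0}{x}{s'(x)}$; the remaining constraint on $s_0$ collapses to $\sagree{s_0}{s'}{R''}$ (recalling $R'' = R' \setminus \{x\}$). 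After this reindexing the expression becomes
\[
\psi(s'(x)) \sum_{s_0 \in \fulls \ \mid\ \sagree{s_0}{s'}{R''}} D(s_0).
\]

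Finally, I would apply Lemma~\ref{lem:sto-partial} (in the version where we ``lift through'' $R''\subseteq R$) to rewrite the inner sum as $\sum_{s \in \sto{R} \ \mid\ \sagree{s}{s'}{R''}} D(s)$, yielding the stated identity.

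The main obstacle is really only the bookkeeping in the reindexing step — one must track which coordinates of $s'_0$ are pinned by which condition, and verify that after pinning them all no freedom remains and that no over-counting occurs. Once this combinatorial point is clean, everything else is literal imitation of the $\assignf{x}{E}$ calculation in Lemma~\ref{lem:assign-simp}, with $\psi(s'(x))$ playing the role of the deterministic check $s'(x) = \seme{E}s$.
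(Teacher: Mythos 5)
Your proposal is correct and follows essentially the same route as the paper's proof: expand $D'(s')$ via Definition~\ref{def:Dpartial}, substitute the definition of $\rassignf{x}{\psi}$, use $x\in R'$ to pull out $\psi(s'(x))$ and collapse the double index to $\{s_0\mid \sagree{s_0}{s'}{R''}\}$ via $s'_0=\upd{s_0}{x}{s'(x)}$, and then regroup over $\sto{R}$. Your appeal to Lemma~\ref{lem:sto-partial} for the final step is just the inline regrouping the paper performs, so there is no substantive difference.
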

\begin{prooff}
This follows from the calculation
\begin{eqnarray*} 
D'(s') & = &
\sum_{s'_0 \in \fulls \ \mid\ \sagree{s'_0}{s'}{R'}} D'(s'_0) \ 
= 
\sum_{s'_0 \in \fulls \ \mid\ \sagree{s'_0}{s'}{R'}} 
\left(\sum_{s_0 \in \fulls\ \mid\ \sagree{s'_0}{s_0}{\uvar \setminus \mkset{x}}} 
\psi(s'_0(x))D(s_0) \right)
\\[2mm] & = &
\sum_{s_0,s'_0 \in \fulls\ \mid\ \sagree{s_0}{s'}{R''},\ s'_0 = \upd{s_0}{x}{s'(x)}} \psi(s'(x))D(s_0)
\\[2mm] & = &
\sum_{s_0 \in \fulls\ \mid\ \sagree{s_0}{s'}{R''}} \psi(s'(x))D(s_0)
=
\psi(s'(x)) 
\sum_{s \in \sto{R} \ \mid\ \sagree{s}{s'}{R''}} \ 
\sum_{s_0 \in \fulls\ \mid\ \sagree{s_0}{s}{R}} D(s_0)
\\[2mm] & = &
\psi(s'(x)) 
\sum_{s \in \sto{R} \ \mid\ \sagree{s}{s'}{R''}}
D(s)
\end{eqnarray*}
\end{prooff}

Next some results that prepare for the proof of
Lemma~\ref{lem:fixed-mult-nonincr-determ}.

\begin{lemma}
\label{lem:limit-add}
Let $\chain{f_k}{k}$ be a chain of additive functions.
Then $f = \limit{k}{f_k}$ is additive.
\end{lemma}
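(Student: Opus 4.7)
The plan is to reduce additivity of $f$ to a pointwise computation, using that the least upper bound in the function cpo $\Dist \contarrow \Dist$ is computed pointwise (Lemma~\ref{lem:cont-cpo}), as is the least upper bound in $\Dist$ itself. Fix arbitrary distributions $D_1, D_2$ such that $D_1 + D_2$ is again a distribution; it suffices to establish $f(D_1+D_2)(s) = f(D_1)(s) + f(D_2)(s)$ for every store $s \in \fulls$.

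First I would unfold the pointwise characterization of the limit: for every $D \in \Dist$ and $s \in \fulls$ we have $f(D)(s) = \limit{k}{f_k(D)(s)}$. Next, by additivity of each $f_k$ applied to $D_1$ and $D_2$, the scalar equality
\[
f_k(D_1+D_2)(s) = f_k(D_1)(s) + f_k(D_2)(s)
\]
holds for every $k$. Taking the limit as $k \to \infty$ on both sides will then yield the claim, once I know that the limit of a sum of two chains of non-negative reals equals the sum of the limits.

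That last fact is the only step requiring any care, and it is standard: if $\chain{a_k}{k}$ and $\chain{b_k}{k}$ are chains of non-negative reals with limits $a$ and $b$, then $\chain{a_k + b_k}{k}$ is a chain bounded above by $a+b$ (so its limit is $\le a+b$), while for any $\epsilon>0$ one can choose $k$ large enough that $a_k > a - \epsilon/2$ and $b_k > b - \epsilon/2$, giving the reverse inequality. I do not foresee any real obstacle; the whole argument is essentially a three-line calculation chaining the pointwise definition of the limit, the additivity of the $f_k$'s, and the continuity of real addition on chains.
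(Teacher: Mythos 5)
Your proposal is correct and follows essentially the same route as the paper: the paper's proof is exactly the calculation $f(D_1+D_2) = \limit{k}{f_k(D_1+D_2)} = \limit{k}{(f_k(D_1)+f_k(D_2))} = \limit{k}{f_k(D_1)} + \limit{k}{f_k(D_2)} = f(D_1)+f(D_2)$, with the interchange of limit and sum left implicit where you spell it out pointwise with an $\epsilon$-argument. The extra care you take on that step is fine but adds nothing beyond what the paper treats as immediate.
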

\begin{proof}
For all distributions $D_1$ and $D_2$,
%% such that $D_1 + D_2$ is a distribution, 
we have
\begin{eqnarray*}
f(D_1 + D_2) & = & \limit{k}{f_k(D_1 + D_2)} = \limit{k}{f_k(D_1) + f_k(D_2)} 
\\ & = & 
\limit{k}{f_k(D_1)} + \limit{k}{f_k(D_2)} = f(D_1) + f(D_2).
\end{eqnarray*}
\end{proof}

\begin{lemma}
\label{lem:limit-mult}
Let $\chain{f_k}{k}$ be a chain of multiplicative functions.
Then $f = \limit{k}{f_k}$ is multiplicative.
\end{lemma}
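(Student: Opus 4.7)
The proof will closely mirror that of Lemma~\ref{lem:limit-add}, which was just established for additivity. My plan is to fix an arbitrary distribution $D$ and a non-negative real $c$, then compute
\[
f(cD) = \limit{k}{f_k(cD)} = \limit{k}{c f_k(D)} = c \limit{k}{f_k(D)} = c f(D),
\]
where the second equality uses that each $f_k$ is multiplicative and the third equality pulls the constant out of the limit.

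The only step that requires a brief justification is the third equality, since limits in $\Dist$ are defined pointwise on stores and scalar multiplication is also defined pointwise. For each $s \in \fulls$, the sequence $\chain{(c f_k(D))(s)}{k} = \chain{c \cdot f_k(D)(s)}{k}$ is a chain of non-negative reals, and $\limit{k}{c \cdot a_k} = c \cdot \limit{k}{a_k}$ holds for any chain $\chain{a_k}{k}$ of non-negative reals and any $c \geq 0$ (trivial if $c = 0$, and for $c > 0$ it follows directly from the characterization of least upper bounds). Applying this pointwise gives $\limit{k}{c f_k(D)} = c \limit{k}{f_k(D)}$ as distributions.

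I do not expect any real obstacle here; the argument is entirely analogous to Lemma~\ref{lem:limit-add}, with the scalar-multiplication identity $\limit{k}{c a_k} = c \limit{k}{a_k}$ playing the role that the pointwise additivity of limits played there. No use of continuity of the $f_k$ or of boundedness of $D$ is required, so the statement applies in full generality to the chain hypothesis given.
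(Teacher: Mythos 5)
Your proof is correct and follows essentially the same route as the paper's: the paper's proof is precisely the one-line calculation $f(cD) = \limit{k}{f_k(cD)} = \limit{k}{c f_k(D)} = c\,\limit{k}{f_k(D)} = cf(D)$, leaving the scalar-out-of-the-limit step implicit. Your extra pointwise justification of that step is fine but adds nothing beyond what the paper takes for granted.
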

\begin{proof}
For all distributions $D$, and for all $c$ with $c \geq 0$, we have
\[
f(cD) = \limit{k}{f_k(cD)} = \limit{k}{c f_k(D)} = c\;\limit{k}{f_k(D)} = cf(D).
\]
\end{proof}

\begin{lemma}
\label{lem:limit-non-incr}
Let $\chain{f_k}{k}$ be a chain of non-increasing functions.
Then $f = \limit{k}{f_k}$ is a non-increasing function.
\end{lemma}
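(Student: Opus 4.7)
The plan is to unpack the definition of non-increasing and then swap the order of the summation over stores with the limit over $k$, using Lemma~\ref{lem:sumlim-limsum}. Concretely, what we must establish is that $\sumd{f(D)} \leq \sumd{D}$ for every distribution $D$.

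First I would observe that, since $\chain{f_k}{k}$ is a chain under the pointwise ordering on $\PD \rightarrow (\Dist \contarrow \Dist)$, we have $f_k(D) \leq f_{k+1}(D)$ for every $D$, so $\chain{f_k(D)}{k}$ is a chain of distributions with $f(D) = \limit{k}{f_k(D)}$, where the limit is taken pointwise over stores. Applying Lemma~\ref{lem:sumlim-limsum} (with $S = \fulls$) to this chain of distributions yields
\[
\sumd{f(D)} \;=\; \sum_{s \in \fulls} f(D)(s) \;=\; \sum_{s \in \fulls} \limit{k}{f_k(D)(s)} \;=\; \limit{k}{\sum_{s \in \fulls} f_k(D)(s)} \;=\; \limit{k}{\sumd{f_k(D)}}.
\]

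Next, since each $f_k$ is non-increasing by hypothesis, $\sumd{f_k(D)} \leq \sumd{D}$ for every $k$. Hence the real-valued chain $\chain{\sumd{f_k(D)}}{k}$ is bounded above by $\sumd{D}$, and its least upper bound therefore also satisfies $\limit{k}{\sumd{f_k(D)}} \leq \sumd{D}$. Combining with the previous display gives $\sumd{f(D)} \leq \sumd{D}$, as required.

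There is no real obstacle here; the only subtlety is making sure that Lemma~\ref{lem:sumlim-limsum} applies, which it does because $\chain{f_k(D)}{k}$ is a chain of distributions (not necessarily bounded, but the lemma does not require boundedness). The result then follows by a direct two-line calculation plus the monotonicity of suprema over reals bounded by a constant.
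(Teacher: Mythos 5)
Your proof is correct and follows essentially the same route as the paper's: both apply Lemma~\ref{lem:sumlim-limsum} to exchange the sum over stores with the limit in $k$, then use $\sumd{f_k(D)} \leq \sumd{D}$ for all $k$ to bound the resulting limit by $\sumd{D}$. The extra remarks about the pointwise chain structure and the non-boundedness clause of Lemma~\ref{lem:sumlim-limsum} are sound but not a different argument.
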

\begin{proof}
For all distributions $D$,
by assumption
$\sumd{f_k(D)} \leq \sumd{D}$ for all $k$ so
by Lemma~\ref{lem:sumlim-limsum} we get:
\[  
\sumd{f(D)} = \sumd{\limit{k}{f_k(D)}} 
= \limit{k}{\sumd{f_k(D)}}
\leq \limit{k}{\sumd{D}} = \sumd{D}.
\]
\end{proof}

%% \begin{lemma}
%% \label{lem:limit-determine}
%% Let $\chain{f_k}{k}$ be a chain of deterministic functions.
%% Then $f = \limit{k}{f_k}$ is deterministic.
%% \end{lemma}
%% \begin{proof}
%% With concentrated $D$ given, 
%% we must prove that $f(D)$ is concentrated.
%% If $f(D) = 0$, the claim is obvious.
%% Otherwise, there exists $m$ and $s_0 \in \fulls$ such 
%% that $f_m(D)(s_0) > 0$. As $\chain{f_k}{k}$ is a chain, we infer that
%% for all $n \geq m$ we have $f_n(D)(s_0) > 0$,
%% which as each $f_n$ is deterministic implies
%% that for all $s \in \fulls$ with $s \neq s_0$ we have
%% $f_n(D)(s) = 0$, and thus the desired $f(D)(s) = 0$.
%% \end{proof}

\begin{lemma}
\label{lem:HH-preserve-misc}
With $\HH{X}$ as defined in Def.~\ref{def:HHX}, we have:
\begin{itemize}
\item
if $\hpd{h_0}{v}{v'}$ is additive for all $(v,v') \in \PD$
then $\hpd{\HH{X}(h_0)}{v}{v'}$ is additive for all $(v,v') \in \PD$;
\item
if $\hpd{h_0}{v}{v'}$ is multiplicative for all $(v,v') \in \PD$
then $\hpd{\HH{X}(h_0)}{v}{v'}$ is multiplicative for all $(v,v') \in \PD$;
\item
if $\hpd{h_0}{v}{v'}$ is non-increasing for all $(v,v') \in \PD$
then $\hpd{\HH{X}(h_0)}{v}{v'}$ is non-increasing for all $(v,v') \in \PD$.
%% \item
%% for a deterministic CFG, 
%% if $\hpd{h_0}{v}{v'}$ is deterministic for all $(v,v') \in \PD$
%% then $\hpd{\HH{X}(h_0)}{v}{v'}$ is deterministic for all $(v,v') \in \PD$.
\end{itemize}
\end{lemma}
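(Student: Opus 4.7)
The plan is to prove all three implications simultaneously by induction on $\LAP{v}{v'}$, following the case analysis laid out in Definition~\ref{def:HHX}. Since each of the three properties (additivity, multiplicativity, non-increasingness) behaves well under all the constructors used in that definition, the three proofs are structurally identical, so I would state them as one induction with three parallel verifications at each case.

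For the base case $v = v'$, the function $\hh{v}{v'}$ is the identity on $\Dist$, which is trivially additive, multiplicative, and non-increasing. For the transitive case $v' \neq v'' = \fppd{v}$, we have $\hh{v}{v'} = \hh{v''}{v'} \circ \hh{v}{v''}$, and Lemma~\ref{lem:LAP-add} together with the induction hypothesis gives us the desired property for both factors; all three properties are manifestly preserved under composition (for non-increasingness, $\sumd{f(g(D))} \leq \sumd{g(D)} \leq \sumd{D}$). When $v' = \fppd{v}$ and $v \notin X$, or $\labv{v} = \skipv$, again $\hh{v}{v'}$ is the identity. The three remaining leaf cases (assignment, random assignment, observe) are direct appeals to Lemmas~\ref{lem:assign-misc}, \ref{lem:rassign-misc}, and \ref{lem:selectB}.

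The only slightly delicate case is the branching node. Write $D_1 = \selectf{B}(D)$, $D_2 = \selectf{\neg B}(D)$, and $\hh{v}{v'}(D) = D'_1 + D'_2$ where $D'_i$ equals $\hh{v_i}{v'}(D_i)$ when $\LAP{v_i}{v'} < \LAP{v}{v'}$ (inductive hypothesis available) and $\hpd{h_0}{v_i}{v'}(D_i)$ otherwise (hypothesis on $h_0$ available). In either sub-case the relevant property of the inner function holds, and Lemma~\ref{lem:selectB} gives the same property for $\selectf{B}$ and $\selectf{\neg B}$. For additivity and multiplicativity, the claim follows by distributing $\selectf{B}$ and $\selectf{\neg B}$ over sums or scalar multiples, applying the inner property, and then using additivity of $+$. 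For non-increasingness we use the identity $\selectf{B}(D) + \selectf{\neg B}(D) = D$ from Lemma~\ref{lem:selectB} to compute
\[
\sumd{\hh{v}{v'}(D)} = \sumd{D'_1} + \sumd{D'_2} \leq \sumd{D_1} + \sumd{D_2} = \sumd{D}.
\]

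I do not expect any real obstacle: the proof is a routine structural induction, and the role of $h_0$ at cycle-inducing nodes is harmless precisely because we only invoke $\hpd{h_0}{v_i}{v'}$ when the corresponding hypothesis on $h_0$ is in force. The only thing to be careful about is keeping the three parallel bookkeepings separate at the branching case, so that, e.g., the ``non-increasing'' clause uses $\sumd{D_1} + \sumd{D_2} = \sumd{D}$ rather than componentwise equality (which would be false in general).
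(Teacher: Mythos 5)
Your proposal is correct and matches the paper's argument: the paper also proves this by an easy induction on $\LAP{v}{v'}$ over the cases of Definition~\ref{def:HHX}, invoking Lemmas~\ref{lem:selectB}, \ref{lem:assign-misc} and \ref{lem:rassign-misc}, with the cycle-inducing branches handled exactly as you do via the hypothesis on $h_0$. Your write-up simply spells out the details the paper leaves implicit, including the correct use of $\selectf{B}(D)+\selectf{\neg B}(D)=D$ in the non-increasing case.
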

\begin{proof}
An easy induction in $\LAP{v}{v'}$,
using Lemmas~\ref{lem:selectB}, \ref{lem:assign-misc} and
\ref{lem:rassign-misc}.
\end{proof}

{\bf Lemma~\ref{lem:fixed-mult-nonincr-determ}}
Let $h = \fixed{\HH{X}}$. Then for each $(v,v') \in \PD$,
$\hpd{h}{v}{v'}$ is additive, multiplicative and non-increasing
(as is also $\hpd{\omega_k}{v}{v'}$ for each $k \geq 0$).
%% Also, if the CFG is deterministic, then $\hpd{h}{v}{v'}$ is deterministic.

\begin{proof}
The function $0$ is obviously additive, multiplicative and non-increasing,
%% and deterministic,
so by Lemma~\ref{lem:HH-preserve-misc} we infer that 
for each $k$, letting $h_k = \HH{X}^k(h_0)$, and for each $(v,v') \in \PD$,
$\hpd{h_k}{v}{v'}$ is additive, multiplicative and non-increasing.
%% and also deterministic if the CFG is. 
The claim now follows from Lemmas~\ref{lem:limit-add}, 
\ref{lem:limit-mult}, and \ref{lem:limit-non-incr}.
%% and~\ref{lem:limit-determine}.
\end{proof}

{\bf Lemma~\ref{lem:phi1-id}}
Given a pCFG, a slice set $Q$, and $(v,v') \in \PD$ such that
$v$ stays outside $Q$ until $v'$. 
We then have
$\hpd{\HH{Q}(h)}{v}{v'}(D) = D$ for all $D \in \Dist$
and all modification functions $h$.

\begin{proof}
We do induction in $\LAP{v}{v'}$. The claim is obvious if $v = v'$.
If with $v'' = \fppd{v}$ we have $v' \neq v''$, we can
(by Lemmas~\ref{lem:outside-decomp} and \ref{lem:LAP-add}) apply the
induction hypothesis to $(v,v'')$ and $(v'',v')$,
to get the desired
$\hpd{\HH{Q}(h)}{v}{v'}(D) = \hpd{\HH{Q}(h)}{v''}{v'}(\hpd{\HH{Q}(h)}{v}{v''}(D)) =
\hpd{\HH{Q}(h)}{v''}{v'}(D) = D$.

We are left with the case when $v' = \fppd{v}$, and since $v$ stays
outside $Q$ until $v'$ we see that $v \notin Q$ and thus
clause~\ref{defn:evalk:notinX} in Definition~\ref{def:HHX}
gives the desired $\hpd{\HH{Q}(h)}{v}{v'}(D) = D$.
\end{proof}

We now prepare for the proof of Lemma~\ref{lem:sum-preserve-not-affect-rv}.

\begin{definition}
A function $f: \Dist \rightarrow \Dist$ is non-increasing wrt.~$R$,
a set of variables, if $f(D)(s) \leq D(s)$ holds
for all $D \in \Dist$ and $s \in \sto{R}$.
\end{definition}
Observe that with $R = \emptyset$, this reduces to the previous notion
of non-increasing. 

\begin{lemma}
\label{lem:outside-nonincrease}
With $(v,v') \in \PD$, assume that
$Q$ is closed under data dependence and that $v$ stays outside $Q$ until $v'$.
Let $R = \rv{Q}{v} = \rv{Q}{v'}$ (well-defined by
Lemma~\ref{lem:outside-same-rv}),
and let $\omega_k = \HH{\unodes}^k(0)$ for each $k \geq 0$. Then
$\hpd{\omega_k}{v}{v'}$ is non-increasing wrt.~$R$ for all $k \geq 0$.
\end{lemma}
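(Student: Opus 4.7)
\textbf{Proof plan for Lemma~\ref{lem:outside-nonincrease}.}
I would prove the claim by outer induction on $k$, with an inner induction on $\LAP{v}{v'}$ for the inductive step. The base case $k = 0$ is immediate since $\omega_0 = 0$, so $\hpd{\omega_0}{v}{v'}(D)(s) = 0 \leq D(s)$ for every $s \in \sto{R}$.

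For the inductive step, fix $k \geq 1$ and assume the result for $k-1$. Unfold $\omega_k = \HH{\unodes}(\omega_{k-1})$ and do an inner induction on $\LAP{v}{v'}$, following the clauses of Definition~\ref{def:HHX}. If $v = v'$, $\hpd{\omega_k}{v}{v'}$ is the identity and the claim is trivial. If $v' \neq \fppd{v}$, I would use Lemma~\ref{lem:outside-decomp} to deduce that $v$ stays outside $Q$ until $v'' = \fppd{v}$ and that $v''$ stays outside $Q$ until $v'$; then Lemma~\ref{lem:outside-same-rv} gives $\rv{Q}{v} = \rv{Q}{v''} = \rv{Q}{v'} = R$, and Lemma~\ref{lem:LAP-add} lets me apply the inner induction hypothesis at $(v,v'')$ and $(v'',v')$, composing the two ``$\leq$'' inequalities on $\sto{R}$.

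The remaining case is $v' = \fppd{v}$, where I would split on the label of $v$. Since $v \neq v'$ and $v$ stays outside $Q$ until $v'$, we have $v \notin Q$. If $v$ is $\skipv$ the transfer is the identity. If $v$ is $\assignv{x}{E}$ or $\rassignv{x}{\psi}$, then $x \in \defv{v}$, and because every witness path for $x \in \rv{Q}{v}$ would start at $v$ without redefining $x$, I obtain $x \notin R$; Lemmas~\ref{lem:assignRirrel} and~\ref{lem:rassignRirrel} then give exact agreement on $R$, which is in particular non-increasing. If $v$ is $\observev{B}$, the transfer is $\selectf{B}$, which is pointwise non-increasing on full stores and hence also on partial stores $s \in \sto{R}$ after summing via Definition~\ref{def:Dpartial}. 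Finally, if $v$ is branching with successors $v_1, v_2$ and condition $B$, Lemma~\ref{lem:outside-branch} gives that each $v_i$ stays outside $Q$ until $v'$, and Lemma~\ref{lem:outside-same-rv} gives $\rv{Q}{v_i} = R$; on the branches with $\LAP{v_i}{v'} < \LAP{v}{v'}$ I apply the inner induction hypothesis at $k$, and on the cycle-inducing branches with $\LAP{v_i}{v'} \geq \LAP{v}{v'}$ I apply the outer induction hypothesis at $k-1$. Combining with $\selectf{B}(D)(s) + \selectf{\neg B}(D)(s) = D(s)$ from Lemma~\ref{lem:selectB} closes the case.

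The main obstacle is the branching case, since it mixes both inductions and requires checking that the hypotheses (postdomination, $Q$-closure under data dependence, and ``stays outside'') propagate to both successors; the supporting lemmas about outside/decomposition and equality of relevant-variable sets do exactly this work. A smaller but necessary subtlety is the partial-store interpretation: every pointwise inequality on $\fulls$ transfers to $\sto{R}$ by summing over the fibre $\{s_0 \in \fulls \mid \sagree{s_0}{s}{R}\}$ using Definition~\ref{def:Dpartial}.
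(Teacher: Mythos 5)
Your proposal is correct and follows essentially the same route as the paper: outer induction on $k$ with an inner induction on $\LAP{v}{v'}$, using Lemma~\ref{lem:outside-decomp} for the composition case, Lemmas~\ref{lem:assignRirrel} and~\ref{lem:rassignRirrel} after establishing $x \notin R$ for (random) assignments, and Lemma~\ref{lem:outside-branch} to split the branching case between the inner and outer induction hypotheses. The only (harmless) local difference is that you derive $x \notin R$ directly from the definition of $\rv{Q}{v}$ (the witness path would start at a node defining $x$), whereas the paper argues via closure under data dependence that $x \in R$ would force $v \in Q$.
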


\begin{proof}
Induction in $k$, followed by induction in $\LAP{v}{v'}$.
The case where $k = 0$ is trivial as then $\omega_k = 0$.
Now let $k > 0$, in which case $\omega_k = \HH{\unodes}(\omega_{k-1})$.
If $v' = v$, then $\hpd{\omega_k}{v}{v'}(D) = D$
and the claim is trivial. 
Otherwise, let $v_0 = \fppd{v}$; if $v_0 \neq v'$
then Lemma~\ref{lem:outside-decomp} tells us that we can apply
the induction hypothesis twice to infer that
$\hpd{\omega_k}{v}{v_0}$ and
$\hpd{\omega_k}{v_0}{v'}$ are both non-increasing wrt.~$R$
which shows that
$\hpd{\omega_k}{v}{v'}$ is non-increasing wrt.~$R$ since for $s \in \sto{R}$
we have
\[
\hpd{\omega_k}{v}{v'}(D)(s) =
\hpd{\omega_k}{v_0}{v'}(\hpd{\omega_k}{v}{v_0}(D))(s)
\leq
\hpd{\omega_k}{v}{v_0}(D)(s)
\leq 
D(s).
\]
We are left with case where $v' = \fppd{v}$.
If $\labv{v}$ is of the form $\observev{B}$ the claim is trivial.

If $\labv{v}$ is of the form $\assignv{x}{E}$ or of the form
$\rassignv{x}{\psi}$ we first infer that $x \notin R$ because otherwise,
as $Q$ is closed under data dependence, we would have $v \in Q$
which contradicts that $v$ stays outside $Q$ until $v'$.
But then the claim follows from
Lemmas~\ref{lem:assignRirrel} and \ref{lem:rassignRirrel}.

The last case is if $v$ is a branching node with condition $B$,
with $v_1$ the $\true$-successor of $v$ and $v_2$ the $\false$-successor
of $v$.  
For each $D \in \Dist$, let $D_1 = \selectf{B}(D)$ and 
$D_2 = \selectf{\neg B}(D)$ and let (for $i \in \{1,2\}$)
$D'_i$ be defined as $\hpd{\omega_k}{v_i}{v'}(D_i)$
if $\LAP{v_i}{v'} < \LAP{v}{v'}$,
but otherwise as
$\hpd{\omega_{k-1}}{v_i}{v'}(D_i)$.
For each $i \in \{1,2\}$ we can apply,
as $v_i$ stays outside $Q$ until $v'$
(Lemma~\ref{lem:outside-branch}),
either the outer induction hypothesis, or the inner induction hypothesis,
to infer that for all $s \in \sto{R}$,
$D'_i(s) \leq D_i(s)$.
For each $s \in \sto{R}$
we thus infer the desired
\[
\hpd{\omega_k}{v}{v'}(s) = D'_1(s) + D'_2(s)
\leq D_1(s) + D_2(s) = D(s).
\]
\end{proof}

{\bf Lemma~\ref{lem:sum-preserve-not-affect-rv}}
With $(v,v') \in \PD$, assume that
$Q$ is closed under data dependence and that $v$ stays outside $Q$
until $v'$ (by Lemma~\ref{lem:outside-same-rv} it thus makes sense
to define $R = \rv{Q}{v} = \rv{Q}{v'}$).

For all distributions $D$,
if $\sumd{\hpd{\omega}{v}{v'}(D)} = \sumd{D}$ 
then $\dagree{\hpd{\omega}{v}{v'}(D)}{D}{R}$.

\begin{proof}
Given $D \in \Dist$ with $\sumd{\hpd{\omega}{v}{v'}(D)} = \sumd{D}$,
for all $s \in \sto{R}$,
Lemma~\ref{lem:outside-nonincrease} yields
$\hpd{\omega_k}{v}{v'}(D)(s) \leq D(s)$ for all $k \geq 0$,
and as $\omega = \limit{k}{\omega_k}$ 
this implies ---since $\omega$ is the \emph{least} upper bound--- 
$\hpd{\omega}{v}{v'}(D)(s) \leq D(s)$.
We thus get (using Lemma~\ref{lem:sumd})
\[
\sumd{\hpd{\omega}{v}{v'}(D)} =
\sum_{s \in \sto{R}} \hpd{\omega}{v}{v'}(D)(s)
\leq
\sum_{s \in \sto{R}} D(s)
= \sumd{D}.
\]

If 
$\sumd{\hpd{\omega}{v}{v'}(D)} = \sumd{D}$ we infer from the above that 
\[
\sum_{s \in \sto{R}} \hpd{\omega}{v}{v'}(D)(s)
=
\sum_{s \in \sto{R}} D(s) 
\]
which since $\hpd{\omega}{v}{v'}(D)(s) \leq D(s)$ for all $s \in \sto{R}$
is possible only if $\hpd{\omega}{v}{v'}(D)(s) = D(s)$ for all $s \in \sto{R}$,
that is $\dagree{\hpd{\omega}{v}{v'}(D)}{D}{R}$.
\end{proof}

\subsection{Proofs for Section~\ref{sec:conditions}}

{\bf Lemma~\ref{lem:weak-union}}
If $Q_1$ and $Q_2$ are weak slice sets,
also $Q_1 \cup Q_2$ is a weak slice set.
\begin{proof}
Let $Q = Q_1 \cup Q_2$. To see that $Q$ 
is closed under data dependence, assume that $\dd{v}{v'}$ with
$v' \in Q$; wlog.\ we can assume $v' \in Q_1$ which since
$Q_1$ is closed under data dependence implies $v \in Q_1$ 
and thus $v \in Q$.

We shall now look at a node $v$, and argue that
$\nextq{Q}{v}$ exists.
If $\nextq{Q_1}{v} = v$ or $\nextq{Q_2}{v} = v$,
then $v \in Q \cup \mkset{\finalv}$ and thus
$\nextq{Q}{v} = v$.
Otherwise, let $v_1 = \nextq{Q_1}{v}$
and $v_2 = \nextq{Q_2}{v}$; both $v_1$ and $v_2$
are proper postdominators of $v$ so by
Lemma~\ref{lem:prec-ordering} we can wlog.\ assume
that $v_1$ occurs before $v_2$ in all paths from $v$ to $\finalv$
(or that $v_1 = v_2$).

We shall now show that 
$v_1 = \nextq{Q}{v}$, where we first observe that
$v_1 \in Q \cup \mkset{\finalv}$.
Now consider a path $\pi$ from
$v$ to $Q \cup \mkset{\finalv}$. We must show that $\pi$ contains
$v_1$, which is obvious
if the path $\pi$ is to $Q_1 \cup \mkset{\finalv}$.
Otherwise, when $\pi$ is
to $Q_2$, we infer that $\pi$ contains $v_2$
(which yields the claim if $v_1 = v_2$).
Since all nodes have a path to $\finalv$,
$\pi$ is a prefix of a path $\pi'$ from $v$ to $\finalv$;
as $v_1$ occurs before $v_2$ in all paths from $v$ to $\finalv$, 
we see that $v_1$ occurs before $v_2$ in $\pi'$.
We infer that $v_1$ occurs also in $\pi$, as desired.
\end{proof}

{\bf Lemma~\ref{lem:sum-pres}}
Assume $Q'$ is a node set which 
contains all $\observevv$ nodes, and 
that for each cycle-inducing node $v_0$,
either $v_0 \in Q'$ or $\hpd{\omega}{v_0}{\fppd{v_0}}$ is sum-preserving.
If $v$ stays outside $Q'$ until $v'$
then $\sumd{\hpd{\omega}{v}{v'}(D)} = \sumd{D}$ for all $D$.

\begin{proof}
We do induction in $\LAP{v}{v'}$.
The claim is obvious if $v' = v$, so we can assume that $v'$ is a proper
postdominator of $v$.

With $v'' = \fppd{v}$, let us first assume that $v' \neq v''$.
By Lemma~\ref{lem:LAP-add} we have
$\LAP{v}{v''} < \LAP{v}{v'}$ 
and $\LAP{v''}{v'} < \LAP{v}{v'}$,
and by Lemma~\ref{lem:outside-decomp} we see that
$v$ stays outside $Q'$ until $v''$ and that $v''$ stays outside $Q'$ until $v'$.
We can thus apply the induction hypothesis on
$\hpd{\omega}{v}{v''}$ (the third equality) and on 
$\hpd{\omega}{v''}{v'}$ (the second equality) to infer that
for all $D$ we have
\[
\sumd{\hpd{\omega}{v}{v'}(D)} =
\sumd{\hpd{\omega}{v''}{v'}(\hpd{\omega}{v}{v''}(D))} =
\sumd{\hpd{\omega}{v}{v''}(D)} = \sumd{D}.
\]
Thus we can now assume that $v' = \fppd{v}$.
If $v$ is labeled $\skipv$ the claim is trivial;
if $v$ is labeled $\assignv{x}{E}$ (or
$\rassignv{x}{\psi}$) then the claim follows from
Lemma~\ref{lem:assign-misc}
(or Lemma~\ref{lem:rassign-misc}).
Note that $v \notin Q'$ (as $v$ stays outside $Q'$ until $v'$),
so our assumptions entail that
$v$ cannot be an $\observevv$ node, and that if $v$ is cycle-inducing
then $\hpd{\omega}{v}{v'}$ is sum-preserving and thus the claim.

We are thus left with the case that $v$ is a branching node
which is not cycle-inducing.
With $v_1$ the $\true$-successor and $v_2$ the $\false$-successor of $v$, 
we thus have
$\LAP{v_1}{v'} < \LAP{v}{v'}$ and
$\LAP{v_2}{v'} < \LAP{v}{v'}$,
and by Lemma~\ref{lem:outside-branch} also
that $v_1$ and $v_2$ both stay outside $Q'$ until $v'$.
Hence we can apply the induction hypothesis to $\hpd{\omega}{v_1}{v'}$
and $\hpd{\omega}{v_2}{v'}$, to get the desired result:
\begin{eqnarray*}
\sumd{\hpd{\omega}{v}{v'}(D)} & = &
\sumd{(\hpd{\omega}{v_1}{v'}(\selectf{B}(D)) + \hpd{\omega}{v_2}{v'}(\selectf{\neg B}(D)))}
\\ & = &
\sumd{\hpd{\omega}{v_1}{v'}(\selectf{B}(D))} + 
\sumd{\hpd{\omega}{v_2}{v'}(\selectf{\neg B}(D))}
\\ & = & \sumd{\selectf{B}(D)} + \sumd{\selectf{\neg B}(D)} \ = \sumd{D}.
\end{eqnarray*}
\end{proof}

\subsection{Proofs for Section~\ref{sec:correct}}

\begin{lemma}
\label{lem:gamma-mult-nonincr}
For each $(v,v') \in \PD$, and each $k \geq 0$,
$\hpd{\gamma_k}{v}{v'}$ 
is additive, multiplicative and non-increasing,
\end{lemma}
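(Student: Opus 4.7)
The plan is a straightforward induction on $k$, leveraging the preservation properties already established for $\HH{X}$.

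For the base case $k = 0$, I would use Definition~\ref{def:gamma-k} and split on whether $v$ stays outside $Q \cup Q_0$ until $v'$. In the first case, $\hpd{\gamma_0}{v}{v'} = \hpd{\omega}{v}{v'}$, and Lemma~\ref{lem:fixed-mult-nonincr-determ} (applied with $X = \unodes$, since $\omega = \fixed{\HH{\unodes}}$) gives additivity, multiplicativity, and the non-increasing property directly. In the second case, $\hpd{\gamma_0}{v}{v'} = 0$, which trivially satisfies all three properties ($0 + 0 = 0$, $c \cdot 0 = 0$, and $\sumd{0} = 0 \leq \sumd{D}$).

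For the inductive step $k > 0$, I would use the equation $\gamma_k = \HH{\unodes}(\gamma_{k-1})$. By the induction hypothesis, $\hpd{\gamma_{k-1}}{v}{v'}$ is additive, multiplicative, and non-increasing for every $(v,v') \in \PD$. Then Lemma~\ref{lem:HH-preserve-misc} immediately transfers each of these three properties from $\gamma_{k-1}$ to $\HH{\unodes}(\gamma_{k-1}) = \gamma_k$.

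There is no real obstacle here — the work was already done in Lemma~\ref{lem:HH-preserve-misc} (which in turn relied on Lemmas~\ref{lem:selectB}, \ref{lem:assign-misc}, and \ref{lem:rassign-misc}), and the only thing specific to $\gamma_k$ (as opposed to $\omega_k$) is the base case, where the split definition must be checked separately. The key observation that makes the base case trivial on the ``outside'' branch is that $\hpd{\omega}{v}{v'}$ already enjoys all the needed properties as a fixed point of $\HH{\unodes}$, independently of whether $v$ stays outside $Q \cup Q_0$.
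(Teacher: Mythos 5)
Your proof is correct and follows essentially the same route as the paper: the base case reduces to Lemma~\ref{lem:fixed-mult-nonincr-determ} (for $\omega$) together with the trivial properties of the zero function, and the inductive step is exactly the preservation result of Lemma~\ref{lem:HH-preserve-misc} applied to $\gamma_k = \HH{\unodes}(\gamma_{k-1})$. You have merely spelled out the case split in the base case that the paper leaves implicit.
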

\begin{proof}
We know from Lemma~\ref{lem:fixed-mult-nonincr-determ} that
$\omega$ is additive, multiplicative and non-increasing
(and so is the function 0);
the result thus follows from Lemma~\ref{lem:HH-preserve-misc}.
\end{proof}
Similarly, we have (with $\Phi_k$ defined in Def.~\ref{def:phipk}):
\begin{lemma}
\label{lem:Phi-mult-nonincr}
For each $(v,v') \in \PD$, and each $k \geq 0$,
$\hpd{\Phi_k}{v}{v'}$ 
is additive, multiplicative and non-increasing,
\end{lemma}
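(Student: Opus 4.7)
My plan is to mirror the proof of Lemma~\ref{lem:gamma-mult-nonincr}, proceeding by induction on $k$, since $\Phi_k$ is defined by the same kind of recursion (via $\HH{Q}$ rather than $\HH{\unodes}$) as $\gamma_k$. The only new ingredient is handling the base case, which differs from that of $\gamma_0$.

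For the base case, I need to check that for every $(v,v') \in \PD$, $\hpd{\Phi_0}{v}{v'}$ is additive, multiplicative, and non-increasing. By Definition~\ref{def:phipk}, $\hpd{\Phi_0}{v}{v'}$ is either the identity function $\lambda D.D$ (when $v$ stays outside $Q \cup Q_0$ until $v'$) or the constant-zero function $\lambda D.0$ (otherwise). Both functions trivially enjoy all three properties: additivity and multiplicativity follow from direct inspection, and non-increasingness amounts to $\sumd{D} \leq \sumd{D}$ or $0 \leq \sumd{D}$, respectively. So the base case is routine.

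For the inductive step, I would assume $\hpd{\Phi_{k-1}}{v}{v'}$ has the three properties for all $(v,v') \in \PD$, and then use $\Phi_k = \HH{Q}(\Phi_{k-1})$ together with Lemma~\ref{lem:HH-preserve-misc} --- which is stated for a general parameter $X$ and therefore applies to $\HH{Q}$ --- to conclude that $\hpd{\Phi_k}{v}{v'}$ also has the three properties for all $(v,v') \in \PD$.

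There is no real obstacle: the only conceptual subtlety is recognizing that the base case $\Phi_0$ is not itself produced by applying $\HH{Q}$, so it must be inspected directly rather than being obtained as the fixed-point iterate starting from $0$. Once that is done, the rest reduces to a mechanical invocation of the preservation lemma, exactly as in the proof of Lemma~\ref{lem:gamma-mult-nonincr}.
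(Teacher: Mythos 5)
Your proof is correct and follows essentially the route the paper intends: the paper gives no explicit proof of this lemma but introduces it with ``Similarly,'' meaning the argument of Lemma~\ref{lem:gamma-mult-nonincr} carried over, which is exactly what you do --- check the base case $\Phi_0$ directly (here the identity and zero functions, both trivially additive, multiplicative and non-increasing) and then iterate Lemma~\ref{lem:HH-preserve-misc} with $X=Q$. No gaps.
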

{\bf Lemma~\ref{lem:gamma-as-omega}}
Assume that $v$ stays outside $Q \cup Q_0$ until $v'$.
Then $\hpd{\gamma_k}{v}{v'} = \hpd{\omega}{v}{v'}$ holds for all $k \geq 0$.

\begin{proof}
We do induction in $k$, where the base case $k = 0$ follows from
the definition of $\gamma_0$.

For the inductive case, where 
$\gamma_{k} = \HH{\unodes}(\gamma_{k-1})$ with $k > 0$,
we do induction in $\LAP{v}{v'}$, with a case analysis on the 
definition of $\HH{\unodes}$:
\begin{itemize}
\item
If $v' = v$ then $\hpd{\gamma_k}{v}{v'}(D) = D = \hpd{\omega}{v}{v'}(D)$;
\item
If with $v'' = \fppd{v}$ we have $v' \neq v''$ then
we know from Lemma~\ref{lem:outside-decomp} that
$v$ stays outside $Q \cup Q_0$ until $v''$,
and that $v''$ stays outside $Q \cup Q_0$ until $v'$;
by Lemma~\ref{lem:LAP-add} we know that
$\LAP{v}{v''} < \LAP{v}{v'}$ 
and $\LAP{v''}{v'} < \LAP{v}{v'}$.
Hence we can apply the inner induction hypothesis to infer that
$\hpd{\gamma_k}{v}{v''} = \hpd{\omega}{v}{v''}$ and
$\hpd{\gamma_k}{v''}{v'} = \hpd{\omega}{v''}{v'}$.
But then we get the desired
\[
\hpd{\gamma_k}{v}{v'} = \hpd{\gamma_k}{v}{v''}\; ; \; \hpd{\gamma_k}{v''}{v'}
= \hpd{\omega}{v}{v''}\; ; \; \hpd{\omega}{v''}{v'}
= \hpd{\omega}{v}{v'}.
\]
\item
Otherwise, when  $v' = \fppd{v}$, the claim is trivial except when $v$ 
is a branching node. So consider such a $v$, and let $B$ be its condition,
$v_1$ its $\true$-successor, and $v_2$ its $\false$-successor.
Our goal is to prove, for a given $D$, that
$\hpd{\gamma_k}{v}{v'}(D) = \hpd{\omega}{v}{v'}(D)$ which amounts to
\[
\hpd{\HH{\unodes}(\gamma_{k-1})}{v}{v'}(D) =
\hpd{\HH{\unodes}(\omega)}{v}{v'}(D).
\]
With $D_1 = \selectf{B}(D)$ and $D_2 = \selectf{\neg B}(D)$,
examining the definition of $\HH{\unodes}$ shows that
it suffices if for $i \in \{1,2\}$ we can prove:
\begin{itemize}
\item
if $\LAP{v_i}{v'} < \LAP{v}{v'}$ then
\[
\hpd{\gamma_k}{v_i}{v'}(D_i) =
\hpd{\omega}{v_i}{v'}(D_i)
\]
which follows by the inner induction hypothesis;
\item
if $\LAP{v_i}{v'} \geq \LAP{v}{v'}$ then
\[
\hpd{\gamma_{k-1}}{v_i}{v'}(D_i) =
\hpd{\omega}{v_i}{v'}(D_i)
\]
which follows by the outer induction hypothesis.
\end{itemize}
\end{itemize}
\end{proof}

To prepare for the proof of Lemma~\ref{lem:indpd2indpd},
we state a result about branching nodes:
\begin{lemma}
\label{lem:indpd-branch}
Assume that $v$ is a branching node,
with $B$ its condition,
and $v_1$ its $\true$-successor and $v_2$ its $\false$-successor.
Let $D \in \Dist$
with $D_1 = \selectf{B}(D)$ and $D_2 = \selectf{\neg B}(D)$. 
Assume that $R,R_0,R_1,R_2$ are such that
$\fv{B} \cup R_1 \cup R_2 \subseteq R$
and $R \cap R_0 = \emptyset$.
Finally assume that
$R$ and $R_0$ are independent in $D$.
Then for $i = 1,2$ we have
\begin{enumerate}
\item 
\label{branch-p1}
$R_i$ and $R_0$ are independent in $D_i$.
\item 
\label{branch-p2}
$\forall s_0 \in \sto{R_0}:
D(s_0)\sumd{D_i} = D_i(s_0)\sumd{D}$, and
\end{enumerate}
\end{lemma}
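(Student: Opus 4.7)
The plan is to prove part~\ref{branch-p2} first via a direct expansion, then derive part~\ref{branch-p1} from it together with an analogous expansion.

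For part~\ref{branch-p2}, consider $i=1$ (the case $i=2$ is symmetric). I would fix $s_0 \in \sto{R_0}$ and unfold $D_i(s_0)$ using Lemma~\ref{lem:sto-partial} together with the definition of $\selectf{B}$, obtaining
\[
D_i(s_0) \;=\; \sum_{s' \in \fulls,\ \sagree{s'}{s_0}{R_0},\ \seme{B}s'} D(s').
\]
Since $\fv{B} \subseteq R$ and $R \cap R_0 = \emptyset$, the predicate $\seme{B}s'$ depends only on $s'|_R$, so the sum regroups by the $R$-part of $s'$ (invoking Lemma~\ref{lem:sto-partial} once more with $R' = R \cup R_0$) to give
\[
D_i(s_0) \;=\; \sum_{s \in \sto{R},\ \seme{B}s} D(\adds{s}{s_0}).
\]
The same regrouping, now applied to $\sumd{D_i}$ via Lemma~\ref{lem:sumd}, yields $\sumd{D_i} = \sum_{s \in \sto{R},\ \seme{B}s} D(s)$. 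Invoking the independence of $R$ and $R_0$ in $D$, namely $D(\adds{s}{s_0})\sumd{D} = D(s)D(s_0)$, termwise in the first sum then produces
\[
D_i(s_0)\sumd{D} \;=\; D(s_0) \sum_{s \in \sto{R},\ \seme{B}s} D(s) \;=\; D(s_0)\sumd{D_i},
\]
which is part~\ref{branch-p2}.

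For part~\ref{branch-p1}, I would fix $s_1 \in \sto{R_1}$ and $s_0 \in \sto{R_0}$. Since $R_1 \subseteq R$, the same kind of regrouping (now grouping by the $R$-part while restricting to $s$ that agrees with $s_1$ on $R_1$) gives
\[
D_i(\adds{s_1}{s_0}) \;=\; \sum_{s \in \sto{R},\ \seme{B}s,\ \sagree{s}{s_1}{R_1}} D(\adds{s}{s_0})
\]
and
\[
D_i(s_1) \;=\; \sum_{s \in \sto{R},\ \seme{B}s,\ \sagree{s}{s_1}{R_1}} D(s).
\]
Applying the independence of $R$ and $R_0$ termwise to the first sum yields $D_i(\adds{s_1}{s_0})\sumd{D} = D(s_0)D_i(s_1)$. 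When $\sumd{D}>0$, multiplying by $\sumd{D_i}/\sumd{D}$ and substituting part~\ref{branch-p2} (which rewrites $D(s_0)\sumd{D_i}/\sumd{D}$ as $D_i(s_0)$) delivers the desired
\[
D_i(\adds{s_1}{s_0})\sumd{D_i} \;=\; D_i(s_0)D_i(s_1).
\]
The degenerate cases $\sumd{D}=0$ (forcing $D=0$) and $\sumd{D_i}=0$ (forcing $D_i=0$) make both sides zero, so independence holds trivially.

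The main obstacle is bookkeeping: carefully justifying the regrouping of each sum over $\fulls$ into a double sum over $\sto{R}$ and $\sto{R_0}$. This relies on $R \cap R_0 = \emptyset$ (so that $\adds{s}{s_0}$ is well-defined and Lemma~\ref{lem:sto-partial} applies with $R' = R \cup R_0$) and on $\fv{B} \subseteq R$ (so that $\seme{B}$ factors through the $R$-coordinate). Once these rewritings are in hand, the independence hypothesis is merely applied pointwise, and no further probabilistic ingenuity is required.
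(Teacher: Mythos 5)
Your proposal is correct and follows essentially the same route as the paper's proof: expand $D_i$ on partial stores as a sum over $\sto{R}$-stores satisfying $B$ (using $\fv{B}\subseteq R$, $R\cap R_0=\emptyset$ and Lemmas~\ref{lem:sto-partial} and~\ref{lem:sumd}), apply the independence hypothesis termwise to get part~\ref{branch-p2}, and then derive part~\ref{branch-p1} from the analogous expansion together with part~\ref{branch-p2}, treating $\sumd{D}=0$ (i.e.\ $D=0$) separately exactly as the paper does; the only difference is that you run the equalities in the opposite direction, which is immaterial.
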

\begin{proof}
We shall consider only the case $i = 1$ (as the case $i = 2$ is symmetric).
For (\ref{branch-p2}), we have the calculation
(where the 3rd equality follows 
from $R$ and $R_0$ being independent in $D$)
\begin{eqnarray*}
D(s_0)\sumd{D_1} 
& = &
D(s_0) \sum_{s \in \sto{R}\ \mid\ \seme{B}s}D(s) \
=\
\sum_{s \in \sto{R}\ \mid\ \seme{B}s}D(s_0)D(s)
\\[1mm] & = &
\sum_{s \in \sto{R}\ \mid\ \seme{B}s}\left(D(\adds{s_0}{s})\sumd{D}\right)\
 =\
\left(\sum_{s \in \sto{R}\ \mid\ \seme{B}s}D(\adds{s_0}{s})\right)\sumd{D}\ 
\\[1mm] & = &
\left(\sum_{s' \in \sto{R \cup R_0}\ \mid\ \sagree{s'}{s_0}{R_0},\ \seme{B}s'}D(s')\right)\sumd{D} \\[1mm] 
& = &
\left(\sum_{s \in \fulls\ \mid\ \sagree{s}{s_0}{R_0},\ \seme{B}s}D(s)\right)\sumd{D}\ 
  =\
\left(\sum_{s \in \fulls\ \mid\ \sagree{s}{s_0}{R_0}}D_1(s)\right)\sumd{D}\ =\
D_1(s_0)\sumd{D}
\end{eqnarray*}
For (\ref{branch-p1})
we have with $s_1 \in \sto{R_1}$ and $s_0 \in \sto{R_0}$
the calculation
(which uses (\ref{branch-p2}) and the fact that if $D = 0$ 
then the claim is trivial)
\begin{eqnarray*}
D_1(\adds{s_1}{s_0})\sumd{D_1}
& = & 
\left(\sum_{s \in \sto{R}\ \mid\ \sagree{s}{s_1}{R_1},\ \seme{B}s}D(\adds{s}{s_0})\right)\sumd{D_1}
\\[1mm] & = &
\left(\sum_{s \in \sto{R}\ \mid\ \sagree{s}{s_1}{R_1},\ \seme{B}s}\frac{D(s)D(s_0)}{\sumd{D}}\right)\sumd{D_1}
\\[1mm] & = & 
\hspace*{-1mm}
\left(\sum_{s \in \sto{R}\ \mid\ \sagree{s}{s_1}{R_1},\ \seme{B}s}D(s)\right)\frac{D(s_0)\sumd{D_1}}{\sumd{D}}
\\[1mm]
& = &
D_1(s_1) D_1(s_0).
\end{eqnarray*}
\end{proof}
To facilitate the proof of Lemma~\ref{lem:indpd2indpd},
we introduce some notation:
\begin{definition}
\label{def:indpd-pres}
We say that $h \in \PD \rightarrow \Dist \contarrow \Dist$ \dt{preserves
probabilistic independence} iff for all slicing pairs $(Q,Q_0)$,
the following holds for all $(v,v') \in \PD$, 
with $R = \rv{Q}{v}$, $R' = \rv{Q}{v'}$,
$R_0 = \rv{Q_0}{v}$, and $R'_0 = \rv{Q_0}{v'}$:
for all $D$ such that
$R$ and $R_0$ are independent in $D$,
with $D' = \hpd{h}{v}{v'}(D)$ it is the case that
\begin{enumerate}
\item 
\label{indpdlem2:indp}
$R'$ and $R'_0$ are independent in $D'$
\item
\label{indpdlem2:outq}
 if $v$ stays outside $Q$ until $v'$ (and thus $R' = R$) then for all 
$s \in \sto{R}$ we have
\begin{displaymath}
D(s)\sumd{D'} = D'(s)\sumd{D}
\end{displaymath}
\item
\label{indpdlem2:outq0}
if $v$ stays outside $Q_0$ until $v'$
(and thus $R'_0 = R_0$)
then for all $s_0 \in \sto{R_0}$ we have
\begin{displaymath}
D(s_0)\sumd{D'} = D'(s_0)\sumd{D}.
\end{displaymath}
\end{enumerate}
\end{definition}

\begin{lemma}
\label{lem:indpd2indpdk}
For each $k \geq 0$, $\gamma_k$ 
preserves probabilistic independence.
\end{lemma}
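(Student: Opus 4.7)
The plan is to proceed by induction on $k$, with a nested induction on $\LAP{v}{v'}$ in the inductive step. For the base case $k=0$, Definition~\ref{def:gamma-k} splits into two sub-cases. If $v$ stays outside $Q \cup Q_0$ until $v'$, then $\hpd{\gamma_0}{v}{v'} = \hpd{\omega}{v}{v'}$, and combining Lemmas~\ref{lem:weak-union}, \ref{lem:outside-same-rv}, \ref{lem:sum-pres}, and \ref{lem:outsideQQ0irrelevant} one obtains $R=R'$, $R_0=R'_0$, $\sumd{D'}=\sumd{D}$, and $\dagree{D'}{D}{R\cup R_0}$, which immediately yield all three clauses of Definition~\ref{def:indpd-pres}. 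Otherwise $\hpd{\gamma_0}{v}{v'} = 0$, so $D'=0$ and the three clauses hold vacuously.

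For the inductive step ($k>0$) I unfold $\gamma_k = \HH{\unodes}(\gamma_{k-1})$ and induct on $\LAP{v}{v'}$ following the clauses of Definition~\ref{def:HHX}. The case $v=v'$ is trivial, and the transitive case $v' \neq v''=\fppd{v}$ follows by applying the inner induction hypothesis at $(v,v'')$ and $(v'',v')$ in sequence, using Lemma~\ref{lem:outside-decomp} to propagate the conditional ``stays outside'' hypotheses needed for clauses (2) and (3). The remaining non-branching cases at $v'=\fppd{v}$---skip, assignment, random assignment, observe---reduce to direct computations via Lemmas~\ref{lem:assign-simp}, \ref{lem:rassign-simp}, \ref{lem:selectB}, and \ref{lem:rv-assign}; when $v\in Q$ has a (random) assignment label to $x$, the key observation is that $x\notin R_0$ by Lemma~\ref{lem:rv-cap}, so the $R_0$-marginal of $D$ is preserved, yielding clause (3) for the leaf.

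The main obstacle is the branching case: with successors $v_1,v_2$ and $D_i = \selectf{B}(D)$ or $\selectf{\neg B}(D)$, we have $D' = D'_1 + D'_2$, where each $D'_i$ is produced either by $\gamma_k$ (inner IH, when $\LAP{v_i}{v'} < \LAP{v}{v'}$) or by $\gamma_{k-1}$ (outer IH). The sub-case $v \notin Q \cup Q_0$ reduces to the base-case reasoning via Lemmas~\ref{lem:notQ-outside} and \ref{lem:gamma-as-omega}. When $v \in Q$---so $\fv{B} \subseteq R$ by Lemma~\ref{lem:rv-branch} and $v \notin Q_0$---I apply Lemma~\ref{lem:indpd-branch} with $R_1 = \rv{Q}{v_1}$ and $R_2 = \rv{Q}{v_2}$ to obtain independence of $\rv{Q}{v_i}$ with $R_0$ in each $D_i$, together with the proportionality $D_i(s_0)\sumd{D} = D(s_0)\sumd{D_i}$ on $\sto{R_0}$; since $\rv{Q_0}{v_i} \subseteq R_0$, marginalization yields independence of $\rv{Q}{v_i}$ and $\rv{Q_0}{v_i}$ in $D_i$, so the IH at $(v_i,v')$ supplies independence of $R'$ and $R'_0$ in each $D'_i$. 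The dual sub-case $v \in Q_0$ is handled symmetrically.

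Combining the per-branch independences into independence in $D' = D'_1 + D'_2$ is the subtle final step. A direct algebraic expansion gives
\[
D'(\adds{s}{s_0})\sumd{D'} - D'(s)D'(s_0) = \frac{(D'_1(s)\sumd{D'_2} - D'_2(s)\sumd{D'_1})(D'_1(s_0)\sumd{D'_2} - D'_2(s_0)\sumd{D'_1})}{\sumd{D'_1}\sumd{D'_2}},
\]
so it suffices for $D'_1$ and $D'_2$ to be proportional on either $\sto{R'}$ or $\sto{R'_0}$ (the degenerate case $\sumd{D'_i}=0$ is immediate). When $v \in Q$, Lemma~\ref{lem:notQ-outside} gives that $v$ stays outside $Q_0$ until $v'$; by Lemma~\ref{lem:outside-branch} so does each $v_i$, and Lemma~\ref{lem:outside-same-rv} then forces $\rv{Q_0}{v_i} = R'_0 = R_0$. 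Chaining the proportionality from Lemma~\ref{lem:indpd-branch}(2) with IH clause (3) produces $D'_i(s_0)/\sumd{D'_i} = D(s_0)/\sumd{D}$ on $\sto{R'_0}$, which is exactly the required proportionality. Clauses (2) and (3) for $\gamma_k$ at $(v,v')$ are then obtained by summing the corresponding IH equations across the two branches.
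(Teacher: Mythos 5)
Your proposal is correct and follows essentially the same route as the paper's proof: the same outer induction on $k$ with an inner induction on $\LAP{v}{v'}$, the base case via Lemma~\ref{lem:gamma-as-omega} and Lemma~\ref{lem:outsideQQ0irrelevant}, the leaf cases via Lemmas~\ref{lem:assign-simp} and~\ref{lem:rassign-simp}, and the branching case via Lemma~\ref{lem:indpd-branch} combined with the inner/outer induction hypotheses, your factorization identity being merely a repackaging of the paper's chained equalities (both rest on the proportionality $D'_i(s_0)\sumd{D} = D(s_0)\sumd{D'_i}$ on $\sto{R'_0}$, with the zero-mass cases set aside). One small slip that does not affect the argument: in the assignment case, $x \notin R_0$ follows directly from the definition of $\rv{Q_0}{v}$ (since $v$ defines $x$ and $v \notin Q_0$), not from Lemma~\ref{lem:rv-cap}, which only yields $R \cap R_0 = \emptyset$ while $x$ need not belong to $R$.
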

\begin{proof}
We shall proceed by induction in $k$.
We shall first consider the base case $k = 0$.
For a given $(v,v') \in \PD$,
the claims are trivial if $\hpd{\gamma_0}{v}{v'} = 0$, so 
assume that
$v$ stays outside $Q \cup Q_0$ until $v'$
(implying $R' = R$ and $R'_0 = R_0$)
and thus $\hpd{\gamma_0}{v}{v'} = \hpd{\omega}{v}{v'}$.
We can apply Lemma~\ref{lem:outsideQQ0irrelevant}
(and part 1 of Lemma~\ref{lem:rv-cup})
to infer that for all $D$,
with $D' = \hpd{\gamma_0}{v}{v'}(D)$ we have 
$\dagree{D'}{D}{R \cup R_0}$,
and by Lemma~\ref{lem:dist-partial}
thus also $\dagree{D'}{D}{R}$ and $\dagree{D'}{D}{R_0}$ and 
$\dagree{D'}{D}{\emptyset}$.
That is, for $s \in \sto{R}$ and $s_0 \in \sto{R_0}$ 
we have
$D'(\adds{s}{s_0}) = D(\adds{s}{s_0})$ and
$D'(s) = D(s)$ and $D'(s_0) = D(s_0)$, 
and also $D'(\emptyset) = D(\emptyset)$ which
amounts to $\sumd{D'} = \sumd{D}$.
This clearly implies claims
\ref{indpdlem2:outq} and \ref{indpdlem2:outq0}
in Definition~\ref{def:indpd-pres},
and also claim \ref{indpdlem2:indp}
since for $s \in \sto{R}$ and $s_0 \in \sto{R_0}$ 
we have, by our assumption that $R$ and $R_0$ are independent in $D$:
\[
D'(\adds{s}{s_0})\sumd{D'} =
D(\adds{s}{s_0})\sumd{D} =
D(s)D(s_0) = D'(s)D'(s_0).
\]
We shall next consider the case $k > 0$,
where we assume that 
$\gamma_{k-1}$ preserves probabilistic independence and
with $\gamma_k = \HH{\unodes}(\gamma_{k-1})$ we must then prove that 
$\gamma_k$ preserves probabilistic independence,
that is: given $(v,v') \in \PD$
with $R = \rv{Q}{v}$, $R' = \rv{Q}{v'}$,
$R_0 = \rv{Q_0}{v}$, and $R'_0 = \rv{Q_0}{v'}$,
and given $D \in \Dist$ such that $R$ and $R_0$ are independent in $D$,
with $D' = \hpd{\gamma_k}{v}{v'}(D)$
we must show that:
\begin{enumerate}
\item
$R'$ and $R'_0$ are independent in $D'$
\item
if $v$ stays outside $Q$ until $v'$ (and thus $R' = R$) then for all 
$s \in \sto{R}$ we have
\begin{displaymath}
D(s)\sumd{D'} = D'(s)\sumd{D}
\end{displaymath}
\item
if $v$ stays outside $Q_0$ until $v'$
(and thus $R'_0 = R_0$)
then for all $s_0 \in \sto{R_0}$ we have
\begin{displaymath}
D(s_0)\sumd{D'} = D'(s_0)\sumd{D}.
\end{displaymath}
\end{enumerate}
We shall establish the required claims by induction in 
$\LAP{v}{v'}$.
First observe that if $D' = 0$ the claims are trivial.
We can thus assume that $\sumd{D'} > 0$, which
by Lemma~\ref{lem:gamma-mult-nonincr} entails that 
$\sumd{D} > 0$. 

If $v' = v$, then $D' = D$ and $R' = R$ and $R'_0 = R_0$
and again the claims are trivial.

Otherwise, let $v'' = \fppd{v}$,
and first assume that $v' \neq v''$ in which case the situation
is that there exists $D''$ such that
$\hpd{\gamma_k}{v}{v''}(D) = D''$ and
$\hpd{\gamma_k}{v''}{v'}(D'') = D'$;
by Lemma~\ref{lem:gamma-mult-nonincr}
we can assume that $D'' \neq 0$ (as otherwise $D' = 0$
which we have already considered).
By Lemma~\ref{lem:LAP-add} we have
$\LAP{v}{v''} < \LAP{v}{v'}$ 
and $\LAP{v''}{v'} < \LAP{v}{v'}$,
so we can apply the induction hypothesis
on $\hpd{\gamma_k}{v}{v''}$ and on $\hpd{\gamma_k}{v''}{v'}$.
With $R'' = \rv{Q}{v''}$ and $R''_0 = \rv{Q_0}{v''}$,
the induction hypothesis now first gives us 
that $R''$ and $R''_0$ are independent in $D''$,
and next that
$R'$ and $R'_0$ are independent in $D'$.

Concerning claim \ref{indpdlem2:outq}
(claim \ref{indpdlem2:outq0} is symmetric),
assume that $v$ stays outside $Q$ until $v'$;
by Lemma~\ref{lem:outside-decomp} we see that
$v$ stays outside $Q$ until $v''$ and
$v''$ stays outside $Q$ until $v'$.
Inductively, we can thus assume that for $s \in \sto{R}$ we have
\begin{displaymath}
D(s)\sumd{D''} = D''(s)\sumd{D}
\mbox{ and }
D''(s)\sumd{D'} = D'(s)\sumd{D''}.
\end{displaymath}
which since $\sumd{D''} > 0$ gives us the desired
\begin{eqnarray*}
D(s)\sumd{D'} & = & \displaystyle
\frac{D''(s)\sumd{D}}{\sumd{D''}} \sumd{D'} \ =
D''(s)\sumd{D'} \frac{\sumd{D}}{\sumd{D''}} \\[1mm]
& = & \displaystyle
D'(s)\sumd{D''}\frac{\sumd{D}}{\sumd{D''}} \ =
D'(s)\sumd{D}.
\end{eqnarray*}

We are left with the case $v' = \fppd{v}$,
and split into several cases, depending on $\labv{v}$, where
the case for $\skipv$ is trivial.

{\bf Case 1: $\mathbf{v}$ is an observe node.}
With $B$ the condition, for $s'$ with $\fv{B} \subseteq \dom{s'}$
we thus have $D'(s') = D(s')$ if $\seme{B}s'$, and
$D'(s') = 0$ otherwise.
As $(Q,Q_0)$ is a slicing pair, either $v \in Q$ or $v \in Q_0$.
Let us assume that $v \in Q$; the other case is symmetric.
Thus $\fv{B} \subseteq R$, and also $R' \subseteq R$ 
as $\defv{v} = \emptyset$;
as $v$ stays outside $Q_0$ until $v'$, 
by Lemma~\ref{lem:outside-same-rv} we also have
$R'_0 = R_0$. 

The following calculation, where the 3rd equality is due to the assumption
that $R$ and $R_0$ are independent in $D$, shows that
for $s_0 \in \sto{R_0}$ we have
$D'(s_0)\sumd{D} = D(s_0)\sumd{D'}$:
\begin{eqnarray*}
D'(s_0)\sumd{D} & = &
\left(\sum_{s \in \sto{R}}D'(\adds{s}{s_0})\right) \sumd{D} =
\left(\sum_{s \in \sto{R}\ \mid\ \seme{B}s}D(\adds{s}{s_0})\right) \sumd{D}
\\[1mm] 
& = &
\sum_{s \in \sto{R}\ \mid\ \seme{B}s}\left(D(s)D(s_0)\right)\ =
D(s_0) \left(\sum_{s \in \sto{R}\ \mid\ \seme{B}s}D(s)\right)
\\[1mm] 
& = &
D(s_0) \left(\sum_{s \in \sto{R}}D'(s)\right) =
D(s_0)\sumd{D'}.
\end{eqnarray*}
This yields claim \ref{indpdlem2:outq0}
(while claim \ref{indpdlem2:outq} vacuously holds)
and also gives the last equality in the following 
derivation that establishes 
(again using the assumption that $R$ and $R_0$ are independent in $D$)
claim~\ref{indpdlem2:indp} by considering $s' \in \sto{R'}$ and $s_0 \in \sto{R_0}$:
\begin{eqnarray*}
D'(\adds{s'}{s_0})\sumd{D'}
& = &
\left(\sum_{s \in \sto{R}\ \mid\ \sagree{s}{s'}{R'}}
D'(\adds{s}{s_0})\right)\sumd{D'}
\\[1mm] & = &
\left(\left(\sum_{s \in \sto{R}\ \mid\ \sagree{s}{s'}{R'},\ \seme{B}s}
D(\adds{s}{s_0})\right)\sumd{D}\right)\frac{\sumd{D'}}{\sumd{D}}
\\[1mm] & = &
\left(\sum_{s \in \sto{R}\ \mid\ \sagree{s}{s'}{R'},\ \seme{B}s}
D(s) D(s_0)\right)\frac{\sumd{D'}}{\sumd{D}}
\\[1mm] & = &
\left(\sum_{s \in \sto{R}\ \mid\ \sagree{s}{s'}{R'}}D'(s)\right)
\frac{D(s_0)\sumd{D'}}{\sumd{D}}\\[1mm]
& = &
D'(s')D'(s_0).
\end{eqnarray*}

{\bf Case 2: $\mathbf{v}$ has exactly one successor and 
is not an $\mathbf{\observevv}$ node.}
Then $v$ is labeled with an assignment or with a random assignment;
in both cases, the transfer function is sum-preserving
(Lemmas~\ref{lem:assign-misc} and \ref{lem:rassign-misc})
so we get $\sumd{D'} = \sumd{D}$. 
We further infer by Lemma~\ref{lem:sum-preserve-not-affect-rv}
that if $v \notin Q$ then $R' = R$ and
$\dagree{D'}{D}{R}$
(as then $v$ stays outside $Q$ until $v'$);
similarly, if $v \notin Q_0$
then $R'_0 = R_0$ and $\dagree{D'_0}{D_0}{R_0}$.

The above observations obviously establish
the claims~\ref{indpdlem2:outq} and \ref{indpdlem2:outq0}.
We shall now address claim~\ref{indpdlem2:indp},
that is show that
$D'(\adds{s}{s_0})\sumd{D'}  = D'(s)D'(s_0)$
for $s' \in \sto{R'}$ and $s_0 \in \sto{R_0}$.
To do so, we shall do a case analysis
on whether $v \in Q \cup Q_0$.

First consider the case where $v \notin Q \cup Q_0$.
Then (by Lemma~\ref{lem:sum-preserve-not-affect-rv})
we get $R' = R$ and $R'_0 = R_0$ and $\dagree{D'}{D}{R \cup R_0}$.
This establishes 
claim~\ref{indpdlem2:indp} since
for $s \in \sto{R}$ 
and $s_0 \in \sto{R_0}$ we have (using the assumption that $R$ and $R_0$
are independent in $D$)
$D'(\adds{s}{s_0})\sumd{D'} = 
D(\adds{s}{s_0})\sumd{D} =
D(s)D(s_0) =
D'(s)D'(s_0)$.

Next consider the case where $v \in Q \cup Q_0$.
Without loss of generality, we may assume that $v \in Q$.
Thus $v \notin Q_0$ so $R'_0 = R_0$ and
$R_0 \cap \defv{v} = \emptyset$ and
$D'(s_0) = D(s_0)$ for all $s_0 \in \sto{R_0}$.
We split into two cases, depending on whether $R' \cap \defv{v}$ is empty or not.

First assume that
$R' \cap \defv{v} = \emptyset$. 
Then $R' \subseteq R$,
and by Lemmas~\ref{lem:assignRirrel} and \ref{lem:rassignRirrel}
(as $(R' \cup R_0) \cap \defv{v} = \emptyset$) we get
$\dagree{D}{D'}{R' \cup R_0}$.
For $s' \in \sto{R'}$ and $s_0 \in \sto{R_0}$ we thus have
$D'(\adds{s'}{s_o}) = D(\adds{s'}{s_0})$ and $D'(s') = D(s')$ and
$D'(s_0) = D(s_0)$ and 
$\sumd{D'} = \sumd{D}$, which 
(using the assumption that $R$ and $R_0$
are independent in $D$)
gives us the desired result
\begin{eqnarray*}
D'(\adds{s'}{s_0})\sumd{D'} 
& = & 
D(\adds{s'}{s_0})\sumd{D} \\
& = &
\left(\sum_{s \in \sto{R} \ \mid\ \sagree{s}{s'}{R'}}D(\adds{s}{s_0})\right)\sumd{D}
\\ &  = &
\sum_{s \in \sto{R} \ \mid\ \sagree{s}{s'}{R'}} D(s)D(s_0)
  =
D(s')D(s_0)
  =
D'(s')D'(s_0).
\end{eqnarray*}
Next assume that
$R' \cap \defv{v} \neq \emptyset$.
We now (finally) need to do a case analysis on the kind of assignment.

If $\labv{v} = \assignv{x}{E}$, we have 
(by Lemma~\ref{lem:rv-assign})
$R = (R'\setminus\mkset{x}) \cup \fv{E}$ 
and from our case assumptions also $x \in R'$ and $x \notin R_0$.
Let us now consider $s' \in \sto{R'}$ and $s_0 \in \sto{R_0}$;
the claim follows from the below calculation where the third equality
uses the assumption that $R$ and $R_0$ are independent in $D$,
and the first and last equality both uses
Lemma~\ref{lem:assign-simp}:
\begin{eqnarray*}
D'(\adds{s'}{s_0}) \sumd{D'}
& = &
\left(\sum_{s_1 \in \sto{R\cup R_0}\ \mid\ \sagree{s_1}{\adds{s'}{s_0}}{R' \setminus\mkset{x} \cup R_0},\ (\adds{s'}{s_0})(x) = \seme{E}s_1} 
\hspace*{-7mm}
D(s_1)\right) \sumd{D}
\\[1mm] & = &
\left(\sum_{s \in \sto{R}\ \mid\ \sagree{s}{s'}{R' \setminus\mkset{x}},\ s'(x) = \seme{E}s} 
\hspace*{-7mm}
D(\adds{s}{s_0})\right) \sumd{D}
\\[1mm] & = &
\sum_{s \in \sto{R}\ \mid\ \sagree{s}{s'}{R' \setminus\mkset{x}},\ s'(x) = \seme{E}s} 
\hspace*{-7mm}
D(s)D(s_0)\\[1mm]
& = &
D'(s') D'(s_0).
\end{eqnarray*}
Finally, if $\labv{v} = \rassignv{x}{\psi}$, we have 
$R = (R'\setminus\mkset{x})$
and from our case assumptions also $x \in R'$ and $x \notin R_0$.
Let us now consider $s' \in \sto{R'}$ and $s_0 \in \sto{R_0}$:
the claim follows from the below calculation where the third equality
uses the assumption that $R$ and $R_0$ are independent in $D$,
and the first and last equality both uses
Lemma~\ref{lem:rassign-simp}:
\begin{eqnarray*}
D'(\adds{s'}{s_0}) \sumd{D'}
& = &  \psi((\adds{s'}{s_0})(x))
\left(\sum_{s_1 \in \sto{R\cup R_0}\ \mid\ \sagree{s_1}{\adds{s'}{s_0}}{R' \setminus\mkset{x} \cup R_0}} 
\hspace*{-5mm}
D(s_1)\right) \sumd{D}
\\[1mm] & = & \psi(s'(x))
\left(\sum_{s \in \sto{R}\ \mid\ \sagree{s}{s'}{R' \setminus\mkset{x}}} 
\hspace*{-6mm}
D(\adds{s}{s_0})\right) \sumd{D}
\\[1mm] & = & \psi(s'(x))
\left(\sum_{s \in \sto{R}\ \mid\ \sagree{s}{s'}{R' \setminus\mkset{x}}} 
\hspace*{-5mm}
D(s)D(s_0)\right)\\[1mm]
& = &
D'(s') D'(s_0).
\end{eqnarray*}

{\bf Case 3: $\mathbf{v}$ is a branching node.}
First assume that $v \notin Q \cup Q_0$.
Here $Q \cup Q_0$ is a weak slice set (Lemma~\ref{lem:weak-union}),
so from Lemma~\ref{lem:notQ-outside} we see that
$v$ stays outside $Q \cup Q_0$ until $v'$;
thus $R' = R$ and $R'_0 = R_0$ (by Lemma~\ref{lem:outside-same-rv}).
By Lemma~\ref{lem:gamma-as-omega} we see that
$D' = \hpd{\gamma_k}{v}{v'}(D) = \hpd{\omega}{v}{v'}(D)$,
and Lemma~\ref{lem:outsideQQ0irrelevant}
thus tells us that
$\dagree{D'}{D}{R \cup R_0}$.
In particular for all $s \in \sto{R}$ and $s_0 \in \sto{R_0}$
we have
$D'(\adds{s}{s_0}) = D(\adds{s}{s_0})$,
$D'(s) = D(s)$, $D'(s_0) = D(s_0)$ and $\sumd{D'} = \sumd{D}$.
But this clearly entails all the 3 claims.

In the following, we can thus assume that $v \in Q \cup Q_0$,
and shall only look at the case $v \in Q$ as the case $v \in Q_0$
is symmetric. 
Claim \ref{indpdlem2:outq} thus holds vacuously;
we shall embark on the other two claims.
As $v \in Q$ we have  (by Lemma~\ref{lem:rv-branch})
$\fv{B} \subseteq R = \rv{Q}{v}$, and as $v \notin Q_0$ we see
(by Lemma~\ref{lem:notQ-outside}) that
$v$ stays outside $Q_0$ until $v'$ 
so that (by Lemma~\ref{lem:outside-same-rv}) $R'_0 = R_0$.
With $v_1$ the $\true$-successor of $v$ and $v_2$ the $\false$-successor
of $v$, and with $D_1 = \selectf{B}(D)$ and $D_2 = \selectf{\neg B}(D)$,
the situation is that
$D' = D'_1 + D'_2$ where for each
$i \in \{1,2\}$, $D'_i$ is computed as
\begin{itemize}
\item
if $\LAP{v_i}{v'} < \LAP{v}{v'}$ then
$D'_i =\hpd{\gamma_{k}}{v_i}{v'}(D_i)$;
\item
if $\LAP{v_i}{v'} \geq \LAP{v}{v'}$ then
$D'_i = \hpd{\gamma_{k-1}}{v_i}{v'}(D_i)$.
\end{itemize}
Let $R_1 = \rv{Q}{v_1}$ and $R_2 = \rv{Q}{v_2}$; 
thus (by Lemma~\ref{lem:rv-branch}) $R_1 \subseteq R$ and $R_2 \subseteq R$.

By Lemma~\ref{lem:indpd-branch}, we see that
\begin{equation}
\label{indpd:cond1}
\forall i \in \{1,2\}: R_i \mbox{ and } R_0 \mbox{ are independent in } D_i
\end{equation}
\begin{equation}
\label{indpd:cond2}
\forall i \in \{1,2\}: 
D(s_0)\sumd{D_i} = D_i(s_0)\sumd{D} \mbox{ for all } s_0 \in \sto{R_0}.
\end{equation}
Given (\ref{indpd:cond1}), and the fact
that $v_i$ stays outside $Q_0$ until $v'$, we can infer that
\begin{equation}
\label{indpd:cond3}
\forall i \in \{1,2\}: R' \mbox{ and } R_0 \mbox{ are independent in } D'_i
\end{equation}
\begin{equation}
\label{indpd:cond4}
\forall i \in \{1,2\}: D_i(s_0)\sumd{D'_i} = D'_i(s_0)\sumd{D_i}
\mbox{ for all } s_0 \in \sto{R_0}
\end{equation}
since when $\LAP{v_i}{v'} < \LAP{v}{v'}$ this follows from the 
(inner) induction hypothesis,
and otherwise it follows from the assumption (the outer induction)
about $\gamma_{k-1}$.
We also have
\begin{equation}
\label{indpd:cond5}
D(s_0)\sumd{D'} =
D'(s_0) \sumd{D} 
\mbox{ for all } s_0 \in \sto{R_0}
\end{equation}
since  for $s_0 \in \sto{R_0}$ we have
\begin{eqnarray*}
D'(s_0)  \sumd{D}
 & = &
(D'_1(s_0) + D'_2(s_0)) \sumd{D}\ 
\\[1mm] 
\mbox{by } (\ref{indpd:cond4}) & = &
\left(D_1(s_0)\frac{\sumd{D'_1}}{\sumd{D_1}} 
+ D_2(s_0)\frac{\sumd{D'_2}}{\sumd{D_2}}\right) \sumd{D} 
\\[1mm]
\mbox{by } (\ref{indpd:cond2}) & = &
D(s_0)\sumd{D'_1} + D(s_0)\sumd{D'_2}\ =
D(s_0)\sumd{D'}
\end{eqnarray*}
where we have assumed that $D_1 \neq 0$ and $D_2 \neq 0$; if say
$D_1 = 0$ then 
$D'_1 = 0$ 
(as each $\gamma_k$ is non-increasing by Lemma~\ref{lem:gamma-mult-nonincr})
and $D = D_2$ and $D' = D'_2$
in which case the claim follows directly from (\ref{indpd:cond4}).

From (\ref{indpd:cond5}) we get claim \ref{indpdlem2:outq0},
and are thus left with showing claim \ref{indpdlem2:indp} which is
that $R'$ and $R_0$ are independent in $D'$.
If $D'_1 = 0$ then $D' = D'_2$ and it follows from (\ref{indpd:cond3});
similarly if $D'_2 = 0$.
Otherwise, in which case also $\sumd{D_1} > 0$ and
$\sumd{D_2} > 0$, for $s' \in \sto{R'}$ and $s_0 \in \sto{R_0}$ we have
\begin{eqnarray*}
D'(\adds{s'}{s_0})\sumd{D'}
& = &
(D'_1(\adds{s'}{s_0}) + D'_2(\adds{s'}{s_0}))\sumd{D'}
\\[1mm]
\mbox{by } (\ref{indpd:cond3})
 & = &
D'_1(s')D'_1(s_0)\frac{\sumd{D'}}{\sumd{D'_1}}
  +
D'_2(s')D'_2(s_0)\frac{\sumd{D'}}{\sumd{D'_2}}
\\[1mm] 
\mbox{by } (\ref{indpd:cond4})
& = &
D'_1(s')D_1(s_0)\frac{\sumd{D'}}{\sumd{D_1}}
  +
D'_2(s')D_2(s_0)\frac{\sumd{D'}}{\sumd{D_2}}
\\[1mm] 
\mbox{by } (\ref{indpd:cond2})
& = &
D'_1(s')D(s_0)\frac{\sumd{D'}}{\sumd{D}}
  +
D'_2(s')D(s_0)\frac{\sumd{D'}}{\sumd{D}}
\\[1mm]
& = &
D'(s')D(s_0)\frac{\sumd{D'}}{\sumd{D}}
\\[1mm]
\mbox{by } (\ref{indpd:cond5})
& = &
D'(s')D'(s_0)
\end{eqnarray*}
\end{proof}

{\bf Lemma~\ref{lem:indpd2indpd}}
[rephrased using Definition~\ref{def:indpd-pres}]:
$\omega$ preserves probabilistic independence.

\begin{proof}
Let a slicing pair $(Q,Q_0)$ be given, and 
let $\chain{\gamma_k}{k}$ be defined as in Definition~\ref{def:gamma-k}.
By Lemma~\ref{lem:indpd2indpdk},
each element in the chain $\chain{\gamma_k}{k}$ 
preserves probabilistic independence;
by Proposition~\ref{prop:gamma-limit-omega}, 
it is sufficient to prove 
that also $\limit{k}{\gamma_k}$ preserves probabilistic independence.

With $(v,v')$ and $D$ given, let $D_k = \hpd{\gamma_k}{v}{v'}(D)$ for
each $k \geq 0$, and let $D' = \hpd{(\limit{k}{\gamma_k})}{v}{v'}(D)$.
Then we can 
%% use Lemma~\ref{lem:sumlim-limsum} 
establish each of the 3 claims about $D'$;
%% given our assumption that they hold for each $D_k$;
claim~\ref{indpdlem2:indp} follows from
the calculation
\begin{eqnarray*}
D'(\adds{s_1}{s_2})\sumd{D'} 
& = &
(\limit{k}{D_k(\adds{s_1}{s_2})})\,\sumd{\limit{k}{D_k}}
\\[1mm] 
\mbox{(Lemma~\ref{lem:sumlim-limsum})}
 & = &
(\limit{k}{D_k(\adds{s_1}{s_2})})\,(\limit{k}{\sumd{D_k}})
=
\limit{k}{(D_k(\adds{s_1}{s_2})\sumd{D_k})}
\\[1mm]
\mbox{(Lemma~\ref{lem:indpd2indpdk})} 
 & = &
\limit{k}{(D_k(s_1) D_k(s_2))}
= D'(s_1)D'(s_2)
\end{eqnarray*}
whereas claim~\ref{indpdlem2:outq} (claim~\ref{indpdlem2:outq0} is symmetric)
follows from the calculation
\begin{eqnarray*}
D(s)\sumd{D'} 
& = & D(s) \sumd{\limit{k}{D_k}}
\\[1mm]
\mbox{(Lemma~\ref{lem:sumlim-limsum})}
& = &
D(s) \limit{k}{\sumd{D_k}}
=
\limit{k}{(D(s)\sumd{D_k})}
\\[1mm]
\mbox{(Lemma~\ref{lem:indpd2indpdk})}
& = &
\limit{k}{(D_k(s) \sumd{D})}
= D'(s)\sumd{D}.
\end{eqnarray*}
\end{proof}

{\bf Lemma~\ref{lem:slicing-correct}}
For a given pCFG,
let $(Q,Q_0)$ be a slicing pair.
For all $k \geq 0$, all $(v,v') \in \PD$
with $R = \rv{Q}{v}$ and $R' = \rv{Q}{v'}$ and $R_0 = \rv{Q_0}{v}$,
all $D \in \Dist$
such that $R$ and $R_0$ are independent in $D$,
and all $\dlt \in \Dist$ such
that $\dagree{D}{\dlt}{R}$, we have
\[
\dagree{\hpd{\gamma_k}{v}{v'}(D)}{\cnst{k}{v}{v'}{D} \cdot 
\hpd{\Phi_k}{v}{v'}(\dlt)}{R'}.
\]
\begin{proof}
The proof is by induction in $k$, with an inner induction on
$\LAP{v}{v'}$.

Let us first (for all $k$) consider the case where
$\mathbf{v \notin Q \cup Q_0}$.
Thus, by Lemma~\ref{lem:notQ-outside}, $v$ stays outside $Q \cup Q_0$ until $v'$.
By Lemma~\ref{lem:gamma-as-omega} we see that
$\hpd{\gamma_k}{v}{v'} = \hpd{\omega}{v}{v'}$,
and we also see 
that $\hpd{\Phi_k}{v}{v'}(\dlt) = \dlt$
(by Definition~\ref{def:phipk} if $k = 0$, and
by Lemma~\ref{lem:phi1-id} otherwise).

Our proof obligation is thus, since $R' = R$, that
\[
\dagree{\hpd{\omega}{v}{v'}(D)}{1 \cdot 
\dlt}{R}.
\]
But from Lemma~\ref{lem:outsideQQ0irrelevant} we get
$\dagree{\hpd{\omega}{v}{v'}(D)}{D}{R}$,
and by assumption we have
$\dagree{D}{\dlt}{R}$, so the claim follows since
$\dagree{}{}{R}$ is obviously transitive.

If {\bf $\mathbf{k = 0}$ but $\mathbf{v}$ does not stay outside $\mathbf{Q \cup Q_0}$
until $\mathbf{v'}$} then
our proof obligation is
\[
\dagree{0}{\cnst{k}{v}{v'}{D} \cdot 0}{R'}
\]
which obviously holds (no matter what $\cnst{k}{v}{v'}{D}$ is).

We now consider $\mathbf{k > 0}$, in which case 
$\gamma_k = \HH{\unodes}(\gamma_{k-1})$ and
$\Phi_k = \HH{Q}(\Phi_{k-1})$,
and again consider several cases.

First assume that $\mathbf{v' = v}$, and thus $R' = R$. Then our obligation is
\[
\dagree{D}{1 \cdot \dlt}{R'}
\]
which follows directly from our assumptions.

Next assume that {\bf $\mathbf{v' \neq v''}$ with $\mathbf{v'' = \fppd{v}}$}.
Then, by Definition~\ref{def:HHX},
 $\hpd{\gamma_k}{v}{v'} = \hpd{\gamma_k}{v}{v''}\,;\,\hpd{\gamma_k}{v''}{v'}$
and with $D'' = \hpd{\gamma_k}{v}{v''}(D)$ we thus
have $\hpd{\gamma_k}{v}{v'}(D) = \hpd{\gamma_k}{v''}{v'}(D'')$; 
similarly, with
$\dlt'' = \hpd{\Phi_k}{v}{v''}(\dlt)$ we 
have $\hpd{\Phi_k}{v}{v'}(\dlt) = \hpd{\Phi_k}{v''}{v'}(\dlt'')$.
Since $\LAP{v}{v''} < \LAP{v}{v'}$ we can
apply the inner induction hypothesis to $(v,v'')$ and get
\[
\dagree{D''}{\cnst{k}{v}{v''}{D} \cdot \dlt''}{R''}
\]
where $R'' = \rv{Q}{v''}$. With $R''_0 = \rv{Q_0}{v''}$,
by Lemma~\ref{lem:indpd2indpdk} we moreover see that
$R''$ and $R''_0$ are independent in $D''$.
Hence we can apply the inner induction hypothesis to $(v'',v')$
to get
\[
\dagree{\hpd{\gamma_k}{v''}{v'}(D'')}{\cnst{k}{v''}{v'}{D''} \cdot 
\hpd{\Phi_k}{v''}{v'}(\cnst{k}{v}{v''}{D} \cdot \dlt'')}{R'}
\]
which since $\Phi_k$ is multiplicative (Lemma~\ref{lem:Phi-mult-nonincr})
amounts to
\[
\dagree{\hpd{\gamma_k}{v}{v'}(D)}{\cnst{k}{v''}{v'}{D''} \cdot \cnst{k}{v}{v''}{D} \cdot
\hpd{\Phi_k}{v}{v'}(\dlt)}{R'}
\]
which is as desired since
$\cnst{k}{v}{v'}{D} = \cnst{k}{v}{v''}{D} \cdot 
\cnst{k}{v''}{v'}{\hpd{\gamma_k}{v}{v''}(D)}$.

We are left with the situation that $v' = \fppd{v}$ (and $k > 0$),
with either $v \in Q$ or $v \in Q_0$; we now consider each of these
possibilities.

Assume {\bf $\mathbf{v \in Q_0}$ (and $\mathbf{k > 0}$ and $\mathbf{v' = \fppd{v}}$).}
Thus $v \notin Q$, and hence (by
Lemma~\ref{lem:notQ-outside}) $v$ stays outside $Q$ until $v'$ and thus
$R' = R$.
With $D' = \hpd{\gamma_k}{v}{v'}(D)$, 
by Lemma~\ref{lem:indpd2indpdk}
we see that
\begin{equation}
\label{correct:1}
D(s)\sumd{D'} = D'(s)\sumd{D}
\mbox{ for all } s \in \sto{R}.
\end{equation}
Since $v \notin Q$, and
$\Phi_{k} = \HH{Q}(\Phi_{k-1})$ as $k > 0$,
we see from clause~\ref{defn:evalk:notinX}
in Definition~\ref{def:HHX} that
$\hpd{\Phi_k}{v}{v'}(\dlt) = \dlt$.
Thus our proof obligation is 
\[
\dagree{D'}{\cnst{k}{v}{v'}{D} \cdot \dlt}{R}
\]
which (as we assume $\dagree{D}{\dlt}{R}$)
amounts to proving that for all $s \in \sto{R}$:
\[
D'(s) = \cnst{k}{v}{v'}{D} \cdot D(s)
\]
If $D = 0$ and hence $D' = 0$ then this is obvious.
Otherwise, Definition~\ref{def:slicing-const} stipulates
\[
\cnst{k}{v}{v'}{D} = \frac{\sumd{D'}}{\sumd{D}}
\]
and (\ref{correct:1}) yields the claim.

We shall finally consider the case
{\bf $\mathbf{v \in Q}$ (and $\mathbf{k > 0}$ and $\mathbf{v' = \fppd{v}}$).}
Thus $v \notin Q_0$, and hence (by
Lemma~\ref{lem:notQ-outside}) $v$ stays outside $Q_0$ until $v'$;
thus $\cnst{k}{v}{v'}{D} = 1$ 
so that our proof obligation,
with $D' = \hpd{\gamma_k}{v}{v'}(D)$ and $\dlt' = \hpd{\Phi_k}{v}{v'}(\dlt)$,
is to establish $\dagree{D'}{\dlt'}{R'}$,
that is $D'(s') = \dlt'(s')$ for all $s' \in \sto{R'}$.
We need a case analysis on the label of $v$,
where the case $\skipv$ is trivial.

If $\mathbf{\labv{v} = \observev{B}}$, we have 
$\fv{B} \subseteq R = \rv{Q}{v}$ and as $\defv{v} = \emptyset$ 
also $R' \subseteq R$; 
for $s' \in \sto{R'}$ this gives us the desired
\begin{eqnarray*}
\dlt'(s') & = & \sum_{s \in \sto{R}\ \mid\ \sagree{s}{s'}{R'}} \dlt'(s)\
  =\
\sum_{s \in \sto{R}\ \mid\ \sagree{s}{s'}{R'},\ \seme{B}s} \dlt(s)\
  =\
\sum_{s \in \sto{R}\ \mid\ \sagree{s}{s'}{R'},\ \seme{B}s} 
\hspace*{-7mm} D(s)\\[1mm]
& = &
\sum_{s \in \sto{R}\ \mid\ \sagree{s}{s'}{R'}} 
\hspace*{-5mm} D'(s)\
=\
D'(s').
\end{eqnarray*}

If {\bf $\mathbf{v}$ is a branching node},
with $B$ its condition,
and $v_1$ its $\true$-successor and $v_2$ its $\false$-successor,
with $D_1 = \selectf{B}(D)$ and $D_2 = \selectf{\neg B}(D)$ 
and $\dlt_1 = \selectf{B}(\dlt)$ and $\dlt_2 = \selectf{\neg B}(\dlt)$ 
the situation is that
$D' = D'_1 + D'_2$ and $\dlt' = \dlt'_1 + \dlt'_2$ where for each
$i \in \{1,2\}$, $D'_i$ and $\dlt'_i$ is computed as
\begin{itemize}
\item
if $\LAP{v_i}{v'} < \LAP{v}{v'}$ then
$D'_i =\hpd{\gamma_{k}}{v_i}{v'}(D_i)$
and
$\dlt'_i =\hpd{\gamma_{k}}{v_i}{v'}(\dlt_i)$;
\item
if $\LAP{v_i}{v'} \geq \LAP{v}{v'}$ then
$D'_i = \hpd{\gamma_{k-1}}{v_i}{v'}(D_i)$ and
$\dlt'_i = \hpd{\gamma_{k-1}}{v_i}{v'}(\dlt_i)$.
\end{itemize}
Let $R_1 = \rv{Q}{v_1}$ and $R_2 = \rv{Q}{v_2}$; 
thus (by Lemma~\ref{lem:rv-branch})
$R_1 \subseteq R$ and $R_2 \subseteq R$,
and also $\fv{B} \subseteq R$.
For each $i = 1,2$, we see
\begin{itemize}
\item
that $R_i$ is independent of $R_0$ in $D_i$
(by Lemma~\ref{lem:indpd-branch}),
\item
that $\dagree{D_i}{\dlt_i}{R_i}$
(by Lemma~\ref{lem:dist-partial} from $\dagree{D_i}{\dlt_i}{R}$ 
which holds as
for $s \in \sto{R}$ we have
$D_1(s) = \dlt_1(s)$ and $D_2(s) = \dlt_2(s)$
since if say $\seme{B}{s}$ is false
then the first equation amounts to $0 = 0$
and the second to $D(s) = \dlt(s)$),
\item
that $v_i$ stays outside $Q_0$ until $v'$,
and thus $\cnst{k}{v_i}{v'}{D_i} = 1$.
\end{itemize}
We now infer that for each $i = 1,2$ we have
$\dagree{D'_i}{\dlt'_i}{R'}$
(when $\LAP{v_i}{v'} < \LAP{v}{v'}$ this follows from the inner
induction hypothesis,
and otherwise it follows from the outer induction hypothesis on $k$).
For $s \in \sto{R'}$ we thus have
$D'_1(s') = \dlt'_1(s')$ and
$D'_2(s') = \dlt'_2(s')$, which implies
$D'(s') = \dlt'(s')$. This amounts to the desired
$\dagree{D'}{\dlt}{R'}$.

If {\bf $\mathbf{v}$ is a (random) assignment},
let $x = \defv{v}$ and first assume that
$x \notin R'$. Thus $R' \subseteq R$
so that $\dagree{D}{\dlt}{R'}$, and 
by Lemma~\ref{lem:assignRirrel} (\ref{lem:rassignRirrel}) we get
$\dagree{D}{D'}{R'}$ and 
$\dagree{\dlt}{\dlt'}{R'}$.
But this implies the desired $\dagree{D'}{\dlt'}{R'}$ since
for $s' \in \sto{R'}$ we have
$D'(s') = D(s') = \dlt(s') = \dlt'(s')$.

We can thus assume $x \in R'$
and first consider when $\labv{v} = \assignv{x}{E}$.
Then (by Lemma~\ref{lem:rv-assign}) $R = R'' \cup \fv{E}$
with $R'' = R' \setminus \mkset{x}$, so given $s' \in \sto{R'}$ we can use
Lemma~\ref{lem:assign-simp} twice to give us the desired
\begin{displaymath}
D'(s')\ =\ 
\hspace*{-12mm}
\sum_{s \in \sto{R} \ \mid\ \sagree{s}{s'}{R''},\ s'(x) = \seme{E}s} 
\hspace*{-7mm}
D(s)\ =\ \sum_{s \in \sto{R} \ \mid\ \sagree{s}{s'}{R''},\ s'(x) = \seme{E}s} 
\hspace*{-10mm}\dlt(s)\ =\ \dlt'(s').
\end{displaymath}
We next consider the case when $\labv{v} = \rassignv{x}{\psi}$.
Then $R = R' \setminus \mkset{x}$,
so given $s' \in \sto{R'}$ we can use
Lemma~\ref{lem:rassign-simp} twice to give us the desired
\[
D'(s') =  \psi(s'(x)) 
\left(\sum_{s \in \sto{R} \ \mid\ \sagree{s}{s'}{R}} D(s)\right) 
=
\psi(s'(x)) 
\left(\sum_{s \in \sto{R} \ \mid\ \sagree{s}{s'}{R}} \dlt(s)\right) 
=
\dlt'(s')
\]

\mbox{ }

\end{proof}

\subsection{Proofs for Section~\ref{sec:alg-least}}

{\bf Lemma~\ref{lem:DDclose}}
There exists an algorithm $\DDclose$
which given a node set $Q$ that is closed under data dependence,
and a node set $Q_1$,
returns the least set containing $Q$ and $Q_1$
that is closed under data dependence.
Moreover, assuming $\DDS$ is given,
$\DDclose$ runs in time $O(n \cdot |Q_1|)$.

\begin{proof}
We incrementally augment $Q$ as follows:
for each $v_1 \in Q_1$, and each $v \notin Q$,
we add $v$ to $Q$ iff $\DDS(v,v_1)$ holds.
This is necessary since any set containing $Q_1$ that is closed under 
data dependence must contain $v$; observe that
$Q$ will end up containing $Q_1$ since for all $v_1 \in Q_1$
we have $\DDS(v_1,v_1)$.

Thus the only non-trivial claim is that the resulting $Q$ will
be closed under data dependence. With $v \in Q$ we must show that if 
$\dd{v'}{v}$ then $v' \in Q$. If $v$ was in $Q$ initially, this follows
since $Q$ was assumed to be closed under data dependence.
Otherwise, assume 
that $v$ was added to $Q$ because for some $v_1 \in Q_1$
we have $\DDS(v,v_1)$. By correctness of $\DDS$ this means
that $\dds{v}{v_1}$ which implies $\dds{v'}{v_1}$ and thus
$\DDS(v',v_1)$ holds. Hence also $v'$ will be added to $Q$.
\end{proof}

{\bf Lemma~\ref{lem:PN}}
The function $\PN$ runs in time $O(n)$ and, given $Q$,
returns $C$ such that $C \cap Q = \emptyset$ and
\begin{itemize}
\item
if $C$ is empty then $Q$ provides next visibles
\item
if $C$ is non-empty then
all supersets of $Q$ that provide next
visibles will contain $C$.
\end{itemize}

\begin{proof}
It is convenient to introduce some terminology:
we say that $q \in Q \cup \{\finalv\}$ is $m$-next from $v$ iff there exists
a path $v = v_1 \ldots v_k = q$ with $k \leq m$ and
$v_j \notin Q$ for all $j$ with $1 \leq j < k$.
Also, we use superscript $m$ to denote the value of a variable when the
guard of the while loop is evaluated for the $m$'th time; note that
if $q = N^m(v)$ and $m' > m$ then also $q = N^{m'}(v)$.
A key part of the proof is to establish 3 facts, for each $m \geq 1$:
\begin{enumerate}
\item
\label{least-1}
if $q = N^m(v)$ then $q$ is $m$-next from $v$
\item
\label{least-2}
if $v \in F^m$ then $N^m(v) \neq \bot$  and no 
$q \in Q$ is $(m-1)$-next from $v$
\item
\label{least-3}
if $C^m = \emptyset$ and $q$ is $m$-next from $v$ then
\begin{enumerate}
\item
\label{least-3a}
$q = N^m(v)$, and
\item
\label{least-3b}
if $q$ is not $(m-1)$-next from $v$ then $v \in F^m$.
\end{enumerate}
\end{enumerate}
We shall prove the above facts simultaneously, by induction in $m$. 
The base case is when $m = 1$ and the facts follow from inspecting the
preamble of the while loop: for (\ref{least-1}),
if $q = N^1(v)$ then $q = v \in Q \cup \{\finalv\}$
and the trivial path $v$ shows that $q$ is 1-next from $v$;
for (\ref{least-2}), if $v \in F^1$ then $N^1(v) \neq \bot$ and no
$q$ can be 0-next from $v$;
for (\ref{least-3}), if $q$ is 1-next from $v$ then 
$q = v \in Q \cup \{\finalv\}$
in which case $q = N^1(v)$ and $v \in F^1$.

We now do the inductive case where $m > 1$.
Note that $C^{m-1} = \emptyset$ (as otherwise the loop would
have exited already).
For (\ref{least-1}), we assume that $q = N^{m}(v)$, and split into two 
cases: if $q = N^{m-1}(v)$ then we inductively infer that $q$
is $(m-1)$-next of $v$ and thus also $m$-next of $v$.
Otherwise, $v \notin Q$ and there exists an edge from $v$ to some
$v' \in F^{m-1}$ with $q = N^{m-1}(v')$.
Inductively, $q$ is $(m-1)$-next from $v'$.
But then, as $v \notin Q$, $q$ is $m$-next from $v$.

For (\ref{least-2}), we assume that $v \in F^m$ in which case
code inspection yields that $N^{m-1}(v) = \bot$
and that $N^m(v) = N^{m-1}(v')$ for some $v' \in F^{m-1}$
so inductively 
$N^m(v) \neq \bot$.
Also, no $q$ can be $m-1$-next from $v$, for if so then
we could inductively use (\ref{least-3a}) to infer 
$N^{m-1}(v) = q$.

For (\ref{least-3}), we assume that $C^m = \emptyset$ and $q$ is $m$-next
from $v$. We have two cases:
\begin{itemize}
\item
if $q$ is $(m-1)$-next from $v$ then (\ref{least-3b}) holds vacuously,
and as $C^{m-1} = \emptyset$ we infer inductively that
$q = N^{m-1}(v)$ and thus $q = N^m(v)$ which is (\ref{least-3a}).
\item
if $q$ is not $(m-1)$-next from $v$,
we can inductively use (\ref{least-1}) to get
$q \neq N^{m-1}(v)$, and also infer that there is a path
$v v' \ldots q$ where $v \notin Q$ and $q$ is $(m-1)$-next from $v'$
but not $(m-2)$-next from $v'$.
As $C^{m-1} = \emptyset$ we inductively infer that
$q = N^{m-1}(v')$ and $v' \in F^{m-1}$.
Thus the edge from $v$ to $v'$ has been considered in the recent iteration,
and since $C_m = \emptyset$ it must be the case that $N^{m-1}(v) = \bot$
so we get the desired $q = N^{m}(v)$ and $v \in F^m$.
\end{itemize}
We are now ready to address the claims in the lemma.
From (\ref{least-2}) we see that each node gets into $F$
at most once and hence the running time is in $O(n)$.
That $C \cap Q = \emptyset$ follows since only nodes not in $Q$
get added to $C$.
Next we shall prove that
\begin{quote}
if $C$ is empty then $Q$ provides next visibles
\end{quote}
and thus consider the situation where
for some $m$, $C^m = \emptyset$ and $F^m = \emptyset$.

We shall first prove that for all 
$v \in \unodes$, all $q \in Q \cup \{\finalv\}$,
and all $k \geq 1$ we have that if $q$ is $k$-next from $v$ then
$N^{m}(v) = q$. To see this, we may wlog.\ assume that $k$ is chosen
as small as possible, that is, $q$ is not $(k-1)$-next from $v$.
It is impossible that $k > m$ since then there would be a path
$v \ldots v' \ldots q$ where $q$ is $m$-next from $v'$
but not $(m-1)$-next from $v'$ which by (\ref{least-3b})
entails $v' \in F^m$ which is a contradiction.
Thus $k \leq m$ and $q$ is $m$-next from $v$ so
(\ref{least-3a}) yields the claim.

Now let $v \in \unodes$ be given, to show that $v$ has a next visible
in $Q$. Since there is a path from $v$ to $\finalv$
there will be a node $q \in Q \cup \{\finalv\}$ such that (for some $k$)
$q$ is $k$-next from $v$. By what we just proved,
$N^{m}(v) = q$ and we shall show that $q$ is a next visible
in $Q$ of $v$. Thus assume, to get a contradiction,
that we have a path not containing $q$ from $v$ to a node in 
$Q \cup \{\finalv\}$. Then there exists
$q' \neq q$ and $k'$ such that $q'$ is $k'$-next from $v$.
Again applying what we just proved,
$N^{m}(v) = q'$ which is a contradiction.

Finally, we shall prove that
\begin{quote}
if $C$ is non-empty \\
then all supersets of $Q$ that provide next
visibles will contain $C$
\end{quote}
and thus consider the situation where
for some $m$, $C^m \neq \emptyset$.
It is sufficient to consider $v \in C^m$ (and thus $v \notin Q$)
and prove that 
if $Q \subseteq Q_1$ where $Q_1$ provides next visibles then
$v \in Q_1$.

Since $v \in C^m$, the situation is that there is an edge
from $v$ to some $v' \in F^{m-1}$ with $q \neq q'$
where $q = N^m(v)$ and $q' = N^{m-1}(v')$.
From (\ref{least-1}) we see that 
$q$ is $m$-next from $v$, and that $q'$ is $(m-1)$-next from
$v'$.
%%  and thus $m$-next from $v$.
That is, there exists a path $\pi$ from $v$ to $q$ and a path
$\pi'$ from $v$ to $q'$. 
Since $q,q' \in Q_1 \cup \{\finalv\}$ and $Q_1$ provides next visibles,
there exists $v_0 \in Q_1$ that occurs on both paths.
If $v_0 \neq v$ 
then $q$ and $q'$ are both $(m-1)$-next from $v_0$
which since $C^{m-1} = \emptyset$ implies $q = N^{m-1}(v_0) = q'$
which is a contradiction.
Hence $v_0 = v$ which amounts to the desired $v \in Q_1$.
\end{proof}

{\bf Lemma~\ref{lem:lws-correct}}
The function $\LWS$, given $\hat{Q}$,
returns $Q$ such that 
\begin{itemize}
\item
$Q$ is a weak slice set
\item
$\hat{Q} \subseteq Q$
\item
if $Q'$ is a weak slice set with $\hat{Q} \subseteq Q'$
then $Q \subseteq Q'$.
\end{itemize}
Moreover, assuming $\DDS$ is given, 
$\LWS$ runs in time $O(n^2)$.

\begin{proof}
We shall establish the following loop invariant:
\begin{itemize}
\item
$Q$ is closed under data dependence;
\item
$Q$ includes $\hat{Q}$ and 
is a subset of any weak slice set that includes $\hat{Q}$.
\end{itemize}
This holds before the first iteration,
by the properties of $\DDclose$.

We shall now argue that each iteration preserves the invariant.
This is obvious for the part about $Q$ being closed under data dependence
and including $\hat{Q}$. Now assume that $Q'$ is a weak slice
set that includes $\hat{Q}$; we must prove that $Q \subseteq Q'$ holds after
the iteration. We know that before the iteration we have $Q \subseteq Q'$,
and also $C = \PN(Q) \neq \emptyset$ so we know from Lemma~\ref{lem:PN} 
(since $Q'$ provides next visibles) that
$C \subseteq Q'$; hence we can apply Lemma~\ref{lem:DDclose} to
infer (since $Q'$ is closed under data dependence)
$\DDclose(Q,C) \subseteq Q'$ which yields the claim.

When the loop exits, with $C = \emptyset$,
Lemma~\ref{lem:PN} tells us that
$Q$ provides next visibles.
Together with the invariant, this yields the desired
correctness property.

The loop will terminate, as $Q$ cannot keep increasing;
the total number of calls to $\PN$ is in $O(n)$.
By Lemma~\ref{lem:PN} we see that the total time spent
in $\PN$ is in $O(n^2)$.
And by Lemma~\ref{lem:DDclose} we infer that the total time spent
in $\DDclose$ is in $O(n^2)$.
Hence the running time of $\LWS$ is in $O(n^2)$.
\end{proof}

{\bf Theorem~\ref{thm:BSP-correct}}
The algorithm $\BSP$ returns,
given a pCFG and a set of nodes $\ESS$,
sets $Q$ and $Q_0$
such that
\begin{itemize}
\item
$(Q,Q_0)$ is a slicing pair wrt.~$\ESS$
\item
if $(Q',Q'_0)$ is a slicing pair wrt.$\ESS$
then $Q \subseteq Q'$.
\end{itemize}
Moreover, $\BSP$ runs in time $O(n^3)$
(where $n$ is the number of nodes in the pCFG).

\begin{proof}
As stated in Figure~\ref{fig:best-slicing-pair},
we shall use the following invariants for the while loop:
\begin{enumerate}
\item
\label{best-1}
$Q$ is a weak slice set
\item
\label{best-2}
$F$ is a weak slice set
\item
\label{best-3}
$\finalv \in Q \cup F$
\item
\label{best-4}
$W \subseteq \ESS$
\item
\label{best-5}
if $v \in W$ then $Q_v \cap Q = \emptyset$
\item
\label{best-6}
if $v \in \ESS$ but $v \notin W$ then $v \in Q \cup F$
\item
\label{best-7}
if $(Q',Q'_0)$ is a slicing pair wrt.~$\ESS$
then $Q \cup F \subseteq Q'$.
\end{enumerate}
We shall first show that the invariants are established by the
loop preamble, which is mostly trivial;
for (\ref{best-2},\ref{best-3},\ref{best-7}) we use
Lemma~\ref{lem:lws-correct}.

Let us next show that the invariants are preserved by each iteration
of the while loop.
For (\ref{best-1}) this follows from Lemma~\ref{lem:weak-union},
the repeated application of which and Lemma~\ref{lem:lws-correct}
gives (\ref{best-2}).
Code inspection easily gives (\ref{best-3},\ref{best-4},\ref{best-5}).
To show (\ref{best-6}) we do a case analysis: either $v$ was not in $W$
before the iteration so that (by the invariant) $v$ belonged to
$Q \cup F$ and thus $v \in Q$ by the end of the iteration,
or $v$ was removed from $W$ during the iteration in which
case $Q_v \subseteq F$ and thus (Lemma~\ref{lem:lws-correct})
$v \in F$.

For (\ref{best-7}), let $(Q',Q'_0)$ be a slicing pair wrt.~$\ESS$;
we know that $Q \cup F \subseteq Q'$ holds before the iteration
and thus $Q \subseteq Q'$ holds before the members of $W$ are processed.
It is sufficient to prove that 
if $v \in W$ with $Q_v \cap Q \neq \emptyset$
then $Q_v \subseteq Q'$. The invariant tells us that $v \in \ESS$,
so as $(Q',Q'_0)$ is a slicing pair wrt.~$\ESS$
we infer that $v \in Q'$ or $v \in Q'_0$;
by Lemma~\ref{lem:lws-correct} this
shows $Q_v \subseteq Q'$ (as desired) or $Q_v \subseteq Q'_0$ which
we can rule out: for then we would have 
$Q_v \cap Q' = \emptyset$ and 
thus $Q_v \cap Q = \emptyset$ before $W$ is processed
which contradicts our assumption.

The while loop will terminate since $W$ keeps getting smaller which
cannot go on infinitely, and if an iteration does not make $W$ smaller
then it will have $F = \emptyset$ at the end and the loop exits.

When the loop exits, with $F = \emptyset$, we have:
\begin{itemize}
\item
$Q$ is a weak slice set with $\finalv \in Q$,
by invariants (\ref{best-1}) and (\ref{best-3});
\item
$Q_0$ is a weak slice set, by Lemmas~\ref{lem:lws-correct} and
\ref{lem:weak-union};
\item
$Q \cap Q_0 = \emptyset$, by invariant (\ref{best-5});
\item
if $v \in \ESS$ then $v \in Q \cup Q_0$ since
\begin{itemize}
\item
if $v \notin W$ then $v \in Q$ by invariant (\ref{best-6})
(and $F = \emptyset$),
\item
if $v \in W$ then $v \in Q_0$ by construction of $Q_0$.
\end{itemize}
\end{itemize}
Thus $(Q,Q_0)$ is a slicing pair.
If $(Q',Q'_0)$ is another slicing pair 
we see from invariant (\ref{best-7}) that
$Q \cup F \subseteq Q'$ and thus $Q \subseteq Q'$.

Finally, we can address the running time.
By Lemma~\ref{lem:DDS-n3}, we can compute $\DDS$ in time
$O(n^3)$. Then Lemma~\ref{lem:lws-correct} tells us that
each call to $\LWS$ takes time in $O(n^2)$.
As there are $O(n)$ such calls, this shows the the code
in $\BSP$ before the while loop runs in time $O(n^3)$.
The while loop iterates $O(n)$ times, with each iteration
processing $O(n)$ members of $W$; as each such processing
(taking intersection and union) can be done in time $O(n)$
this shows that the while loop runs in time $O(n^3)$.
The total running time is thus in $O(n^3)$.
\end{proof}

\end{document}